\documentclass{article}
\usepackage[T1]{fontenc}
\usepackage[margin=1in]{geometry}
\usepackage{amsfonts,amsmath,amsthm,amssymb}
\usepackage{mathtools}  %

\usepackage{cite}

\mathtoolsset{centercolon}
\usepackage{xfrac,nicefrac}
\usepackage{mathdots}
\usepackage{mleftright}  %
\let\left\mleft
\let\right\mright

\usepackage{xspace}
\xspaceaddexceptions{]\}}  %
\usepackage{regexpatch}

\usepackage{bm,bbm,dsfont}  %
\usepackage{caption}
\usepackage[normalem]{ulem}
\usepackage{enumitem}

\usepackage{graphicx}
\usepackage{float}
\usepackage{subcaption}  %
\usepackage{tcolorbox}
\usepackage{tikz}
\usetikzlibrary{decorations.pathreplacing}
\usetikzlibrary{calc}
\usetikzlibrary{positioning}
\usetikzlibrary{arrows.meta}

\usepackage[linesnumbered,boxed,ruled,vlined,algo2e]{algorithm2e}

\SetCommentSty{mycommfont}

\makeatletter
\SetKwComment{tcp}{\hspace{1em}// }{}
\SetKwComment{tcpNoSpace}{// }{}
\makeatother

\usepackage{thmtools,thm-restate}
\usepackage[colorlinks,citecolor=blue,linkcolor=blue,urlcolor=red]{hyperref}

\theoremstyle{plain}

\newtheorem{theorem}{Theorem}[section]  %
\newtheorem{lemma}[theorem]{Lemma}

\newtheorem{observation}[theorem]{Observation}
\newtheorem{proposition}[theorem]{Proposition}
\newtheorem{corollary}[theorem]{Corollary}

\newtheorem{claim}[theorem]{Claim}

\theoremstyle{definition}  %
\newtheorem{definition}[theorem]{Definition}



\newenvironment{proofof}[1]{\begin{proof}[Proof of #1]}{\end{proof}}
\newenvironment{proofsketch}{\begin{proof}[Proof Sketch]}{\end{proof}}

\usepackage[capitalise]{cleveref}
\crefname{algocf}{Algorithm}{Algorithms}
\Crefname{algocf}{Algorithm}{Algorithms}
\crefname{claim}{Claim}{Claims}
\Crefname{claim}{Claim}{Claims}
\crefname{observation}{Observation}{Observations}
\Crefname{observation}{Observation}{Observations}
\crefname{fact}{Fact}{Facts}
\Crefname{fact}{Fact}{Facts}

\usepackage{xurl}

\providecommand{\Description}[2][]{}

\newfloat{Distribution}{htbp}{loa}
\crefname{Distribution}{Distribution}{Distributions}
\Crefname{Distribution}{Distribution}{Distributions}
\newfloat{Protocol}{htbp}{loa}
\crefname{Protocol}{Protocol}{Protocols}
\Crefname{Protocol}{Protocol}{Protocols}

\SetKwProg{Function}{Function}{:}{}
\let\Fn\Function
\SetKwProg{CodeBlock}{}{}{}
\SetKwProg{Repeat}{Repeat}{:}{}

\DeclarePairedDelimiter{\floor}{\lfloor}{\rfloor}

\DeclarePairedDelimiter{\bk}{(}{)}
\DeclarePairedDelimiter{\Bk}{[}{]}
\DeclarePairedDelimiter{\BK}{\{}{\}}

\DeclarePairedDelimiterX\mysetbase[2]{\lbrace}{\rbrace}{#1\,\delimsize\vert\,#2}
\NewDocumentCommand{\myset}{sO{}m m}{%
  \IfBooleanTF{#1}%
  {\mysetbase*{#3}{#4}}%
  {\mysetbase[#2]{#3}{#4}}%
}

\DeclareMathOperator*{\E}{\mathbb{E}}

\let\Pr\PrAux
\DeclareMathOperator{\poly}{poly}

\DeclareMathOperator{\polylog}{polylog}

\DeclareMathOperator*{\ind}{\mathbbm{1}}

\renewcommand{\tilde}{\widetilde}

\newcommand{\defeq}{\coloneqq}
\newcommand{\eps}{\varepsilon}

\newcommand{\N}{\mathbb{N}}

\renewcommand{\l}{\ell}
\renewcommand{\emptyset}{\varnothing}
\renewcommand{\epsilon}{\eps}

\newcommand{\Martin}{\textup{Mart\'in}\xspace}

\newcommand{\defn}[1]{\emph{\boldmath\textbf{#1}}}

\newcommand{\numberthis}{\addtocounter{equation}{1}\tag{\theequation}}

\makeatletter
\xpatchcmd\thmt@restatable{%
  \csname #2\@xa\endcsname\ifx\@nx#1\@nx\else[{#1}]\fi
}{%
  \ifthmt@thisistheone
  \csname #2\@xa\endcsname\ifx\@nx#1\@nx\else[{#1}]\fi
  \else
  \csname #2\@xa\endcsname[{Restated}]
  \fi}{}{}
\makeatother

\makeatletter
\let\oldparagraph\paragraph
\renewcommand{\paragraph}[1]{%
  \oldparagraph{\boldmath #1}%
}
\makeatother

\makeatletter
\let\save@mathaccent\mathaccent
\newcommand*\if@single[3]{%
  \setbox0\hbox{${\mathaccent"0362{#1}}^H$}%
  \setbox2\hbox{${\mathaccent"0362{\kern0pt#1}}^H$}%
  \ifdim\ht0=\ht2 #3\else #2\fi
}
\newcommand*\rel@kern[1]{\kern#1\dimexpr\macc@kerna}
\newcommand*\widebar[1]{\@ifnextchar^{{\wide@bar{#1}{0}}}{\wide@bar{#1}{1}}}
\newcommand*\wide@bar[2]{\if@single{#1}{\wide@bar@{#1}{#2}{1}}{\wide@bar@{#1}{#2}{2}}}
\newcommand*\wide@bar@[3]{%
  \begingroup
  \def\mathaccent##1##2{%
    \let\mathaccent\save@mathaccent
    \if#32 \let\macc@nucleus\first@char \fi
    \setbox\z@\hbox{$\macc@style{\macc@nucleus}_{}$}%
    \setbox\tw@\hbox{$\macc@style{\macc@nucleus}{}_{}$}%
    \dimen@\wd\tw@
    \advance\dimen@-\wd\z@
    \divide\dimen@ 3
    \@tempdima\wd\tw@
    \advance\@tempdima-\scriptspace
    \divide\@tempdima 10
    \advance\dimen@-\@tempdima
    \ifdim\dimen@>\z@ \dimen@0pt\fi
    \rel@kern{0.6}\kern-\dimen@
    \if#31
    \overline{\rel@kern{-0.6}\kern\dimen@\macc@nucleus\rel@kern{0.4}\kern\dimen@}%
    \advance\dimen@0.4\dimexpr\macc@kerna
    \let\final@kern#2%
    \ifdim\dimen@<\z@ \let\final@kern1\fi
    \if\final@kern1 \kern-\dimen@\fi
    \else
    \overline{\rel@kern{-0.6}\kern\dimen@#1}%
    \fi
  }%
  \macc@depth\@ne
  \let\math@bgroup\@empty \let\math@egroup\macc@set@skewchar
  \mathsurround\z@ \frozen@everymath{\mathgroup\macc@group\relax}%
  \macc@set@skewchar\relax
  \let\mathaccentV\macc@nested@a
  \if#31
  \macc@nested@a\relax111{#1}%
  \else
  \def\gobble@till@marker##1\endmarker{}%
  \futurelet\first@char\gobble@till@marker#1\endmarker
  \ifcat\noexpand\first@char A\else
  \def\first@char{}%
  \fi
  \macc@nested@a\relax111{\first@char}%
  \fi
  \endgroup
}
\makeatother

\newcommand{\DefineTerm}[2]{\newcommand{#1}{\ifmmode #2 \else $#2$\xspace \fi}}
\newcommand{\DefineInstruction}[2]{\newcommand{#1}{\ifmmode \textup{\texttt{#2}} \else \textup{\texttt{#2}}\xspace \fi}}
\newcommand{\BoldHeading}[1]{\textbf{\boldmath{}#1}}

\SetKwBlock{Loop}{loop}{}
\SetKwFor{WithProb}{with probability}{\!:}{}
\newcommand{\Break}{\textbf{break}}

\DefineInstruction{\Load}{load}
\DefineInstruction{\Store}{store}
\DefineInstruction{\StoreRandom}{storeRand}
\DefineInstruction{\StoreRand}{storeRand}
\DefineInstruction{\CAS}{CAS}
\DefineInstruction{\FetchAndAdd}{fetch-and-add}
\DefineInstruction{\Nop}{nop}

\DefineTerm{\Write}{\textup{\textbf{\textsf{write}}}}
\DefineTerm{\Read}{\textup{\textbf{\textsf{read}}}}
\DefineTerm{\Set}{\textup{\textbf{\textsf{set}}}}
\DefineTerm{\opCAS}{\textup{\textbf{\textsf{CAS}}}}
\newcommand{\opFont}[1]{ \textup{\textbf{\textsf{#1}}}}

\newcommand{\D}{\mathcal{D}}

\newcommand{\Latency}{\textup{latency}}

\DefineInstruction{\instR}{R}
\DefineInstruction{\instW}{W}
\DefineInstruction{\instS}{S}
\DefineInstruction{\instC}{C}
\DefineInstruction{\instRR}{R'}
\DefineInstruction{\instWW}{W'}
\DefineInstruction{\instSS}{S'}

\DefineInstruction{\History}{hist}
\DefineInstruction{\Randomness}{rands}

\newcommand{\Unif}{\textup{Unif}}

\newcommand{\HistTrue}{\textup{\texttt{hist}}}
\newcommand{\HistExt}{\textup{\texttt{hist}}^{\textup{\textsf{+}}}}

\newcommand{\coin}{\xi}
\newcommand{\CoinSum}{\Xi}

\newcommand{\xNew}{x_{\textup{new}}}
\newcommand{\xOld}{x_{\textup{old}}}
\newcommand{\xNow}{x_{\textup{now}}}
\newcommand{\fNew}{f_{\textup{new}}}
\newcommand{\fOld}{f_{\textup{old}}}
\newcommand{\fNow}{f_{\textup{now}}}
\newcommand{\xExpect}{x_{\textup{exp}}}
\newcommand{\xExp}{x_{\textup{exp}}}

\newcommand{\ASlow}{A^{\textup{slow}}}
\newcommand{\AAbort}{A^{\textup{abort}}}
\newcommand{\AStay}{A^{\textup{stay}}}
\newcommand{\ATransfer}{A^{\textup{transition}}}
\newcommand{\ANew}{A^{\textup{new}}}
\newcommand{\AOld}{A^{\textup{old}}}
\newcommand{\ACont}{A^{\textup{cont}}}

\newcommand{\pSlow}{p^{\textup{slow}}}
\newcommand{\pAbort}{p^{\textup{abort}}}
\newcommand{\pStay}{p^{\textup{stay}}}
\newcommand{\pTransfer}{p^{\textup{transition}}}
\newcommand{\pNew}{p^{\textup{new}}}
\newcommand{\pOld}{p^{\textup{old}}}
\newcommand{\pCont}{p^{\textup{cont}}}

\newcommand{\phiThreshold}{\phi^{\textup{thr}}}
\newcommand{\TThreshold}{T^{\textup{thr}}}
\newcommand{\Falling}{\texttt{falling}}
\newcommand{\Rising}{\texttt{rising}}

\newcommand{\yOld}{y_{\textup{old}}}
\newcommand{\yNow}{y_{\textup{now}}}

\newcommand{\kFall}{k_{\textup{fall}}}
\newcommand{\kRise}{k_{\textup{rise}}}

\newcommand{\SharedCell}{\textup{\textsf{C}}\xspace}
\newcommand{\CellW}{\textup{\textsf{W}}\xspace}
\newcommand{\CellC}{\textup{\textsf{C}}\xspace}
\newcommand{\CellR}{\textup{\textsf{R}}\xspace}

\newcommand{\SLarge}{S_{\text{large}}}
\newcommand{\SSmall}{S_{\text{small}}}

\newcommand{\st}[1]{\texttt{\textup{#1}}}

\title{Fast Concurrent Primitives Despite Contention}

\author{
  Michael A. Bender\thanks{Stony Brook University. Email: \texttt{bender@cs.stonybrook.edu}.}
  \and Guy E. Blelloch\thanks{Carnegie Mellon University. Email: \texttt{blelloch@cs.cmu.edu}.}
  \and \Martin Farach-Colton\thanks{New York University. Email: \texttt{martin.farach-colton@nyu.edu}.}
  \and Yang Hu\thanks{Tsinghua University. Email: \texttt{y-hu22@mails.tsinghua.edu.cn}.}
  \and Rob Johnson\thanks{VMware Research. Email: \texttt{rob@robjohnson.io}.}
  \and Rotem Oshman\thanks{Tel Aviv University and New York University. Email: \texttt{roshman@tau.ac.il}.}
  \and Renfei Zhou\thanks{Carnegie Mellon University. Email: \texttt{renfeiz@andrew.cmu.edu}.}
}
\date{}

\begin{document}

\maketitle

\begin{abstract}
We study the problem of constructing concurrent objects in a setting where $P$ processes run in parallel and interact through a shared memory that is subject to write contention.
Our goal is to transform hardware primitives that are subject to write contention into ones that handle contention gracefully.

We give contention-resolution algorithms for several basic primitives, and analyze them under a relaxed, roughly-synchronous stochastic scheduler, where processes run at roughly the same rate up to a constant factor with high probability.
Specifically, we construct read/write registers and CAS registers that have latency $O(\log P)$ w.h.p.\ under our scheduler model,
using $O(1)$ hardware read/write registers and, in the case of our CAS construction, one hardware CAS register.
Our algorithms guarantee performance even when their operations are invoked by an adaptive adversary that is able to see the entire history of operations so far, including their timing and return values.
This allows them to be used as building blocks inside larger programs; using this compositionality property, we obtain several other constructions (LL/SC, fetch-and-increment, bounded max registers, and counters).

To complement our constructions, we give a trade-off showing that even under a perfectly synchronous schedule and even if each process only executes one operation,
any algorithm that implements any of the primitives that we consider, uses space $M$, and has latency at most $L$ with high probability must have \emph{expected} latency at least $\Omega(\log_{ML} P)$.
\end{abstract}

\section{Introduction}
\label{sec:intro}

\defn{Memory contention} is a first-order performance issue in concurrent algorithms and data structures.
Contention occurs when multiple processes attempt to modify the same memory location simultaneously:
In today's systems,
writes to the same memory location are sequentialized by the hardware, meaning that only one request can be served at a time and the others have to wait.

Several lines of work have tackled the contention problem:
in the context of concurrent algorithms, \cite{DHW97,CNM94,CM10,EHN12,HS03,ACAH16} consider the \emph{stalls model},
which charges processes for attempting to access the same memory address at the same time---they each incur a \emph{stall}, which is counted in the step complexity of the algorithm. This line of work assumes an \emph{adversarial worst-case scheduler}, and as a result, very strong lower bounds are known even for simple primitives such as fetch-and-increment~\cite{DHW97,EHN12}.
In the context of parallel algorithms,
the \emph{Queue-Read Queue-Write machine model (QRQW PRAM)}~\cite{GibbonsMR96,GibbonsMR98a,GibbonsMR98b}
captures contention by queuing accesses to each memory location.
Although one variant~\cite{GibbonsMR98b} assumes an asynchronous model for the purpose
of correctness, these works all analyze the performance of their algorithms under a \emph{synchronous} greedy scheduler, where each process is scheduled as soon as it is ready to take its next step.
Other work~\cite{BDB17} considers a model where instructions are scheduled asynchronously by a worst-case adversarial scheduler, but once an instruction starts, it must proceed synchronously, waiting in a queue until prior updates to the same location complete.
The focus of~\cite{BDB17} is on analyzing exponential back-off, and given the strong adversary, the bounds obtained are weak.

In this work, we present contention resolution algorithms
for several basic synchronization primitives, including read/write registers and compare-and-swap (CAS).
Our goal is to transform hardware primitives that are subject to contention into corresponding ones that handle contention gracefully.
We analyze our algorithms within a roughly-synchronous probabilistic model that subsumes both the synchronous model of~\cite{GibbonsMR96,GibbonsMR98a,GibbonsMR98b} and the semi-synchronous concurrency model (e.g.,~\cite{DLS88,AM94,ADLS94,ELMS05,T07,HK06}),
and we show that our algorithms have polylogarithmic latency in the number of processes with high probability.

While most prior work (e.g.,~\cite{GibbonsMR96,GibbonsMR98a,GibbonsMR98b})
aims to design algorithms that \emph{avoid creating} contention, here our goal is to resolve contention at the level of the synchronization primitives themselves, effectively decoupling contention resolution from algorithm design.
To that end, we prove a composition theorem showing that the primitives we construct can replace their hardware counterparts in a black box manner in a large class of algorithms.
This allows designers to design efficient algorithms \emph{without worrying about contention}, and then plug in our primitives to obtain algorithms that perform well on hardware that does suffer from the effects of contention, provided the schedule is roughly synchronous.

Before presenting our results, we give a brief overview of the scheduler model under which we analyze our algorithms.

\subsection{Overview of Our Scheduler Model}
Many researchers have acknowledged the gap between fully synchronous models and fully asynchronous and adversarial models, and have developed models that try to bridge this gap.
In this paper, we analyze the performance of our algorithms under a relaxed semi-synchronous model,
where processes advance at roughly the same pace up to a constant factor.
We call the model \emph{stochastic Concurrent-Read Queue-Write (CRQW)}.
It is more permissive than several previously suggested models (see Section~\ref{sec:compare} below),
and is general enough to capture many real-world phenomena, such as cache-miss latencies and arbitrary behavior by the rest of the program that uses our data structure.
It does this by giving the scheduler some worst-case power, but also preventing it from delaying processes for a very long time, except with small probability.
We remark that the results we give in this paper were not known previously even for a \emph{synchronous} scheduler, which is weaker than our scheduler.

The model is formally introduced in Section~\ref{sec:model}, but we give a quick overview here.
We remark that this model is used only to analyze the \emph{performance}
of our algorithms, not their \emph{correctness}:
for correctness, we carry out worst-case analysis,
and show that our algorithms are \emph{linearizable} and either \emph{lock-free} or \emph{wait-free} in all executions,
including those that have low probability in our model.

We also note that the stochastic scheduler we introduce is not intended to capture \emph{oversubscription} or \emph{context switches}; modeling
context-switching delays and analyzing algorithms against them is beyond the scope of this paper, and is left for future work. However, we conjecture that our algorithms remain efficient even when some processes can be descheduled for a long block of time, provided that the timing of such blocks is ``somewhat stochastic'' rather than worst-case.

\begin{figure*}[t]
  \centering
  \includegraphics[width=0.6\linewidth]{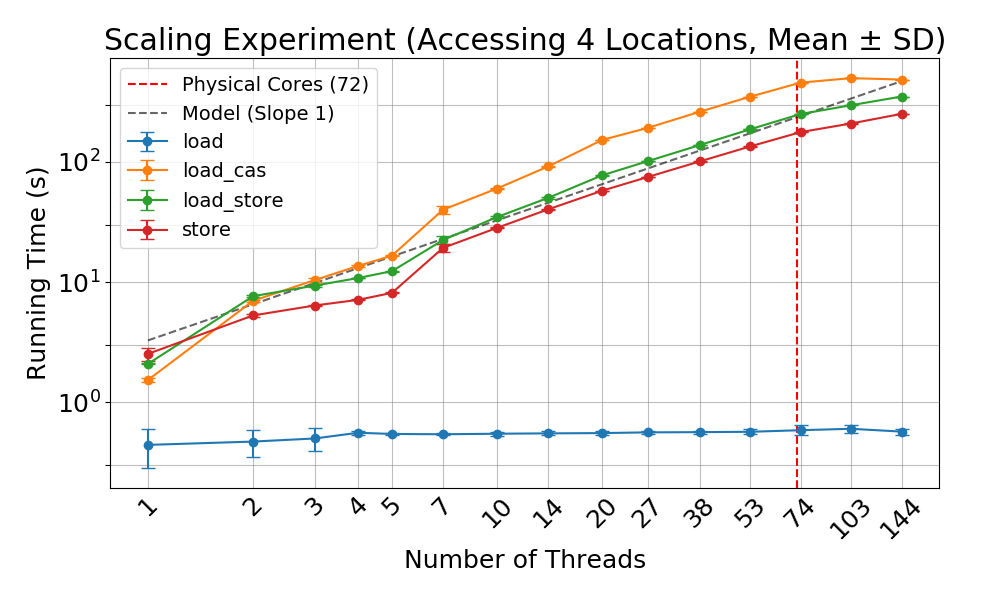}
  \Description{Log-log plot of running time versus thread count for load, load-cas, load-store, and store operations on four memory locations.}
  \caption{\boldmath Running times for $10^8$ operations per thread of various types, randomly applied to four memory locations, as a function of the number of threads on a log-log plot. The experiment shows that load times are flat and hardly affected by the thread count. However, the times for stores, load-cas, and load-stores increase linearly with the number of threads (the dashed line has slope 1). This is what is predicted by CRQW models. More
    details on the setup are given in \cref{sec:experiments}.
  }
  \label{fig:fig-scaling}
\end{figure*}

\paragraph{Modeling contention.}
We adopt the \emph{Concurrent-Read Queue-Write (CRQW)} contention model of~\cite{EHN12,SGBFG15},
where reads are executed immediately (and concurrently with one another), but writes to the same memory location are enqueued, and can only be executed one at a time.
This is a realistic model of modern hardware, as our experimental results in Fig.~\ref{fig:fig-scaling} demonstrate.
In the CRQW model, all write or read-modify-write operations to the same memory location are enqueued in a \emph{hardware queue}.
At each timestep, if the hardware queue is not empty, one operation is dequeued and applied.
We make no assumptions regarding the order in which simultaneous operations are enqueued:
this is controlled by a worst-case adversary.

\paragraph{The scheduler.}
We assume a stochastic scheduler that can exhibit some worst-case behavior.
In each window of $\tau$ time units (for a constant parameter $\tau$),
for each process, with probability $1/2$, the scheduler is \emph{required} to allow the process to take at least one step in the window. However, the scheduler is \emph{allowed} to schedule processes more often than this:
for example, it can deterministically schedule a process $p_1$ at every timestep of the $\tau$ time units, while deterministically scheduling another process $p_2$ to take only one step every $\tau$ time units (i.e., $\tau$ times slower than $p_1$).

In our model, processes that are scheduled take steps in parallel
(in contrast to the stalls model of~\cite{DHW97}, where process steps are interleaved). This is important for studying the \emph{latency} of operations in a system that has true parallelism, e.g., multiple cores.

\paragraph{Modeling the environment.}
Finally, we assume that operations are invoked by an \emph{adaptive worst-case adversary}, which can see the results of all operations invoked so far, including the timing of their responses.
The adversary then chooses what operation to invoke next, and which process should carry out the operation (among the processes that are not currently executing some other operation).
This strong adversary model is necessary for compositionality:
when our primitives are used as building blocks inside other algorithms,
the timing of their responses can determine what subsequent operations are invoked by the outer algorithm.
The fact that our algorithms are fast even under adaptive adversaries allows us to prove our composition theorem.

\subsubsection{Comparison to other models in the literature.}
\label{sec:compare}
Our model generalizes several classes of schedules that have been studied in the concurrent and parallel algorithms literature.
For example, our model generalizes both the well-studied \emph{semi-synchronous} model, where there is an upper bound on the \emph{ratio} between the fastest and the slowest process (see, e.g.,~\cite{DLS88,AM94,ADLS94,ELMS05,T07,HK06} and many others), and a random coin-flipping model, in which each process is scheduled with probability 1/2 in each timestep.
Our scheduler also generalizes the greedy scheduler considered in~\cite{GibbonsMR98b} and in other work on asynchronous PRAM algorithms.
In addition,
our scheduler
is more permissive than
several models of stochastic scheduling that have been used to study the real-life performance of concurrent algorithms (e.g.,~\cite{GM01,ASV15,ARP15,ACS16,ART16}).
These schedulers typically do not have any \emph{non-stochastic} behavior---they are modeled as a purely random process, which does not allow room to model varying instruction and processor speeds, and other non-stochastic effects.
Moreover, some stochastic schedulers, such as~\cite{ACS16}, use the \emph{interleaved} model of concurrency, where the schedule is modeled as a sequence of processor steps,
each chosen at random.
Due to the birthday paradox,
the model of~\cite{ACS16} essentially limits parallelism to $O(\sqrt{P})$,
and thus the model is not appropriate for settings with true parallelism
(see further discussion
in \cref{sec:non_adversarial}).

\subsection{Our Results}

Our main results are contention-resolution algorithms
for
read/write registers and for CAS registers,
which both have latency $O(\log P)$ w.h.p.\
in the stochastic CRQW model with $P$ processes.

For read/write registers, we show:

\begin{theorem}
  \label{thm:register-main-intro}
  For any $\l > 0$,
  there is an implementation of an
  $\l$-bit read/write register that is linearizable and wait-free,
  such that under the stochastic CRQW model with adaptive inputs, each operation has latency at most $O(\log P)$ with high probability in $P$.
  The construction uses $O(1)$ hardware read/write registers with word size $w \ge \l + \eps \log P$ bits, where $\eps > 0$ is a constant.
\end{theorem}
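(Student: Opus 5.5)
We would build the register from a constant number of hardware cells and use randomized thinning of writers. The value cell $\CellW$ holds the current value together with an $\eps\log P$-bit random tag; this is why the word size $w\ge \l+\eps\log P$ is needed.

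A \Read just loads $\CellW$ and returns the value part. It costs $O(1)$ steps, since reads are never queued in the CRQW model.

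A \Write of value $v$ runs in phases, built so that only $O(1)$ writers in expectation ever issue a \Store to $\CellW$ at the same time. The writer loads a phase cell (or uses the tag in $\CellW$) to learn the current epoch. It then repeats the following, with per-attempt probabilities that double each round, $2^{i}/P$ for rounds $i=0,\dots,\log P$. In each round it first flips a coin. If the coin succeeds, it issues a \Store of $(v,\text{fresh tag})$ and returns. If the coin fails, it reads $\CellW$, and if it sees a change relative to the tag it loaded at its start, it returns immediately without writing. After round $\log P$ it stores with probability $1$, which gives wait-freedom in every execution.

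Linearizability follows by treating an aborted write as if it were overwritten instantly. An aborted write is linearized just before the store that changed $\CellW$ during its interval. This is legal because that store happened after the aborted write was invoked and before it returned, so the aborted write's value would be overwritten at once. Successful writes are linearized at their \Store, and reads at their \Load. The random tag makes "$\CellW$ changed" detectable with high probability. Alternatively, we would add a second cell counting completed stores, which makes the detection deterministic; the word size already accounts for this.

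The performance proof has three steps.
\begin{enumerate}
\item The number of stores that are in flight or queued at any time is $O(\log P)$ with high probability. While $k$ writers are active in round $i$, stores are issued at rate about $k2^i/P$ per step, and once some store lands, all active writers that observe it abort.
\item The queue length at $\CellW$ is therefore $O(\log P)$. A single landed store causes every concurrent writer to abort within $O(\tau)$ time. By the scheduler guarantee, each such writer takes a step with probability $1/2$ per window, and a Chernoff bound over $O(\log P)$ windows makes this hold with high probability.
\item The doubling schedule reaches probability about $1/k$ within $\log(P/k)$ rounds, and from then on some store lands in $O(1)$ rounds in expectation. So every writer finishes in $O(\log P)$ time.
\end{enumerate}

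The main obstacle is the adaptive adversary. It sees the timing of responses and can inject new writers, each starting at round $0$, exactly when the queue is long; it also orders the entries within the hardware queue. I would handle this with a potential argument over fixed windows of length $\Theta(\log P)$. The potential is the expected number of stores issued per step. Writers that start late add at most $P\cdot 2^{0}/P=1$ to it per window, so the number of issued stores in any window is dominated by a sum of independent coins with mean $O(\log P)$, regardless of the adversary's choices. A martingale (Freedman-type) bound then gives the high-probability statements, with a union bound over $\poly(P)$ windows and operations.
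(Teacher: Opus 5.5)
Your construction is essentially the paper's: one cell holding the value plus an $\eps\log P$-bit random tag, back-on writers that abort on seeing a changed tag, and aborted writes linearized just before the overwriting store. Linearizability and wait-freedom are fine. The gap is in the latency analysis, starting with step~1.

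Your justification is that late writers add at most $1$ per window, so the stores in a window are dominated by independent coins of mean $O(\log P)$. This does not bound anything, because the potential is driven by the doubling of writers that are already active, not by arrivals. (Arrivals at probability $1/P$ can also add a constant per \emph{step}, not per window. In your coin-before-read order, fresh writers can store during a busy interval without ever re-reading the cell, so you must show this rate stays below one store per step; the paper starts at $1/P^4$ so arrivals are negligible.)

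What actually caps the potential is downward pressure. Once $\phi_t=\Omega(\log P)$, a store is issued within $O(1)$ steps w.h.p., after which re-reading writers abort and $\phi$ drops by a constant factor. This is where the tag length matters. With $\eps\log P$ bits, a writer misses a store with a fixed probability $P^{-\eps}$, not an arbitrarily small $1/\poly(P)$. These misses are correlated across writers, since all compare against the same fresh tag. So ``change is detectable w.h.p.'' is false (unless $\eps$ grows with the target exponent; the paper's analysis works for any fixed $\eps$), and no Chernoff bound applies.

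The paper bounds the surviving potential by Markov (\cref{lem:potential-drop-after-store}), so each drop fails with probability $O(P^{-\eps})$. It then amplifies in \cref{lem:potential-upper-bound}: reaching $2^K$ times the threshold requires $\Omega(K)$ rising intervals, each of conditional probability $O(P^{-\eps})$, for a total of $P^{-\Omega(K\eps)}$. \cref{thm:register/short-fingerprint-fail} shows that some quantitative treatment of collisions is unavoidable. Your fallback counter cell does not make detection deterministic: it cannot be updated atomically with the value store, and it is itself a contended cell.

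Step~2 also fails as stated. One landed store does not make every concurrent writer abort within $O(\tau)$ time, since only a constant fraction is guaranteed a step per window. More importantly, it ignores writers that won their coin \emph{before} the first store landed. They are committed, the scheduler may delay them, and their stores arrive inside the busy interval and keep the queue non-empty.

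\cref{lem:busy-interval-length} bounds exactly these writers:
\begin{itemize}
\item For each $j\ge 1$, $O(\log P)$ writers commit around time $t_0-j$, by the potential bound.
\item Each stays unscheduled until $t_0$ with probability $2^{-\Omega(j)}$, so the total is $O(\log P)$; the adaptive dependencies are handled by a multiplicative Azuma bound over the scheduling coins.
\item A writer can survive $C$ re-reads inside the busy interval only if $C$ of the stores there wrote the same tag.
\end{itemize}

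Finally, a union bound over $\poly(P)$ windows covers only polynomially long executions. The latency guarantee is quantified over every fixed time $t$, however late. You need the bound at $t$ to hold conditioned on the history at $t-\Theta(P)$, i.e., recovery from an arbitrary earlier state. The paper gets this from the $t_0=t-P$ case of \cref{lem:potential-upper-bound} and the $O(P)$ bound of \cref{lem:busy-interval-length-large}.
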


Next, we construct a compare-and-swap (CAS) register:

\begin{theorem}
  \label{thm:cas-register-main-intro}
  For any $\l > 0$, there is an implementation of an
  $\l$-bit CAS register that is linearizable and lock-free, such that under the stochastic CRQW model with adaptive inputs,
  each operation has latency at most $O(\log P)$ with high probability in $P$.
  The construction uses $O(1)$ hardware CAS and read/write registers,
  with word size $w \ge \max\BK{\l + 2 \log \log P, \; 2 \log P}$ bits.
\end{theorem}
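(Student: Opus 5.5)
We build the CAS register from one hardware CAS cell \CellC, which holds the linearization value, plus $O(1)$ auxiliary read/write cells used for contention resolution. An operation $\opCAS(\xExp,\xNew)$ first \Load{}s \CellC. If the current value differs from $\xExp$, it returns false, linearized at that load; reads are never contended, so this costs $O(1)$. Otherwise the process has to attempt the hardware \CAS, and the danger is that up to $P$ processes enqueue on \CellC at once, which would give latency $\Theta(P)$.

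To prevent this we use a backoff scheme in the style of the register construction of \cref{thm:register-main-intro}. Each cell stores, next to its value, a small counter or phase tag of about $\log\log P$ bits (this is where the $\l + 2\log\log P$ word size comes from), and we also keep a $2\log P$-bit word holding a contention estimate or process id. A process that wants to write does so only with probability $p$. If it instead observes that the value or tag changed, it aborts and returns false, linearized at the moment it observed a change to a value $\ne \xExp$. If it observes no progress, it doubles $p$. Once some CAS succeeds, everyone with the same $\xExp$ sees the new value on a later read and fails; this is the key reason a single success resolves the whole group. Lock-freedom holds because a hardware CAS attempt fails only if another one succeeded. Linearizability follows by placing successes at the hardware CAS and failures at the read that witnessed a value $\ne\xExp$.

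For the performance analysis we argue in phases of length $\Theta(\tau)$. Let $k$ be the number of processes currently contending. As long as $pk \ll 1$, processes double $p$, so within $O(\log P)$ phases $pk$ reaches a constant. Once $pk=\Theta(1)$, in each phase, with constant probability, $O(1)$ hardware CAS operations are enqueued. One of them succeeds, and the others are short-circuited in $O(1)$ further phases. The scheduler guarantee, that each process takes a step in each window with probability $1/2$, together with Chernoff bounds, gives that over $O(\log P)$ phases enough good phases occur w.h.p. The adaptive adversary affects only the invocation times of new operations. We handle this by a potential argument: newly arriving processes start with small $p$ and cannot push the queue length beyond $O(\log P)$ except with probability $P^{-c}$, because the expected number of writers per window is a supermartingale-type quantity and a concentration bound applies to it.

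The main obstacle is the last step. Processes run at rates that differ by a constant factor and arrive adversarially, so their probabilities $p$ become desynchronized. Fast processes could double $p$ several times while slow ones see stale state, and together they could trigger a burst of $\omega(\log P)$ enqueued CASes. I would first try to tie $p$ to a shared epoch counter, kept in the auxiliary register and advanced in the style of \cref{thm:register-main-intro}, instead of local doubling. That way all contenders use the same $p$ up to a constant factor, and the burst size is dominated by a sum of independent Bernoullis with mean $O(1)$ per window, which is $O(\log P)$ w.h.p.
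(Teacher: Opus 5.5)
\textbf{There is a genuine gap: your argument cannot give the guarantee over arbitrarily long executions.}

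Leaving aside the unspecified auxiliary word, your construction is the paper's basic (short-lived) CAS register. Your argument can at best give the latency bound for executions of polynomial length. The theorem needs the bound at every fixed time of an arbitrarily long execution, and for this scheme it is false.

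The step that breaks is ``the others are short-circuited in $O(1)$ further phases.'' That holds only for contenders still in the read loop. It does not hold for \CAS instructions already in the hardware queue of \CellC. A failed hardware \CAS occupies a queue slot but leaves \CellC unchanged, so readers cannot see it. An adaptive user exploits this by repeatedly reading the current value $x$ and launching $\opCAS(x, x+1)$ on every free process:
\begin{enumerate}
  \item With probability $2^{-O(\log^2 P)}$ per round, the potential reaches $\Theta(\log^2 P)$ before any \CAS is issued.
  \item Then $\Theta(\log^2 P)$ instructions enter the queue together. The first succeeds, and the rest fail one by one over $\Theta(\log^2 P)$ steps, during which \CellC is frozen.
  \item Operations invoked in that window finish their $\Theta(\log P)$-step back-on without seeing any change. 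They enqueue $\Theta(P)$ more \CAS instructions, almost all of which will fail, and the situation repeats.
\end{enumerate}
So the bad state is absorbing, and it is reached eventually with high probability. A per-window failure bound of $P^{-c}$ is useless here. No supermartingale on new arrivals can fix it either, because the problem is a queue full of invisible failures, not the arrival rate. What is missing is an argument that the structure \emph{recovers} from an arbitrary state.

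\textbf{The missing idea} is to make every application of a hardware \CAS visible, including failed ones.
\begin{itemize}
  \item The paper adds a read/write cell \CellW. Every operation that applied a hardware \CAS, successful or not, then does a \StoreRandom of a fresh $2\log P$-bit string into \CellW. This, not a contention estimate or process id, is what the $2\log P$-bit word is for.
  \item A newly invoked operation must first see \CellW unchanged across a window of $2c\log P$ steps before it enters the back-on.
  \item \CellW is written only by stores, so each write changes it except with probability $P^{-2}$. While the structure is in a bad state (awaiting \CAS instructions, or potential $\ge c\log P$), \CellW therefore changes every $O(\log P)$ steps, and new operations get through only via collisions.
  \item A Markov bound then shows the bad interval ends within $P^2$ steps with probability $1-P^{-\Omega(1)}$. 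Iterating gives recovery to a healthy state within $O(P^2)$ w.h.p.\ from any history.
  \item Combined with the short-lived analysis, this gives a healthy state within $O(\log P)$ of any fixed time, conditioned on the history $\Theta(P^2)$ steps earlier. The waiting, calling and writing phases are then bounded separately, the last via busy intervals of \CellW.
\end{itemize}

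\textbf{The obstacle you single out is not the real one.} Different operations having different values of $p$ is already handled by the potential $\sum_q p^{(q)}_t$. A shared epoch counter would itself be a contended cell.

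\textbf{Your linearization is wrong when only the tag changed.} The observed value may equal $\xExp$ again (ABA), so the observation is not a point where the value differs from $\xExp$. The paper linearizes such failures just before the last successful hardware \CAS preceding the observation. That is sound only because operations with $\xNew = \xExp$ return true immediately, so every successful hardware \CAS changes the value. The $\log\log P$-bit tag also wraps around after $\log^2 P$ increments. Handling that requires showing no operation stays in the read loop for $\tfrac12\log^2 P$ steps.
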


We emphasize that the latency guarantees for both constructions hold for arbitrarily long executions, not just executions where the number of operations is polynomially bounded in $P$; this is nontrivial because over very long executions, low-probability ``bad schedules'' and other unlikely events are bound to eventually occur.
Nevertheless, in both of our constructions, any operation has high-probability latency $O(\log P)$, no matter how many operations were invoked before it.
Also, such constructions were not known previously even under a \emph{perfectly synchronous} scheduler (which is weaker than the stochastic CRQW model).

Next, we give a composition theorem that allows us to plug our primitives into existing worst-case concurrent algorithms that were analyzed without accounting for contention, to obtain efficient algorithms in the stochastic CRQW model.
Essentially, this shows that our constructions can allow algorithm designers to ignore contention and work in the ``standard'' worst-case model, and later transform their algorithms into ones that handle contention efficiently in a black-box manner.

\begin{theorem}[Informal]
  Given a concurrent deterministic algorithm that uses read/write registers and/or CAS registers,
  where each operation takes at most $T$ instructions in the worst-case asynchronous shared-memory model (which does not model contention),
  we can plug in our read/write and CAS registers to obtain an algorithm for the stochastic CRQW model
  where each operation has high-probability latency $O(T \log P)$.
\end{theorem}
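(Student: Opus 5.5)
The plan is a simulation argument: replace every hardware read/write register in the given algorithm $\mathcal{A}$ by an instance of the construction of \cref{thm:register-main-intro}, and every hardware CAS register by an instance of the construction of \cref{thm:cas-register-main-intro}. We then argue correctness and latency separately. \emph{Correctness} is the standard locality (composability) property of linearizability. Each implemented object is linearizable in every execution, including low-probability ones. Hence any execution of the composed algorithm is equivalent to an execution of $\mathcal{A}$ over atomic registers, with each high-level register operation taking effect at its linearization point. So the composed algorithm inherits the correctness guarantees of $\mathcal{A}$, and progress is preserved because the registers are wait-free and the CAS objects are lock-free. The contention-free worst-case bound of $T$ instructions per operation then transfers verbatim as a bound of at most $T$ \emph{calls} to implemented primitives per high-level operation.

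For \emph{latency}, the key idea is to view $\mathcal{A}$, together with the adversary that invokes $\mathcal{A}$'s operations, as a single adaptive adversary that invokes operations on our primitives. When a process executes $\mathcal{A}$, which primitive it calls next, and on which object, is a deterministic function of the history so far: the return values and response times of earlier primitive calls, plus the outer adversary's choices. This is exactly the information available to the adaptive adversary in the stochastic CRQW model with adaptive inputs. Moreover, a process only invokes a new primitive call once its previous one has returned, which matches the requirement that a process is idle when it is assigned a new operation. Local computation between calls can be absorbed, since it only delays invocations, which the adversary may do anyway. Under this view, each primitive call made by $\mathcal{A}$ is a single operation issued by an adaptive adversary. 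By \cref{thm:register-main-intro,thm:cas-register-main-intro}, each call then completes within $c\log P$ steps with probability at least $1-P^{-d}$, for any desired constant $d$ (adjusting $c$). A union bound over the at most $T$ calls of a fixed high-level operation gives latency at most $cT\log P$ with probability at least $1-TP^{-d}$. This is high probability in $P$ provided $T\le\poly(P)$, which we would state as an assumption or absorb by letting $c$ depend on $\log_P T$.

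The main obstacle is justifying that the guarantees hold \emph{simultaneously} for many distinct implemented objects that share the same scheduler and hardware queues, rather than for one isolated object. The theorems as stated concern a single object. The remedy I would try first is to note that distinct objects occupy disjoint hardware cells, so their hardware queues do not interact. The only shared resource is the scheduler's randomness, and the per-object analysis already assumes worst-case behavior for everything outside that object, including adversarially slowed processes subject only to the $\tau$-window guarantee. I would therefore apply the per-object theorem to the object touched by each call, treating all other objects' activity as part of the adversary. This requires checking that the proofs of those theorems condition only on the per-window scheduling events of the processes involved and on an adaptive adversary. If the theorems are only proven for a single object in isolation, I would strengthen them into a multi-object version by rerunning their analysis with the other objects folded into the adversary. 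A secondary subtlety is that lock-freedom of CAS gives only a system-wide guarantee, but the high-probability latency bound of \cref{thm:cas-register-main-intro} is per-operation, so it suffices.
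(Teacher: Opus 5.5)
Your overall route is the same as the paper's proof of \cref{thm:transformation}. You substitute the constructions of \cref{sec:register,sec:cas-register} for the hardware objects. Correctness comes from composability of linearizability, using determinism of the outer algorithm so its worst-case bound of $T$ low-level steps carries over. Latency comes from per-call $O(\log P)$ bounds plus a union bound. The step that does not follow as written is the union bound over the $T$ calls of $q$.

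The register and CAS theorems give only \emph{fixed-time} guarantees (\cref{defn:high-probability-latency}): for each fixed time $s$, with high probability all operations invoked at $s$ finish in $O(\log P)$. The $k$-th call of $q$ is invoked at an execution-dependent time. That time depends on the latencies of earlier calls, which are correlated with the objects' states. \cref{sec:latency_def} explains why a per-operation statement such as ``each call completes within $c\log P$ w.p.\ $1-P^{-d}$'' cannot simply be read off for such operations.

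The repair is what the paper does. Take a union bound over every fixed time in the window $[t,\, t+TP]$ and over all $M$ implemented objects. This is where $M \le P^{O(1)}$ is needed, in addition to $T \le P^{O(1)}$, which you mention. The result is a single high-probability event on which \emph{every} low-level step invoked in the window completes within $O(\log P)$. On that event, inductively all $T$ steps of $q$ are invoked inside the window, so $q$ finishes within $O(T\log P)$ deterministically (cf.\ \cref{sec:conventions}).

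Relatedly, local instructions are not free in this model: each one requires the process to be scheduled. Folding them into the adversary is fine for the per-object argument, but you must still charge their latency to $q$. \cref{prop:process-schedule-prob}, with the same union bound over times and processes, gives $O(\log P)$ per local step, so the final bound is unchanged.

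The paper does not discuss your concern that many objects share one scheduler; it simply applies the single-object theorems with a union bound. Your remedy, folding all other objects into the adaptive user, is a reasonable way to justify that step. The single-object analyses condition on the full history $\HistExt$ and rely only on future scheduling coins and not-yet-revealed algorithm randomness, both of which remain independent of the global history.
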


Using this composition theorem and existing constructions,
we obtain low-latency implementations of several other synchronization primitives.
These include load-linked/store-conditional (LL/SC) \cite{BW20}, fetch-and-increment~\cite{ERW12}, and bounded max registers and counters~\cite{AAC09}.
In all cases, the resulting primitive is linearizable and lock-free, and has polylogarithmic latency in the stochastic CRQW model.
The details of the transformation and applications are given in \cref{sec:applications}.

Finally,
we prove a
trade-off between the \emph{space complexity},
the \emph{high-probability latency} and the \emph{expected latency}
for a wide class of synchronization primitives, including test-and-set, read/write registers, CAS registers, and more.
One might initially expect that we can obtain high-probability latency of $O(\log P)$ (as our algorithms do), while simultaneously having much better \emph{expected} latency, perhaps even $O(1)$:
the scheduler in our model allows processes to take steps every $O(1)$ time units in expectation, and every $O(\log P)$ steps with high probability, so there is no barrier a priori.
However, we show that this is impossible:
for an algorithm with space complexity $M$,
if any operation terminates within $L$ timesteps with high probability,
then the \emph{expected} time to complete an operation is at least $\Omega(\log_{ML} P)$;
in particular,
if $M,L$ are polylogarithmic in $P$,
then the expected time complexity is $\Omega(\log P / \log \log P)$.
This trade-off applies even under the greedy scheduler (from, e.g.,~\cite{GibbonsMR98b}),
and even if each process executes only one operation,
initiated at the very beginning of the execution.

Interestingly, the trade-off does not rule out contention-resolution algorithms
that use $\polylog P$ space
and have expected latency $\polylog \log P$,
but do \emph{not} have low latency with high probability.
Indeed, we conjecture that such algorithms exist.
(For example, the trade-off does not rule out CAS registers that use constant memory and have constant expected latency, but the only high-probability guarantee they provide on latency is $O(P^c)$ for some constant $c > 0$.)

\section{The Stochastic CRQW Model}
\label{sec:model}

In this section, we formally introduce the stochastic CRQW model, which combines the standard CRQW model \cite{GibbonsMR96,GibbonsMR98a,GibbonsMR98b,SGBFG15} with a stochastic scheduler.

We adopt the word RAM model, where the memory consists of $w$-bit \defn{cells}
(also called \defn{machine words}).
There are $P$ processes, and the memory is partitioned into $P$ \defn{local memories},
one for each process, and a \defn{shared memory} that all processes can access.
In addition, each cell $C$ is associated with an \defn{instruction queue} $Q_C$
which stores \emph{waiting} instructions for cell $C$ (explained below).
We refer to the set of processes as $[P] = \left\{ 1,\ldots,P \right\}$,
and it is conventionally assumed that processes know their own process IDs, although this is not needed for algorithms in this paper.

We use $\poly(P)$ to denote a sufficiently large polynomial $P^{\kappa}$, where $\kappa$ is a constant that usually depends on other parameters in the algorithm, and can be set to larger than any fixed constant.
We say an event happens ``with high probability in $P$'' (or ``with high probability'' for short) if its probability is at least $1 - 1 / \poly(P)$.

\paragraph{Executions.}

An \defn{execution} of the system is represented by a sequence of discrete \defn{timesteps}.
As is standard, we use \emph{time} $t$
to refer to the moment immediately before the $t$-th timestep (e.g., we may refer to ``the state of the system at time $t$'').
In each timestep $t$, a subset of the processes is scheduled, and each of these processes \defn{invokes} an instruction.
There are two types of instructions:
\defn{local} instructions, which do not access the shared memory, and \defn{shared} instructions, which access the shared memory.
Shared instructions are further divided into two types:
\defn{state-changing instructions}, which may alter the shared memory,
and \defn{non-state-changing instructions}, which may not.

An instruction invoked in timestep $t$ is \defn{applied} in some timestep $t' \geq t$,
and its effects are visible at time $t' + 1$.
Local instructions and non-state-changing instructions are always applied in the same timestep in which they are invoked;
however, state-changing shared instructions may incur delays due to contention,
and consequently may be applied in a later timestep (see below).
This distinction between state-changing and non-state-changing instructions is what makes our model a variant of the \emph{CRQW} model, where ``reads'' are not queued,
but ``writes'' are.

In this paper, we use three types of shared instructions:
$\Load(C)$, $\Store(C, \xNew)$, and $\CAS(C, \xOld, \xNew)$.
A $\Load(C)$ instruction is non-state-changing:
If invoked in timestep $t$, it is
immediately applied in the same timestep, and returns the contents of cell $C$ at time $t$.
In contrast, $\Store(C, \xNew)$ and $\CAS(C, \xOld, \xNew)$ instructions
are state-changing.
Upon invocation in timestep $t$,
these instructions are \emph{enqueued} in the instruction queue $Q_C$.
If multiple $\Store$ and/or $\CAS$ instructions are invoked on the same cell $C$ in timestep $t$, they are added to $Q_C$ in adversarial order, determined by the scheduler.

After all instruction invocations in a given timestep, for each cell $C$ with a non-empty instruction queue $Q_C$,
the instruction at the head of the queue is dequeued and applied.
For a $\Store(C, \xNew)$ instruction, the value of cell $C$ is changed to $\xNew$, and this value is reflected in the system state at time $t+1$.
For a $\CAS(C, \xExp, \xNew)$ instruction, if the value of cell $C$ is $\xExp$ at time $t$, then the value is changed to $\xNew$ at time $t+1$, and the value returned is ``true''.
Otherwise, the value of cell $C$ is not changed, and the value returned is ``false''.
A process that has invoked an instruction that is currently waiting in some instruction queue is said to be \defn{waiting},
and otherwise the process is \defn{ready} to invoke its next instruction.

\paragraph{Scheduler.}
As we said above, in each timestep, a subset of processes is scheduled to take a step.
Only processes that are \emph{ready} can be scheduled,
and each process that is scheduled invokes its next instruction.
The processes that are scheduled at any given timestep
are chosen by a \defn{scheduler},
which also determines the order in which newly-invoked state-changing instructions applied to the same cell $C$ are added to the instruction queue $Q_C$.

We assume a stochastic adaptive adversarial scheduler, which is given the entire history of the execution,
but cannot see the local randomness of the processes before it affects the shared memory.
It is convenient to model this by adding a new instruction, $\StoreRandom(C, \D)$, where $C$ is a cell in the shared memory and $\D$ is a distribution of values that can be efficiently sampled using $O(1)$ local instructions. When a $\StoreRandom$ instruction is applied, it atomically samples a value $x \sim \D$ and stores it to $C$. In this paper, we only use one type of distribution: a uniform distribution over an interval of integers.
Since the random value is not generated until the $\StoreRandom$ instruction is applied, the scheduler cannot use it to make its scheduling decisions. Note that, although it is convenient to model $\StoreRandom$ as a separate instruction, it is really a constraint on the scheduler's power, and not a new instruction that needs to be implemented in hardware.

The scheduler is required to satisfy the following \emph{random-delay} property:

\begin{definition}[Informal]
  \label{def:random-delay-informal}
  We say that the scheduler satisfies the \defn{random-delay} property if there is a global constant $\tau \ge 1$ such that for every time $t$ and process $p$ that is \emph{ready} at time $t$, the probability that $p$ is scheduled at least once in the time interval $[t, \; t + \tau)$ is at least $1/2$, independently of the history up to time $t$.
\end{definition}

In~\cref{sec:terminology} we formally define this property through a collection of independent coin flips that the scheduler is required to obey,
but for the purpose of our technical overview, the informal definition is convenient.
In our constructions, $\tau$ is treated as a fixed system constant, and the algorithms may depend on an upper bound on it as well as on $P$.

\paragraph{Objects, operations, and the adaptive input model.}
A concurrent object supports a collection of \emph{operations},
which can modify the state of the object and/or return a value.
Operations are \emph{invoked} by processes in some timestep $t$,
and can eventually return in some later timestep;
we say that an operation is \defn{ongoing} if it has been invoked and has not yet returned.
We use boldface (e.g., $\opFont{op}$) to typeset  high-level operations, to distinguish them from hardware instructions (this is especially important for CAS).

In this paper, we assume that operations are invoked by a strong adaptive and global adversary, which we refer to as an \defn{adaptive user} (or \defn{user} for short). The user decides what operations to invoke in a given timestep $t$, and on which processes, based on the entire history of operation invocations and responses, including the times at which operations return (i.e., the user can carry out \emph{timing attacks}).
However, the adversary cannot directly see the internal state of the system, including the shared memory and the internal randomness of the processes; it interacts with the system only by invoking and observing the responses of operations.

Since each process can only run at most one operation at a time, we often conflate an operation with the process that is executing it; e.g., we may say that ``operation $q$ invokes a $\Store$ instruction'' instead of saying that the process executing operation $q$ invoked the instruction.
For the sake of analysis, we also assign a unique identifier $q \in \N$ to each operation in the execution ($q$ is just a label and does not relate to the ordering of operations). The operation does not receive $q$ as input.

We use the standard definitions of linearizability, lock-freedom and wait-freedom, which we omit here for brevity. These are defined with respect to the standard interleaved execution model, where processes take steps one at a time,
and we prove that our algorithms are linearizable and lock-free/wait-free under the interleaved model.

\paragraph{Operation latency.}

The main performance metric that we consider in this paper is \defn{latency},
that is, the time between the invocation of an operation and the response.
Our goal is to construct objects where all operations have latency $O(\log P)$ with high probability.
There is some subtlety involved in formalizing this requirement: because operations are chosen by an adaptive user, the notion of ``each operation'' is not well defined a priori.
We address this issue by quantifying over \emph{times} instead of over operations:

\begin{definition}[High-Probability Latency]
  \label{defn:high-probability-latency}
  Given an implementation of an object,
  we say that the \defn{high-probability latency of each operation} is at most $T$ if
  for any adaptive user and any scheduler satisfying the random delay rule,
  for every fixed time $t \ge 0$, with high probability in $P$, all operations invoked at time $t$ complete within $T$ time.
\end{definition}

This definition is useful because it lends itself to union bounds, allowing us to, for example, argue that several operations invoked one after another have low total latency (w.h.p.).
For the lower bound, we also consider \emph{expected latency} (see \cref{sec:lower-bound}).

\section{Technical Overview}
\label{sec:overview}

In this technical overview, we present our implementations for read/write and CAS registers,
and give a high-level overview of their analysis; we also sketch the main idea behind our lower bound.

\paragraph{Reduced representation.}
Although we write algorithms in pseudocode, for the latency analysis it is convenient to use a reduced representation where multiple local instructions are contracted together with the previous shared instruction. This allows us to reason only about shared instructions, which are where the crux of the analysis lies. Formally, the reduced representation takes the form of a \emph{finite state machine}, where each state represents one shared instruction followed by $O(1)$ local instructions. Transitions between states can be probabilistic, and can also depend on the values of local variables (but not the shared memory). In \cref{sec:state-machine}, we define this representation more formally and prove that it is equivalent to the original pseudocode in terms of latency.

\subsection{Constructing Read/Write Registers}
\label{sec:sketch_reg}

We begin by giving a high-level overview of our read/write register construction and its analysis.
For the sake of simplicity,
we focus on a special case where each $\Write$ operation is invoked with a unique value that is never used again in any other $\Write$ operation.
This means that if we \Read the register twice at times $t_1 < t_2$ while a \Write takes effect in between, the two \Read{}s will return different values, allowing us to detect the change.
In the full algorithm, we do not make this assumption (\cref{alg:shared-register-write}),
and instead
we attach a random \emph{fingerprint} to each value written,
which allows operations to detect changes to the register by other operations with certain probability.

\subsubsection{Overview of the Register Construction}

Our register construction uses a single memory cell, $\SharedCell$.
High-level $\Read$ operations simply $\Load$ the value of $\SharedCell$ and return it.
As for $\Write$ operations, they are implemented through a \emph{back-on} mechanism:
Upon invocation of $\Write(x)$,
a process first $\Load$s $\SharedCell$,
and initializes a write probability $p$ to $1/P^4$.%
\footnote{This initial small value simplifies the analysis, but we believe it is not essential, and can be increased to $1/P$.}
Then the process enters a loop,
where it $\Load$s $\SharedCell$ again to check if its value has changed;
if it has, the process returns immediately,
and if not, with probability $p$, the process invokes a hardware $\Store$
to write the value $x$ to $\SharedCell$ and then returns.
With probability $1-p$, the process instead increases $p$ by a multiplicative factor of $(1 + 1/c)$ for some large constant $c$,
and repeats.

\cref{alg:shared-register-write} gives the full pseudocode, where we use a \StoreRandom instruction instead of a \Store to write the value $x$ together with a random fingerprint $f$. However, for the remainder of this overview, because we assume uniquely-valued \Write{}s, we ignore the fingerprint and conflate \Store with \StoreRandom. See \cref{fig:write} for the state-machine representation of the algorithm: state \st{S} represents the first \Load of the cell, state \st{R} represents the back-on loop where the cell is repeatedly \Load{}ed to detect changes, and state \st{W} represents an invocation of a hardware \Store, after which the operation returns.

\begin{figure*}[t]
  \centering
  \begin{minipage}[c]{0.45\textwidth}
    \makeatletter\@twocolumnfalse\makeatother\begin{algorithm2e}[H]
      \caption{$\Write$ Operation}\label{alg:shared-register-write}
      \DontPrintSemicolon
      \SetKwFunction{fWrite}{\textsf{\textbf{write}}}
      \Fn{\fWrite{$\xNew$}} {
        $p \gets 1 / P^4$ \; \label{line:reg/2}
        $(\xOld, \fOld) \gets \Load(\SharedCell)$ \tcp{\instS} \label{line:reg/3} \label{line:reg/S}
        \Loop{
          $(\xNow, \fNow) \gets \Load(\SharedCell)$ \tcp{\instR} \label{line:reg/5} \label{line:reg/R}
          \If{$\fNow \ne \fOld$}{
            \Return \; \label{line:reg/7}
          }
          \WithProb{$p$}{\label{line:reg/8}
            $\StoreRandom(\SharedCell, \; (\xNew,$ \hspace{5em} \phantom{a}
            $\Unif(\BK{0,1}^{\eps \log P}) ))$ \tcp*[l]{\instW} \label{line:reg/W}
            \Return \; \label{line:reg/abort}\label{line:reg/10}
          }
          $p \gets p \cdot (1 + 1 / c)$ \label{line:register-write/defn-c} \; \label{line:reg/increase-prob}
        }
      }
    \end{algorithm2e}
  \end{minipage}
  \hfill
  \begin{minipage}[c]{0.45\textwidth}
    \centering
    \includegraphics[width=\textwidth]{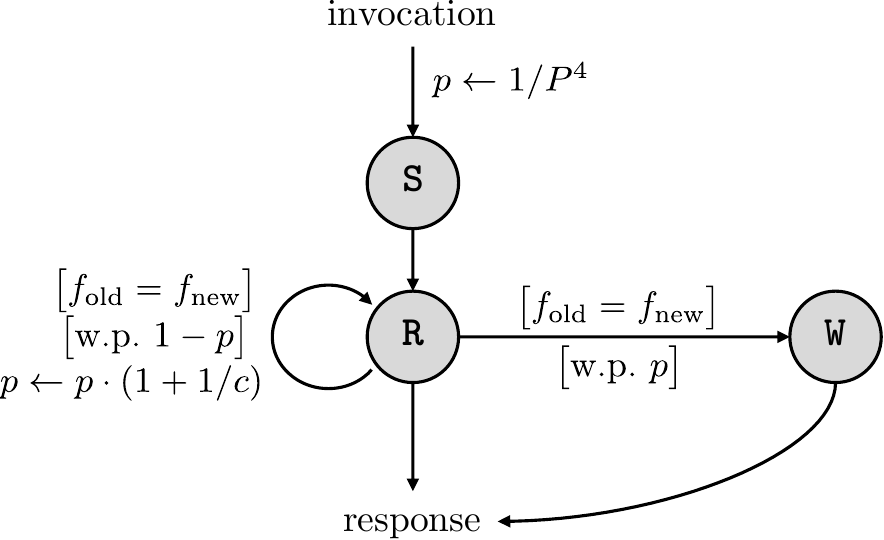}
    \Description{State machine for the write operation with states S, R, and W.}
  \end{minipage}
  \caption{Pseudocode for the $\Write$ operation of our read/write register, and its representation as a state machine. The comments ``\texttt{S}'', ``\texttt{R}'', and ``\texttt{W}'' in the pseudocode indicate the beginning of states \st{S}, \st{R}, \st{W} (resp.).
    In state \st{S}, the process reads the register. In state \st{R} it loops and re-reads the register, until either detecting a change and returning,
    or transitioning to state \st{W}, where it invokes a $\StoreRandom$
    and returns.}
  \label{fig:write}
  \label{fig:state-machine-register}
\end{figure*}

Linearizability of our register is straightforward to see:
$\Read$ operations are linearized when their $\Load$ is applied;
$\Write$ operations that invoke a hardware $\Store$ are linearized
when the \Store is applied; and $\Write$ operations that return as a result of observing a change in value are linearized immediately before the operation whose $\Store$ instruction changed the value.

\subsubsection{Analysis of the Latency}
\label{sec:overview/register-analysis}

The main idea of the latency analysis is to show that the scheduler cannot (except with low probability) cause a ``pile-up'' of \Write operations that all invoke \Store{}s at approximately the same time, causing significant contention.
To prove this, we analyze the back-on mechanism of the algorithm, and show that except with small probability, no more than $O(\log P)$ operations are posed to invoke hardware \Store{}s simultaneously, so the scheduler cannot cause contention of more than $O(\log P)$ instructions.

\paragraph{Defining and bounding a potential.}

We say an operation $q$ is \defn{active} at time $t$ if it is in state \instR and the value of cell \SharedCell has not changed since $q$ first \Load{}ed it in state \instS. Only active operations can transition to state \instW and invoke a hardware \Store, because by our unique-value assumption, other operations will see a changed value and return immediately from state \instR. Let $p_t^{(q)}$ be the local variable $p$ in \cref{alg:shared-register-write} at time $t$ for operation $q$, which indicates the probability that $q$ transitions to state \instW the next time it iterates through state \instR, assuming it does not quit due to detecting a change in \SharedCell's value. Using this notation, we define the potential at time $t$ as%
\footnote{This is analogous to what is often referred to as the \emph{broadcast sum} in wireless broadcast algorithms.}
\[
  \phi_t = \sum_{\substack{\text{active operation $q$} \\ \text{at time $t$}}} p_t^{(q)}.
\]

Intuitively, the potential at time $t$ equals the expected number of operations that will transition to state \instW and invoke a hardware \Store in the next constant number of timesteps, assuming they do not quit due to an earlier change in \SharedCell's value. Thus, bounding the potential allows us to bound the number of $\Store$ instructions that are invoked at roughly the same time.

We show that the potential at any time $t$ is at most $\phi_t \le \beta \log P = O(\log P)$ w.h.p.,
for a large constant $\beta$.
This holds due to a combination of two factors. First, the potential can never increase too fast---newly invoked operations only contribute a total of $P \cdot (1 / P^4) = 1 / P^3$ to the potential, and ongoing operations increase their write probability by at most a constant factor $(1 + 1/c)$ per timestep. Second, whenever the potential exceeds a threshold $\alpha \log P$ where $\alpha < \beta$ is a smaller constant, there is a ``downward pressure'' that causes the potential to decrease: in expectation, $\Omega(\log P)$ operations will transition to \instW and invoke a \Store in the next constant number of timesteps, so with high probability, at least one operation will do so. This \Store changes the value of cell \SharedCell, rendering all active operations inactive and decreasing the potential to zero.
Together, this shows that the potential cannot increase past $\alpha \log P$ for a long enough time to reach the higher threshold of $\beta \log P$, except with small probability in $P$.

As a corollary, in any timestep $t$, the number of operations that transition from state \instR to state \instW is at most $O(\log P)$ w.h.p.; this will be useful to bound the contention, as we see next.

\paragraph{Busy intervals.}
We say that a time interval $[t_0, \, t_1)$ is \defn{busy} if
the instruction queue $Q_{\SharedCell}$ is empty
at times $t_0$ and $t_1$,
but non-empty at all internal times $t_0 < t < t_1$, so that one \Store is applied on \SharedCell in every timestep $t \in [t_0, \, t_1)$.
We claim that w.h.p., each busy interval has length at most $O(\log P)$.

In order for a long busy interval $[t_0, \, t_1)$
of length $T = t_1 - t_0$ to occur,
there must be a large ``pile-up'' of operations that
transition from \instR to \instW
at approximately the same time (see Fig.~\ref{fig:busy_interval}):
We know that the instruction queue
is empty at time $t_0$,
but contains a total of $T$
\Store instructions over the duration of the busy interval.
We claim that each such instruction is invoked by a process that is in state \instR prior to
time $t_0$,
transitions to state \instW
prior to time $t_0$,
and then does not take another step until some time $t \in [t_0, \, t_1)$ inside the busy interval,
at which point it invokes a \Store instruction
that is added to the queue.
This is because during a busy interval the value of \SharedCell changes in every timestep, as the instruction queue is always non-empty;
therefore, any operation that
is in state \instR
and takes a step \emph{inside} the busy interval
observes the changed value and returns without invoking \Store.

To complete the analysis,
we consider each timestep $t_0 - j$ before the busy interval (for $j > 0$),
and bound the number of operations that
(a) transition from \instR to \instW in timestep $t_0 - j$, and
(b) do not take another step until after time $t_0$.
By the upper bound on the potential, the number of operations that transition from \instR to \instW in timestep $t_0 - j$ is at most $O(\log P)$ w.h.p.; by the random-delay rule of the stochastic scheduler (\cref{def:random-delay-informal}), the probability that each of these operations is not scheduled until time $t_0$ is at most $2^{-\Omega(j / \tau)} = 2^{-\Omega(j)}$.
Summing over all $j > 0$, we see that the expected number of operations that transition from \instR to \instW before time $t_0$
but do not take another step until after time $t_0$ is at most $\sum_{j = 1}^{\infty} 2^{-\Omega(j)} \cdot O(\log P) = O(\log P)$. The desired bound on the length of the busy interval then follows by concentration inequalities.
(This informal explanation ignores potential dependencies between the scheduler's and the user's adaptive decisions and the algorithm's randomness. In the full analysis, we handle these dependencies carefully, to obtain the independence properties that are needed for concentration.)

\begin{figure*}
  \centering
  \includegraphics[width=0.7\linewidth]{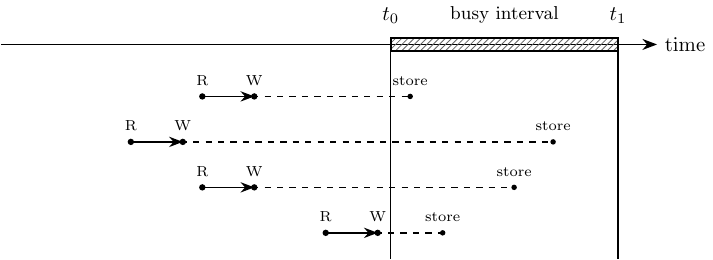}
  \Description{Timeline showing a busy interval and transitions from state R to state W before the interval.}
  \caption{
    A busy interval $[t_0, t_1)$. An arrow from \instR to \instW represents a timestep when the operation applies a \Load and transitions from state \instR to state \instW. Only operations that transition from \instR to \instW \emph{before} time $t_0$ can invoke $\Store$ instructions inside the busy interval,
    and there must be at least $t_1 - t_0$ such operations.
  }
  \label{fig:busy_interval}
\end{figure*}

\paragraph{Bounding the latency.}

A \Read operation completes after the process is scheduled once,
so its latency is $O(\log P)$ w.h.p.\ due to the stochastic scheduler.
As for $\Write$s, each \Write operation is scheduled for at most $O(\log P)$ times before it either returns or invokes a \Store, which takes at most $O(\log P)$ time w.h.p. In the latter case, the operation waits for the \Store to be applied, but the waiting time is bounded by the length of the busy interval during which the \Store is invoked, which is in turn bounded by $O(\log P)$ with high probability. Adding both parts together, the latency of \Write operations is w.h.p.\ $O(\log P)$.

\paragraph{Removing the unique-value assumption.}
So far, our overview has assumed that no two $\Write$ operations write the same value.
This simplifying assumption allowed us to argue that inside a busy period, every $\Store$ instruction applied to memory immediately releases all operations that take a step in state \instR to return without invoking a $\Store$ themselves,
as they observe a change of the register's value.

In the full analysis (\cref{sec:register}), we do not assume that each \Write operation is invoked with a unique value.
Instead, a random fingerprint of $\eps \log P$ bits is attached to the value of the register in the cell \SharedCell. Due to the possibility of fingerprint collisions, when the value of \SharedCell changes, we can no longer assert that all ongoing \Write operations will quit without invoking \Store{}s. We handle this issue by modifying the definition of \emph{active operations}, removing the requirement that the value of \SharedCell has not changed since the operation started. The definition of the potential is also modified accordingly. Then we use a more careful argument to show the ``downward pressure'' that causes the potential to decrease (\cref{lem:potential-drop-after-store}), and the desired bound on the potential follows (\cref{lem:potential-upper-bound}).

\subsection{Constructing a CAS Register}
\label{sec:sketch_CAS}

Our CAS register construction and its analysis are more involved than our read/write register.
A key difference between the semantics of the two primitives is that for the read/write register, there is a built-in ``contention-sensing'' mechanism: whenever a hardware $\Store$ instruction is applied, the value of the register changes, and this change can be seen by processes that re-$\Load$
the register.%
\footnote{We assumed for simplicity that values are never repeated, but as we briefly mentioned in the previous section,
  even if this is not the case, we can use random fingerprints to sense $\Store$ instruction applications with high probability.}
This is not true for $\CAS$: a failed hardware $\CAS$ still causes contention, but it does not change the value, so other processes cannot observe it.
We must be careful to avoid a high-contention scenario where the hardware instruction queue becomes full of ``invisible'' $\CAS$ instructions
that are bound to fail when applied,
with more processes coming in and invoking even more $\CAS$ instructions
because they cannot sense the high contention.

In the remainder of this section, we present our CAS register construction in two stages, and briefly discuss its analysis.
We begin by constructing a \emph{short-lived} CAS register that is similar to our read/write register, and performs well w.h.p.\ for $\poly(P)$ operations. We also explain why this simple strategy is \emph{not} suitable for obtaining a long-lived CAS register that performs well for arbitrarily long executions.
Then, we show how to modify the construction to obtain a \emph{long-lived} CAS register, where each operation has high-probability latency $O(\log P)$ no matter how many operations are invoked before it.
The full construction and analysis are given in \cref{sec:cas-register}.

\subsubsection{Short-Lived CAS Register}

Our construction of a short-lived CAS register
is given in Algorithm~\ref{alg:basic-cas-register}.
As in the read/write register, $\CellC$ stores a pair $(x,f)$: $x$ is the current register value and $f$ is a fingerprint.
The construction differs in two main ways:
First, the probability of invoking a hardware $\CAS$ starts out significantly smaller than for the read/write register, but increases more quickly (for technical reasons).
Second, we take care to prune out certain types of ``invisible'' $\CAS$ operations:
If a high-level operation $\opCAS(C, \xExp, \xNew)$ is invoked, and at any point, it reads a value $x \neq \xExp$ from register $C$,
then we pretend that the $\opCAS$ was applied at that point but failed, and return immediately;
if $\opCAS(C, \xExp, \xNew)$ is invoked with $\xExp = \xNew$, then we simply pretend that it was successfully applied and return immediately.
By pruning out such operations as soon as possible, we ensure that all operations that try to invoke the hardware $\CAS$ at least have the \emph{potential} to change the register's value when they are applied.

As with the read/write register, we use a fingerprinting mechanism to detect successful \CAS instructions, even if they end up changing the register to a value that was previously observed by some process (the so-called ``ABA problem'').
The fingerprinting mechanism is different from that of our read/write register, and
unlike the read/write register, fingerprinting
causes some minor subtleties in the linearizability proof:
For example, a changed fingerprint could cause a $\opCAS(C, \xExp, \xNew)$ to fail even though the current \emph{logical} value of the $\CAS$ register is indeed $\xExp$, but the fingerprint does not match the latest value the process read from $\SharedCell$.
We handle this by linearizing such failed $\opCAS$ operations \emph{before} the preceding successful $\opCAS$ operation, and we show that, because the logical value of the \CAS register changes at every successful $\opCAS$, this is indeed proper.

\begin{figure}[t]
  \centering
  \begin{minipage}[t]{0.47\textwidth}
    \vspace{0pt}
    \makeatletter\@twocolumnfalse\makeatother\begin{algorithm2e}[H]
      \caption{Basic CAS Register}\label{alg:basic-cas-register}
      \DontPrintSemicolon
      \SetKwFunction{BasicCAS}{\textsf{\textbf{BasicCAS}}}
      \Fn{\BasicCAS{$\xExpect$, $\xNew$}} {
        $p \gets 1 / P^{2c^3}$ \; \label{line:basic-cas/defn-c}
        $(\xOld, \fOld) \gets \Load(\CellC)$ \tcp{S} \label{line:basic-cas/S}\label{line:basic-cas/3}
        \If{$\xOld \ne \xExpect$}{ \label{line:basic-cas/4}
          \Return false \; \label{line:basic-cas/5}
        }
        \If{$\xNew = \xOld$}{ \label{line:basic-cas/6}
          \Return true \; \label{line:basic-cas/7}
        }
        \Loop{
          $(\xNow, \fNow) \gets \Load(\CellC)$ \tcp{R}
          \label{line:basic-cas/9}
          \If{$\xNow \ne \xOld$ or $\fNow \ne \fOld$}{
            \label{line:basic-cas/10}
            \Return false \;
            \label{line:basic-cas/11}
          }
          \WithProb{$p$}{
            \label{line:basic-cas/12}
            $\fNew \gets (\fOld + 1) \bmod (\log^2 P)$ \;
            \label{line:basic-cas/13}
            result $\gets \CAS(\CellC, \; (\xOld, \fOld),$ $\; (\xNew, \fNew))$ \tcp{C}
            \label{line:basic-cas/14}
            \Return result \;
            \label{line:basic-cas/15}
          }
          $p \gets 2p$ \;
          \label{line:basic-cas/16}
        }
      }
    \end{algorithm2e}
  \end{minipage}%
  \hfill
  \begin{minipage}[t]{0.51\textwidth}
    \vspace{0pt}
    \makeatletter\@twocolumnfalse\makeatother\begin{algorithm2e}[H]
      \caption{Long-Lived CAS Register} \label{alg:improved-cas-body}
      \DontPrintSemicolon
      \SetKwFunction{ImprovedCAS}{\textsf{\textbf{ImprovedCAS}}}
      \Fn{\ImprovedCAS{$\xExpect$, $\xNew$}} {
        \tcpNoSpace{Waiting Phase}
        $\yOld \gets \Load(\CellW)$ %
        \Loop{
          wait for $2c \log P$ steps \; %
          $\yNow \gets \Load(\CellW)$ \; %
          \If{$\yNow = \yOld$}{
            \Break \;
          }
          $\yOld \gets \yNow$ \;
        }
        \tcpNoSpace{Calling Phase}
        result $\gets \BasicCAS(\CellC, \, \xExpect, \, \xNew)$ \;
        \tcpNoSpace{Writing Phase}
        \If{\BasicCAS applied a \CAS instruction}{
          $\StoreRandom(\CellW, \; \Unif(\BK{0,1}^{2 \log P}))$ %
        }
        \Return result \;
      }
    \end{algorithm2e}
  \end{minipage}
\end{figure}

\paragraph{Analysis of the latency.}

We define the potential in the same way as for the read/write register (see \cref{sec:overview/register-analysis}).
However, an essential property that was true for the read/write register is \emph{not} true for the CAS register:
For the read/write register, whenever the potential exceeds some threshold $\alpha \log P$,
there is a ``downward pressure'' that causes the potential to decrease w.h.p.,
which implies $\phi_t \le O(\log P)$ w.h.p.\ at any point in the execution.
For the CAS register, this is only partially true:
When the potential exceeds $\alpha \log P$ \emph{but is not too large},
there is a strong ``downward pressure'' that causes the potential to quickly decrease w.h.p., emptying the instruction queue and reducing the potential to $1/P^{\Theta(1)}$.
However, if the potential ever increases past a logarithmic level, then an \emph{upward pressure} applies instead: the queue fills up instead of emptying, and arriving operations continue to back on, increasing the potential.
Although this is a low-probability event, in a long enough execution, it is bound to occur, and from this point on, the register does not return to a ``healthy'' state (except with exponentially small probability).
Thus, we can only show that this version of the \CAS register works well for $\poly(P)$ operations.

To analyze the short-lived \CAS register we define a ``healthy state'' for the CAS register, which approximately means that the potential is $1/\poly(P)$ and the instruction queue is empty.
Then we prove that if we are in a healthy state at time $t$, then w.h.p.\ we will again be in a healthy state at time $t + O(\log P)$.
This implies that the instruction queue empties every $O(\log P)$ timesteps (w.h.p.),
which means that the latency of operations is $O(\log P)$ (again, w.h.p.).

\subsubsection{Long-Lived CAS Register}

Before showing how we modify the short-lived \CAS register to make it long-lived,
let us first explain why the short-lived \CAS fails over sufficiently long executions.
Consider an adaptive adversary that repeatedly does the following:
\begin{itemize}
  \item Invokes a $\Read(\CellR)$ operation to get the current value of the \CAS register $\CellR$. Let $x$ be this current value.
  \item Concurrently invokes $\opCAS(\CellR, x, x+1)$ operations on every process that is currently free.
\end{itemize}
In the \emph{common case},
the size of the instruction queue does not grow beyond $O(\log P)$,
and the queue empties before the next hardware $\CAS$ is invoked.
However,
at each iteration of the adversary's pattern,
the following sequence of events occurs with probability at least $2^{-O(\log^2 P)}$:
first, the potential grows to $\Theta(\log^2 P)$ before any high-level $\opCAS(\CellR, x, x+1)$ operation invokes a hardware $\CAS$ instruction.
Then, $\Theta(\log^2 P)$ concurrent high-level $\opCAS$ operations invoke hardware $\CAS$ operations at the same time;
these instructions all enter the hardware instruction queue.
In the next timestep, a hardware \CAS succeeds, changing the value of \CellC from $x$ to $x+1$.
However, in the $\Theta(\log^2 P)$ subsequent timesteps,
the remaining $\Theta(\log^2 P)$ enqueued \CAS instructions are applied, and they all fail, so the value of \CellC remains unchanged for $\Theta(\log^2 P)$ timesteps.
Then, the adversary can invoke $\Theta(P)$ operations $\opCAS(C, \, x+1, \, x+2)$ at the beginning of the period when \CellC is unchanged, and these operations complete their back-on processes in $\Theta(\log P)$ time without detecting a change in the value of \CellC, so they all invoke \CAS instructions, blowing up the instruction queue size to $\Theta(P)$.
This situation will repeat except with exponentially small probability, causing $\Theta(P)$ latency per operation for a long time.

To prevent the aforementioned failure mode, we must be more proactive about managing contention, and specifically about detecting failed \CAS instructions, which delay other \CAS instructions but do not change the content of \CellC.
To do this, we use a separate memory cell $\CellW$, and have each operation that applies a hardware \CAS---no matter if successful or not---\Store a random string to $\CellW$, so that any application of hardware \CAS subsequently changes the content of $\CellW$ (w.h.p.).
Newly-invoked high-level $\opCAS$ operations are now required to wait before trying to carry out their $\CAS$ until they see a ``quiet period'' of $O(\log P)$ steps during which the value of \CellW does not change.
See \cref{alg:improved-cas-body} for the pseudocode.

A priori, it may not be obvious whether adding a contention-detection mechanism that involves writing to memory improves performance or makes it worse: cell $\CellW$ is itself subject to the same queuing effect as cell \CellC, so requiring processes to write to it after every time they apply a hardware \CAS risks creating equally bad contention.
However, $\CellW$ is written to using hardware \Store instructions, not \CAS.
This means that \emph{every write to $\CellW$ is visible}: every time a \Store instruction is applied, the cell changes value (except for collisions in the random string).
We show that this creates a self-regulating effect where during periods of relatively high contention, new operations do not ``pile on'' by trying to invoke hardware \CAS{}s, but instead wait until the instruction queue empties.
In addition, we show that the new waiting period imposed on newly-invoked processes is not too long: only $O(\log P)$ steps (w.h.p.). Combining this with the latency analysis for the basic CAS register yields the desired $O(\log P)$ high-probability latency guarantee for the long-lived CAS register.

\subsection{Lower Bound Overview}
\label{sec:overview-lower-bound}

To lower bound the expected latency of a wide class of shared-memory primitives, we consider the following input: among all $P$ processes, a subset $S$ of them will each receive an update operation in the first timestep, and the rest will stay idle; no operations will be invoked after the first timestep. The key observation is that each individual process in $S$ cannot tell whether it is the only process that receives an update operation until some process invokes a state-changing instruction to the shared memory. This leads to two cases: if each process tends to invoke a state-changing instruction too early, it is expected that many processes will invoke state-changing instructions in exactly the same timestep, leading to large latencies for some of the processes; otherwise, each process tends to not invoke state-changing instructions until a certain time threshold $T$, implying that its expected latency is at least $\Omega(T)$ if it is the only process that receives an update operation. In \cref{sec:lower-bound}, we formalize this intuition and set proper parameters to prove the desired lower bound.

\section{Basic Properties of the Model}
\label{sec:terminology}

This section proves additional properties of the stochastic CRQW model that will facilitate the analysis of algorithms in this model. We also introduce notation and terminology that will be used in the sequel.

\subsection{Stochastic Schedulers and Properties}

We start by defining notation for the execution history and the algorithm's randomness.

We define the history $\HistTrue(t)$ as all events that happened before time $t$, including the invocation and application of all instructions, random bits generated by the algorithm, and the enqueuing order of state-changing instructions.
In each timestep $t$, the scheduler takes $\HistTrue(t)$ and the scheduler's private randomness as input, and decides the set of scheduled processes.

For the sake of analysis, we denote by $\Randomness$ the private randomness of the algorithm: we imagine that there is an infinite tape of random bits for each process that is fixed prior to the execution, and each process reads random bits from its tape instead of generating random bits on the fly; then $\Randomness$ includes all the bits on these random tapes, including random bits that will be used in the future.

Next, we provide a formal definition of the random-delay property for stochastic schedulers. Throughout the remainder of the paper, we use \defn{(fair) coin flips} as an alias for Bernoulli random variables with expectation $1/2$.

\begin{definition}
  \label{def:scheduling-coin}
  \label{def:random-delay}
  We say the scheduler satisfies the \defn{$\tau$-random-delay} property if there are independent fair coin flips $\BK{\coin_{p, j}}_{p \in [P], j \ge 0}$ coupled with the execution of the algorithm (called the \defn{scheduling coins} of the scheduler) such that
  \begin{itemize}
    \item For every $j \ge 0$, all coins $\BK{\coin_{p, j'}}_{p \in [P], j' \ge j}$ are independent of the history $\HistTrue(j \tau)$ and of the algorithm's randomness $\Randomness$.
    \item For every $p \in [P]$ and $j \ge 0$, if $p$ is \emph{ready} at time $j \tau$ and there is $\coin_{p, j} = 1$, then $p$ is scheduled at least once in $[j \tau, \, (j + 1) \tau)$.
  \end{itemize}
\end{definition}

For convenience, we also define
\[
  \CoinSum_{p, [t_1, \, t_2)} \defeq \sum_{j \in \N \,:\, j \tau \in [t_1, \, t_2 - \tau]} \coin_{p, j}
\]
as the sum of scheduling coins of $p$ whose corresponding time windows completely reside in $[t_1, \, t_2)$.

\begin{proposition}
  \label{prop:coin-implies-scheduled}
  For process $p$, time interval $[t_1, \, t_2)$, and parameter $k \ge 1$, if $\CoinSum_{p, [t_1, \, t_2)} \ge k$,
  then either $p$ is scheduled at least $k$ times in time interval $[t_1, \, t_2)$, or $p$ is not ready at some time $t' \in [t_1, \, t_2)$.
\end{proposition}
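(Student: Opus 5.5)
The argument is a direct unpacking of \cref{def:random-delay}. Suppose $p$ is ready at every time $t' \in [t_1, t_2)$; we show that $p$ is then scheduled at least $k$ times in $[t_1, t_2)$, which gives the dichotomy. Let $J = \{ j \in \N : j\tau \in [t_1, t_2 - \tau],\ \coin_{p,j} = 1\}$. By the definition of $\CoinSum_{p,[t_1,t_2)}$ and the hypothesis, $|J| \ge k$.

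For each $j \in J$ we have $j\tau \in [t_1, t_2)$, because $j\tau \le t_2 - \tau < t_2$ (using $\tau \ge 1$). So by our assumption $p$ is ready at time $j\tau$. Since $\coin_{p,j} = 1$, the second bullet of \cref{def:random-delay} says $p$ is scheduled at least once in the window $[j\tau, (j+1)\tau)$. This window lies inside $[t_1, t_2)$: its left endpoint $j\tau$ is at least $t_1$, and its right endpoint satisfies $(j+1)\tau \le t_2$.

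Distinct indices $j \in J$ give pairwise disjoint windows $[j\tau, (j+1)\tau)$. Each window therefore contributes its own scheduling event, and we obtain at least $|J| \ge k$ distinct timesteps in $[t_1, t_2)$ at which $p$ is scheduled.

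The only point needing care is the boundary condition $j\tau \le t_2 - \tau$. It ensures both that the window is fully contained in $[t_1, t_2)$ and that the readiness assumption applies at the window's start time $j\tau$. Beyond checking this, there is no real obstacle; the independence clause of \cref{def:random-delay} is not needed here.
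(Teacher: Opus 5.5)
Your proposal is correct and follows essentially the same argument as the paper. Assume $p$ is ready throughout $[t_1, t_2)$. Each window $[j\tau, (j+1)\tau)$ with $j\tau \in [t_1, t_2 - \tau]$ and coin equal to $1$ then contributes at least one scheduled timestep. These windows are disjoint and contained in $[t_1, t_2)$, so $p$ is scheduled at least $\CoinSum_{p,[t_1,t_2)} \ge k$ times.
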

\begin{proof}
  Suppose $p$ is ready at all times $t' \in [t_1, \, t_2)$. Then,
  \begin{align*}
      & \phantom{{}\ge{}} \#\BK[\big]{\text{scheduled timesteps of $p$ in } [t_1, \, t_2)} \\
      & \ge \sum_{j \tau \in [t_1, \, t_2 - \tau]} \#\BK[\big]{\text{scheduled timesteps of $p$ in } [j \tau, \, (j + 1) \tau)} \\
      & \ge \sum_{j \tau \in [t_1, \, t_2 - \tau]} \coin_{p, j}
    = \CoinSum_{p, [t_1, \, t_2)}
    \ge k,
  \end{align*}
  implying that $p$ is scheduled at least $k$ times in $[t_1, \, t_2)$.
\end{proof}

The independence of scheduling coins allows us to obtain basic facts about the stochastic scheduling via concentration bounds.

\begin{proposition}
  \label{prop:coin-sum-independence}
  Let $\BK{(p_i, \l_i, r_i)}_{i = 1}^k$ be a collection of tuples, such that for any $i \ne j$, we have either $p_i \ne p_j$ or $[\l_i, \, r_i) \cap [\l_j, \, r_j) = \emptyset$. Then,
  $\BK{\CoinSum_{p_i, [\l_i, \, r_i)}}_{i = 1}^k$ are independent. Moreover, this is true conditioned on $\HistTrue\bk[\big]{\min_{i = 1}^k \l_i}$ and \Randomness.
\end{proposition}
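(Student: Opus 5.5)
The plan is to reduce everything to the independence structure of the scheduling coins in \cref{def:scheduling-coin}. Each $\CoinSum_{p_i, [\l_i, \, r_i)}$ is, by definition, a sum over the index set
\[
  J_i \defeq \myset{j \in \N}{j\tau \in [\l_i, \, r_i - \tau]},
\]
so it is a deterministic function of the coins $\BK{\coin_{p_i, j}}_{j \in J_i}$.

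The first step is to check that the index sets $\BK{p_i} \times J_i$ are pairwise disjoint. If $p_i \ne p_j$ this is immediate. Otherwise $[\l_i, \, r_i) \cap [\l_j, \, r_j) = \emptyset$. Any $j' \in J_i$ satisfies $j'\tau \in [\l_i, \, r_i - \tau] \subseteq [\l_i, \, r_i)$, and the same holds for $J_j$ with its own interval. A common index $j'$ would therefore place $j'\tau$ in both intervals, which is impossible. So the collections of coins feeding different $\CoinSum$'s are disjoint. Since the coins $\BK{\coin_{p,j}}$ are mutually independent, functions of disjoint subfamilies are independent, and the unconditional claim follows.

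For the conditional claim, let $\l \defeq \min_i \l_i$ and $j_0 \defeq \lceil \l / \tau \rceil$. Every index $j \in J_i$ satisfies $j\tau \ge \l_i \ge \l$, hence $j \ge j_0$ and $j_0\tau \ge \l$. By the first bullet of \cref{def:scheduling-coin}, the family $\BK{\coin_{p, j'}}_{p, \, j' \ge j_0}$ is independent of $\HistTrue(j_0\tau)$ and of $\Randomness$.

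The main subtlety is that we must condition on $\HistTrue(\l)$, not on $\HistTrue(j_0\tau)$. I would handle this by noting that $\HistTrue(\l)$ is determined by $\HistTrue(j_0\tau)$, since histories are nested: everything before time $\l$ also happened before time $j_0\tau \ge \l$. The definition is naturally read as joint independence from the pair $(\HistTrue(j_0\tau), \Randomness)$. Under that reading, the relevant coins are independent of the pair $(\HistTrue(\l), \Randomness)$, because it is a function of $(\HistTrue(j_0\tau), \Randomness)$.

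Conditioning on that pair therefore leaves the coins' joint law unchanged: they stay i.i.d.\ fair. The disjointness argument from the first step then applies verbatim under the conditional measure, giving conditional independence of the $\CoinSum_{p_i, [\l_i, \, r_i)}$.
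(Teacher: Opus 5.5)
Your proposal is correct and follows the same route as the paper's proof. The coin sets feeding the different $\CoinSum_{p_i, [\l_i, \, r_i)}$ are disjoint, so independence, and conditional independence given $\HistTrue(\min_i \l_i)$ and $\Randomness$, follows from \cref{def:scheduling-coin}; you also spell out two steps the paper leaves implicit, namely the disjointness check via the index sets $J_i$ and the passage from conditioning on $\HistTrue(j_0\tau)$ to conditioning on the coarser $\HistTrue(\min_i \l_i)$.
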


\begin{proof}
  The sets of scheduling coins used to determine $\CoinSum_{p_i, [\l_i, \, r_i)}$ for different $i$ are disjoint.
  Then, the statement follows directly from the independence of scheduling coins (see \cref{def:random-delay}).
\end{proof}

\begin{proposition}
  \label{prop:process-schedule-prob}
  Conditioned on $\HistTrue(t)$ and \Randomness, for every process $p$ at time $t$,
  \begin{align*}
    & \Pr\Big[
    \bk[\big]{\textup{$p$ is scheduled at least $k$ times in $[t, \, t + 4 k \tau)$}}
    \lor {} \\
      & \qquad \big(\textup{$p$ is not \emph{ready} at some time $t' \in [t, \, t + 4 k \tau)$}\big)
    \Big] \ge 1 - 2^{-\Omega(k)}
  \end{align*}
  over the randomness of $\CoinSum_{p, [t, \, t + 4k \tau)}$.
\end{proposition}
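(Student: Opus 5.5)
The plan is to reduce the statement to \cref{prop:coin-implies-scheduled} and then bound the lower tail of a sum of independent fair coins by a Chernoff bound. Set $t_1 = t$ and $t_2 = t + 4k\tau$. By \cref{prop:coin-implies-scheduled}, whenever $\CoinSum_{p, [t, \, t + 4k\tau)} \ge k$, either $p$ is scheduled at least $k$ times in $[t, \, t+4k\tau)$ or $p$ is not ready at some time in that interval. So the event in the statement contains the event $\{\CoinSum_{p,[t,\,t+4k\tau)} \ge k\}$, and it suffices to show that this event has probability at least $1 - 2^{-\Omega(k)}$, conditioned on $\HistTrue(t)$ and $\Randomness$.

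Next I count the coins that enter $\CoinSum_{p,[t,\,t+4k\tau)}$. These are the coins $\coin_{p,j}$ with $j\tau \in [t, \, t + 4k\tau - \tau]$, i.e.\ $j$ ranges over the integers in $[t/\tau, \, t/\tau + 4k - 1]$. An interval of real length $4k-1$ contains at least $4k-1 \ge 3k$ integers, since $k \ge 1$. So $\CoinSum$ is a sum of $m \ge 3k$ coins, with mean $m/2 \ge 3k/2$.

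Now the conditioning. Every index $j$ appearing satisfies $j\tau \ge t$, so $j \ge j_0 \defeq \lceil t/\tau\rceil$. By \cref{def:random-delay}, the coins $\{\coin_{p,j'}\}_{j' \ge j_0}$ are independent of $\HistTrue(j_0\tau)$ and of $\Randomness$. Since $j_0\tau \ge t$, the history $\HistTrue(t)$ is determined by $\HistTrue(j_0\tau)$, so these coins are also independent of $\HistTrue(t)$ and $\Randomness$. This is the same reasoning as in \cref{prop:coin-sum-independence}. Hence, even after conditioning, the coins are i.i.d.\ fair Bernoullis.

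Finally, a multiplicative Chernoff bound gives
\[
  \Pr\Bk{\CoinSum < k} \le \Pr\Bk{\CoinSum < \tfrac{2}{3}\cdot \tfrac{m}{2}} \le e^{-(1/3)^2 (m/2)/2} \le e^{-k/12} = 2^{-\Omega(k)}.
\]
The only step needing care is the third one: the conditioning at a non-multiple of $\tau$ and the exact count of windows fully contained in the interval. The rest is routine.
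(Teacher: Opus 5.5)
Your proof is correct and follows essentially the same route as the paper: count at least $4k-1 \ge 3k$ independent fair coins in $\CoinSum_{p,[t,\,t+4k\tau)}$, apply a Chernoff bound to get $\CoinSum_{p,[t,\,t+4k\tau)} \ge k$ with probability $1-2^{-\Omega(k)}$, and conclude via \cref{prop:coin-implies-scheduled}. Your treatment of conditioning at a time $t$ that need not be a multiple of $\tau$ (via $j_0 = \lceil t/\tau \rceil$) spells out a detail the paper leaves implicit.
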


\begin{proof}
  $\CoinSum_{p, [t, \, t + 4k \tau)}$ is a sum of at least $(4k - 1) \ge 3k$ independent fair coin flips, which implies $\E[\CoinSum_{p, [t, \, t + 4k \tau)}] \ge 3k / 2$. By Chernoff bound, with probability at least $1 - 2^{-\Omega(k)}$, we have $\CoinSum_{p, [t, \, t + 4k \tau)} \ge k$. When this inequality holds, \cref{prop:coin-implies-scheduled} implies that either $p$ is scheduled at least $k$ times in $[t, \, t + 4 k \tau)$, or $p$ is not ready at some time $t' \in [t, \, t + 4 k \tau)$.
\end{proof}

\cref{prop:coin-sum-independence,prop:process-schedule-prob} together allow us to apply concentration bounds on a collection of events of the type in \cref{prop:process-schedule-prob}. This technique will be useful in our analysis of algorithms.

Intuitively, a larger parameter $\tau$ in \cref{def:random-delay} corresponds to a weaker guarantee on the scheduler. This is formalized as follows.

\begin{proposition}
  \label{prop:tau-monotone}
  Suppose the scheduler satisfies the $\tau$-random-delay property. Then, for any $\tau' \ge 2 \tau$, the scheduler also satisfies the $\tau'$-random-delay property.
\end{proposition}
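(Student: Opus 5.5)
The plan is to build the $\tau'$-scheduling coins $\BK{\coin'_{p,j}}$ explicitly from the given $\tau$-coins $\BK{\coin_{p,i}}$. For each window $W'_j = [j\tau', (j+1)\tau')$, since $\tau' \ge 2\tau$, it contains at least one complete $\tau$-window. Let $i(j) \defeq \lceil j\tau'/\tau \rceil$ be the smallest index with $i(j)\tau \ge j\tau'$. Then
\[
(i(j)+1)\tau \le j\tau' + 2\tau \le (j+1)\tau',
\]
so $[i(j)\tau, (i(j)+1)\tau) \subseteq W'_j$. I would set $\coin'_{p,j} \defeq \coin_{p,i(j)}$. If $\tau'$ is not an integer multiple of $\tau$, the map $j \mapsto i(j)$ is still strictly increasing, because $\tau' \ge 2\tau > \tau$ forces consecutive $j$ to give distinct $i(j)$.

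Independence is checked as follows. The coins $\coin'_{p,j}$ are distinct members of the independent family $\coin_{p,i}$, so they are independent fair coins. For the first bullet of \cref{def:random-delay}, fix $j$. Every $\coin'_{p,j'}$ with $j' \ge j$ equals some $\coin_{p,i}$ with $i \ge i(j)$. That family is independent of $\HistTrue(i(j)\tau)$ and \Randomness. Since $i(j)\tau \ge j\tau'$, the history $\HistTrue(j\tau')$ is determined by $\HistTrue(i(j)\tau)$, because histories only grow. So the coins are independent of $\HistTrue(j\tau')$ and \Randomness as required.

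For the second bullet, suppose $p$ is ready at time $j\tau'$ and $\coin'_{p,j}=1$. If $p$ is scheduled somewhere in $[j\tau', i(j)\tau)$, we are done. Otherwise $p$ takes no step in that gap, so it stays ready: a ready process can only become waiting by being scheduled and invoking a state-changing instruction. Hence $p$ is ready at $i(j)\tau$. Then $\coin_{p,i(j)}=1$ forces a step in $[i(j)\tau, (i(j)+1)\tau) \subseteq W'_j$.

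The one subtle point is this readiness-persistence claim, and it holds by the model's definitions. The factor $2$ in the hypothesis is exactly what guarantees that a full $\tau$-window fits inside each $\tau'$-window regardless of alignment.
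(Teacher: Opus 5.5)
Your proof is correct and uses the same construction as the paper: take the leftmost aligned $\tau$-window inside each $\tau'$-window (your index $\lceil j\tau'/\tau \rceil$) and reuse its scheduling coin. The paper only asserts that the required properties are easy to verify. You check each one explicitly (injectivity of the index map, monotonicity of histories, and persistence of readiness across the gap before the aligned window), and each check is sound.
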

\begin{proof}
  Let $\BK{\coin_{p, j}}_{p \in [P], j \ge 0}$ be the scheduling coins, with respect to which the scheduler satisfies the $\tau$-random-delay property. We define the scheduling coins $\BK{\coin'_{p, j}}_{p \in [P], j \ge 0}$ for the $\tau'$-random-delay property as follows. Every interval $I = [j \tau', \, (j + 1) \tau')$ of length $\ge 2 \tau$ contains an aligned subinterval $I' = [k \tau, \, (k + 1) \tau)$ of length $\tau$; we let $I'$ be the leftmost such subinterval. We then define $\coin'_{p, j} \defeq \coin_{p, k}$. It is easy to verify that these scheduling coins satisfy the desired properties of the $\tau'$-random-delay property.
\end{proof}

\subsection{Local Simulation and State Machine}
\label{sec:state-machine}

We now describe a way to simplify the analysis of algorithms by ignoring the details of local instructions.  We will define a \emph{reduced state machine} of an algorithm that keeps track of only the shared instructions.  Time bounds proven on the reduced state machine will also apply to the original algorithm.

Let $A$ be an algorithm that runs in the stochastic CRQW model.
For the sake of analysis, after applying a shared instruction $i$ in timestep $t$, we can immediately \emph{simulate} the execution of local instructions until the next shared instruction, and regard these instructions as all applied atomically in timestep $t$ together with the shared instruction $i$. We formalize this idea as follows.

\begin{definition}[Reduced State Machine]
  Let $A$ be an algorithm in the stochastic CRQW model represented as a list of instructions, including shared instructions, local instructions, and \Nop instructions that do nothing. The \defn{reduced state machine} of $A$ is an automaton $\tilde{A} = (S, \delta)$ where the set $S$ of states consists of all shared and \Nop instructions in $A$, and $\delta$ is the transition function.
  When the algorithm is at state $i$ and gets scheduled in timestep $t$, the instruction $i$ is invoked, which will be applied in timestep $t' \ge t$ and return a value $\mathrm{ret}_i$.
  Then, the state machine transitions to the next state $i_{\textup{next}}$ according to the transition function
  \[
    \delta: (i, \, \mathrm{ret}_i, \, M, \, r) \mapsto
    (i_{\textup{next}}, \, M_{\textup{next}}),
  \]
  where $M$ and $M_{\textup{next}}$ denote the local memory state of the algorithm before and after the transition, and $r$ denotes the private randomness of the algorithm.
\end{definition}

There is a natural correspondence between the execution of the original algorithm $A$ and the reduced state machine $\tilde{A}$---the state machine is scheduled in timestep $t$ if and only if the original algorithm is scheduled to invoke a shared instruction (or \Nop) in timestep $t$. Fixing any scheduler for the original algorithm $A$, this correspondence induces a scheduler for the reduced state machine $\tilde{A}$, which we call the \defn{induced scheduler}.

We use $\HistExt(t)$ to denote the history of the execution of the reduced state machine $\tilde{A}$ up to time $t$. $\HistExt(t)$ includes all information in $\HistTrue(t)$, but it may contain more information due to the local simulation---a local instruction applied in timestep $t$ in the original algorithm $A$ may take effect earlier than timestep $t$ in the reduced state machine $\tilde{A}$, since the transition containing that local instruction atomically happened together with the previous shared instruction (or \Nop).
On the other hand, if we know $\HistTrue(t)$ and the randomness $\Randomness$ of the algorithm, we already have enough information to run the local simulation to infer $\HistExt(t)$. Combining both directions, we know $(\HistTrue(t), \Randomness)$ and $(\HistExt(t), \Randomness)$ are equivalent in terms of the information they contain.

\begin{theorem}
  \label{thm:ind-scheduler-random-delay}
  Suppose the scheduler for the original algorithm $A$ satisfies the $\tau$-random-delay property,
  and let $K \ge 1$ be the maximum number of local instructions that can be applied between a pair of shared instructions (and/or \Nop) in the original algorithm $A$.
  Then, the induced scheduler for the reduced state machine $\tilde{A}$ satisfies the $(2 K + 1) \tau$-random-delay property.
\end{theorem}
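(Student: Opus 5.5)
We will construct the scheduling coins $\coin'_{p,j}$ for the induced scheduler directly from the original coins $\coin_{p,k}$, following the pattern of \cref{prop:tau-monotone}. Set $\tau' = (2K+1)\tau$. Each window $I'_j = [j\tau', (j+1)\tau')$ of the reduced state machine will be split into consecutive aligned windows of the original scheduler. Since $\tau' \ge (2K+1)\tau$, $I'_j$ contains at least $2K$ complete aligned original windows $[k\tau,(k+1)\tau)$; we will use only the first $K+1$ of them (we need $K+1 \le 2K$ windows of slack, and the extra slack absorbs misalignment). The key point is that whenever the reduced machine is ready, the original algorithm is either about to invoke a shared instruction or is in the middle of executing at most $K$ local instructions. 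Therefore $K+1$ scheduled steps of the original process suffice to force a step of the reduced machine.

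The difficulty is that requiring all $K+1$ original coins to equal $1$ only gives probability $2^{-(K+1)}$, not $1/2$. So we must define $\coin'_{p,j}$ more cleverly while keeping it a fair coin. A single original coin $\coin_{p,k}$ cannot work directly, because one scheduled step may be spent on a local instruction. The fix is to use the fact that local instructions are applied immediately and never wait. Consider the first original window in $I'_j$ and note that it is not determined at time $j\tau'$ how many local steps remain.

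Instead, we will argue as follows. The induced scheduler schedules $\tilde{A}$ exactly when $A$ invokes a shared instruction or \Nop. At time $j\tau'$, if $\tilde A$ is ready, then $A$ is ready too, and its pending instructions before the next shared one number at most $K$. Since $A$'s local steps also occur only when $A$ is scheduled, we need $A$ to be scheduled at least $K+1$ times in $I'_j$.

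The plan is then to define $\coin'_{p,j}$ as a fair coin that is a function of the $2K+1$ original coins in $I'_j$ and satisfies $\coin'_{p,j}=1 \Rightarrow \CoinSum \ge K+1$. The majority of $2K+1$ fair independent coins is exactly such a fair coin, by symmetry since $2K+1$ is odd. This is why the constant is $2K+1$. To handle alignment, we take $j\tau'$ to be a multiple of $\tau$ (it is, since $\tau' = (2K+1)\tau$), so $I'_j$ is exactly the union of original windows $k = (2K+1)j, \dots, (2K+1)j+2K$. Then $\coin'_{p,j} \defeq \mathrm{Maj}(\coin_{p,k})$ over those $k$.

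It remains to verify the two conditions of \cref{def:random-delay}. \emph{Independence}: the coins $\coin'_{p,j'}$ for $j'\ge j$ are functions of the $\coin_{p,k}$ with $k\ge (2K+1)j$. These are independent of $\HistTrue((2K+1)j\tau)$ and $\Randomness$. Because $(\HistExt,\Randomness)$ carries the same information as $(\HistTrue,\Randomness)$, they are also independent of $\HistExt(j\tau')$. Distinct $(p,j)$ use disjoint original coins, so the new coins are mutually independent, and each is fair by majority symmetry. \emph{Scheduling}: if $\tilde A$ is ready at $j\tau'$ and $\coin'_{p,j}=1$, then at least $K+1$ original coins in $I'_j$ equal $1$. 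We apply the argument of \cref{prop:coin-implies-scheduled} window by window: as long as $A$ stays ready it gets at least $K+1$ steps, and among them one must be a shared instruction or \Nop, which is a step of $\tilde A$.

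The main obstacle is the subtle case where $A$ becomes waiting inside $I'_j$. This can only happen after invoking a state-changing shared instruction, which is itself a step of $\tilde A$, so the claim still holds. Checking this readiness correspondence carefully is the step I would spend the most care on.
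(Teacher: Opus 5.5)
Your proposal is correct and follows the paper's proof essentially step for step. The paper also defines $\coin'_{p,j}$ as the majority of the $2K+1$ original coins $\coin_{p,(2K+1)j},\dots,\coin_{p,(2K+1)j+2K}$; it gets fairness by symmetry, independence from the disjointness of the underlying coins together with the equivalence of $(\HistExt,\Randomness)$ and $(\HistTrue,\Randomness)$, and the scheduling property from \cref{prop:coin-implies-scheduled} plus the observation that a process becomes non-ready only after invoking a shared instruction. Your requirement of $K+1$ original steps, rather than the $K$ written in the paper's text, is the correct count when up to $K$ local instructions may precede the next shared instruction, and the majority coin does guarantee $\CoinSum \ge K+1$.
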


\begin{proof}
  Let $\BK{\coin_{p, j}}_{p \in [P], j \ge 0}$ be the scheduling coins of the original scheduler. We define the scheduling coins $\BK{\coin'_{p, j}}_{p \in [P], j \ge 0}$ for the induced scheduler as follows:
  \[
    \coin'_{p, j} \defeq \ind\Bk*{\CoinSum_{p, [j \tau', \, (j + 1) \tau')} \ge K + 1} = \ind\Bk*{\sum_{k = 0}^{2K} \coin_{p, (2 K + 1) j + k} \ge K + 1}.
  \]
  Then, we verify the desired properties of the scheduling coins. Let $\tau' \defeq (2K + 1) \tau$ for convenience.
  \begin{itemize}
    \item Each $\coin'_{p, j}$ is a fair coin flip (i.e., it has expectation $1/2$) via direct calculation.\footnote{The event $\coin'_{p, j} = 1$ is equivalent to that, among $2K + 1$ fair coin flips $\BK{\coin_{p, (2 K + 1) j + k}}_{k = 0}^{2K}$, at least $K + 1$ of them are $1$. Then, $\E[\coin'_{p, j}] = 1/2$ follows by symmetry.}
    \item Conditioning on $\HistExt(j \tau')$ and \Randomness, which is equivalent to conditioning on $\HistTrue(j \tau')$ and \Randomness, we know that $\BK{\coin_{p, j'}}_{j' \ge j \cdot (2 K + 1)}$ are independent coin flips by \cref{def:random-delay}. Since all coins in $\BK{\coin'_{p, j'}}_{j' \ge j}$ are determined by disjoint subsets of $\BK{\coin_{p, j'}}_{j' \ge j \cdot (2 K + 1)}$, we know that $\BK{\coin'_{p, j'}}_{j' \ge j}$ are also independent coin flips.
    \item For every $p \in [P]$ and $j \ge 0$, if process $p$ is ready at time $j \tau'$ and there is $\coin'_{p, j} = 1$, we show that $p$ is scheduled at least once in $[j \tau', \, (j + 1) \tau')$.

    Recall that when $\coin'_{p, j} = 1$, we have
    $\CoinSum_{p, [j \tau', \, (j + 1) \tau')} \ge K$.
    By \cref{prop:coin-implies-scheduled}, we know that either $p$ is scheduled at least $K$ times in $[j \tau', \, (j + 1) \tau')$, or $p$ becomes not ready at some time $t' \in [j \tau', \, (j + 1) \tau')$. In the former case, $p$ invokes a shared instruction (or \Nop) in $[j \tau', \, (j + 1) \tau')$ because the original algorithm $A$ can apply at most $K$ local instructions in a row. The latter case can only happen if $p$ invokes a shared instruction in $[j \tau', \, (j + 1) \tau')$ and then becomes \emph{waiting}. Combining both cases, we know that $p$ is scheduled at least once in the reduced state machine $\tilde{A}$ during $[j \tau', \, (j + 1) \tau')$. \qedhere
  \end{itemize}
\end{proof}

\cref{thm:ind-scheduler-random-delay} states that the induced scheduler is $\tau'$-random-delay with $\tau' = O(K \tau)$, where $K$ is the maximum number of local instructions that can be applied between a pair of shared instructions (and/or \Nop) in the original algorithm $A$. In many natural algorithms, including all algorithms we will present in this paper, $K$ is a constant, since otherwise we can manually insert \Nop into the original algorithm $A$ to break the chain of local instructions into smaller pieces. Combining $K = O(1)$ with \cref{thm:ind-scheduler-random-delay} and \cref{prop:tau-monotone}, we know that there exists a global constant $\tau' = O(\tau)$ such that both the original scheduler for $A$ and the induced scheduler for $\tilde{A}$ satisfy the $\tau'$-random-delay property. As a slight abuse of notation, throughout the remainder of the paper, we will directly use $\tau$ to denote this $\tau'$ that gives random-delay properties to both schedulers. Similarly, we will directly use $\coin_{p, j}$ to denote the scheduling coins for the induced scheduler, and use $\CoinSum_{p, [t_1, \, t_2)}$ to denote the sum of scheduling coins for the induced scheduler. Since the conditions in \cref{def:random-delay} hold for the induced scheduler, \cref{prop:process-schedule-prob,prop:coin-implies-scheduled,prop:coin-sum-independence} all hold for the induced scheduler as well.

The reduced state machine $\tilde{A}$ provides a convenient way to analyze the original algorithm $A$ by ignoring the details of local instructions. Prior to the analysis, we will first specify the set $S$ of instructions that appear in the reduced state machine, which is usually the set of all shared instructions, but we also have the choice to add \Nop into the set. At any time $t$ in the execution, we say a process $p$ is \defn{pending on} instruction $i \in S$ if $p$ is waiting at state $i$ in the reduced state machine $\tilde{A}$. When we say a process $p$ is \defn{scheduled} in timestep $t$, it means $p$ is scheduled in the reduced state machine $\tilde{A}$ in timestep $t$, i.e., the process invokes the instruction $i \in S$ that it is pending on. Any performance guarantee of the algorithm proven under the view of the reduced state machine can be directly transferred to the original algorithm.

\subsection{From Processes to Operations}

When we study concurrent data structures, we will often conflate an operation $q$ with the process $p$ that is executing it, and transfer the notation/terminology for processes to the operations. We list some notable notations here.

We say an operation $q$ is \defn{waiting} at time $t$ if it has invoked an instruction that has not been applied. Otherwise, we say the operation is \defn{ready} to invoke an instruction at time $t$. We use $R_t$ to denote the set of \emph{ready} operations at time $t$.

In \cref{def:random-delay}, we defined scheduling coins with process IDs as subscripts. When we analyze concurrent data structures, it is more convenient to use scheduling coins with respect to operations. This is formalized as follows.

\begin{proposition}
  \label{prop:random-delay-for-operations}
  If the scheduler satisfies the $\tau$-random-delay property, then there are independent fair coin flips $\BK{\coin_{q, j}}_{q \in \N, \, j \ge 0}$ coupled with the execution of the data structure (called the \defn{scheduling coins} for operations) such that
  \begin{itemize}
    \item For every $j \ge 0$, all coins $\BK{\coin_{q, j'}}_{q \in \N, \, j' \ge j}$ are independent of the history $\HistExt(j \tau)$ and of the algorithm's randomness $\Randomness$.
    \item For every $q \in \N$ and $j \ge 0$, if operation $q$ is \emph{ready} at time $j \tau$ and there is $\coin_{q, j} = 1$, then $q$ is scheduled at least once in $[j \tau, \, (j + 1) \tau)$.
  \end{itemize}
\end{proposition}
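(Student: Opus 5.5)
We will define the operation coins by relabeling the process coins through the (random) assignment of operations to processes, padding with fresh independent coins wherever an operation is not yet running. Fix the process coins $\BK{\coin_{p, j}}$ of the induced scheduler, which satisfy \cref{def:random-delay} with respect to $\HistExt$ and $\Randomness$. Also take an auxiliary family $\BK{\zeta_{q, j}}_{q \in \N, \, j \ge 0}$ of independent fair coins, independent of all of the above. For operation $q$ and window $j$, let $p(q)$ be the process executing $q$. Whether $q$ has been invoked by time $j\tau$, and which process runs it, is determined by $\HistExt(j \tau)$. We set $\coin_{q, j} \defeq \coin_{p(q), j}$ if $q$ is ongoing at time $j \tau$, and $\coin_{q, j} \defeq \zeta_{q, j}$ otherwise, i.e.\ when $q$ has not been invoked yet or has already returned.

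The second bullet is then immediate. If $q$ is ready at time $j\tau$, it is ongoing, so its process $p(q)$ is ready. Hence $\coin_{q,j} = \coin_{p(q),j} = 1$ forces $p(q)$ to be scheduled in $[j\tau, (j+1)\tau)$. While $q$ is ongoing, every scheduled step of $p(q)$ is a step of $q$.

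The main work is the first bullet: the joint independence and fairness of the coins $\BK{\coin_{q, j'}}_{q, \, j' \ge j}$ conditioned on $\HistExt(j\tau)$ and $\Randomness$. The difficulty is that the map $q \mapsto p(q)$ for windows $j' > j$ is itself random and may depend on earlier coins. I will handle this with a sequential revealing argument over windows $j' = j, j+1, \dots$.

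Condition on $\HistExt(j'\tau)$ and $\Randomness$, which also fix all coins $\coin_{q, j''}$ with $j'' < j'$. The assignment at window $j'$ is then a fixed injective partial map from operations to processes, since each process runs at most one operation at a time. Under this map, the window-$j'$ operation coins are a relabeling of the distinct process coins $\BK{\coin_{p, j'}}_p$, together with fresh $\zeta$'s for the unassigned operations. By \cref{def:random-delay}, the process coins $\BK{\coin_{p, j'}}_p$ are i.i.d.\ fair given this conditioning. Hence the conditional law of the window-$j'$ operation coins is i.i.d.\ fair regardless of the conditioning values.

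Since $\HistExt(j'\tau)$ refines $\HistExt(j\tau)$ and determines the coins of earlier windows, a chain-rule argument applies. Multiplying these conditional laws over $j' = j, \dots, J$ shows that every finite sub-family of $\BK{\coin_{q, j'}}_{j \le j' \le J}$ is i.i.d.\ fair given $\HistExt(j\tau)$ and $\Randomness$. Extending to the whole infinite family is standard.

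The subtle point I expect to need care with is that window-$j'$ coins must not influence the assignment at window $j'$. This holds because assignments at time $j'\tau$ are fixed by $\HistExt(j'\tau)$. It is also why we use the ``ongoing at time $j\tau$'' rule rather than anything depending on events inside the window.
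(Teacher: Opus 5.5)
Your construction is exactly the paper's: you set $\coin_{q,j}=\coin_{p(q),j}$ when $q$ is ongoing at $j\tau$ and use a fresh coin otherwise, and your argument for the second bullet is fine. The paper only says the first bullet is easy to verify, and your verification of it has a gap. Your chain rule relies on the claim that $\HistExt(j'\tau)$ and $\Randomness$ fix all coins of earlier windows. That claim is false. Your padding coins $\zeta_{q,j''}$ are independent of the execution by construction. A process coin also constrains the schedule only one way: if $\coin_{p,j''}=0$, or $p$ is not ready at $j''\tau$, or the scheduler runs $p$ anyway, nothing in the history records its value. So the tower-property step must put the earlier coins into the conditioning explicitly. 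That is, you must condition on $\mathcal{G}_{j'} \defeq \sigma\bigl(\HistExt(j'\tau), \Randomness, \BK{\coin_{p,j''}}_{j''<j'}, \BK{\zeta_{q,j''}}_{j''<j'}\bigr)$, and you then need the window-$j'$ process coins to be i.i.d.\ fair given $\mathcal{G}_{j'}$.

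This is a joint independence requirement, and it does not follow from \cref{def:random-delay} read literally. The literal definition gives mutual independence of the coins, and independence of the later coins from $(\HistExt(j'\tau),\Randomness)$, but only separately. Here is a counterexample that satisfies every literal requirement:
\begin{itemize}
\item The scheduler runs $p_1$ greedily, so $\coin_{p_1,0}$ is never used.
\item It reveals a private fair bit $h$ before time $\tau$ in a way the user can observe, for example by delaying a third process at timestep $0$ iff $h=1$.
\item Set $\coin_{p_1,1}\defeq\coin_{p_1,0}\oplus h$.
\item Let $q_0$ run on $p_1$ at time $0$. The user, having seen $h$, places a long operation $q_1$ before time $\tau$ on $p_1$ if $h=0$ and on $p_2$ if $h=1$.
\end{itemize}
Then your coins satisfy $\coin_{q_1,1}=\coin_{q_0,0}$ with probability $3/4$, so they are not independent.

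The fix is to state and use the filtration form of the random-delay property: the coins of windows $\ge j'$ are independent of $\HistExt(j'\tau)$, $\Randomness$, and all coins of windows $<j'$, jointly. The paper tacitly assumes this reading, for instance when it later conditions on ``all revealed scheduling coins of other operations.'' With $\mathcal{G}_{j'}$ in place of $\HistExt(j'\tau)$, your relabeling and chain-rule argument goes through as written.
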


\begin{proof}
  Let $p(q)$ denote the process that executes operation $q$.
  By \cref{def:random-delay}, the execution can be coupled with independent fair coin flips $\BK{\coin'_{p, j}}_{p \in [P], \, j \ge 0}$ such that
  \begin{itemize}
    \item For every $j \ge 0$, all coins $\BK{\coin'_{p, j'}}_{p \in [P], \, j' \ge j}$ are independent of the history $\HistExt(j \tau)$ and of the algorithm's randomness $\Randomness$.%
    \footnote{This step uses the fact that $(\HistExt(j \tau), \Randomness)$ contains exactly the same information as $(\HistTrue(j \tau), \Randomness)$.}
    \item For every $p \in [P]$ and $j \ge 0$, if process $p$ is \emph{ready} at time $j \tau$ and there is $\coin'_{p, j} = 1$, then $p$ is scheduled at least once in $[j \tau, \, (j + 1) \tau)$.
  \end{itemize}
  Then, we define the scheduling coins for operations as $\coin_{q, j} \defeq \coin'_{p(q), j}$ if operation $q$ is ongoing at time $j \tau$; otherwise, we let $\coin_{q, j}$ be a coin flip independent of the entire execution and of all other scheduling coins. It is easy to verify that these scheduling coins satisfy the desired properties in \cref{prop:random-delay-for-operations}.
\end{proof}

Throughout the remainder of this paper, we will exclusively use the scheduling coins $\BK{\coin_{q, j}}_{q \in \N, \, j \ge 0}$ for operations, and use $\CoinSum_{q, [t_1, \, t_2)}$ to denote the sum of scheduling coins for operation $q$ in time interval $[t_1, \, t_2 - \tau]$. Note that the scheduling coins for operations satisfy the same guarantee as \cref{def:random-delay}, with the only difference being the range of subscripts for the coins. Therefore, \cref{prop:coin-implies-scheduled,prop:coin-sum-independence,prop:process-schedule-prob} all hold for scheduling coins for operations, with processes replaced by operations.

\subsection{On the Definition of High-Probability Latency}
\label{sec:latency_def}

We discuss some subtleties related to defining a notion of latency for a given implementation.

Because we allow the user to be adaptive, the notion of ``each operation'' is not well defined a priori.  This is because whether and when an operation $q$ gets invoked can depend on the responses of earlier operations.
But if we condition on an operation $q$ being invoked, it may become impossible to derive any performance guarantee for $q$.
For example, the user could invoke a special operation $q$ only when they infer that the data structure is in a bad state, e.g., a very long instruction queue has been formed. In this case, conditioned on $q$ being invoked, $q$ is expected to incur very large latency.%
\footnote{We point out that the same challenge exists even in sequential data structures when the user is adaptive and has access to a clock. For example, in a sequential hash table, the user could query for an element $x$ only after detecting that inserting $x$ took a long time. In this case, the latency of the query is expected to be very large (conditioned on the query having been invoked).}

Another natural performance metric that one might think of is the \emph{expected latency} of each operation. However, attempts to define expected latency run into the same issue: we need to condition on the invocation of a particular operation. In fact, the issue is even worse for expectation than it is for high-probability bounds, because typically one wants to \emph{add up} the expected latency of different operations, and quantifying over time does not allow us to do this. In other words, while there exist definitions of expectation that we can analyze for our implementations in this paper, they are not definitions that we consider \emph{useful}.

\subsection{Conventions in Analysis}
\label{sec:conventions}

We conclude this section by introducing some conventions that we will follow in the analysis of the algorithms.

Recall that the entire execution is a complex random process, in which the algorithm, the scheduler, and the adaptive user all interact with each other. In the analysis, we will often prove statements of the form: conditioned on an arbitrary history $\HistExt(t)$, some property $X$ of the execution holds with a certain probability $p$. When we apply this statement, we may only condition on a shorter history $\HistExt(t')$ for some $t' \le t$, and the same property $X$ still holds with the same probability $p$ by the law of total probability. In other words, conditioning on less information only weakens the statement. In the rest of the paper, we omit this argument when it is clear from context---that is, when we switch from conditioning on a longer history to conditioning on a shorter one, we will not explicitly say that the rule of total probability was used.

We will also frequently prove that for any fixed time $t \ge 0$, some property $X(t)$ of the execution holds with high probability in $P$. A priori, this type of statement can only be applied when $t$ is a time that is fixed in advance, but we sometimes need to apply this statement to a time $t'$ that is a random variable determined by the random execution. In this case, if we can show a fixed time interval $[\l, r]$ of length $P^{O(1)}$ that contains $t'$ with high probability in $P$, then we can take a union bound of $X(t'')$ over all $t'' \in [\l, r]$ to show that $X(t')$ holds with high probability in $P$. Again, we will not repeat this argument; we will directly apply the statement to a random time $t'$ as long as it is shown to be bounded by a polynomial range with high probability in $P$.

\section{Read/Write Register}
\label{sec:register}

In this section, we study the \emph{read/write register} problem.

\begin{definition}
  A \defn{read/write register} (\defn{register} for short) is a data structure that maintains a single $\l$-bit integer $x$, and supports two types of operations:
  \begin{itemize}
    \item \Read: Return the current value of $x$.
    \item $\Write(\xNew)$: Overwrite $x$ with $\xNew$.
  \end{itemize}
\end{definition}

The most straightforward implementation of a read/write register is to use provided atomic instructions \Load and \Store. It is easy to see that this implementation is linearizable. However, the time complexity for each operation is problematic when $P$ operations attempt to apply \Store instructions simultaneously. According to our stochastic CRQW model, all $P$ \Store instructions will wait in the queue and apply one by one, causing $\Theta(P)$ latency on average for the $P$ operations. Since each operation can take up to $\Theta(P)$ time, this implementation is not \emph{scalable}.

Our main result in this section is a scalable implementation of a read/write register that takes $O(\log P)$ time for each operation with high probability in $P$, as stated in the following theorem.

\begin{theorem}[Theorem~\ref{thm:register-main-intro}, restated]
  \label{thm:register-main}
  We can construct a linearizable, wait-free $\l$-bit read/write register that works under the stochastic CRQW model with adaptive inputs, such that:
  \begin{itemize}
    \item Each operation has latency at most $O(\log P)$ with high probability in $P$.
    \item The data structure uses $O(1)$ machine words with word size $w \ge \l + \eps \log P$ bits in the shared memory, where $\eps > 0$ is a fixed constant, and $O(1)$ words of local memory for each operation.
  \end{itemize}
\end{theorem}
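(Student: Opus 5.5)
We prove \cref{thm:register-main} using \cref{alg:shared-register-write} for \Write and a single $\Load(\SharedCell)$ for \Read, with \SharedCell holding a pair $(x, f)$ where $f$ is an $\eps \log P$-bit fingerprint. The space bound is then immediate. Wait-freedom is also immediate: the write probability $p$ reaches $1$ after $O(c \log P)$ iterations, so every \Write invokes at most $O(\log P)$ shared instructions.

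\textbf{Linearizability.} We argue directly, using the rules from the overview:
\begin{itemize}
  \item a \Read is linearized at its \Load;
  \item a \Write that applies $\StoreRandom$ is linearized when it is applied;
  \item a \Write that returns after seeing $\fNow \ne \fOld$ is linearized immediately before the first \StoreRandom applied after its \Load in state \instS. Such a \StoreRandom exists, because the fingerprint only changes through \StoreRandom{}s. That \StoreRandom lies inside the operation's interval, and the write is overwritten immediately, so no read can observe it.
\end{itemize}

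\textbf{Latency.} We work in the reduced state machine with states \instS, \instR, \instW (\cref{thm:ind-scheduler-random-delay}) and use operation scheduling coins (\cref{prop:random-delay-for-operations}). There are three steps.

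First, we bound the potential. We define $\phi_t = \sum_q p^{(q)}_t$ over operations $q$ in state \instR; because of fingerprint collisions we drop the ``unchanged'' requirement. The aim is $\phi_t \le \beta \log P$ w.h.p.\ for every fixed $t$, which needs two facts.
\begin{itemize}
  \item \emph{Slow growth.} New operations contribute at most $P \cdot P^{-4}$ per step. An existing operation multiplies its $p$ by $(1+1/c)$ only when it is scheduled, so $\phi$ grows by at most a factor $(1+1/c)^{O(1)}$ per step, and climbing from $\alpha\log P$ to $\beta\log P$ takes $\Omega(c\log P)$ steps.
  \item \emph{Downward pressure.} Whenever $\phi_t \ge \alpha \log P$, within $O(\tau)$ steps some operation is scheduled and applies a \StoreRandom, except with probability $P^{-\Omega(\alpha)}$. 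After such an applied \StoreRandom, each operation in \instR that takes a step sees a new fingerprint with probability $1 - P^{-\eps}$ and leaves. So the potential of operations that remain, summed over the ``surviving'' mass, drops by a factor $P^{-\eps}$ in expectation within $O(\log P)$ steps, using \cref{prop:process-schedule-prob} for each operation to be scheduled.
\end{itemize}
I expect the hard part to be making this drop rigorous under the adaptive user and scheduler. I would handle it by exposing the fingerprint of each applied \StoreRandom only at application time, so it is independent of $\HistExt$. I would then run a stopping-time and union argument over the $\poly(P)$ window preceding $t$: the potential cannot stay above $\alpha \log P$ for $\Omega(c\log P)$ consecutive steps without a drop, except with probability $P^{-\Omega(1)}$. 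Choosing $c$ large relative to $\alpha,\tau$ and $\beta \gg \alpha$ closes this step. For arbitrarily long executions, I would look only at the last time before $t$ that $\phi$ was below $\alpha\log P$; this lies within $O(\log P)$ steps w.h.p., which avoids any dependence on execution length.

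Second, we bound busy intervals. Let $[t_0,t_1)$ be a busy interval containing $t$. Every \Store applied in it was either invoked before $t_0$ and delayed, or invoked inside the interval by an operation in \instR that read a changed value but saw a colliding fingerprint. For the first kind, the number of \instR$\to$\instW transitions at time $t_0-j$ is dominated by independent Bernoulli$(p^{(q)})$ trials whose sum has mean at most $\beta\log P$. Each such operation then stays unscheduled until $t_0$ with probability $2^{-\Omega(j)}$, by \cref{prop:coin-sum-independence,prop:process-schedule-prob}. For the second kind, the fingerprint survival probability is $P^{-\eps}$ per applied \Store, and the probability mass of operations that may transition is bounded by the potential. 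A Chernoff bound over the independent coins then gives $T = t_1-t_0 = O(\log P)$ w.h.p.

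Third, we combine. A \Read finishes within $O(\log P)$ steps by \cref{prop:process-schedule-prob}. A \Write takes $O(\log P)$ scheduled steps before reaching \instW, which costs $O(\log P)$ time w.h.p. Its \Store then waits at most one busy interval, $O(\log P)$ w.h.p., and a union bound finishes the proof.
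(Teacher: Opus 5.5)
Your overall architecture (potential bound $\to$ busy intervals $\to$ latency) matches the paper's, as do your linearizability, wait-freedom, and space arguments. The potential bound, which is the crux, has a genuine gap.

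\textbf{The potential bound.} Your slow-growth claim is false. By $\phi_{t+k} \le (\phi_t + c/P^3)(1+1/c)^k$, the constant factor $\beta/\alpha$ is gained in only $c\ln(\beta/\alpha) = O(1)$ steps, not $\Omega(c\log P)$. So ``$\phi$ cannot stay above $\alpha\log P$ for $\Omega(c\log P)$ steps without a drop'' says nothing about whether $\phi$ reaches $\beta\log P$.

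Moreover, a single drop attempt can fail with probability of order $P^{-\eps}$. If all active operations share one stale fingerprint, one collision keeps the whole mass alive. So one attempt yields only $P^{-\Omega(\eps)}$, not $1/\poly(P)$.

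What is missing is amplification over constant-length windows, as in \cref{lem:potential-upper-bound}:
\begin{itemize}
  \item Conditioned on the history at its start, each window of length $3c$ is falling ($\phi$ drops by a factor $20$) with probability $1-2^{-\Omega(\phi)}-O(P^{-\eps})$. Otherwise it rises by at most a factor $30$.
  \item Climbing from the threshold $\Theta(\log P)$ to $2^K$ times it forces $\kRise \ge \kFall/2 + K/10$.
  \item Summing over rise/fall patterns gives $P^{-\Omega(K\eps)}$, which is $1/\poly(P)$ once $K \gg 1/\eps$.
  \item Staying above threshold throughout a look-back of length $P$ is handled separately and costs $P^{-\Omega(P)}$.
\end{itemize}
Your ``$\beta\gg\alpha$'' gestures at this. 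Your claim that the last sub-threshold time is within $O(\log P)$ steps also needs exactly this argument. As written, however, the step does not go through.

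\textbf{The busy-interval step.} There are two further holes here.

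First, your second kind of \Store{}s (transitions inside the interval after a fingerprint collision) is not a sum of independent coins:
\begin{itemize}
  \item operations sharing a stale $\fOld$ survive together;
  \item with many distinct stale fingerprints, a collision may occur at almost every step;
  \item you cannot sum over the whole interval, since its length is exactly what you are bounding.
\end{itemize}
The paper handles this as follows. It counts only \StoreRandom{}s invoked in $[t, t+\log^2 P)$. It rules out operations that apply $\ge C+1$ \instR{}s inside the interval, because they need $C$ equal fresh fingerprints, which has probability $P^{-\Omega(\eps C)}$. Every other contributor with $\ge C+1$ \instR{}s in total goes into a set $T_j$, defined by its private randomness $C$ transitions ahead. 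Then $|T_j| = O(\log P)$ follows from $\phi_{t-j}$ alone. Since the $T_j$ are chosen adaptively, concentration uses multiplicative Azuma, not Chernoff.

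Second, for unbounded executions you must localize the start of the busy interval containing $t$ to a polynomial window before union-bounding over it. The paper does this via \cref{lem:busy-interval-length-large}; your proposal omits it.
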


\subsection{Algorithm Overview}
\label{sec:register-algorithm}

In this subsection, we describe our algorithm for the read/write register problem, and analyze its linearizability and latency.

Our algorithm maintains a single cell in the shared memory, called the \SharedCell, which stores a pair $(x, f)$. Here, $x$ represents the current value of the register, and $f$ is a random \emph{fingerprint} of $\eps \log P$ bits, where $\eps > 0$ is a fixed constant. Intuitively, the fingerprint allows operations to detect whether the cell has been overwritten by another operation; when this happens, an operation can abort by linearizing itself just before the successful overwrite.

The \Read operation is straightforward: we simply use \Load to read the value $(x, f)$ of the shared cell, and return $x$. Regardless of the state of the data structure, the \Read operation completes in $O(1)$ time in expectation, and $O(\log P)$ time with high probability in $P$, according to \cref{prop:random-delay-for-operations,prop:process-schedule-prob}.

The \Write operation uses a \emph{back-on} strategy: each operation starts with a small probability $p = 1 / P^4$ of invoking a \StoreRandom instruction to overwrite the shared cell immediately, while other operations wait and try to detect changes to the shared cell. Once a change is detected, the operation can abort. Otherwise, the operation multiplies $p$ by a constant factor and repeats. The pseudocode for the \Write operation is shown in \cref{alg:shared-register-write}, where $c \gg \max\{\eps^{-1}, \tau\}$ on line~\ref{line:register-write/defn-c} is a sufficiently large constant.
Line~\ref{line:reg/W} uses the special instruction \StoreRandom to atomically sample a uniformly random $\eps \log P$-bit fingerprint $\fNew$ and overwrite \SharedCell with the pair $(\xNew, \fNew)$.

There are three shared instructions in \cref{alg:shared-register-write}, which we refer to as \instS, \instR, and \instW, respectively. There are a constant number of local instructions between each pair of shared instructions. As mentioned in \cref{sec:state-machine}, we adopt the view of the \emph{reduced state machine} throughout this section: each ongoing operation $q$ is pending on one of \instS, \instR, and \instW; when $q$ is \emph{scheduled}, it invokes the shared instruction that it is pending on, and upon application of the shared instruction, all subsequent local instructions are applied immediately, and the operation $q$ \emph{transitions} to the next shared instruction. See \cref{fig:state-machine-register} for an illustration of the reduced state machine.

\subsection{Linearizability and Wait-Freedom}
\label{sec:register/linearizability}

\begin{restatable}{lemma}{RegisterLinearizable}
  \label{lem:register-linearizable}
  \cref{alg:shared-register-write} is a linearizable, wait-free implementation of a read/write register.
\end{restatable}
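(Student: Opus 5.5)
Wait-freedom first, since it is quick. A \Read performs one \Load and returns. A \Write performs one \Load in state \instS and then loops through state \instR. In each iteration it either returns or multiplies $p$ by $(1+1/c)$. After at most $\lceil \log_{1+1/c} P^4\rceil = O(\log P)$ iterations we have $p \ge 1$, so the ``with probability $p$'' branch is taken with certainty and the operation issues its single \StoreRandom and returns. Every operation therefore finishes within $O(\log P)$ of its own steps, whatever the interleaving, so the implementation is wait-free. All of this is argued in the standard interleaved model, where each \Store is atomic.

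For linearizability we fix an interleaved execution and assign linearization points as follows.
\begin{itemize}
  \item A \Read is linearized at its \Load.
  \item A \Write that executes \StoreRandom is linearized at that instruction. Call these \emph{effective} writes.
  \item A \Write $q$ that returns on line~\ref{line:reg/7} read fingerprint $\fOld$ at its \instS load and later saw $\fNow \ne \fOld$ at some \instR load. The contents of \SharedCell therefore changed strictly between these two loads, so some \StoreRandom was applied in that window. We pick the first such effective write $w$ and linearize $q$ immediately before $w$.
\end{itemize}
If several aborting writes are placed before the same $w$, we order them arbitrarily among themselves.

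We then check the two required properties.

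\emph{Each point lies inside the operation's interval.} For reads and effective writes this is immediate. For an aborting write $q$, its invocation precedes its \instS load, which precedes $w$'s application. The application of $w$ precedes the \instR load that triggered the return, which precedes $q$'s response. So the point placed just before $w$ lies within $q$'s interval.

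\emph{The sequential history is legal.} Every aborting write is immediately followed in the linearization by an effective write, with nothing between them except other aborting writes. No \Read is linearized between an aborting write and that $w$, since reads are pinned to load instants and the aborting writes occupy the infinitesimal slot just before $w$. Hence every aborting write is overwritten before anyone reads it. At any \Load, the actual value of \SharedCell is the $x$ written by the most recent effective write (or the initial value), and this coincides with the most recent write in the linearization order. So each \Read returns the value of the last linearized write, and the register semantics hold.

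The main subtlety is that fingerprints can collide: the contents of \SharedCell might be rewritten to the same $(x,f)$. This only affects one direction of the argument. We never need ``no change seen implies no write happened'', because a \Write that detects no change simply issues its own \StoreRandom. We only use ``change seen implies some \StoreRandom was applied in between'', and that inference is sound because \SharedCell is modified only by \StoreRandom.

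The one point to be careful about is that the chosen $w$ is applied strictly after $q$'s \instS load, so the linearization point genuinely falls inside $q$'s interval. This holds because the value read at \instS differs from the value read later, so some write must be applied after the \instS load and before that later load.
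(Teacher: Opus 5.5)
Your proof is correct and follows the paper's argument essentially line for line. It uses the same linearization points: reads at their \Load, writes that execute \StoreRandom at that instruction, and aborting writes immediately before a \StoreRandom applied between their \instS load and the \instR load that triggered the return. The legality check and the $O(\log P)$-iteration wait-freedom bound are also the same. The differences are cosmetic: you fix the \emph{first} such \StoreRandom where the paper picks one arbitrarily, and you explicitly verify that each linearization point lies inside its operation's interval, which the paper leaves implicit. In the other direction, the paper spells out that pending writes which already applied \StoreRandom are linearized there and all other pending operations are dropped, which your scheme covers only implicitly.
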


\begin{proof}
  Consider an infinite execution $\alpha$ of the read/write register. We linearize the operations in $\alpha$ as follows:
  \begin{itemize}
    \item $\Read$ operations are linearized at the point where they $\Load$ the value of the read/write register.
    \item A $\Write$ operation that applies the $\StoreRandom$ instruction in line~\ref{line:reg/W}
    is linearized when that instruction is applied.
    It is convenient to refer to such writes as \defn{successful writes}.
    \item A $\Write$ operation that does not execute line~\ref{line:reg/W} must return in line~\ref{line:reg/7},
    after executing in line~\ref{line:reg/5} a $\Load$ instruction that returned $(\xNow, \fNow)$
    where $\fNow \neq \fOld$.
    We refer to such writes as \defn{aborted writes}.

    Let $q$ be an aborted \Write operation.
    Prior to executing line~\ref{line:reg/5},
    $q$ executes line~\ref{line:reg/3},
    where it $\Load$s values $(\xOld, \fOld)$
    from the shared cell \SharedCell.
    As $q$ returns without applying a $\StoreRandom$ instruction (line~\ref{line:reg/W})
    and $\fOld \neq \fNow$,
    between the time that $q$ executes lines~\ref{line:reg/3} and~\ref{line:reg/5}
    the value of the shared cell must have been changed by some other successful \Write operation $q' \neq q$ that applied a $\StoreRandom$ instruction.
    We linearize $q$ immediately before $q'$ applies its $\StoreRandom$ instruction.
    (There may be multiple successful \Write{}s between the execution of lines~\ref{line:reg/3} and~\ref{line:reg/5} of $q$; in this case, we choose $q'$ arbitrarily.)
  \end{itemize}

  Incomplete \Read operations and \Write operations that have not applied line~\ref{line:reg/W} are not linearized.
  Incomplete \Write operations that have applied line~\ref{line:reg/W} are linearized as above.

  To show that this linearization is proper, consider a \Read operation $r$ that returns a value $x$.
  Let $t$ be the time when $r$ executes the \Load
  instruction that returns $(x, f)$ for some $f$.
  If $(x, f)$ is the initial content of \SharedCell and no \StoreRandom instruction has been applied yet, then the \Read is linearized before all \Write{}s,
  so it respects the semantics.
  Otherwise, let $w$ be the last \Write operation that applied the \StoreRandom instruction to \SharedCell at some time $t' < t$.
  The \Read $r$ is linearized at time $t$, and the \Write $w$ is linearized at time $t'$.
  No \StoreRandom instruction is applied between time $t'$ and time $t$ (by choice of $w$ and the fact that $r$ reads the value $(x, f)$ from \SharedCell).
  Thus, no \Write operation is linearized between time $t'$ and time $t$,
  and the \Read returns the value written by the last write that precedes it, as required.

  Wait-freedom is also easy to see. \Read operations return after executing a single instruction. \Write operations return upon reaching line~\ref{line:reg/abort}, or upon reading $\fNow \neq \fOld$ in line~\ref{line:reg/5} and returning in line~\ref{line:reg/7}.
  Each iteration through the loop increases $p$ by a constant multiplicative factor, so $p$ reaches 1 after $O(\log P)$ iterations,
  at which time the operation executes line~\ref{line:reg/W} and returns, if it has not aborted earlier by executing line~\ref{line:reg/7}.
\end{proof}

\subsection{Latency Analysis}
\label{sec:register-analysis}

In this subsection, we analyze the latency of our read/write register.
We start by defining some notation and terminology.
\begin{itemize}
  \item An ongoing operation is \defn{active} if it is pending on \instR. We use $A_t$ to denote the set of active operations at time $t$.
  \item The local variable $p$ in \cref{alg:shared-register-write} is called the \defn{invocation probability} of the operation. We use $p^{(q)}_t$ to denote the invocation probability of operation $q$ at time $t$.
\end{itemize}

Our proof tracks the \defn{potential} of the data structure, which is defined as
\[
  \phi_t \defeq \sum_{q \in A_t} p^{(q)}_t.
\]

The rest of the proof consists of three parts. First, in \cref{sec:register/potential}, we show a high-probability upper bound on the potential $\phi_t$ at any fixed time $t$. Based on this bound, in \cref{sec:register/busy-interval}, we further bound the lengths of \emph{busy intervals}---maximal intervals where the instruction queue of \SharedCell is non-empty. Finally, in \cref{sec:register/latency}, we derive the desired latency guarantee from the results on busy intervals.

\subsubsection{Bounding the Potential}
\label{sec:register/potential}

We start by analyzing how the potential $\phi_t$ changes over time. To develop intuition, imagine that $\phi_t = \Omega(\log P)$ at some time $t$. Given this, we will show that with high probability in $P$, one of the active operations will invoke a \StoreRandom instruction within the next $O(1)$ timesteps, changing the fingerprint of \SharedCell to a random string. Within another $O(1)$ timesteps, other active operations are likely to detect the change of fingerprint and abort the operation without invoking \StoreRandom{}s, which causes the potential to drop significantly. The main objective of this part is to show that the latter effect prevents the potential from growing beyond $\Theta(\log P)$ with high probability in $P$.

\begin{claim}
  \label{clm:potential-growth-bound}
  For any integer $k \ge 1$, $\phi_{t + k} \le (\phi_t + c / P^3) \cdot (1 + 1 / c)^k$.
\end{claim}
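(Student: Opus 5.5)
The plan is to prove the one-step bound $\phi_{t+1} \le \phi_t \cdot (1 + 1/c) + 1/P^3$ and then unroll it by induction on $k$.

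\textbf{The one-step bound.} Fix a timestep $t$ and consider each operation $q \in A_{t+1}$, i.e., an operation pending on \instR at time $t+1$. There are only two possibilities.
\begin{itemize}
  \item $q$ was already in $A_t$. Then in timestep $t$ it either was not scheduled, so $p^{(q)}_{t+1} = p^{(q)}_t$, or it was scheduled, applied its \Load, did not abort, did not transition to \instW, and executed line~\ref{line:reg/increase-prob}. In the latter case $p^{(q)}_{t+1} = p^{(q)}_t (1 + 1/c)$. Either way $p^{(q)}_{t+1} \le (1+1/c)\, p^{(q)}_t$.
  \item $q \notin A_t$. In the reduced state machine the only transition into \instR is from \instS; \instW leads to return, and \instR leads back to \instR or to exit. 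So $q$ was pending on \instS at time $t$ and was scheduled in timestep $t$. Its invocation probability was set to $1/P^4$ on line~\ref{line:reg/2} and has not yet been multiplied, so $p^{(q)}_{t+1} = 1/P^4$.
\end{itemize}
Operations in $A_t$ that leave the active set contribute nothing at time $t+1$, which only helps.

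Since each process runs at most one operation at a time, at most $P$ operations fall into the second case. Hence
\[
  \phi_{t+1} \le (1+1/c)\,\phi_t + P \cdot P^{-4} = (1+1/c)\,\phi_t + 1/P^3 .
\]

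\textbf{Unrolling.} Set $\psi_t \defeq \phi_t + c/P^3$. Then
\[
  \psi_{t+1} \le (1+1/c)\phi_t + 1/P^3 + c/P^3 = (1+1/c)\bk*{\phi_t + c/P^3} = (1+1/c)\,\psi_t .
\]
Iterating $k$ times gives $\psi_{t+k} \le (1+1/c)^k \psi_t$. Therefore
\[
  \phi_{t+k} \le \psi_{t+k} \le (\phi_t + c/P^3)(1+1/c)^k ,
\]
which is the claim.

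\textbf{Main obstacle.} The only delicate point is the bookkeeping in the reduced state machine: confirming that an operation's $p$ is multiplied at most once per timestep, and that newly active operations enter only via \instS with $p = 1/P^4$. The first holds because an operation invokes at most one instruction per timestep and the multiplication is folded into the \instR transition. The second follows from the transition structure in \cref{fig:state-machine-register}.
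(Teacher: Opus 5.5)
Your proof is correct and follows the paper's argument: you derive the same one-step recurrence $\phi_{t+1} \le (1+1/c)\,\phi_t + 1/P^3$ by bounding the per-step growth of each active operation's invocation probability and the contribution of newly active operations, then unroll it over $k$ steps. The only difference is cosmetic: you unroll using the shifted quantity $\phi_t + c/P^3$, whereas the paper sums the geometric series explicitly, and both give the same bound.
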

\begin{proof}
  At each timestep, the invocation probability of any active operation can increase by a factor of at most $1 + 1 / c$, and new active operations contribute at most $P \cdot (1 / P^4) = 1 / P^3$ to the potential. Therefore,
  \[
    \phi_{t + 1} \le \phi_t \cdot (1 + 1 / c) + 1 / P^3.
  \]
  Applying this recursively for $k$ steps, we obtain
  \begin{align*}
    \phi_{t + k} & \le \phi_t \cdot (1 + 1 / c)^k + \frac{1}{P^3} \sum_{i=0}^{k-1} (1 + 1 / c)^i      \\
      & = \phi_t \cdot (1 + 1 / c)^k + \frac{1}{P^3} \cdot \frac{(1 + 1 / c)^k - 1}{1 / c} \\
      & \le \Bigl(\phi_t + \frac{c}{P^3}\Bigr) \cdot (1 + 1 / c)^k. \qedhere
  \end{align*}
\end{proof}

\begin{lemma}
  \label{lem:store-probability}
  Conditioned on $\HistExt(t)$,
  \[
    \Pr\Bk[\Big]{\bk[\big]{\phi_{t + 2c} \ge \phi_t / 100} \land \bk[\big]{\textup{no \StoreRandom is applied in time interval } [t, \, t + 2c)}} \le 2^{-\Omega(\phi_t)}.
  \]
\end{lemma}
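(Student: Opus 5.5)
Fix the history $\HistExt(t)$, so that $A_t$ and the invocation probabilities $p^{(q)}_t$ are determined. Let $E$ be the bad event that $\phi_{t+2c} \ge \phi_t/100$ and no \StoreRandom is applied in $[t, t+2c)$. The plan is to show that on $E$, most of the initial potential must belong to operations that were never scheduled during the window, and that this is exponentially unlikely in $\phi_t$.

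\textbf{Step 1 (what $E$ forces).} A \StoreRandom that is invoked is applied within the window unless the queue of \SharedCell is long. That is a problem, because the pending queue may already contain earlier stores, so "no application" is a strong event. In fact, on $E$ no \StoreRandom is applied at all in the window, which means the queue of \SharedCell must be empty throughout $[t,t+2c)$. Hence any \StoreRandom invoked at some time $s<t+2c$ would be applied at time $s$, contradicting $E$. So on $E$, no active operation transitions \instR$\to$\instW during $[t,t+2c-1)$.

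Next, consider an operation $q\in A_t$ that is scheduled at least once in $[t,t+c)$. Each time $q$ is scheduled in state \instR, it either aborts (its fingerprint changed, which cannot happen on $E$, since no store is applied) or it stays in \instR and multiplies $p$ by $(1+1/c)$. Alternatively it moves to \instW, which is excluded. Only operations in $A_t$, plus newly arrived ones contributing at most $c\cdot(1+1/c)^{2c}/P^3$, can be active at $t+2c$.

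\textbf{Step 2 (coupling with coins).} Use the algorithm's private coins for the "with probability $p$" tests. I would condition also on the scheduling, treating the schedule as adversarial, but exploit that each test by $q$ succeeds with probability at least $p^{(q)}_t$, independently of everything the scheduler knows (the scheduler cannot see \Randomness before it acts; formally, reveal the coin only at the moment of use). Split $A_t$ into two sets:
\begin{itemize}
\item $S_1$: operations with $\CoinSum_{q,[t,t+2c)}\ge 1$. By \cref{prop:coin-implies-scheduled} each such operation is scheduled in the window (it stays ready, since it cannot become waiting on $E$ except via \instW).
\item $S_2$: the rest.
\end{itemize}
Each scheduled attempt by $q\in S_1$ fails to go to \instW with probability at most $1-p^{(q)}_t$. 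By a sequential (martingale-style) argument over successive attempts, the probability that all attempts in the window fail is at most $\prod_{q\in S_1}(1-p_t^{(q)})\le e^{-\sum_{S_1}p^{(q)}_t}$.

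\textbf{Step 3 (the unscheduled mass).} For $q\in S_2$ we need all of its roughly $2c/\tau - 1$ independent coins to be $0$, which happens with probability $2^{-\Omega(c/\tau)}\le 1/1000$, say, since $c\gg\tau$. On $E$, every surviving $q$ has $p_{t+2c}^{(q)}\le p_t^{(q)}(1+1/c)^{2c}\le e^2p_t^{(q)}$. Therefore $\phi_{t+2c}\ge\phi_t/100$ requires either $\sum_{S_1}p\ge\phi_t/2$, which by Step 2 has probability $e^{-\phi_t/2}$, or $\sum_{S_2}p\ge\phi_t/(100e^2)-o(1)$. The second is a weighted sum of independent Bernoulli$(\le 1/1000)$ indicators with weights $p^{(q)}_t\le 1$, whose mean is at most $\phi_t/1000$. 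A Chernoff bound for bounded weights, applied to coins that are independent of $\HistExt(t)$ by \cref{prop:coin-sum-independence}, gives $2^{-\Omega(\phi_t)}$.

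\textbf{Main obstacle.} The hardest step is making Step 2 rigorous against an adaptive scheduler that chooses when $q$ steps, possibly after seeing other operations' outcomes. I would handle this by a union/product bound via a supermartingale: the product of $(1-p)^{-1}$ factors over revealed failures, stopped at the first success. I would also bound $p\ge p_t^{(q)}$ for all attempts, since $p$ only grows. If $\phi_t$ is small, the statement is trivial by adjusting constants.
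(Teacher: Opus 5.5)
Your proposal has a genuine gap. In fact, Steps 1--3 as written would prove a false statement. The branch $\sum_{S_1}p\ge\phi_t/2$ in Step 3 never uses the hypothesis $\phi_{t+2c}\ge\phi_t/100$. Since $\sum_{S_1}p+\sum_{S_2}p=\phi_t$, one of your two branches always holds, so your argument would show outright that \emph{no \StoreRandom is applied in $[t,t+2c)$} has probability $2^{-\Omega(\phi_t)}$. That is false. Suppose every operation of $A_t$ applied its \instS at a common time $t_0<t-1$, and a \StoreRandom was applied at time $t-1$, leaving the queue empty. With probability $1-P^{-\eps}$ the new fingerprint differs from their common $\fOld$. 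Then every operation of $A_t$ that gets scheduled aborts at its first \instR without ever tossing its $p$-coin. Up to the $O(1/P^3)$ chance that a newly invoked operation writes, no \StoreRandom is applied in the window, even if $\phi_t\gg\log P$. This is exactly the situation of \cref{lem:potential-drop-after-store}, and it is why the lemma needs the conjunct on $\phi_{t+2c}$.

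Two claims in Step 1 fail here. (i) Aborts \emph{can} happen on $E$. In this section an operation is active merely because it is pending on \instR, so its $\fOld$ may already be stale at time $t$. Your product bound $\prod_{q\in S_1}(1-p^{(q)}_t)$ in Step 2 therefore bounds the wrong event, since an aborting operation never makes an attempt. (ii) An empty queue rules out \emph{invocations} of \instW in the window, not \emph{transitions} to \instW, which happen when \instR is applied. An operation that applies \instR at time $t+2c-1$ and moves to \instW is perfectly consistent with $E$. So is one that moves earlier and is simply not scheduled again before $t+2c$.

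The repair is essentially the paper's argument. Look at each operation's \emph{first} \instR in $[t,t+c)$ and split $A_t$ into four groups: operations that are not scheduled, that abort, that stay, and that transfer to \instW. Aborting and transferring operations have left the potential by $t+c$. So the hypothesis $\phi_{t+2c}\ge\phi_t/100$ forces $\pSlow+\pStay\ge\phi_t/1000$, and this is where the hypothesis is genuinely used. The two remaining cases go as follows.
\begin{itemize}
\item Large slow mass is excluded by a Chernoff bound over scheduling coins; your Step 3 already covers this part.
\item Large stay mass means the operations that saw an unchanged fingerprint and tossed their coin have mass at least $\phi_t/2000$. By Chernoff, at least $\phi_t/3000$ of them transfer to \instW before $t+c$. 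Each of these has an independent probability at least $1/2$ of being scheduled again and invoking \instW before $t+2c$, which forces a \StoreRandom to be applied.
\end{itemize}
That last step pays for the second scheduling after the transition, which is why transitions are only counted in the first half of the window. It is exactly what your Step 1 skips.
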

\begin{proof}
  We partition the set $A_t$ of active operations at time $t$ into four disjoint groups according to their behavior in the time interval $[t, \, t + c)$:
  \begin{enumerate}
    \item $\ASlow$: Operations that do not apply any shared instruction before time $t + c$.
    \item $\AAbort$: Operations that apply at least one \instR before time $t + c$, where the first \instR observes a changed fingerprint ($\fNow \ne \fOld$), causing these operations to abort.
    \item $\AStay$: Operations that apply at least one \instR before time $t + c$, where the first \instR observes $\fNow = \fOld$, and they remain pending on \instR after the first \instR.
    \item $\ATransfer$: Operations that apply at least one \instR before time $t + c$, where the first \instR observes $\fNow = \fOld$, and the operation transitions to \instW after the first \instR. Note that these operations are not required to actually invoke \instW in $[t, \, t + c)$.
  \end{enumerate}
  Let $\pSlow, \pAbort, \pStay, \pTransfer$ denote the sum of invocation probabilities of operations in the respective groups at time $t$. We have $\pSlow + \pAbort + \pStay + \pTransfer = \phi_t$.

  \begin{claim}
    $\phi_{t + c} \le \pSlow + e \cdot \pStay + e / P^3$.
  \end{claim}
  \begin{proof}
    By definition, $\phi_{t + c}$ is the sum of invocation probabilities of all operations that are active at time $t + c$. The operations active at time $t + c$ consist of:
    \begin{itemize}
      \item Operations from $A_t$ that remain active at time $t + c$ (i.e., $A_t \cap A_{t + c}$).
      \item Operations that newly become active in $(t, \, t + c]$ (i.e., $A_{t + c} \setminus A_t$).
    \end{itemize}

    At time $t + c$, operations in groups $\AAbort$ and $\ATransfer$ are no longer active (they have either aborted or transitioned to \instW). Therefore, $A_t \cap A_{t + c} \subseteq \ASlow \cup \AStay$. Thus, the total invocation probability of operations in $A_t \cap A_{t + c}$ at time $t + c$ is at most
    \[
      \pSlow + (1 + 1/c)^c \cdot \pStay \le \pSlow + e \cdot \pStay,
    \]
    where the factor $(1 + 1/c)^c$ accounts for the growth of invocation probabilities over $c$ timesteps.

    The operations that newly become active in $(t, \, t + c]$ start with invocation probability $1/P^4$ and can grow by a factor of at most $(1 + 1/c)^c \le e$. Since at most $P$ new operations can become active, their total invocation probability at time $t + c$ is at most
    \[
      P \cdot (1/P^4) \cdot e = e/P^3.
    \]
    Adding two parts together yields $\phi_{t + c} \le \pSlow + e \cdot \pStay + e/P^3$, as claimed.
  \end{proof}

  Combining this claim with \cref{clm:potential-growth-bound}, we have
  \[
    \phi_{t + 2c} \le \Bigl(\phi_{t + c} + \frac{c}{P^3}\Bigr) \cdot (1 + 1 / c)^c \le e \cdot \pSlow + e^2 \cdot \pStay + O(1 / P^3).
    \numberthis \label{eq:phi_t+2c}
  \]

  We then prove the lemma by bounding the probability of the following \defn{bad event}:
  \begin{itemize}
    \item $\phi_{t + 2c} \ge \phi_t / 100$, and
    \item there is no \StoreRandom instruction applied in $[t, \, t + 2c)$.
  \end{itemize}
  When the bad event happens, according to \eqref{eq:phi_t+2c}, we have
  \[
    \phi_t / 100 \le \phi_{t + 2c} \le e \cdot \pSlow + e^2 \cdot \pStay + O(1 / P^3).
  \]
  This implies
  \[
    \pSlow + \pStay \ge \frac{\phi_t}{100 e^2} - O(1 / P^3) \ge \phi_t / 1000,
    \numberthis \label{eq:pSlow+pStay}
  \]
  where the last inequality holds because we can assume $\phi_t \ge 1$ without loss of generality (otherwise the lemma is trivial). It further implies that either $\pSlow \ge \phi_t / 2000$ or $\pStay \ge \phi_t / 2000$. Based on this condition, there are three cases for the bad event to occur:
  \begin{enumerate}
    \item $\pSlow \ge \phi_t / 2000$;
    \item $\pStay \ge \phi_t / 2000$ and $|\ATransfer| < \phi_t / 3000$;
    \item $|\ATransfer| \ge \phi_t / 3000$ and no \StoreRandom is applied in $[t, \, t + 2c)$.
  \end{enumerate}
  In the rest of the proof, we will bound the probability of these cases separately.

  \paragraph{Case 1: $\pSlow \ge \phi_t / 2000$.}
  According to \cref{prop:process-schedule-prob}, conditioned on $\HistExt(t)$, every operation $q \in R_t$ has a probability of at least $1 - 2^{-\Omega(c)}$ to invoke at least one shared instruction in $[t, \, t + c)$, independently of other operations.%
  \footnote{Throughout the paper, by saying ``event $X$ has probability $p$ to occur independently of random variables $Y_1, \dots, Y_n$'', we mean that conditioned on an arbitrary realization of $Y_1, \dots, Y_n$, the probability of $X$ occurring is at least $p$.}
  Taking a summation over all active operations $A_t$, the expectation of $\pSlow$ is at most $\phi_t \cdot 2^{-\Omega(c)} \le \phi_t / 3000$. By a Chernoff bound over operations in $A_t$, we have
  \[
    \Pr[\pSlow \ge \phi_t / 2000] \le 2^{-\Omega(\phi_t)}.
    \numberthis \label{eq:pSlow}
  \]

  \paragraph{Case 2: $\pStay \ge \phi_t / 2000$ and $|\ATransfer| < \phi_t / 3000$.}
  Fixing an operation $q$ and conditioning on $q \in \AStay \cup \ATransfer$, i.e., $q$ applies an \instR and observes $\fNow = \fOld$, the probability that it immediately transitions to \instW (i.e., $q \in \ATransfer$) is exactly $p^{(q)}_t$, and these events are independent across different operations. Over all operations in $\AStay \cup \ATransfer$, the expected number of operations that immediately transition to \instW is exactly $\pStay + \pTransfer$. By a Chernoff bound, we have
  \[
    \Pr\Bk[\Big]{|\ATransfer| < \phi_t / 3000 \;\Big|\; \pStay + \pTransfer \ge \phi_t / 2000} \le 2^{-\Omega(\phi_t)}.
  \]
  This implies
  \[
    \Pr\Bk[\Big]{\bk[\big]{|\ATransfer| < \phi_t / 3000} \land \bk[\big]{\pStay \ge \phi_t / 2000}} \le 2^{-\Omega(\phi_t)}.
    \numberthis \label{eq:pStay+pTransfer}
  \]

  \paragraph{Case 3: $|\ATransfer| \ge \phi_t / 3000$ and no \StoreRandom is applied in $[t, \, t + 2c)$.}
  Similar to Case 1, every operation $q \in \ATransfer$ has an independent probability of at least $1 - 2^{-\Omega(c)} \ge 1/2$ to apply \instW before time $t + 2c$. Therefore,
  \[
    \Pr\Bk[\big]{\text{no \StoreRandom is applied in } [t, \, t + 2c)} \le 2^{-|\ATransfer|}.
  \]
  As a special case,
  \begin{align*}
      & \phantom{\le{}} \Pr\Bk[\Big]{\bk[\big]{\text{no \StoreRandom is applied in } [t, \, t + 2c)} \land \bk[\big]{|\ATransfer| \ge \phi_t / 3000}} \\
      & \le \Pr\Bk[\Big]{\text{no \StoreRandom is applied in } [t, \, t + 2c) \;\Big|\; |\ATransfer| \ge \phi_t / 3000}
    \le 2^{-\Omega(\phi_t)}.
    \numberthis \label{eq:no-store}
  \end{align*}

  Combining \eqref{eq:pSlow}, \eqref{eq:pStay+pTransfer}, and \eqref{eq:no-store}, we know that the probability of the bad event is at most $2^{-\Omega(\phi_t)}$, which completes the proof.
\end{proof}

\begin{lemma}
  \label{lem:potential-drop-after-store}
  Suppose a \StoreRandom instruction is applied in $[t - 2c, \, t)$. Conditioned on $\HistExt(t)$,
  \[
    \Pr\Bk[\big]{\phi_{t + c} \le \phi_t / 200} \ge 1 - 2^{-\Omega(\phi_t)} - O(P^{-\eps}).
  \]
\end{lemma}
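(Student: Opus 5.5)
Let $s \in [t - 2c, \, t)$ be the timestep of the last \StoreRandom applied before time $t$; it wrote a fresh uniformly random fingerprint $f^*$ to \SharedCell. Split the active set $A_t$ into two parts. $\AOld$ consists of operations whose \instS \Load was applied at or before timestep $s$, so their $\fOld$ was fixed before $f^*$ was sampled. $\ANew$ consists of operations whose \instS was applied after $s$, so they have only been in the loop for fewer than $2c$ steps. I will treat the two parts separately.

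\emph{Operations in $\ANew$.} Each such operation has applied fewer than $2c$ shared instructions since its invocation. Hence its invocation probability is at most $P^{-4}(1+1/c)^{2c} \le e^2/P^4$. Their total contribution to $\phi_t$ is therefore $O(1/P^3)$, and by the argument of \cref{clm:potential-growth-bound} their contribution to $\phi_{t+c}$ is also $O(1/P^3)$. Newly invoked operations in $(t, \, t + c]$ add at most $e/P^3$. We may assume $\phi_t \ge 1$, since otherwise the bound $1 - 2^{-\Omega(\phi_t)}$ is vacuous once the constant is adjusted. Under this assumption all of these terms are negligible against $\phi_t/400$.

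\emph{Operations in $\AOld$.} Condition on $\HistExt(t)$ and \Randomness. Each $q \in \AOld$ has $\fOld \ne f^*$ unless $\fOld = f^*$, which happens with probability $P^{-\eps}$ over the random fingerprint. The subtlety is that $\HistExt(t)$ may already reveal $f^*$. The clean way to handle this is to view $f^*$ as sampled at application time, independently of everything the scheduler and user saw before $s$. Then $E$, the event that $f^*$ collides with the $\fOld$ of some operation in $\AOld$, has probability $O(P^{-\eps})$ by a union bound. This looks like it needs a union bound over up to $P$ operations, which is too weak. The fix is that operations whose $\fOld$ equals the fingerprint present just before $s$ are a single group sharing one $\fOld$ value, because they all read the same cell contents. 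More generally, all $\fOld$ values in $\AOld$ were read from \SharedCell before $s$, so collisions can occur with only a few distinct values. I expect this to be the main obstacle: making the collision event have probability $O(P^{-\eps})$ requires grouping operations by the cell version they read.

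Concretely, I would argue that if some \StoreRandom is applied in $[t - 2c, \, t)$, then every active operation that read a version older than the one immediately preceding $f^*$ has $\fOld \ne f^*$ except with probability $O(c \cdot P^{-\eps})$. This holds because only $O(1)$ distinct versions matter. Operations that read even older versions must have survived multiple changes with collisions, which I would bound in the same way or absorb via the $2c$-window. If this grouping turns out not to go through, I would weaken the target to $O(P^{-\eps})$ per relevant version.

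On the complement of $E$, every $q \in \AOld$ that applies an \instR during $[t, \, t+c)$ while the cell still holds $f^*$ or any later fingerprint aborts. It aborts in the later case as well, as long as no further collision occurs, and that probability is absorbed similarly. The exception is operations that transition to \instW instead, but those also leave $A$. So the only members of $\AOld$ remaining active at $t + c$ are the slow ones, which apply no shared instruction in $[t, \, t + c)$. By \cref{prop:process-schedule-prob} with independence from \cref{prop:coin-sum-independence}, each is slow with probability $2^{-\Omega(c)}$. As in Case 1 of \cref{lem:store-probability}, a Chernoff bound then gives $\pSlow \le \phi_t/(400e)$ except with probability $2^{-\Omega(\phi_t)}$. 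Slow operations grow by a factor of at most $e$. Summing the slow contribution with the $\ANew$ and newcomer terms gives $\phi_{t+c} \le \phi_t/200$ with the claimed probability.
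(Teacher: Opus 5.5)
Your treatment of the slow operations and of the recently started operations matches the paper. The handling of $\AOld$, however, has a genuine gap.

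You try to give the event $E$ (that \emph{some} old operation's $\fOld$ collides with $f^*$ or with a later fingerprint) probability $O(P^{-\eps})$. To avoid the $P\cdot P^{-\eps}$ union bound, you claim the old operations hold only $O(1)$ distinct versions. That claim is false. In the register analysis an operation is active merely by being pending on \instR. It may have loaded its $\fOld$ arbitrarily long before $s$ and simply not been scheduled since, with no collisions needed. Meanwhile the cell may have changed many times, for instance every timestep of a busy interval. Conditioned on such a history, $\AOld$ can hold up to $|\AOld|$ distinct fingerprints, and the probability of $E$ is then of order $\min\{1, |\AOld| P^{-\eps}\}$. Your fallback of $O(P^{-\eps})$ per version has the same defect. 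The same problem recurs when you ``absorb similarly'' the collisions with fingerprints written in $[t, t+c)$. So no bound of the required form on the probability of $E$ exists.

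The missing idea is that you never need \emph{no} collision; you only need the colliding operations to carry little potential. Each $q \in \AOld$, at its first \instR in $[t, t+c)$, reads a fingerprint written by one of at most $c+1$ \StoreRandom{}s: the one at $s$ or one applied in $[t, t+c)$. Each of these is sampled after $\fOld$ was fixed, so $q$ fails to abort with probability at most $(c+1)P^{-\eps}$. By linearity of expectation, the colliding set $\ACont$ has $\E[\pCont] \le (c+1)P^{-\eps}\phi_t$. Markov's inequality then bounds the probability of $\pCont \ge \phi_t/1000$ by $O(P^{-\eps})$, which is exactly where the additive $O(P^{-\eps})$ in the statement comes from. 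This step needs no independence across operations, whose collision events are correlated through shared fingerprints, and no grouping by version. Colliding operations then contribute at most $e\,\pCont$ to $\phi_{t+c}$, and the rest of your argument goes through.

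Your observation that $\HistExt(t)$ already fixes $f^*$ is a fair point of rigor. The paper's union bound over $[t_s, t+c)$ likewise treats fingerprints written before $t$ as unrevealed. That issue is separate from the gap above, though.
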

\begin{proof}
  The lemma is conditioned on a \StoreRandom instruction being applied in the time interval $[t - 2c, \, t)$. Let $t_s \in [t - 2c, \, t)$ be the timestep when this \StoreRandom is applied. This instruction writes a new, uniformly random fingerprint to the shared cell.

  We partition the set $A_t$ of active operations at time $t$ into three disjoint groups:
  \begin{itemize}
    \item $\ASlow$: Operations that were active at time $t$ and did not apply any shared instruction in $[t, \, t + c)$.
    \item $\AOld$: Operations that apply at least one \instR in $[t, \, t + c)$ and started (applied their \instS instructions) before time $t - 2c$.
    \item $\ANew$: Operations that apply at least one \instR in $[t, \, t + c)$ and started after time $t - 2c$.
  \end{itemize}
  Let $\pSlow, \pNew, \pOld$ be the total invocation probabilities at time $t$ for operations in the three groups, respectively. We bound the potential $\phi_{t + c}$ by analyzing the contributions of these three groups separately.

  \paragraph{$\ASlow$.}
  According to \cref{prop:process-schedule-prob}, each operation in $A_t$ has probability at least $1 - 2^{-\Omega(c)}$ to invoke at least one shared instruction in $[t, \, t + c)$, independently of other operations. By a Chernoff bound over all operations in $A_t$, we have
  \[
    \Pr[\pSlow \ge \phi_t / 1000] \le 2^{-\Omega(\phi_t)}.
    \numberthis \label{eq:pSlow-2}
  \]
  Operations in $\ASlow$ remain active at time $t + c$, which contributes at most $\phi_t / 1000$ to the potential $\phi_{t + c}$ with probability $1 - 2^{-\Omega(\phi_t)}$.

  \paragraph{$\AOld$.}
  Operations in $\AOld$ started before time $t - 2c$. Since a \StoreRandom was applied after time $t - 2c$, all operations in $\AOld$ applied their \instS instructions before this \StoreRandom was applied. Therefore, their $\fOld$ values were read before time $t_s$ and were not affected by the \StoreRandom. When such an operation $q \in \AOld$ applies its first \instR in $[t, \; t + c)$, it observes the current fingerprint, which was written by a \StoreRandom applied in some timestep in $[t_s, \; t + c)$. Operation $q$ then aborts unless the observed fingerprint collides with $\fOld$. The collision implies that there exists a timestep $t' \in [t_s, \; t + c)$ such that (1) a \StoreRandom was applied in timestep $t'$, and (2) the fingerprint written by the \StoreRandom in timestep $t'$ collides with $\fOld$. Each $t'$ satisfies (2) with probability $P^{-\eps}$, so by a union bound, the probability of the collision is at most $3c \cdot P^{-\eps} = O(P^{-\eps})$.%
  \footnote{Although in every timestep $t'$ when a \StoreRandom is applied, the fingerprint changes to a uniformly random value, we cannot infer that the fingerprint observed by operation $q$'s first \instR in $[t, \; t + c)$ is uniformly random, because the timestep that operation $q$ applies its first \instR in $[t, \; t + c)$ is a random variable. The union bound addresses this issue.}
  It is worth noting that these collision events are \emph{not} independent across different operations, as they can depend on the same random fingerprint.

  Let $\ACont \subseteq \AOld$ be the set of operations in $\AOld$ that experience a fingerprint collision at the first \instR in $[t, \; t + c)$, so they did not abort immediately. We let $\pCont$ be the total invocation probability at time $t$ for operations in $\ACont$. By linearity of expectation,
  \[
    \E[\pCont] = \pOld \cdot P^{-\eps} \le \phi_t \cdot P^{-\eps}.
  \]
  By Markov's inequality,
  \[
    \Pr[\pCont \ge \phi_t/1000] \le \frac{\E[\pCont]}{\phi_t/1000} = \frac{1000 \phi_t P^{-\eps}}{\phi_t} = 1000 P^{-\eps} = O(P^{-\eps}).
    \numberthis \label{eq:pCont-2}
  \]
  Operations in $\ACont$ might remain active at time $t + c$, and their invocation probabilities grow by a factor of at most $e$ during $[t, \, t + c)$. They contribute at most $\phi_t \cdot e / 1000$ to the potential $\phi_{t + c}$ with probability $1 - O(P^{-\eps})$. Operations in $\AOld \setminus \ACont$ are already aborted before time $t + c$.

  \paragraph{$\ANew$.}
  Operations in $\ANew$ started after time $t - 2c$, so by time $t + c$ they have been active for at most $3c$ timesteps. The number of operations in this group is at most $P$, and each starts with invocation probability $1/P^4$. Their invocation probabilities can grow by a factor of at most $(1 + 1/c)^{3c} \le e^3$ over these $3c$ timesteps. Thus, the total potential contribution from $\ANew$ at time $t + c$ is at most
  \[
    P \cdot (1/P^4) \cdot e^3 = e^3/P^3 = O(1/P^3).
    \numberthis \label{eq:pNew-2}
  \]

  \paragraph{Combining the bounds.}
  Combining the bounds from the three groups, we know that with probability at least $1 - 2^{-\Omega(\phi_t)} - O(P^{-\eps})$,
  \[
    \sum_{q \in A_t \cap A_{t + c}} p^{(q)}_{t + c} \le \phi_t \cdot \frac{e + 1}{1000} + O(1 / P^3).
    \numberthis \label{eq:phi_t+c_partial}
  \]
  Similar to \eqref{eq:pNew-2}, the total invocation probability of operations in $A_{t + c} \setminus A_t$---which applied their \instS instructions after time $t$---is at most $O(1/P^3)$. This combined with \eqref{eq:phi_t+c_partial} gives
  \[
    \phi_{t + c} \le \phi_t \cdot \frac{e + 1}{1000} + O(1 / P^3) \le \phi_t / 200,
  \]
  where the last inequality holds because we can assume $\phi_t \ge 1$ without loss of generality (otherwise the probability bound is trivial). This completes the proof.
\end{proof}

Combining \cref{lem:store-probability,lem:potential-drop-after-store}, we obtain the following corollary.

\begin{corollary}
  \label{cor:potential-falling-probability}
  Conditioned on $\HistExt(t)$,
  \[
    \Pr[\phi_{t + 3c} \le \phi_t / 20] \ge 1 - 2^{-\Omega(\phi_t)} - O(P^{-\eps}).
  \]
\end{corollary}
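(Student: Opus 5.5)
We will combine \cref{lem:store-probability} and \cref{lem:potential-drop-after-store} by a case split on what happens in the window $[t, \, t + 2c)$. We may assume $\phi_t \ge 1$; otherwise the bound $1 - 2^{-\Omega(\phi_t)}$ is vacuous once the hidden constant is chosen suitably. Let $E_1$ be the event that $\phi_{t+2c} < \phi_t/100$. Let $E_2$ be the event that some \StoreRandom is applied in $[t, \, t+2c)$. By \cref{lem:store-probability}, conditioned on $\HistExt(t)$, we have $\Pr[\lnot E_1 \land \lnot E_2] \le 2^{-\Omega(\phi_t)}$. So with the stated probability, $E_1$ or $E_2$ holds, and we treat these two cases separately.

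\emph{Case $E_1$.} Here $\phi_{t+2c} < \phi_t/100$. By \cref{clm:potential-growth-bound} with $k = c$,
\[
  \phi_{t+3c} \le \bigl(\phi_{t+2c} + c/P^3\bigr)(1+1/c)^c \le e\,\phi_t/100 + O(1/P^3) \le \phi_t/20,
\]
where the last step uses $\phi_t \ge 1$. This case is deterministic given $E_1$.

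\emph{Case $E_2 \land \lnot E_1$.} A \StoreRandom is applied in $[t, \, t+2c) = [t' - 2c, \, t')$ with $t' \defeq t + 2c$. We apply \cref{lem:potential-drop-after-store} at time $t'$, conditioned on $\HistExt(t')$. This gives $\phi_{t+3c} \le \phi_{t'}/200$ with probability at least $1 - 2^{-\Omega(\phi_{t'})} - O(P^{-\eps})$. Since $\lnot E_1$ holds, $\phi_{t'} \ge \phi_t/100$, so the failure term satisfies $2^{-\Omega(\phi_{t'})} \le 2^{-\Omega(\phi_t)}$. On success, \cref{clm:potential-growth-bound} gives $\phi_{t'} \le (\phi_t + c/P^3)e^2$, hence
\[
  \phi_{t+3c} \le e^2(\phi_t + c/P^3)/200 \le \phi_t/20 .
\]

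To combine the cases, we average the conditional bound of \cref{lem:potential-drop-after-store} over $\HistExt(t')$ restricted to the event $E_2 \land \lnot E_1$. Both events are determined by $\HistExt(t')$, so this uses the law of total probability as in \cref{sec:conventions}. A union bound with the failure event of \cref{lem:store-probability} then yields the claim.

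The main subtlety is in this last case. The failure probability supplied by \cref{lem:potential-drop-after-store} is stated in terms of $\phi_{t'}$, a random quantity, and must be converted into a bound in terms of $\phi_t$. This conversion is valid only on the event $\lnot E_1$, and that is exactly why the case split is organized around $E_1$ rather than directly around $E_2$.
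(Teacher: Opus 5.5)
Your proof is correct and follows the paper's argument essentially verbatim. It uses the same case split from \cref{lem:store-probability}, the growth bound of \cref{clm:potential-growth-bound} when $\phi_{t+2c} < \phi_t/100$, and \cref{lem:potential-drop-after-store} at time $t+2c$ otherwise, with $\phi_{t+2c} \ge \phi_t/100$ converting $2^{-\Omega(\phi_{t+2c})}$ into $2^{-\Omega(\phi_t)}$. Your remark that both events are determined by $\HistExt(t+2c)$, so the conditional bound can be averaged by total probability, only spells out a step the paper leaves implicit.
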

\begin{proof}
  By \cref{lem:store-probability}, with probability at least $1 - 2^{-\Omega(\phi_t)}$, at least one of the following two events occurs:
  \begin{itemize}
    \item $\phi_{t + 2c} < \phi_t / 100$. In this case, by \cref{clm:potential-growth-bound},
    \[
      \phi_{t + 3c} \le \phi_{t + 2c} \cdot e + O(1 / P^3) \le \phi_t \cdot e / 100 + O(1 / P^3) \le \phi_t / 20,
    \]
    where the last inequality holds because we can assume $\phi_t \ge 1$ without loss of generality.

    \item $\phi_{t + 2c} \ge \phi_t / 100$, and there is a \StoreRandom instruction applied in $[t, \, t + 2c)$. By \cref{lem:potential-drop-after-store} applied at time $t + 2c$, we have
    \[
      \phi_{t + 3c} \le \phi_{t + 2c} / 200
    \]
    with probability at least
    \[
      1 - 2^{-\Omega(\phi_{t + 2c})} - O(P^{-\eps}) \ge 1 - 2^{-\Omega(\phi_t)} - O(P^{-\eps}).
    \]
    By \cref{clm:potential-growth-bound}, $\phi_{t + 2c} \le \phi_t \cdot e^2 + O(1/P^3)$. Thus,
    \[
      \phi_{t + 3c} \le \frac{\phi_t \cdot e^2 + O(1/P^3)}{200} \le \phi_t / 20.
    \]
  \end{itemize}
  Combining the two cases concludes the proof.
\end{proof}

\begin{lemma}
  \label{lem:potential-upper-bound}
  For any fixed time $t \ge 0$, conditioned on $\HistExt(t - P)$,\footnote{For $t' < 0$, we define $\HistExt(t')$ to be the empty history.} $\phi_t \le O(\log P)$ with high probability in $P$.
\end{lemma}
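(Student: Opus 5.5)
We plan to show that, conditioned on $\HistExt(t-P)$, the potential cannot climb from a low level to $\beta\log P$ within the window $[t-P,\,t]$, except with probability $1/\poly(P)$. The tools are the deterministic growth bound of \cref{clm:potential-growth-bound} and the downward drift of \cref{cor:potential-falling-probability}. Fix constants $\alpha<\beta$, with $\alpha$ large enough that $2^{-\Omega(\alpha\log P)}\le P^{-\kappa}$ for the desired $\kappa$. The central difficulty is that \cref{cor:potential-falling-probability} only gives failure probability $O(P^{-\eps})$, because of fingerprint collisions. This is far from high probability, so a single application will not do; we must combine many roughly independent attempts.

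First, by \cref{clm:potential-growth-bound}, if $\phi_t>\beta\log P$, then going backwards in time $\phi$ can shrink by at most a factor of about $(1+1/c)$ per step, up to the additive $c/P^3$ term. Let $s$ be the last time before $t$ at which $\phi_s\le\alpha\log P$. Then on the whole interval $(s,t]$ the potential stays above $\alpha\log P$, and this interval has length at least $c\ln(\beta/\alpha)=\Omega(1)$. We must also handle the case where $\phi$ exceeds $\alpha\log P$ throughout $[t-P,t]$. Since $\phi_{t'}\le P$ always (there are at most $P$ operations, each with invocation probability at most $1$), repeated drops by a factor of $20$ would reach $\alpha\log P$ within $O(\log P)$ successful rounds. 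So it suffices to argue the following: any interval of length $L\cdot 3c$, on which $\phi$ stays above $\alpha\log P$, contains a chain of rounds that drives $\phi$ down, except with small probability.

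The key step is an amplification argument. Partition time into rounds of length $3c$. In a round beginning at $t'$ with $\phi_{t'}\ge\alpha\log P$, \cref{cor:potential-falling-probability}, conditioned on $\HistExt(t')$, gives $\phi_{t'+3c}\le\phi_{t'}/20$ with probability at least $1-P^{-\kappa}-O(P^{-\eps})\ge 1-2P^{-\eps}$. In a failed round, \cref{clm:potential-growth-bound} bounds the increase by a factor of $e^3$. Since each round's success probability holds conditionally on the past history, the number of failed rounds among any $m$ consecutive rounds is stochastically dominated by $\mathrm{Bin}(m,2P^{-\eps})$. Having more than $\kappa/\eps$ failures among $m=O(\log P)$ rounds therefore has probability at most $m^{\kappa/\eps}P^{-\kappa}$.

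Now take the first round after $s$ and count rounds while $\phi$ stays above $\alpha\log P$. With at most $K=O(1)$ failures, $\log\phi$ changes by at most $+3K$ on net from failures and decreases by $\ln 20$ per success. Starting near $\alpha\log P$, the potential can never exceed $\alpha e^{3K+O(1)}\log P$, and it returns below $\alpha\log P$ after $O(1)$ rounds. Setting $\beta=\alpha e^{3K+O(1)}$ gives a constant. The same counting over $O(\log P)$ rounds handles the case where we start from an arbitrary $\phi\le P$ at $t-P$, since $P\gg\log P\cdot 3c$. Finally, we take a union bound over the at most $P$ possible starting times $s\in[t-P,t]$, which yields $\phi_t\le\beta\log P$ with probability $1-1/\poly(P)$.

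I expect the main obstacle to be making the dependence between rounds rigorous: the stopping time $s$ is random, and successive rounds must be coupled to independent Bernoulli trials. I would handle this with the conditional form of \cref{cor:potential-falling-probability} and a standard martingale/domination argument, union-bounding over the fixed candidate start times.
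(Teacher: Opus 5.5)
Your proposal is correct and takes essentially the same route as the paper. Both arguments use rounds of length $3c$ starting right after the last time the potential was below a $\Theta(\log P)$ threshold, chain the conditional form of \cref{cor:potential-falling-probability} so that each round fails with probability $O(P^{-\eps})$ given the history at its start, and treat separately the case where the potential stays high throughout $[t-P,t]$ using the trivial $O(P)$ bound on $\phi$. The differences are cosmetic: you cap the number of failed rounds by a binomial tail with a constant cutoff $\kappa/\eps$ and use only $O(\log P)$ rounds in the long case, whereas the paper sums over all rising/falling sequences satisfying $\kRise \ge \kFall/2 + K/10$ (respectively $\kRise \ge P/12c$).
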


\begin{proof}
  Let $\phiThreshold = \Theta(\log P)$ be a threshold such that for $\phi_t \ge \phiThreshold$, the probability bound $1 - 2^{-\Omega(\phi_t)} - O(P^{-\eps})$ in \cref{cor:potential-falling-probability} is at least $1 - O(P^{-\eps})$.

  Let $K \ge c$ be a sufficiently large constant, and suppose $\phi_t \ge 2^K \cdot \phiThreshold$ for some time $t$ in the execution. We let $t_0 \in [t - P, \, t)$ be the earliest time where the potential stays at least $\phiThreshold$ during time interval $[t_0, \, t]$. By definition, we have either $\phi_{t_0 - 1} < \phiThreshold$ or $t_0 = t - P$.

  We then partition the time interval $[t_0, \, t)$ into disjoint intervals of length $3c$, each being $[t_i, \, t_{i+1}) \defeq \bigl[t_0 + 3ci, \, t_0 + 3c(i+1)\bigr)$ for $0 \le i < k \defeq \floor[\big]{(t - t_0) / 3c}$. Note that the final $< 3c$ timesteps may not belong to any interval due to rounding.

  We say an interval $[t_i, \, t_{i+1})$ is \defn{falling} if $\phi_{t_{i+1}} \le \phi_{t_i} / 20$. Otherwise, we have
  $\phi_{t_i} / 20 \le \phi_{t_{i+1}} \le \phi_{t_i} \cdot (e^3 + O(1/P^3)) \le \phi_{t_i} \cdot 30$,
  and we say the interval is \defn{rising}.
  Conditioned on $\HistExt(t_i)$, according to \cref{cor:potential-falling-probability}, the probability that the interval $[t_i, \, t_{i+1})$ is falling is at least $1 - 2^{-\Omega(\phi_t)} - O(P^{-\eps}) = 1 - O(P^{-\eps})$ according to the definition of $\phiThreshold$.

  This definition produces a sequence of $k \le P / 3c$ symbols in \{\Rising, \Falling\}. Let $\kRise$ and $\kFall$ denote the number of rising and falling intervals among the $k$ intervals. We derive relations between $\kRise$ and $\kFall$ by considering two cases in the definition of $t_0$.

  \begin{itemize}
    \item $\phi_{t_0 - 1} < \phiThreshold$. In this case, the potential has increased by a factor of $O(2^K)$ from the beginning of the $k$ intervals (time $t_0$) to the end of these intervals (time $t_k = t_0 + 3ck > t - 3c$), so we have
    \begin{align*}
      \kRise \log 30 - \kFall \log 20 & \ge K - O(1) \ge K / 2   \\
      \kRise \cdot 5 - \kFall \cdot 4 & \ge K / 2                \\
      \kRise                          & \ge \kFall / 2 + K / 10.
      \numberthis \label{eq:kRise-kFall-1}
    \end{align*}
    \item $t_0 = t - P$. In this case, we have $\kRise + \kFall = k = \floor{P / 3c} > P / 4c$; by the trivial bound $\phi_{t_0} \le P$, we know that the potential has decreased by a factor of at most $O(P)$ from $t_0$ to $t_k$, so we have
    \begin{align*}
      \kRise \log 30 - \kFall \log 20   & \ge -\log P - O(1) > -(\kRise + \kFall) / 1000 \\
      \kRise \cdot (\log 30 + 1 / 1000) & \ge \kFall \cdot (\log 20 + 1 / 1000)          \\
      \kRise                            & \ge \kFall / 2.
    \end{align*}
    Combined with $\kRise + \kFall \ge P / 4c$, we have
    \[
      \kRise \ge \frac{P / 4c}{3} = P / 12c.
      \numberthis \label{eq:kRise-kFall-2}
    \]
  \end{itemize}

  The event of $\phi_t \ge 2^K \cdot \phiThreshold$ is witnessed by one such sequence of \{\Rising, \, \Falling\} that satisfies \eqref{eq:kRise-kFall-1} for $t_0 > t - P$ or \eqref{eq:kRise-kFall-2} for $t_0 = t - P$. Let $I_1, \dots, I_k \in \BK{\Rising, \, \Falling}$ be a sequence with $\kRise$ \Rising{}s and $\kFall$ \Falling{}s. Applying \cref{cor:potential-falling-probability} repeatedly, we know that conditioned on $\HistExt(t_0) \supseteq \HistExt(t_0 - P)$, the probability that $I_1, \dots, I_k$ occurs as a valid witness is bounded by $O(P^{-\eps})^{\kRise}$.
  Taking a summation over all valid sequences, we obtain
  \begin{align*}
      & \Pr\Bk[\big]{\exists \, \textup{witness with } t_0 > t - P}                                                                \\
      & \le \sum_{\kFall \ge 0} \;\; \sum_{\kRise \ge \kFall/2 + K/10} \binom{\kFall + \kRise}{\kFall} \cdot O(P^{-\eps})^{\kRise} \\
      & \le \sum_{\kFall \ge 0} \;\; \sum_{\kRise \ge \kFall/2 + K/10} 2^{\kFall + \kRise} \cdot O(P^{-\eps})^{\kRise}             \\
      & \le \sum_{\kRise \ge K/10} \;\; \sum_{\kFall = 0}^{2\kRise} 2^{\kFall + \kRise} \cdot O(P^{-\eps})^{\kRise}                \\
      & \le \sum_{\kRise \ge K/10} O(2^{3\kRise}) \cdot O(P^{-\eps})^{\kRise}                                                      \\
      & = \sum_{\kRise \ge K/10} O(P^{-\eps})^{\kRise}                                                                             \\
      & \le P^{-\Omega(K \eps)} = 1 / \poly(P),
    \numberthis \label{eq:witness-pr-1}
  \end{align*}
  where the last inequality holds because $K$ is a sufficiently large constant. Similarly, we obtain
  \begin{align*}
      & \Pr\Bk[\big]{\exists \, \textup{witness with } t_0 = t - P}                              \\
      & \le \sum_{\kRise \ge P / 12c} \binom{\floor{P / 3c}}{\kRise} \cdot O(P^{-\eps})^{\kRise} \\
      & \le \sum_{\kRise \ge P / 12c} 2^{O(\kRise)} \cdot O(P^{-\eps})^{\kRise}                  \\
      & \le P^{-\Omega(P)} < 1 / \poly(P).
    \numberthis \label{eq:witness-pr-2}
  \end{align*}
  Combining \eqref{eq:witness-pr-1} and \eqref{eq:witness-pr-2} yields the desired bound.
\end{proof}

\subsubsection{Bounding the Length of Busy Intervals}
\label{sec:register/busy-interval}

\begin{definition}
  \label{def:reg/busy-interval}
  A \defn{busy interval} $[t_1, t_2)$ is a \emph{maximal} interval where
  (1) one \StoreRandom is applied on \SharedCell in every timestep $t \in [t_1, t_2)$;
  (2) the instruction queue of \SharedCell is non-empty at each \emph{internal} time $t \in (t_1, t_2)$.
\end{definition}

In the following, we use the potential upper bound from \cref{lem:potential-upper-bound} to bound the length of the busy interval.

\begin{lemma}
  \label{lem:busy-interval-length}
  For any fixed time $t \ge 0$, conditioned on $\HistExt(t - \Theta(P))$, the length of the busy interval starting at $t$ (if exists) is at most $O(\log P)$ with high probability in $P$.
\end{lemma}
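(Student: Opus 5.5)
We bound the busy interval starting at $t$ by counting the operations that could supply its \StoreRandom instructions. Suppose the busy interval $[t, t+T)$ exists. Exactly one \StoreRandom is applied in every timestep of $[t, t+T)$, so at least $T$ distinct operations apply a \StoreRandom there. We split these into two classes. \emph{Late} operations transition from \instR to \instW at some time $\ge t$. \emph{Early} operations transition to \instW at some time $t-j$ with $j \ge 1$ but invoke their \instW only at or after $t$. The queue is empty at time $t$, so no early operation invoked its \StoreRandom before $t$. Hence an early operation must stay pending on \instW, unscheduled, throughout $[t-j+1, t)$.

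\textbf{Early operations.} Fix $j \ge 1$. By \cref{lem:potential-upper-bound} (applied at times $t-j$, conditioning on $\HistExt(t-j-P)$, which is implied by conditioning on $\HistExt(t-\Theta(P))$ for $j \le P$), we have $\phi_{t-j} \le \beta\log P$ w.h.p. Conditioned on $\phi_{t-j} \le \beta\log P$, the number of operations transitioning \instR$\to$\instW in timestep $t-j$ is a sum of independent Bernoullis with total mean at most $\beta\log P$, hence $O(\log P)$ w.h.p. (a union over $j \le \mathrm{poly}(P)$ costs nothing). Each such operation stays unscheduled during $[t-j+1,t)$ only if its scheduling coins there are all $0$, which has probability $2^{-\Omega(j/\tau)}$. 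These events are independent across operations by \cref{prop:coin-sum-independence}, and they are independent of the history at the transition time. For $j \ge C\log P$, the probability that any of the at most $P$ such operations survives is $P\cdot 2^{-\Omega(j)}$. Summing over $j \ge C\log P$ (and over $j$ up to $P$; older operations either returned or are negligible by the same bound) gives $1/\mathrm{poly}(P)$. For $j < C\log P$, the count is a sum of independent indicators with total mean $\sum_j O(\log P)\,2^{-\Omega(j)} = O(\log P)$, so a Chernoff bound makes it $O(\log P)$ w.h.p. The delicate point is that the set of transitioning operations is chosen adaptively. I would handle this by revealing transitions step by step and treating each operation's future coins as fresh, using the filtration structure of \cref{def:random-delay}, so that the survival indicators form a sequence dominated by independent coins.

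\textbf{Late operations.} This is the main obstacle. An operation that transitions to \instW at a time $t' \ge t$ inside the busy interval performed an \instR at $t'$. During the busy interval the fingerprint changes every timestep, since a fresh uniform fingerprint is written each step. For such an operation to avoid aborting, its $\fOld$ must equal the fingerprint it currently observes. If the operation started before the last write preceding its \instR, this requires a collision. Each read has collision probability $P^{-\eps}$, but over $T$ steps and many operations the expected number of collisions is only $O(P^{1-\eps}T)$, which is too weak on its own. I would therefore also charge the potential. Operations that loaded \instS during the busy interval see a value that is overwritten within one step. For such an operation to transition to \instW, it either collides (probability $P^{-\eps}$ per read) or transitions at its very first \instR, with probability at most $p \le 1/P^4$ times a bounded growth factor. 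The latter contributes $O(1/P^3)$ per step in total. The remaining late transitions all come from active operations with colliding fingerprints, and their total contribution per step is bounded by the potential ($O(\log P)$ w.h.p.) times $P^{-\eps}$. So the expected number of late transitions over $T \le \mathrm{poly}(P)$ steps is at most $T\cdot O(P^{-\eps}\log P + P^{-3})$.

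\textbf{Concluding.} This per-step expectation is bounded but does not yet yield concentration. I would refine it by arguing that, across any window of $c'$ steps, the number of late transitions exceeds a constant $k$ only with probability $P^{-\Omega(k\eps)}$. This is a union bound over $k$ distinct colliding fingerprints in a short window. Such a bound makes late contributions $o(T)$ w.h.p. once $T = \Omega(\log P)$. Combining the two classes, $T \le O(\log P) + o(T)$, which gives $T = O(\log P)$ w.h.p.
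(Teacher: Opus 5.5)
Your early-operation argument matches the paper's treatment of the sets $T_j$ with $C=0$, and the \emph{per-timestep} expectation for late transitions is sound if stated carefully. The fingerprint written in timestep $s-1$ is fresh and independent of $A_s$, of the $\fOld$ values, and of the invocation probabilities. So the probability that some operation applying \instR in timestep $s$ survives and transitions is at most $P^{-\eps}\phi_s$. This holds even though the scheduler sees the written value and can schedule exactly the operations it matches, which is why ``each read collides with probability $P^{-\eps}$'' is not literally true.

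The gap is your concluding concentration step. You claim that more than $k$ late transitions in a window occur only with probability $P^{-\Omega(k\eps)}$, via a union bound over $k$ distinct colliding fingerprints. This is false, because many late transitions can come from one collision. All operations that applied \instS during the same quiet stretch before $t$ hold the same $\fOld$, and their total invocation probability can be as large as the potential (anything up to $\Theta(\log P)$). If one busy-interval write reproduces that value, which happens with probability $P^{-\eps}$, the scheduler can schedule all of them in the next step. They all survive, and the number that transition is a sum of independent Bernoullis with mean equal to their total invocation probability. Thus a single event of probability about $P^{-\eps}$ produces up to $\Theta(\log P)$ late transitions. Even with constant potential it produces more than $k$ of them with probability $P^{-\eps}$ times a quantity independent of $P$, which is far above $P^{-\Omega(k\eps)}$. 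So neither your per-window bound nor ``late contributions are $o(T)$'' follows. A smaller slip: a read applied in timestep $t$ returns the pre-interval content, so transitions at $t'=t$ need no collision and belong in your early class.

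To repair this, count timesteps rather than transitions.
\begin{itemize}
\item Truncate at $T_0=\log^2 P$ as the paper does; the number of \StoreRandom{}s invoked in $[t,t+T_0)$ before the interval ends is at least $\min\{T,T_0\}$.
\item By sequential conditioning with the bound $P^{-\eps}\phi_s$ and $\phi_s=O(\log P)$, at least $k$ timesteps contain a collision-induced transition with probability at most $\binom{T_0}{k}\bigl(P^{-\eps}\cdot O(\log P)\bigr)^k=P^{-\Omega(k\eps)}$.
\item A Chernoff bound against the potential shows each such timestep contributes $O(\log P)$ transitions, giving $O(k\log P)$ late \StoreRandom{}s in total.
\end{itemize}

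The paper avoids tracking individual $\fOld$ values altogether by classifying operations by how many \instR{}s they apply after $t$.
\begin{itemize}
\item At least $C+1$ forces $C$ busy-interval writes with identical fingerprints. This is a global event of probability at most $(\log^2 P)^C P^{-\eps(C-1)}$ that no scheduler can influence.
\item At most $C$, but at least $C+1$ in total, charges the operation to $T_j$ at its $(C+1)$-th last \instR before $t$. There the $C$-step lookahead transition probability is $O(p^{(q)}_{t-j})$, and the same coin-delay argument as your early class applies.
\item At most $C$ in total has transition probability $O(P^{-4})$ per operation.
\end{itemize}
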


\begin{proof}
  Let $C \gg \eps^{-1}$ be a sufficiently large constant.
  We bound the length of the busy interval by counting the number of \StoreRandom{}s that get applied in the busy interval. Specifically, we define the following sets:
  \begin{itemize}
    \item For $j \ge 1$, $S_j$ is the set of operations that applied \instR in timestep $t - j$.
    \item $T_j \subseteq S_j$ is the set of operations $q$ that applied \instR in timestep $t - j$, and according to its private randomness, $q$ will transition to \instW exactly after its next $C$ \instR instructions, unless it quits earlier due to a changed fingerprint being detected.
  \end{itemize}

  We start by bounding the size of $T_j$.

  \begin{claim}
    \label{clm:Tj-size-bound}
    Conditioned on $\HistExt(t - \Theta(P))$, with high probability in $P$, we have $|T_j| \le \Theta(\log P)$ for all $j \in [1, \, P]$.
  \end{claim}

  \begin{proof}
    When an operation $q \in S_j$ applies \instR in timestep $t - j$, it has probability $p^{(q)}_{t - j}$ to transition to \instW immediately, and the probability that it transitions to \instW within its next $C$ \instR instructions is at most
    \begin{align*}
        & 1 - \bk[\big]{1 - p^{(q)}_{t - j}} \cdot \bk[\big]{1 - (1 + 1/c) \cdot p^{(q)}_{t - j}} \cdot \bk[\big]{1 - (1 + 1/c)^2 \cdot p^{(q)}_{t - j}} \cdots \bk[\big]{1 - (1 + 1/c)^{C-1} \cdot p^{(q)}_{t - j}} \\
        & \qquad {} \le p^{(q)}_{t - j} + (1 + 1/c) \cdot p^{(q)}_{t - j} + \cdots + (1 + 1/c)^{C-1} \cdot p^{(q)}_{t - j} \le (1 + 1/c)^{C} \cdot c \cdot p^{(q)}_{t - j}.
    \end{align*}
    This event is a necessary condition for $q \in T_j$.
    Taking a summation over all operations in $S_j$, we know that the expected size of $T_j$ is at most $(1 + 1/c)^C \cdot c \cdot \phi_{t - j}$. By \cref{lem:potential-upper-bound}, conditioned on $\HistExt(t - j - P) \supseteq \HistExt(t - j - \Theta(P))$, we have that $\phi_{t - j} \le O(\log P)$ with high probability in $P$.
    Moreover, the events $q \in T_j$ for different operations $q \in S_j$ are independent.
    By a Chernoff bound, we obtain
    \[
      \Pr\Bk[\big]{|T_j| > \Omega(\log P)} \le 1 / \poly(P) + 2^{-\Omega(\log P)} = 1 / \poly(P).
    \]
    That is, for a given $j > 0$, conditioned on $\HistExt(t - j - P) \supseteq \HistExt(t - j - \Theta(P))$, we have $|T_j| = O(\log P)$ with high probability in $P$. A union bound over $j \le P$ completes the proof.
  \end{proof}

  In the rest of the proof, we assume $|T_j| \le \Theta(\log P)$ holds for all $j \le P$.

  Next, we use $T$ to denote the length of the busy interval starting at $t$, and count the number of \StoreRandom instructions that are invoked in $[t, \, t + \log^2 P)$ and before the busy interval ends. The number of such \StoreRandom{}s is at least $\min\BK{T, \log^2 P}$.
  We partition the operations that invoke these \StoreRandom{}s into the following categories:
  \begin{enumerate}
    \item \BoldHeading{Operations that were invoked before time $t - P$.}
    Each operation either invokes a \StoreRandom instruction or quits before it invokes $O(\log P)$ shared instructions. According to \cref{prop:process-schedule-prob}, conditioned on $\HistExt(t - P)$, each operation $q$ that is ongoing at time $t - P$ (which implies $q$ was invoked before time $t - P$) has an independent probability of $2^{-\Omega(P)}$ to either invoke $\Omega(\log P)$ shared instructions or becomes not ready during time $[t - P, \; t)$. In both cases, $q$ has either invoked \StoreRandom or quit before time $t$, so it cannot invoke a \StoreRandom instruction after time $t$, and therefore cannot contribute to the busy interval. Since there are at most $P$ ongoing operations at time $t - P$, there is no operation in this category with high probability in $P$.

    In the remaining categories, we only need to consider operations that are invoked in $[t - P, \; t + P)$.

    \item \BoldHeading{Operations that transition to \instW within the first $C$ \instR{}s.}
    These operations do not appear in any $T_j$, but the probability of any operation $q$ being in this category is at most $\sum_{k = 0}^{C - 1} (1 + 1/c)^k / P^4 = O(1 / P^4)$, independently of other operations. By a Chernoff bound over all $O(P^2)$ operations invoked in $[t - P, \; t + \log^2 P)$, we know that the number of operations in this category is at most $O(\log P)$ with high probability in $P$.

    \item \BoldHeading{Operations that apply at least $C + 1$ \instR{}s after $t$ before they invoke \instW.}
    To bound the number of operations in this category, we need a key observation: during the busy interval, in every timestep, a \StoreRandom instruction gets applied, which changes the fingerprint $f$ of \SharedCell to a uniformly random $\eps \log P$-bit string that is independent of all earlier fingerprints.
    In order for any operation $q$ to apply at least $C + 1$ \instR{}s during the busy interval and before time $t + \log^2 P$ without quitting, there must be $C$ timesteps $t_1, t_2, \dots, t_C \in [t, \; t + \log^2 P)$ such that the \StoreRandom{}s applied in these timesteps write exactly the same fingerprint to \SharedCell. The probability of this event is bounded by
    \[
      \binom{\log^2 P}{C} \cdot (P^{-\eps})^{C - 1} \le (\log^2 P)^{C} \cdot (P^{-\eps})^{C - 1} \le P^{-\eps C / 2} = 1/\poly(P),
    \]
    where the last inequality holds because $C$ is a sufficiently large constant. Therefore, with high probability in $P$, there is no operation in this category.

    \item \BoldHeading{Operations that apply at most $C$ \instR{}s after $t$ before invoking \instW, but apply at least $C + 1$ \instR{}s in total.}
    Every operation $q$ in this category appears in a unique $T_j$, where $t - j$ is the timestep when $q$ applies the $(C + 1)$-th last \instR instruction. Fixing $j \le P$ and $q \in T_j$, we define $X_{q, j}$ as an indicator variable that $q$ invokes \instW or quits before time $t$. Since $X_{q, j}$ implies that $q$ cannot contribute a \StoreRandom to the busy interval, we know
    \[
      \widebar{X} \defeq \sum_{j = 1}^{P} \sum_{q \in T_j} (1 - X_{q, j})
    \]
    is an upper bound on the number of operations in this category.

    Recall the definition of \emph{scheduling coins} from \cref{prop:random-delay-for-operations}. For every $q \in T_j$, we define
    \[
      Y_{q, j} \defeq \ind\Bk[\big]{\CoinSum_{q, \, [t - j + 1, \, t)} < C + 1}.
    \]
    By \cref{prop:coin-implies-scheduled}, if $Y_{q, j} = 0$, then $q$ either invokes at least $C + 1$ shared instructions during $[t - j + 1, \, t)$, or becomes not ready during $[t - j + 1, \, t)$. Because $q \in T_j$, both cases imply that $q$ either invokes \instW or quits before time $t$, and therefore $X_{q, j} = 1$. This means that $X_{q, j} \ge 1 - Y_{q, j}$ for all $q, j$, so
    \[
      \widebar{X} \le \sum_{j = 1}^{P} \sum_{q \in T_j} Y_{q, j}.
      \numberthis \label{eq:widebar-X}
    \]

    We define a total order $\prec$ on the pairs $(q, j)$ where $(q, j) \prec (q', j')$ if and only if $j < j'$ or $j = j'$ and $q < q'$. Then, the following claim bounds the expected value of $Y_{q, j}$ conditioned on all other $Y$-variables before it in the order.

    \begin{claim}
      Fixing $j \le P$, conditioned on $\BK{Y_{q', j'}}_{(q', j') \prec (q, j)}$, we have
      \[
        \E\Bk[\Big]{Y_{q, j} \;\Big|\; \BK{Y_{q', j'}}_{(q', j') \prec (q, j)}, \, \HistExt(t - j + 1), \, \Randomness} \le 2^{-\Omega(j)}.
      \]
    \end{claim}
    \begin{proof}
      We fix an arbitrary realization of $\HistExt(t - j + 1)$ and \Randomness, which determines $\BK{T_{j'}}_{j' \le j}$ and therefore the set of $Y$-variables before $Y_{q, j}$ in the order. We also fix an arbitrary realization of the scheduling coins used to determine the previous $Y$-variables $\BK{Y_{q', j'}}_{(q', j') \prec (q, j)}$.
      Then, because every operation $q$ can only appear in one $T_j$, the scheduling coins used to determine $Y_{q, j}$ are all unrevealed. By \cref{prop:random-delay-for-operations}, these coins are independent fair coin flips conditioned on $\HistExt(t - j + 1)$, \Randomness, and all revealed scheduling coins of other operations $q' \ne q$.
      The claim follows from \cref{prop:process-schedule-prob}.
    \end{proof}

    The expected summation over all $Y$-variables is at most
    \[
      \sum_{j = 1}^P \sum_{q \in T_j} \E\Bk[\Big]{Y_{q, j} \;\Big|\; \BK{Y_{q', j'}}_{(q', j') \prec (q, j)}, \, \HistExt(t - j + 1), \, \Randomness} \le \sum_{j = 1}^P |T_j| \cdot 2^{-\Omega(j)} = O(\log P).
    \]
    By the \emph{multiplicative Azuma's inequality}~\cite[Corollary 6]{kuszmaul2025multiplicative}, conditioned on $\HistExt(t - P)$ and \Randomness, with high probability in $P$,
    \[
      \sum_{j = 1}^P \sum_{q \in T_j} Y_{q, j} \le O(\log P).
    \]
    This further implies that $\widebar{X} \le O(\log P)$ with high probability in $P$, i.e., the number of operations in this category is at most $O(\log P)$.
  \end{enumerate}
  Combining all categories together, we know that conditioned on $\HistExt(t - P)$ and \Randomness, with high probability in $P$, there are at most $O(\log P)$ operations that invoke \StoreRandom instructions in $[t, \, t + P)$ and before the busy interval ends, so the busy interval length is at most $O(\log P)$.
\end{proof}

\begin{lemma}
  \label{lem:busy-interval-length-large}
  Conditioned on an arbitrary $\HistExt(t)$, the busy interval containing $t$ (if exists) ends before time $t + O(P)$ with high probability in $P$.
\end{lemma}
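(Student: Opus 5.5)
The plan is a crude counting argument that holds for an arbitrary history, so it cannot use the potential bound of \cref{lem:potential-upper-bound}, which requires conditioning on a history $\Theta(P)$ steps earlier. The idea is that every \StoreRandom applied at or after time $t$ inside the busy interval was invoked either by an operation already ongoing at time $t$, or by an operation invoked after $t$. We bound these two sources separately.

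\textbf{Operations ongoing at time $t$.} There are at most $P$ processes, so at most $P$ such operations, and each invokes at most one \StoreRandom. Hence they contribute at most $P$ applications in total. This already accounts for everything that can sit in the queue at time $t$.

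\textbf{Operations invoked after $t$.} First fix the interval $[t, t + KP)$ for a large constant $K$. The key fact is that the busy interval applies one \StoreRandom per timestep, so the fingerprint is rewritten with a fresh uniform $\eps\log P$-bit string every step. An operation that enters its \instS at some time $s > t$ inside the busy interval, and later applies an \instR that does not see a changed fingerprint, requires a collision. Concretely, the fingerprint it read at \instS must equal the one present at its \instR. A new operation reaches \instW within its first $C$ \instR{}s only with probability $O(1/P^4)$, as in category 2 of \cref{lem:busy-interval-length}. Otherwise it must survive $C$ \instR{}s, each at a distinct time after its \instS, while the fingerprint keeps changing. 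Following category 3 of \cref{lem:busy-interval-length}, this forces $C$ distinct timesteps in a window of length $O(KP)$ whose written fingerprints all coincide. By a union bound over start times and $C$-tuples, the probability is at most $(KP)^{C}\cdot P^{-\eps(C-1)}$, which is $1/\poly(P)$ only if $C \gg 1/\eps$ and we restrict each operation's relevant window.

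To make the restriction concrete, we observe that each operation runs only $O(\log P)$ \instR{}s before reaching \instW (wait-freedom argument of \cref{lem:register-linearizable}). By \cref{prop:process-schedule-prob}, it does so within $O(\log P)$ time w.h.p. It then suffices to take a union bound over pairs (operation start time, $C$-tuple inside a window of length $O(\log P)$ after it). That gives $O(KP\cdot P\cdot \log^{2C}P\cdot P^{-\eps(C-1)})=1/\poly(P)$ once $C$ is large. The number of operations invoked in $[t,t+KP)$ is at most $KP\cdot P$, which is fine. The $O(1/P^4)$ early-\instW events sum to $O(K/P^2)$ in expectation, so by Chernoff at most $O(\log P)$ such operations exist w.h.p.

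\textbf{Conclusion.} W.h.p. at most $P + O(\log P)$ \StoreRandom{}s are applied from time $t$ onward in the busy interval before time $t+KP$. Since the busy interval applies one per step, if it lasted past $t+KP$ it would need $KP > P+O(\log P)$ applications, a contradiction. Hence the interval ends before $t+O(P)$.

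\textbf{Main obstacle.} The delicate step is the collision union bound. Fingerprints are fresh per \StoreRandom, but the times at which an operation reads are adaptively chosen, so I would enumerate fixed timesteps as in category 3 of \cref{lem:busy-interval-length} rather than reason about the random read times. I would also double-check that conditioning on an arbitrary $\HistExt(t)$ leaves all future fingerprints uniform. It does, because \StoreRandom samples at application time.
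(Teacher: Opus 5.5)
Your proposal is correct and follows essentially the paper's argument: at most $P$ \StoreRandom{}s come from operations ongoing at time $t$, and every operation invoked afterwards must either reach \instW within its first few \instR{}s (probability $O(1/P^4)$ each) or survive \instR{}s despite the fresh fingerprint written in every step of the busy interval, which requires a fingerprint collision inside an $O(\log P)$ window. The difference is only in bookkeeping. The paper sums over the exact number $j$ of \instR{}s to get a per-operation bound of $O(1/P^4)$, union-bounds to failure probability $O(1/P^2)$, and amplifies by repeating the argument. Your threshold-$C$ split, which mirrors categories 2 and 3 of \cref{lem:busy-interval-length}, instead gives an arbitrary polynomial bound directly.
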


\begin{proof}
  In order for the busy interval to extend beyond $t + O(P)$, at least $\Omega(P)$ \StoreRandom{}s need to be applied during $[t, \; t + O(P))$ and before the busy interval ends. These \StoreRandom{}s consist of two categories:
  \begin{itemize}
    \item \StoreRandom{}s from operations invoked before $t$. There are at most $P$ such \StoreRandom{}s because all these operations are ongoing at time $t$.
    \item \StoreRandom{}s from operations invoked after $t$. There are at most $O(P^2)$ operations invoked in $[t, \; t + O(P))$, which are possible to invoke such \StoreRandom{}s.

    \begin{claim}
      Conditioned on $\HistExt(t')$ for $t' \in [t, \; t + O(P))$, for any operation $q$ invoked in timestep $t'$, the probability that $q$ applies \instW before the busy interval ends is at most $O(1 / P^4)$.
    \end{claim}
    \begin{proof}
      By \cref{prop:process-schedule-prob}, conditioned on $\HistExt(t')$, operation $q$ either quits or invokes \instW before time $t' + O(\log P)$. We assume this condition holds for the rest of the proof.

      For every $j \ge 1$, we bound the probability that operation $q$ applies exactly $j$ \instR instructions before it transitions to \instW, all before the busy interval ends. In order for this to happen, there must be $j$ timesteps $t_1, t_2, \dots, t_j \in [t', \; t' + O(\log P))$ such that the \StoreRandom{}s applied in these timesteps write the same fingerprint to \SharedCell. The probability of this event is bounded by
      \[
        \min\BK*{\binom{\log^2 P}{j} \cdot (P^{-\eps})^{j - 1}, \; 1} \le \min\BK*{(\log^2 P)^{j} \cdot (P^{-\eps})^{j - 1}, \; 1} \le P^{-\eps (j - 1) / 2}.
      \]
      Assuming $q$ does not quit in the first $j$ \instR instructions, the probability that it transitions to \instW after the $j$-th \instR instruction is at most $(1 + 1/c)^j / P^4$. Taking a summation over $1 \le j \le O(\log P)$, we know that the probability that $q$ transitions to \instW before the busy interval ends is at most
      \[
        \sum_{j = 1}^{O(\log P)} \min\BK*{\frac{(1 + 1/c)^j}{P^4}, \; 1} \cdot P^{-\eps (j - 1) / 2} \le O(1 / P^4). \qedhere
      \]
    \end{proof}

    A union bound over all $O(P^2)$ operations invoked in $[t, \; t + O(P))$ shows that there are no \StoreRandom{}s in this category with probability at least $1 - O(1 / P^2)$.
  \end{itemize}
  In summary, with probability at least $1 - O(1 / P^2)$, there are at most $P$ \StoreRandom{}s getting applied during $[t, \; t + O(P))$ that can contribute to the busy interval. Therefore, conditioned on $\HistExt(t)$, the busy interval terminates before $t + O(P)$ with probability at least $1 - O(1 / P^2)$. Repeating this argument for a large constant times yields the lemma.
\end{proof}

\begin{corollary}
  \label{cor:busy-interval-endpoint}
  For every time $t \ge 0$, conditioned on $\HistExt(t - \Theta(P))$, the busy interval containing $t$ (if exists) ends before time $t + O(\log P)$ with high probability in $P$.
\end{corollary}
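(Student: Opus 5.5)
We combine \cref{lem:busy-interval-length} with \cref{lem:busy-interval-length-large} through a union bound over possible start times. Fix $t$ and suppose some busy interval $[s, s')$ contains $t$, so $s \le t < s'$. The difficulty is that $s$ is random and depends on the execution; \cref{lem:busy-interval-length} only controls busy intervals starting at a \emph{fixed} time, conditioned on history $\Theta(P)$ earlier. We therefore first localize $s$ to a polynomial-size window that is fixed in advance, and then union bound over all candidate values of $s$.

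\emph{Localizing the start.} Apply \cref{lem:busy-interval-length-large} at the fixed time $t_0 \defeq t - \Theta(P)$, conditioned on $\HistExt(t_0)$; conditioning on $\HistExt(t - \Theta(P))$ is weaker and is handled by the convention of \cref{sec:conventions}. Choose the constant in $\Theta(P)$ larger than the $O(P)$ of that lemma. Then, w.h.p., the busy interval containing $t_0$ (if any) ends before $t$. Hence, w.h.p., any busy interval containing $t$ starts at some $s \in (t_0, t]$, a window of $O(P)$ fixed candidate times.

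\emph{Union bound over starts.} For each fixed $s \in (t_0, t]$, apply \cref{lem:busy-interval-length} at time $s$. It requires conditioning on $\HistExt(s - \Theta(P))$. Since $s - \Theta(P)$ may lie beyond $t_0$, we pick the $\Theta(P)$ in the statement of the corollary as roughly twice the lemma's. This makes $t - \Theta(P) \le s - \Theta_{\text{lem}}(P)$, so conditioning on the earlier history $\HistExt(t - \Theta(P))$ is valid by the law of total probability. Each such lemma application gives that the busy interval starting at $s$, if it exists, has length $O(\log P)$ with failure probability $1/\poly(P)$. A union bound over the $O(P)$ choices of $s$ keeps the total failure probability at $1/\poly(P)$, after raising the polynomial exponent, which the lemma permits.

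\emph{Concluding.} On the intersection of these events, the busy interval containing $t$ starts at some $s \le t$ and ends before $s + O(\log P) \le t + O(\log P)$. This is the claim. The only delicate step is the localization: aligning the $\Theta(P)$ lookback constants so that both lemmas' conditioning requirements are implied by conditioning on $\HistExt(t - \Theta(P))$, and ensuring that the candidate-start window has polynomial size.
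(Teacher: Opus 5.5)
Your proposal is correct and follows the paper's proof: localize the start of the busy interval containing $t$ to a window of length $O(P)$ using \cref{lem:busy-interval-length-large} at $t - \Theta(P)$, then apply \cref{lem:busy-interval-length} at that random start via a union bound over the polynomially many candidate start times, which is the implicit argument of \cref{sec:conventions}. Your bookkeeping of the lookback constants is more careful than the paper's, with one correction: the corollary's lookback must be at least the localization window plus the lemma's lookback (the sum of the two constants), not ``roughly twice'' the lemma's.
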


\begin{proof}
  Assume $t$ is in a busy interval (otherwise the statement holds trivially). Let $t_1 \le t$ be the starting point of the busy interval containing $t$.
  By \cref{lem:busy-interval-length-large}, conditioned on $\HistExt(t - \Theta(P))$, the busy interval containing $t - \Theta(P)$ (if exists) does not extend beyond time $t$, which implies that $t_1 \ge t - \Theta(P)$.
  By \cref{lem:busy-interval-length} at time $t_1$,%
  \footnote{Note that $t_1$ is a random variable that is bounded by a polynomial range with high probability in $P$, which allows us to apply \cref{lem:busy-interval-length} at time $t_1$ via an implicit union bound as discussed in \cref{sec:conventions}.}
  we know that the busy interval containing $t$ ends before $t_1 + O(\log P) \le t + O(\log P)$ with high probability in $P$.
\end{proof}

\subsubsection{Bounding the Latency}
\label{sec:register/latency}

Finally, we use \cref{cor:busy-interval-endpoint} to bound the latency of the algorithm.

\begin{lemma}
  \label{lem:register/latency}
  Let $t$ be a fixed timestep and let $q$ be an arbitrary \Write operation invoked in timestep $t$. Then, conditioned on $\HistExt(t - \Theta(P))$, operation $q$ completes within $O(\log P)$ time with high probability in $P$.
\end{lemma}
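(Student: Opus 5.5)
We split the lifetime of $q$ into two phases: the time until $q$ either returns or invokes its \instW instruction, and the time $q$ then spends waiting for its \StoreRandom to be applied. The latency is the sum of the two, and we bound each by $O(\log P)$ with high probability, finishing with a union bound.

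\textbf{First phase.} By the pseudocode, $q$ applies one \instS and then at most $L = O(\log P)$ \instR instructions. The reason is that the invocation probability starts at $1/P^4$ and is multiplied by $(1+1/c)$ on every iteration, so it reaches $1$ after $4c\ln P + O(1)$ iterations, at which point the operation must transition to \instW. During this phase, $q$ is \emph{ready} at every time until it either returns or invokes \instW: \instS and \instR are \Load{}s, which are applied immediately. We apply \cref{prop:process-schedule-prob} with $k = L+2$, conditioned on $\HistExt(t)$ and hence also on the shorter $\HistExt(t-\Theta(P))$. This gives that with probability $1-2^{-\Omega(\log P)}$, $q$ is scheduled at least $L+2$ times in $[t,\,t+4(L+2)\tau)$, unless it stops being ready earlier. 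Choosing the constant in $k$ large makes this failure probability $1/\poly(P)$. Hence by some time $t' \le t + O(\log P)$, $q$ has either returned or invoked its \StoreRandom.

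\textbf{Second phase.} Suppose $q$ invokes \instW at time $t' \le t+O(\log P)$. Its instruction sits in the queue of \SharedCell, which is nonempty from time $t'$ onward until $q$'s instruction is applied. So every timestep from $t'$ until $q$'s \StoreRandom is applied lies in the busy interval containing $t'$. More precisely, each such timestep applies one \StoreRandom, so these timesteps lie in the busy interval that contains $t'$ or begins at $t'+1$. Therefore $q$'s instruction is applied no later than the end of that busy interval. \cref{cor:busy-interval-endpoint} says the busy interval containing a fixed time $s$ ends before $s+O(\log P)$ w.h.p., conditioned on $\HistExt(s-\Theta(P))$. We apply it at the random time $t'$ (or $t'+1$). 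Because $t'$ lies w.h.p.\ in the fixed window $[t,\,t+O(\log P)]$, we can union bound over all $O(\log P)$ fixed values $s$ in that window, following the convention of \cref{sec:conventions}. Each such $s$ satisfies $s-\Theta(P) \ge t-\Theta(P)$ once we adjust the constant, so conditioning on $\HistExt(t-\Theta(P))$ suffices. After $q$'s \StoreRandom is applied, $q$ returns within $O(1)$ local steps, which the reduced state machine folds into the same transition.

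\textbf{Main obstacle.} The delicate point is the conditioning. \cref{cor:busy-interval-endpoint} is stated conditioned on $\HistExt(s-\Theta(P))$ for a fixed $s$, while we only condition on $\HistExt(t-\Theta(P))$, and $t'$ is random. We resolve this by fixing $s$ before applying the corollary, using the law of total probability to pass to the shorter history $\HistExt(t-\Theta(P)) \subseteq \HistExt(s-\Theta(P))$ (taking the $\Theta(P)$ in the statement slightly larger), and then union bounding over the $O(\log P)$ candidate values of $s$. Summing the two phases gives total latency $O(\log P)$ with high probability in $P$.
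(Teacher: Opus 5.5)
Your proof is correct and follows essentially the same argument as the paper. You first use \cref{prop:process-schedule-prob} to show that within $O(\log P)$ time $q$ either returns or invokes its \StoreRandom, and then bound the queueing delay by the end of the enclosing busy interval via \cref{cor:busy-interval-endpoint}; the only cosmetic difference is that the paper applies the corollary once at the fixed time $t + \TThreshold \ge t_1 + 1$ instead of union-bounding over the random invocation time as you do (and your hedge about a busy interval ``beginning at $t'+1$'' is unnecessary, since all timesteps from $t'$ through the application of $q$'s instruction already lie in the single busy interval containing $t'$).
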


\begin{proof}
  After being invoked, operation $q$ performs at most $O(\log P)$ shared instructions before it either quits or invokes a \StoreRandom instruction. Let $\TThreshold = \Theta(\log P)$ be a sufficiently large multiple of $\log P$. By \cref{prop:process-schedule-prob}, conditioned on $\HistExt(t - \Theta(P))$, with high probability in $P$, operation $q$ is scheduled at least $\Omega(\log P)$ times before time $t + \TThreshold$, thus it either quits or invokes a \StoreRandom.
  \begin{itemize}
    \item If operation $q$ quits, its latency is at most $\TThreshold = O(\log P)$.
    \item If operation $q$ invokes a \StoreRandom, we denote by $t_1 < t + \TThreshold$ the timestep when $q$ invokes the \StoreRandom, and let $t_2 > t_1$ be the first time after $t_1$ when the instruction queue of \SharedCell is empty. By \cref{cor:busy-interval-endpoint}, conditioned on $\HistExt(t - \Theta(P))$, with high probability in $P$, the busy interval containing $t + \TThreshold \ge t_1 + 1$ ends before $t + \TThreshold + O(\log P) \le t + 2 \TThreshold$, so we have $t_2 \le t + 2 \TThreshold$. At time $t_2$, the instruction queue for \SharedCell is empty, so the \StoreRandom invoked by operation $q$ has already been applied. It implies that the operation $q$ completes before time $t + 2 \TThreshold$ with high probability in $P$, or equivalently, its latency is at most $2 \TThreshold = O(\log P)$.
  \end{itemize}
  Combining both cases completes the proof.
\end{proof}

\subsection{Putting the Pieces Together}

In this subsection, we put pieces together to prove \cref{thm:register-main}.

\begin{proofof}{\cref{thm:register-main}}
  Recall that our algorithm for register in \cref{sec:register-algorithm} maintains a single machine word in the shared memory that stores a pair $(x, f)$, where $x$ is a $\l$-bit integer that represents the current value of the register, and $f$ is a random $\eps \log P$-bit fingerprint. The \Read operation is directly implemented using a \Load instruction, and the \Write operation is implemented as in \cref{alg:shared-register-write}.

  We now verify the desired properties of our algorithm.
  \begin{itemize}
    \item \textbf{Linearizability and wait-freedom.} \cref{lem:register-linearizable} states that our algorithm is linearizable and wait-free.
    \item \textbf{Memory usage.} It is easy to see that our algorithm uses a single machine word in the shared memory with word size at least $\l + \eps \log P$ bits; every operation also uses at most $O(1)$ words of local memory.
    \item \textbf{High-probability latency.} A \Read operation always completes after invoking a \Load instruction, and thus completes within $O(\log P)$ time with high probability in $P$. \cref{lem:register/latency} states that, for every fixed time $t \ge 0$, with high probability in $P$, every write operation invoked in timestep $t$ completes within $O(\log P)$ time. \qedhere
  \end{itemize}
\end{proofof}

\paragraph{Why shorter random fingerprints fail.}

In \cref{alg:shared-register-write}, we use a random fingerprint of $\eps \log P$ bits to detect changes to the shared cell. We show that this fingerprint length is necessary.

\begin{theorem}
  \label{thm:register/short-fingerprint-fail}
  \cref{alg:shared-register-write} with random fingerprints of $F = o(\log P)$ bits fails to achieve a high-probability latency guarantee of $O(\log P)$.
\end{theorem}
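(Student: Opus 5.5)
We exhibit an explicit adaptive user together with a scheduler satisfying the $\tau$-random-delay property such that, at some fixed time $t$, an operation invoked at time $t$ has latency $\omega(\log P)$ with probability that is not $1 - 1/\poly(P)$. The weak point of short fingerprints is that with $F = o(\log P)$ bits, two consecutive \StoreRandom{}s write the same fingerprint with probability $2^{-F} = P^{-o(1)}$. This is far larger than any inverse polynomial. Such collisions let operations that should abort survive long enough to pile up in the queue.

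\textbf{The pile-up.} The user invokes $\Write$ operations on all $P$ processes at time $0$, and the scheduler runs every ready process in every timestep, which trivially satisfies the random-delay rule. All operations apply \instS at time $0$ and read the initial fingerprint $f_0$. Each operation then loops through \instR with deterministically growing probability $p$. After roughly $\log_{1+1/c}(P^3) = \Theta(c\log P)$ iterations, the potential reaches $\Theta(\log P)$. Some first operation then invokes a \StoreRandom, which writes a fresh fingerprint $f_1$. We want the scheduler to expose the remaining $P - O(\log P)$ operations only to fingerprints equal to $f_0$ for a long time. Each applied \StoreRandom independently matches $f_0$ with probability $2^{-F}$. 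If the first $k$ stores all collide with $f_0$, which happens with probability $2^{-Fk}$, then every remaining active operation keeps seeing $\fNow = \fOld$ and continues backing on. Repeating this for $k$ stores lets the potential grow to $\Theta(P)$. Taking $k = \Theta(\log P)$ stores suffices: the potential multiplies by a constant factor each timestep, and each timestep forces only $O(\text{potential})$ new stores. The probability of this event is $2^{-F \cdot \Theta(\log P)} = P^{-o(1)}$, which is not polynomially small. Once this happens, $\Theta(P)$ operations transition to \instW within $O(\log P)$ timesteps, so the queue holds $\Theta(P)$ stores and some operation waits $\Omega(P)$ time.

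\textbf{Cleaner variant.} A simpler version concentrates the collision budget. The scheduler stalls all but one process after the first store. More precisely, it uses its adversarial power within the rules: processes whose coins come up $0$ are delayed. It also uses the queue-ordering freedom. I would rather keep every process always scheduled and simply count the stores that are forced. In each $\Theta(1)$-step window, the number of stores applied is at most the number of transitions to \instW, which is about the current potential. Requiring all of them to collide costs $2^{-F\phi}$. Summing $\phi$ over the growth phase, from $\Theta(\log P)$ up to the level where $\Theta(P)$ operations fire within $O(\log P)$ steps, is the main obstacle: the total number of stores is a geometric sum dominated by its last term, which is about $P$. That cost is too large. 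The fix is to make the latency target weaker. We only need $\omega(\log P)$ latency, so it suffices to grow the potential to about $\log^2 P$. Reaching that level costs $O(\log^2 P)$ collisions, with probability $2^{-F\log^2 P}$. This is still too small unless $F = o(1/\log P)$, so this route does not work directly.

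\textbf{Final route.} The better plan exploits wrong aborts instead. A busy interval keeps the queue nonempty, so the fingerprint is rerandomized every step. Category~3 of \cref{lem:busy-interval-length} already shows that an operation surviving $C$ reads requires $C$ collisions. With $F = o(\log P)$, operations invoked during a busy interval survive $\omega(1)$ reads with probability $P^{-o(1)}$ each. Among $\Theta(P)$ such operations, w.h.p.\ $P^{1-o(1)}$ of them survive long enough to reach invocation probability $\Omega(1)$, and they then store, extending the busy interval. The user keeps re-invoking $\Write$ on free processes. This gives a self-sustaining busy interval, and its length exceeds any $O(\log P)$ bound with constant probability. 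I would formalize this as a branching argument: each store in the queue spawns, in expectation, more than one future store. The main work is verifying that this expectation exceeds $1$ when $2^{-F} \gg P^{-\eps'}$ for every constant $\eps'$.
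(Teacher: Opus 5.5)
\textbf{There is a genuine gap.} The route you finally commit to cannot work for any $F \ge 1$.

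An operation invoked during a busy interval sees a freshly randomized fingerprint at each \instR. So it survives $j$ reads with probability about $2^{-Fj}$, while its invocation probability only grows to $(1+1/c)^j/P^4$. Reaching $p=\Omega(1)$ takes $j=\Theta(c\log P)$ reads, which costs $P^{-\Theta(cF)}$, not $P^{-o(1)}$. Summing over the read at which it could transition, such an operation ever stores with probability at most $\sum_{j\ge 1}2^{-F(j-1)}(1+1/c)^j/P^4=O(P^{-4})$, because $2^{-F}(1+1/c)<1$. This is essentially the computation in the proof of \cref{lem:busy-interval-length-large}.

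Consequently, freshly invoked operations contribute only $O(P^{-3})$ expected stores per timestep, and your branching process is strongly subcritical. Both ``$P^{1-o(1)}$ of them reach invocation probability $\Omega(1)$'' and ``the offspring mean exceeds $1$'' are false. Your first route also misestimates its cost: $2^{-F\cdot\Theta(\log P)}=P^{-\Theta(F)}$, which is never $P^{-o(1)}$.

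You abandoned the direct pile-up because of a miscount in your ``cleaner variant.'' The number of stores \emph{applied} in a window is not governed by the potential: the hardware queue dequeues at most one \StoreRandom per timestep. Keeping the fingerprint equal to the original one for $\alpha$ timesteps therefore costs only $2^{-F\alpha}$, no matter how many stores are enqueued meanwhile. That is the missing idea.

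The paper's proof uses the greedy scheduler and $P$ simultaneous \Write{}s, and proceeds in three steps.
\begin{enumerate}
\item With probability $2^{-O(\log P/\alpha)}$, no store is invoked while the potential climbs to $\Theta(\log P/\alpha)$. Aiming for $\Theta(\log P)$ here, as you did, would cost a fixed inverse polynomial.
\item With constant probability, $\Theta(\log P/\alpha)\ge\alpha$ operations then invoke \StoreRandom simultaneously and form a queue.
\item With probability $2^{-F\alpha}$, the first $\alpha$ applied stores rewrite the original fingerprint. For those $\alpha$ timesteps nobody aborts, the potential multiplies by $2^{\Theta(\alpha)}$ to $(\log P/\alpha)\,2^{\Theta(\alpha)}=\omega(\log P)$, and $\omega(\log P)$ stores pile into the queue.
\end{enumerate}
Choosing $\alpha=\sqrt{\log P/F}=\omega(1)$ balances the two costs at $2^{-O(\sqrt{F\log P})}=P^{-o(1)}$, which exceeds every inverse polynomial.
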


\begin{proofsketch}
  We show a possible execution that occurs with probability at least $1 / P^{o(1)}$, in which some operation incurs $\omega(\log P)$ latency.
  In this execution, we assume a greedy scheduler, i.e., in every timestep, all ready operations are scheduled. The input in this execution is simple: the user invokes $P$ \Write operations in the first timestep, all trying to write the same value $0$ to the register, and invokes no operations afterwards.

  Let $\alpha$ be a superconstant parameter to be determined later. Then, the following sequence of events occurs with probability at least $1 / P^{o(1)}$:
  \begin{itemize}
    \item In the first $\Theta(\log P)$ timesteps, no \StoreRandom instruction is invoked, and the potential $\phi_t$ increases to $\Theta((\log P) / \alpha)$.
    This happens with probability at least $1 / 2^{O((\log P) / \alpha)} = 1 / P^{o(1)}$ by a direct calculation.
    \item $\Theta((\log P) / \alpha)$ operations invoke \StoreRandom simultaneously, which enter the instruction queue and apply one by one. This happens with constant probability by a Chernoff bound.
    \item The first $\alpha = \omega(1)$ \StoreRandom instructions that get applied write the same fingerprint $0$ to \SharedCell. This happens with probability $1 / 2^{F \alpha} = 1 / 2^{o(\log P)} = 1 / P^{o(1)}$, where the first equality holds by choosing $\alpha = \sqrt{\frac{\log P}{F}}$.
    \item In the first $\alpha$ timesteps, the value of \SharedCell is always $0$ and has not changed, so no operation can abort. Therefore, during the first $\alpha$ timesteps, the potential $\phi_t$ increases by a factor of $(1 + 1/c)^\Theta(\alpha) = 2^{\Theta(\alpha)}$, which reaches at least $\frac{\log P}{\alpha} \cdot 2^{\Theta(\alpha)} = \omega(\log P)$.
    \item $\omega(\log P)$ operations invoke \StoreRandom instructions during the first $\alpha$ timesteps, which all enter the instruction queue, causing some operations to have $\omega(\log P)$ latency.
  \end{itemize}
  This execution occurs with probability at least $1 / P^{o(1)}$ and causes some operation to have $\omega(\log P)$ latency, which completes the proof.
\end{proofsketch}

\section{CAS Register}
\label{sec:cas-register}

In this section, we study the \defn{CAS register} problem.

\begin{definition}
  A \defn{CAS register} is a data structure that maintains a single $\l$-bit integer $x$, and supports two (logical) operations:
  \begin{itemize}
    \item \Read: Return the current value of $x$.
    \item $\opCAS(\xExpect, \xNew)$: If $x = \xExpect$, overwrite $x$ with $\xNew$, return true; otherwise, do nothing and return false.
  \end{itemize}
\end{definition}

Similar to the read/write register problem, the most straightforward implementation of a CAS register is to use the provided atomic instructions \Load and \CAS, but this implementation can have time complexity $\Theta(P)$ on specific inputs. Our main result is the following theorem.

\begin{restatable}[Theorem~\ref{thm:cas-register-main-intro}, restated]{theorem}{CasRegisterMain}
  \label{thm:cas-register-main}
  We can construct a linearizable, lock-free $\l$-bit CAS register that works under the stochastic CRQW model with adaptive inputs, such that:
  \begin{itemize}
    \item Each operation has latency at most $O(\log P)$ with high probability in $P$.
    \item The data structure uses $O(1)$ machine words with word size $w \ge \max\BK{\l + 2 \log \log P, \; 2 \log P}$ bits in the shared memory, and $O(1)$ words of local memory for each operation.
  \end{itemize}
\end{restatable}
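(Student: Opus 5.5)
The plan mirrors the proof of \cref{thm:register-main}: establish correctness by worst-case arguments, and establish latency via a potential/busy-interval analysis, now for the two cells \CellC and \CellW of \cref{alg:improved-cas-body}. \textbf{Correctness.} Linearizability comes from a scheme close to \cref{lem:register-linearizable}. \Read{}s are linearized at their \Load of \CellC. Hardware-applied \opCAS operations are linearized at their \CAS. An operation that returns false after reading $\xOld \ne \xExpect$ is linearized at that \Load. An operation with $\xNew = \xOld$ returning true is linearized at its \Load in state S. An operation that aborts in state R because $(\xNow,\fNow) \ne (\xOld,\fOld)$ is linearized immediately before the first successful hardware \CAS after its S-\Load. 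Such a \CAS must exist, because fingerprints change only via successful \CAS{}s. That \CAS changed the logical value away from $\xOld = \xExpect$ (successful \CAS{}s have $\xNew \ne \xOld$), so returning false is consistent. Lock-freedom follows since the waiting phase only blocks while \CellW keeps changing, which requires some operation to apply a \CAS. Each BasicCAS loop terminates in $O(\log P)$ iterations, since $p$ doubles from $P^{-2c^3}$. Memory is one word $(x,f)$ with $f$ of $2\log\log P$ bits, plus one word \CellW of $2\log P$ random bits.

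\textbf{Latency, step 1 (healthy state for \CellC).} Define the potential $\phi_t$ as the sum of $p$ over operations pending on R, and call time $t$ \emph{healthy} if both queues are empty and $\phi_t \le 1/\poly(P)$. I would first reprove analogues of \cref{lem:store-probability}, \cref{lem:potential-drop-after-store}, and \cref{cor:potential-falling-probability} for BasicCAS. While $\phi_t \in [\alpha\log P, \beta\log P]$, some hardware \CAS is invoked within $O(1)$ steps w.h.p. The \emph{first} applied one succeeds, since every operation in R saw $(\xOld,\fOld)$ equal to the current content. It then changes the $(x,f)$ pair, which forces all other R-operations to abort, except on fingerprint collisions. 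These are controlled because $f$ increments deterministically mod $\log^2 P$, so a collision needs $\log^2 P$ successes. The busy-interval argument of \cref{lem:busy-interval-length} then carries over, with one change: failed \CAS{}s in the queue must all come from operations that transitioned to C before the first success. Their number is $O(\log P)$ w.h.p. by the same $T_j$/scheduling-coin/multiplicative-Azuma argument.

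\textbf{Step 2 (the waiting phase restores long-livedness).} The key new lemma is: from a healthy time $t$, w.h.p. the system is healthy again at some time in $[t, t+O(\log P)]$, and no operation enters the calling phase while \CellC's queue is long. Each applied hardware \CAS is followed by a \StoreRandom to \CellW. Every such \Store is visible, except with probability $P^{-2}$ for a collision. So while \CellC's queue is nonempty, \CellW changes at least once every $O(\log P)$ steps w.h.p. by \cref{prop:process-schedule-prob}, and a waiting operation sees a change within its $2c\log P$-step wait. The \CellW queue itself is bounded by the register-style busy-interval argument, since its writers are a subset of the \CellC writers. Consequently, only operations that finished waiting during a quiet period enter BasicCAS, and they start with tiny $p$. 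Step 1 then says the potential either stays below $\alpha\log P$ or is driven back down before reaching $\beta\log P$. I would then show that a bad event at any fixed time requires an unlikely event within the preceding $O(\log P)$ window, conditioned on $\HistExt(t-\Theta(P))$, rather than a global unlikely event. This gives a per-time bound independent of execution length, which is exactly what the short-lived version lacked.

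\textbf{Step 3 and main obstacle.} Latency then combines $O(\log P)$ waiting rounds (the wait ends once a quiet window occurs, which step 2 guarantees within $O(\log P)$ time w.h.p.), $O(\log P)$ BasicCAS iterations, and the $O(\log P)$ queue waits on \CellC and \CellW. The main obstacle is step 2, the recovery argument. The difficulty is an adaptive adversary who times invocations to coincide with rare pile-ups. I would handle this with a witness/counting argument as in \cref{lem:potential-upper-bound}. Any long \CellC queue at time $t$ must be traced to $\Omega(\log P)$ \CAS invocations, made after a quiet observation of \CellW that lies within $O(\log P)$ of $t$. This in turn needs the potential to exceed $\beta\log P$ within a short window, which is $P^{-\Omega(1)}$-unlikely, with the constant amplified by choosing $c$ large.
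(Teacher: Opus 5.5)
\textbf{The gap is long-livedness, which is the whole point of this statement.} Your Step~2 lemma (healthy at $t$ implies healthy again within $O(\log P)$ w.h.p.) is exactly the short-lived property behind \cref{lem:basic-cas/latency-guarantee}. Chaining it covers only $\poly(P)$ steps, and one failure can leave the system ill.

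Your bridge says a bad state at a fixed $t$ must be witnessed by an unlikely event in the preceding $O(\log P)$ window, namely the potential exceeding $\beta\log P$ after a quiet observation of \CellW near $t$. This presupposes the system was healthy shortly before $t$. Suppose instead an ill state, say $\Theta(P)$ invalid \CAS{}s queued on \CellC, began long before $t$. Then the instructions queued at $t$ were invoked long ago, and nothing unlikely needs to happen near $t$.

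The register's witness argument does not rescue this. Its $t_0 = t-P$ case relies on downward pressure at every potential level. Your Step~1 facts (the first applied \CAS succeeds; failed \CAS{}s come only from proposals made before the first success) hold only when no invalid \CAS is awaiting and no operation is heavily delayed. In ill states the BasicCAS dynamics push upward instead. The same circularity affects two other steps:
\begin{itemize}
\item reusing the $T_j$/Azuma busy-interval argument, which needs a potential bound throughout the preceding $\Theta(P)$ window;
\item bounding \CellW's busy intervals. One applied \CAS per step can feed \CellW's queue as fast as it drains, so you need \CAS-free windows, i.e., healthy states of \CellC.
\end{itemize}

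\textbf{What is missing is a recovery lemma from an \emph{arbitrary} history.} The paper obtains it by a global count (\cref{lem:improved-cas/bad-interval-end}).
\begin{itemize}
\item In a bad state, a \CAS is applied on \CellC, and then a \StoreRandom on \CellW, within $O(\log P)$ steps w.h.p.
\item So a waiting operation can leave the Waiting Phase only if some store in its $2c\log P$-step window reproduces its $\yOld$. This has probability $O(\log P/P^2)$ per operation and timestep.
\item Over $P^2$ steps this gives $O(P\log P)$ expected entries into the Calling Phase. Adding the at most $2P$ operations already awaiting or past the Waiting Phase, this is far fewer than the $\Omega(P^2/\log P)$ applied \CAS{}s that a bad interval of length $P^2$ requires.
\item Markov plus repetition then yields a healthy state within $O(P^2)$ w.h.p.\ from any history.
\item Finally, take the last healthy time before $t$ and condition on $\HistExt(t-\Theta(P^2))$. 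This gives the per-time $O(\log P)$ bound (\cref{lem:improved-cas/enter-healthy-state}).
\end{itemize}
Note that \CellW collisions cannot be ruled out w.h.p.\ over such long periods, and one collision may release up to $P$ operations at once. So your claim that no operation enters the Calling Phase while \CellC's queue is long must be replaced by this expected count.

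\textbf{Two fixable slips in your linearization.}
\begin{itemize}
\item An operation aborting in R must be linearized immediately \emph{after}, not before, the first successful \CAS following its S-\Load. Before that \CAS, the value still equals $\xExpect$.
\item A hardware \CAS that fails only on the fingerprint, with value equal to $\xExpect$, cannot be linearized at its application. Handle it the same way as the R-abort case.
\end{itemize}
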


\subsection{Warmup: Basic CAS Register}
\label{sec:basic-cas-register}

As a warmup, we first present and analyze a CAS register algorithm that is similar to \cref{alg:shared-register-write}. This algorithm will later be used as a subroutine in our final algorithm for the CAS register problem.

\paragraph{Algorithm description.}
The algorithm maintains a single machine word in the shared memory, called \CellC, which stores a pair $(x, f)$. Here, $x$ represents the current value of the CAS register, and $f$ is a $2 \log \log P$-bit fingerprint. Every time one operation applies a \CAS instruction that changes $x$, it also increases $f$ by one, so that the change can be detected by other operations.

Similar to \cref{alg:shared-register-write}, the \Read operation is implemented directly by a \Load instruction; the \opCAS operation uses a \emph{back-on} strategy, which starts with a small probability $p$ of invoking a \CAS instruction, and the probability doubles in each iteration. When the operation detects that the \CellC changes its value, it aborts by linearizing itself just before or after the change point. See \cref{alg:basic-cas-register} for details,
where $c$ on line~\ref{line:basic-cas/defn-c} is a sufficiently large constant satisfying $c \ge 2^{4\tau+1}$.

There are three shared instructions in \cref{alg:basic-cas-register}: \instS, \instR, and \instC. Similar to \cref{sec:register}, throughout this section, we adopt the view of the \emph{reduced state machine}, and say an operation is \emph{scheduled} if it applies a shared instruction and transitions to the next state.

Recalling that \cref{alg:shared-register-write} lets every operation complete within $O(\log P)$ time with high probability in $P$, one might think \cref{alg:basic-cas-register} gives the same guarantee. However, the same guarantee is true only when the length of the execution is bounded by a polynomial in $P$.

\begin{restatable}{lemma}{BasicCasLatencyGuarantee}
  \label{lem:basic-cas/latency-guarantee}
  In an execution of \cref{alg:basic-cas-register} with at most $P^{O(1)}$ timesteps, every operation completes within $O(\log P)$ time with high probability in $P$.
\end{restatable}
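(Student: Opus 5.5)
The plan follows the ``healthy state'' strategy sketched in the technical overview. Call time $t$ \emph{healthy} if the instruction queue of \CellC is empty at $t$, the potential $\phi_t = \sum_{q \in A_t} p^{(q)}_t$ over operations pending on \instR is at most $P^{-c}$, and every active operation has been in \instR for at most $c \log P$ steps. The core claim is a renewal statement: conditioned on $\HistExt(t)$ with $t$ healthy, with high probability some time in $[t + \Theta(\log P), \, t + O(\log P)]$ is again healthy. Iterating this over the $P^{O(1)}$ timesteps of the execution and taking a union bound shows that w.h.p.\ healthy times occur at least every $O(\log P)$ steps. Latency then follows as in \cref{lem:register/latency}. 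An operation invoked at time $t$ performs at most $O(\log P)$ \instR steps, since $p$ doubles from $P^{-2c^3}$ and reaches $1$ after $2c^3 \log P$ iterations. By \cref{prop:process-schedule-prob} it is scheduled that many times within $O(\log P)$ time, so it either returns or invokes \instC. Any \instC it invokes is applied by the next time the queue empties, which is the next healthy time.

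To prove the renewal claim, start from a healthy $t$. First, fix a window of length $\Theta(\log P)$ during which no \CAS is applied. While no \CAS is applied, the potential evolves deterministically up to scheduling. It is the initial contribution ($\le P \cdot P^{-2c^3}$ per step) plus doubling per \instR step. Since $\phi_t \le P^{-c}$, after $O(\log P)$ steps it can reach $\Theta(\log P)$ but not $\log^2 P$. The goal is to show that the first time $t^*$ at which $\phi$ crosses $\alpha\log P$ has $\phi_{t^*} \le O(\log P)$. To do this, use an analogue of \cref{clm:potential-growth-bound} with growth factor $2$ per timestep, together with an analogue of \cref{lem:store-probability}. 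Once $\phi \ge \alpha \log P$, w.h.p.\ some operation reaches \instC within $O(1)$ steps, and at most $O(\log P)$ operations transition to \instC in any single step, by Chernoff bounds since $\phi = O(\log P)$. The first \CAS applied on the expected value succeeds. This is because every operation in \instR has verified $(\xOld,\fOld)$ equals the current content, and the pruning in lines~\ref{line:basic-cas/4}--\ref{line:basic-cas/7} ensures $\xNew \ne \xOld$, so the \CAS changes $x$ and increments $f$. After this, every operation reading in \instR sees a changed pair and aborts. Second, the pile-up is bounded by the busy-interval argument of \cref{lem:busy-interval-length}. The \CAS{}s that can sit in the queue were invoked by operations that transitioned to \instC before the success and were not rescheduled. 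Summing $O(\log P) \cdot 2^{-\Omega(j)}$ over lags $j$ and applying multiplicative Azuma gives $O(\log P)$ queued instructions, so the queue drains within $O(\log P)$ steps. New operations arriving meanwhile read the new $(x,f)$. Their $p$ starts at $P^{-2c^3}$ and only reaches $P^{-c}$ after $\Omega(\log P)$ steps, more than the drain time when $c$ is large. Operations that were active before the success abort within $O(1)$ steps w.h.p.\ (\cref{prop:process-schedule-prob}), or are slow with tiny total probability. Hence, w.h.p., at some time after the drain the queue is empty and the potential is back below $P^{-c}$, which gives a healthy time.

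The main obstacle is the fingerprint. It has only $2\log\log P$ bits and is incremented rather than random, so the collision argument of \cref{lem:potential-drop-after-store} does not apply. Instead, I would argue that $f$ wraps around only after $\log^2 P$ successful \CAS{}s. Within one $O(\log P)$ renewal window, at most $O(\log P)$ successes occur, because each success requires a fresh \instS/\instR cycle after the previous change. Therefore no operation alive during the window can see $(\xOld,\fOld)$ recur, and ``changed pair implies abort'' holds deterministically in this window. A second delicate point is that adversarial enqueue order could place failing \CAS{}s first. These only come from operations that read the pre-success value, and they are exactly the ones counted in the pile-up bound, so they add only $O(\log P)$ to the drain time. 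Finally, the polynomial length of the execution is used only in the last union bound over renewal epochs. This is consistent with the overview's remark that the scheme breaks for longer executions.
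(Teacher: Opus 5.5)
Your architecture is the same as the paper's: healthy states recur every $O(\log P)$ steps, they are chained by a union bound over the polynomial-length execution, and latency follows as you describe. The genuine gap is in your definition of a healthy state, which only requires the hardware queue of \CellC to be empty. A \CAS can also be \emph{proposed but not yet invoked}, meaning its operation is pending on \instC, and such a proposal can be stale: it was made on $(x,f)$, another \CAS then succeeded, and the operation has not been scheduled since. Your renewal step rests on ``the first \CAS applied after $t$ succeeds.'' That is \cref{obs:basic-cas/invoke->change}, and it requires that no \emph{invalid awaiting} \CAS exists at $t$. Your renewal claim is conditioned on an arbitrary $\HistExt(t)$ with $t$ healthy, so under your definition it is false. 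A history (unlikely, but possible) can have many stale operations pending on \instC at a time when the queue is empty and the potential is small. The scheduler can release them together right after $t$, filling the queue with failing, invisible \CAS{}s, which is exactly the failure mode the construction must avoid; with up to $\Theta(P)$ of them, no healthy time occurs for $\Theta(P)$ steps. The repair is the paper's definition: no awaiting \CAS at all, meaning an empty queue \emph{and} no operation pending on \instC. In the cleanup after the first success, you must then wait until every operation pending on \instC at that moment has invoked, which happens w.h.p.\ within $O(\log P)$ by \cref{prop:process-schedule-prob}. You must also wait for the resulting $O(\log P)$-size queue to drain, as in \cref{lem:basic-cas/enter-healthy-state}. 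Your lag-sum bound only starts at the previous healthy time, so it also presupposes this stronger invariant.

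Two smaller points follow.

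First, the fingerprint argument should not rest on ``few successes per window.'' An operation whose last read of \CellC lies several windows back is not controlled by that count. What actually prevents $(\xOld,\fOld)$ from recurring is that no operation stays in \instR for $\log^2 P$ timesteps. Your third healthy condition supplies this at renewal points, but with threshold $c\log P$ it can be re-established only if the whole post-success cleanup finishes within $c\log P$ steps: old operations abort, stale proposals are invoked, and the queue drains. Otherwise, operations the user invokes right after the success are already too old by the time the queue empties, and you never check this constant relation. The paper instead uses the threshold $\tfrac12\log^2 P$ (no heavily-delayed operation). By \cref{clm:basic-cas/no-slow-operation} this holds at every fixed time with probability $1-P^{-\omega(1)}$, independently of the renewal constants.

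Second, the pile-up bound does not need the busy-interval and Azuma machinery. Before the first change every awaiting \CAS is valid, so once $c^2\log P$ of them are awaiting, one is applied within $4\tau$ steps w.h.p.\ and changes the cell. This caps the awaiting count at $c^3\log P$ (\cref{lem:basic-cas/no-many-awaiting-cas}).
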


In the remainder of this subsection, we show that \cref{alg:basic-cas-register} is linearizable and wait-free, and prove \cref{lem:basic-cas/latency-guarantee}.

\subsubsection{Linearizability and Wait-Freedom}
\label{sec:basic-cas/linearizability}

\begin{lemma}
  \label{lem:basic-cas/linearizable}
  \cref{alg:basic-cas-register} is linearizable and wait-free.
\end{lemma}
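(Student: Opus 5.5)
We prove wait-freedom first, then linearizability by giving explicit linearization points, following the template of \cref{lem:register-linearizable}.

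\textbf{Wait-freedom.} \Read executes a single \Load. A \BasicCAS operation returns at line~\ref{line:basic-cas/5} or~\ref{line:basic-cas/7}, or enters the loop. Each loop iteration doubles $p$, starting from $1/P^{2c^3}$. So after $O(c^3\log P)$ iterations $p \ge 1$, and the operation executes the hardware \CAS and returns, unless it returned earlier at line~\ref{line:basic-cas/11}.

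\textbf{Linearization points.} The basic invariant is that \CellC changes only when a hardware \CAS succeeds. Every successful \CAS has $\xNew \ne \xOld$, since such operations were filtered at line~\ref{line:basic-cas/6}. Hence every change of \CellC changes the logical value $x$, and also advances $f$ by one modulo $\log^2 P$. We linearize as follows:
\begin{itemize}
\item \Read operations, and \opCAS operations returning at line~\ref{line:basic-cas/5} or~\ref{line:basic-cas/7}, at their \Load in state \instS. At that moment the current value is $\xOld$, which justifies returning false (when $\xOld\ne\xExpect$) or true with no state change (when $\xNew=\xOld=\xExpect$).
\item \opCAS operations whose hardware \CAS succeeds, at the application of that \CAS.
\item All remaining false-returning \opCAS operations (those failing at line~\ref{line:basic-cas/11}, or whose hardware \CAS fails). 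In either case the operation observed $(\xOld,\fOld)$ at state \instS with $\xOld=\xExpect$, and later \CellC held a different pair. So some successful hardware \CAS $s$ was applied in between; take the first one after the \instS \Load. We linearize the operation immediately \emph{after} $s$. At that point the logical value is the new value written by $s$, which differs from $\xOld=\xExpect$ because successful \CAS{}s always change $x$. Thus returning false is correct. This point lies inside the operation's interval, since $s$ is applied after the \instS \Load and before the failing \Load or \CAS.
\end{itemize}
Incomplete operations are treated as in \cref{lem:register-linearizable}.

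\textbf{Main obstacle.} The key step is justifying that a successful hardware \CAS is a correct logical \opCAS. We must show the logical value at its application equals the operation's $\xExpect$. The hardware \CAS succeeds only if \CellC currently equals $(\xOld,\fOld)$, and $\xOld=\xExpect$, so the logical value is indeed $\xExpect$. This holds regardless of fingerprint wraparound (ABA on $f$): the logical value is by definition the $x$-component of \CellC, so the argument never needs $\fOld$ to be "fresh". It needs only the value comparison at the moment of application.

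Finally, we check that the sequence of logical values matches the linearized sequence. Every \CellC change is a linearized successful \opCAS at exactly its application time. Every other linearized operation is either read-only or linearized at an instant with the stated value. Placing the false operations immediately after $s$ does not interleave with other state changes. Consequently the linearization is legal.
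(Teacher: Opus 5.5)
Your proof is correct. Wait-freedom, and the linearization of \Read{}s, immediate returns, and successful hardware \CAS{}s, coincide with the paper's. The difference is in how you handle \opCAS operations that return false from inside the loop.

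The paper splits these by the kind of mismatch observed:
\begin{itemize}
\item A value mismatch, at \instR or at a failed \instC, is linearized at the step where it is observed.
\item A fingerprint-only mismatch is linearized immediately \emph{before} the \emph{last} successful hardware \CAS preceding the observation.
\end{itemize}
The paper then checks legality by induction over consecutive successful \CAS{}s. It uses the fact that the observed value equals $\xOld=\xExpect$, so the value just before that last \CAS differs from $\xExpect$.

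You use a single rule instead: linearize immediately \emph{after} the \emph{first} successful \CAS $s$ following the \instS load. Your justification is the mirror image of the paper's. Because $s$ is the first change, \CellC still holds $(\xOld,\fOld)$ just before $s$. Hence $s$ moves the value away from $\xOld=\xExpect$. State this use of ``first'' explicitly, since ``successful \CAS{}s change $x$'' on its own only compares with the value immediately before $s$.

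Your route removes the value-versus-fingerprint case split and the induction. It also places the point inside the operation's interval by construction. The paper's route keeps value-mismatch failures at the step where the operation actually detects them, matching the read/write register proof. It moves a linearization point into the past only in the fingerprint-only case.
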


\begin{proof}
  Wait-freedom is easy to see: each iteration through the loop in lines~\ref{line:basic-cas/9}--\ref{line:basic-cas/16} doubles the probability $p$, so $p$ reaches 1 after $O(\log P)$ iterations.
  At this point, the
  \opCAS operation invokes a hardware \CAS instruction in line~\ref{line:basic-cas/14} and returns after it is applied, if the operation has not returned earlier.

  We now turn to linearizability.
  Recall that \CellC maintains a pair $(x, f)$. Throughout the linearizability proof, we refer to the first entry $x$ as the \defn{value} of the register, and to the second entry $f$ as the \defn{fingerprint}. \Read operations are linearized at their \Load instructions.
  For an operation $q=\opCAS(\xExp, \xNew)$, let $(\xOld, \fOld)$ be the pair returned
  by the \instS instruction of $q$, and we linearize $q$ as follows.
  \begin{itemize}
    \item \emph{Immediate returns.}
    If $q$ returns at line~\ref{line:basic-cas/5} ($\xOld \neq \xExp$) or
    at line~\ref{line:basic-cas/7} ($\xNew = \xOld$),
    we linearize $q$ when it applies \instS.
    \item \emph{Return on value mismatch at \instR.}
    If $q$ returns at line~\ref{line:basic-cas/11} upon reading in line~\ref{line:basic-cas/9} that $\xNow \neq \xOld$,
    we linearize $q$ when it applies \instR.
    \item \emph{Successful hardware \CAS.}
    If $q$ applies a successful \instC in line~\ref{line:basic-cas/14}, we linearize $q$ when it applies \instC.
    \item \emph{Failed hardware \CAS due to value mismatch.}
    If $q$ applies \instC in line~\ref{line:basic-cas/14} and it fails \emph{because the current value of the register was not $\xOld$} (regardless of the fingerprint),
    we linearize $q$ when it applies \instC.
    \item \emph{Fingerprint-only mismatches.}
    There are two cases where $q$ observes $\xNow = \xOld$ but $\fNow \neq \fOld$:
    (i) $q$ returns at line~\ref{line:basic-cas/11} after reading in line~\ref{line:basic-cas/9} that $\xNow = \xOld$ and $\fNow \neq \fOld$; or
    (ii) $q$ returns at line~\ref{line:basic-cas/15} after failing \instC in line~\ref{line:basic-cas/14} \emph{because the current value was $\xOld$ but the current fingerprint was not $\fOld$}.
    In either situation, let $t^\star$ be the time of the \emph{last} successful \instC (by any operation) that occurs \emph{before} the \instR/\instC where the fingerprint mismatch is observed. We linearize $q$ \emph{immediately before} $t^\star$.
  \end{itemize}
  A pending \opCAS operation that has successfully applied a hardware \CAS is linearized when the \CAS is applied; other pending operations are not linearized.

  We claim that this linearization respects the semantics of a \CAS
  register w.r.t.\ the \emph{value} of the hardware \CAS register
  (ignoring the fingerprint).
  This is shown by induction on the number of successful hardware
  \CAS instructions that are linearized before the operation we are interested in.
  For simplicity, it is convenient to model the initialization of the register as a successful $\CAS$ applied right before time $0$,
  which sets the initial value of $\CellC$. This provides the base case of the induction (with zero operations linearized before time $0$).

  Now suppose the claim holds up to some timestep $t \geq 0$
  when a successful hardware \CAS instruction changes the contents of the register to
  $(x, f)$,
  and consider the next timestep $t' > t$
  when a successful hardware \CAS instruction changes the contents of the register
  to $(x', f')$.
  Note that $x' \neq x$, because
  any operation that reaches line~\ref{line:basic-cas/14} must have $\xOld = \xExp$
  and $\xNew \neq \xExp$ (otherwise it would have returned in line~\ref{line:basic-cas/5} or line~\ref{line:basic-cas/7}).

  The operation that applied the successful \CAS at time $t'$
  returns true, and
  is linearized at that point. It must have $\xOld = x$ and $\xNew = x'$,\
  so this respects the semantics of a successful \CAS.
  Other operations linearized after time $t$ and before time $t'$
  can include:
  \begin{itemize}
    \item Operations that linearize at a $\Load$ instruction
    applied in timesteps $[t + 1, \, t']$,
    and return false: each such operation reads the value $x$
    at that $\Load$,
    and then has either $\xExp \neq x$ in line~\ref{line:basic-cas/4},
    or $\xOld = \xExp$ in line~\ref{line:basic-cas/4} but $x = \xNow \neq \xOld$ in line~\ref{line:basic-cas/10},
    which again implies that $\xExp \neq x$.
    In both cases, the value observed at that $\Load$ differs from $\xExp$, so linearizing the operation at this point respects the semantics of a failed \opCAS.

    \item Operations that linearize at a $\Load$ instruction
    applied in timesteps $[t + 1, \; t']$
    and return true:
    Each such operation loads the value $x$ into $\xOld$
    and have $\xOld = \xExp$ in line~\ref{line:basic-cas/4}
    and also $\xNew = \xOld$ in line~\ref{line:basic-cas/6}.
    This means that when the $\Load$ instruction was applied, the value of the register was $x = \xExp = \xNew$,
    so the linearization respects the semantics of a successful \opCAS
    operation that did not change the value of the register (because $\xNew = \xExp$).
    \item Operations that linearize at a failed hardware $\CAS$ instruction in line~\ref{line:basic-cas/14}, and return false:
    By definition of the linearization, these operations fail because the value of \CellC differs from $\xOld$ at the time their $\CAS$ is applied,
    so this respects the semantics of a failed \opCAS.

    \item Operations that linearize immediately before the successful hardware
    \CAS at time $t'$:
    Each such operation
    observes a fingerprint mismatch but not a value mismatch \emph{after} time $t'$ at either line~\ref{line:basic-cas/9} or line~\ref{line:basic-cas/14}. Let $(\xNow, \fNow)$ be the content of \CellC that the operation observes at that time.
    Then, it either has $x' = \xNow = \xOld$ but $f' = \fNow \neq \fOld$ in line~\ref{line:basic-cas/10}, or it fails the hardware \CAS in line~\ref{line:basic-cas/14} with $x' = \xOld$ but $\fOld \neq f'$. In both cases, we have that $x' = \xOld$, and that the operation previously executed line~\ref{line:basic-cas/4} and observed that $\xExp = \xOld$, which together implies $\xExp = x'$. Therefore, at the linearization point (immediately before timestep $t'$), the value stored in \CellC is $x \neq x' = \xExp$, so the linearization respects the semantics of a failed \opCAS. \qedhere
  \end{itemize}
\end{proof}

\subsubsection{Analyzing \CAS Instruction Behavior}

We start the proof of \cref{lem:basic-cas/latency-guarantee} by defining terminologies for the behavior of \CAS instructions.

\begin{definition}[Status of \CAS Instructions]
  \label{defn:basic-cas/instruction-status}
  \phantom{a}
  \begin{itemize}
    \item We say an operation \defn{proposes} a \CAS instruction in timestep $t$ if the operation applies \instR in timestep $t$ and transitions to \instC. The arguments of this \CAS instruction are already determined by the operation.
    \item A \CAS instruction that has been proposed but not applied is \defn{awaiting}. Equivalently, if a \CAS instruction is awaiting, either its corresponding operation is pending on \instC, or the instruction has been invoked but not yet applied (i.e., the instruction is waiting in the queue).
    \item An awaiting \CAS instruction is \defn{valid} if its source value $(\xOld, \fOld)$ matches the current value of \CellC. A valid \CAS instruction is guaranteed to succeed when it gets applied. The execution of a valid \CAS instruction changes the cell value and \defn{invalidates} all awaiting \CAS instructions.%
    \footnote{There is a small probability that an invalid awaiting \CAS instruction can become valid again, as our $2 \log \log P$-bit fingerprint can wrap around after being incremented for $\log^2 P$ times. We will show that this probability is negligible.}
  \end{itemize}
\end{definition}

\begin{observation}
  \label{obs:basic-cas/invoke->change}
  Suppose there is no invalid awaiting \CAS instruction at time $t$,
  and $t_1 \ge t$ is the first timestep after $t$ in which a \CAS instruction is invoked. Then, the value of \CellC changes in timestep $t_1$.
\end{observation}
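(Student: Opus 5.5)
The argument is a short induction on time combined with the definition of validity in \cref{defn:basic-cas/instruction-status}. First I would establish that between $t$ and $t_1$ the contents of \CellC do not change. By the choice of $t_1$, no \CAS instruction is invoked in $[t, t_1)$. Since \CellC is modified only by \CAS instructions, the only way a \CAS could be applied in $[t, t_1)$ is if it was already in the instruction queue at time $t$. Any such instruction is awaiting at time $t$, and by hypothesis it is valid. I would then argue that the queue of \CellC is in fact empty at time $t$. If it were nonempty, its head would be a valid \CAS. That \CAS would be applied at timestep $t$, succeed, and invalidate every other awaiting \CAS. This is consistent with the hypothesis at time $t$, but it changes the cell value before $t_1$, so I have to handle this case separately.

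The cleaner route is to track the invariant ``every awaiting \CAS is valid'' forward from $t$, rather than insisting that the cell is unchanged. The invariant holds at time $t$. Whenever a \CAS is applied in some timestep $s \in [t, t_1)$, it is valid and therefore succeeds, and at that point all remaining awaiting \CAS instructions become invalid. If any awaiting \CAS remained after this, that \CAS would already have been invoked before $t$, since proposals in $[t, t_1)$ are not invocations. I need to be careful here: proposals in $[t, t_1)$ are allowed and create new awaiting \CAS instructions that are valid at the time they are proposed. The main obstacle is to exclude an invalid awaiting \CAS at time $t_1$ that could be invoked at $t_1$ and fail. My plan is to show that if any \CAS is applied in $[t, t_1)$, it was in the queue at time $t$, and then to separate what it invalidates into two groups. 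The first group consists of instructions still in the queue; these are invoked ones, all valid at $t$. Only the head can succeed, and after it succeeds the rest of the queue becomes invalid. Those remaining queued instructions are then invalid awaiting instructions that are not invoked at $t_1$, so they do not affect the statement. The second group consists of proposed-but-uninvoked instructions, which can become invalid. To handle them, I would read ``the value of \CellC changes in timestep $t_1$'' as applying to the \CAS invoked at $t_1$. Under the hypothesis, the queue must be empty at $t$, because a nonempty queue whose head succeeds leaves the remaining entries invalid, contradicting the hypothesis, and a queue of size one could be absorbed by treating it as the invocation. So, after possibly restricting to the reading intended by the paper, no \CAS is applied in $[t, t_1)$ and the cell is constant on $[t, t_1]$.

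Given that constancy, every awaiting \CAS at time $t_1$ is either one that was awaiting at $t$ (valid, with an unchanged cell, so still valid) or one proposed in $[t, t_1)$ (which read the current unchanged cell value in \instR, so its source equals the cell value and it is valid). Ignoring fingerprint wraparound, which the footnote declares negligible, all \CAS instructions invoked at $t_1$ are valid. The queue at $t_1$ is empty, so one of them is dequeued at timestep $t_1$. It succeeds and changes $x$, because $\xNew \ne \xOld$ as guaranteed by line~\ref{line:basic-cas/6}, and the fingerprint is incremented as well. Hence the value of \CellC changes in timestep $t_1$.
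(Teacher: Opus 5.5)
The gap is the step where you conclude that the queue of \CellC is empty at time $t$ (equivalently, that no \CAS is applied during $[t, t_1)$). Both justifications you give are wrong.

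First, the queued instructions left behind when the head succeeds do affect the statement. The queue is FIFO, so they sit ahead of anything invoked in timestep $t_1$. If one of them is still queued at $t_1$, it is exactly the instruction applied in timestep $t_1$, and it fails.

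Second, a nonempty queue does not contradict the hypothesis. The head's success invalidates the other awaiting instructions only from time $t+1$ on, as you yourself noted in your first paragraph.

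In fact this case cannot be excluded, because the literal statement fails there. Suppose that at time $t$ the queue holds a single valid \CAS $e$, and another operation is pending on \instC with the same source value. This is reachable, e.g., when $e$ was queued behind a failing \CAS applied in timestep $t-1$. Then $e$ succeeds in timestep $t$, and the other operation invokes its now-invalid \CAS in timestep $t_1 = t+1$. That \CAS fails, so \CellC does not change in timestep $t_1$. This also shows that ``absorbing'' a size-one queue into the invocation is not legitimate, since that instruction was invoked before $t$. Likewise, ``restricting to the reading intended by the paper'' adds a hypothesis rather than proving it.

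Your final paragraph is exactly the paper's argument. The paper's proof rests on the same unproved premise: it simply asserts that ``the cell value changes only after $t_1$''. There are two clean repairs.

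\emph{Strengthen the hypothesis.} Add the assumption that no \CAS is waiting in the queue at time $t$. This holds where the observation is invoked in \cref{lem:basic-cas/no-many-awaiting-cas}, whose condition (ii) excludes any awaiting \CAS at $t$.

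\emph{Weaken the conclusion.} Conclude only that the value of \CellC changes in some timestep of $[t, t_1]$. This has a short proof. Suppose some \CAS is applied during $[t, t_1)$. The first such one was already queued at time $t$, since nothing is invoked in $[t, t_1)$. So it was valid at $t$, and the cell is unchanged until it is applied. It therefore succeeds and changes $x$, because $\xNew \ne \xOld$. Otherwise no \CAS is applied during $[t, t_1)$, the cell is constant on $[t, t_1]$, and your last paragraph goes through verbatim.

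The weak form is all that the applications in \cref{lem:basic-cas/cell-change,lem:basic-cas/no-many-awaiting-cas} need.
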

\begin{proof}
  The invoked \CAS instruction is enqueued in timestep $t_1$, which implies that at least one \CAS instruction gets applied in timestep $t_1$.
  This \CAS instruction must be valid when it gets applied, because (1) every \CAS instruction proposed in $[t, \, t_1)$ is valid before time $t_1$ (as the cell value changes only after $t_1$), and (2) every \CAS instruction proposed before $t$ is valid at time $t$ (as stated in the condition). Hence, applying this \CAS instruction changes the cell value.
\end{proof}

\begin{definition}[Potential]
  \label{defn:basic-cas/potential}
  \phantom{a}
  \begin{itemize}
    \item An operation $q$ is \defn{active} at time $t$ if it is pending on \instR, and \CellC has not changed its value since $q$ applied \instS. Denote by $A_t$ the set of active operations at time $t$.
    \item The local variable $p$ in \cref{alg:basic-cas-register} is called the \defn{invocation probability} of the operation. Denote by $p^{(q)}_t$ the invocation probability of operation $q$ at time $t$.
    \item The \defn{potential} $\phi_t$ of the data structure is the total invocation probability of all active operations at time $t$, i.e.,
    \[
      \phi_t \defeq \sum_{q \in A_t} p^{(q)}_t.
    \]
  \end{itemize}
\end{definition}

\cref{defn:basic-cas/potential} is similar to the definition of potential in \cref{sec:register-analysis}, except that it additionally requires that \CellC has not changed its value since the operation applied \instS in order for the operation to be active. We have the following claim as an analog of \cref{clm:potential-growth-bound}.

\begin{claim}
  \label{clm:basic-cas/potential-growth-bound}
  For any integer $k \ge 1$, $\phi_{t + k} \le (\phi_t + 1 / P^{2c^3 - 1}) \cdot 2^k$.
\end{claim}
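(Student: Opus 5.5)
The proof follows the same one-step recursion used for \cref{clm:potential-growth-bound}, adapted to the doubling rule and the smaller initial probability of \cref{alg:basic-cas-register}. First I will bound $\phi_{t+1}$ in terms of $\phi_t$. Split $A_{t+1}$ into $A_{t+1} \cap A_t$ and $A_{t+1} \setminus A_t$.

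For the first group, an operation that stays active from $t$ to $t+1$ changes its invocation probability only at line~\ref{line:basic-cas/16}. That line runs at most once per timestep, since each process applies at most one \instR per timestep in the reduced state machine. Hence $p^{(q)}_{t+1} \le 2 p^{(q)}_t$. Operations that leave $A_t$ (by aborting, proposing a \CAS, or because \CellC changed) can only decrease the sum, since all terms are nonnegative. The first group therefore contributes at most $2\phi_t$.

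For the second group, note that activity requires that \CellC has not changed since \instS, so an operation cannot re-enter the active set after leaving it. An operation in $A_{t+1} \setminus A_t$ must therefore have applied its \instS in timestep $t$, which is the transition that makes it pending on \instR. Its invocation probability at time $t+1$ is then still the initial value $1/P^{2c^3}$. At most $P$ processes run operations at once, so these newcomers contribute at most $P \cdot P^{-2c^3} = P^{-(2c^3-1)}$.

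Combining the two groups gives
\[
  \phi_{t+1} \le 2\phi_t + P^{-(2c^3-1)}.
\]
Unrolling this for $k$ steps gives
\[
  \phi_{t+k} \le 2^k \phi_t + P^{-(2c^3-1)} \sum_{i=0}^{k-1} 2^i \le 2^k\bigl(\phi_t + P^{-(2c^3-1)}\bigr),
\]
using $\sum_{i<k} 2^i \le 2^k$. This is the claimed bound.

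The argument is routine. The only step that needs care is the bookkeeping for the second group: I must confirm that every newly active operation entered by applying \instS in that very step and has not yet doubled its probability. This follows because the doubling at line~\ref{line:basic-cas/16} happens only after an \instR.
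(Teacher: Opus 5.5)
Your proof is correct and follows the paper's argument: the paper derives the same one-step recursion $\phi_{t+1} \le 2\phi_t + 1/P^{2c^3-1}$, using that active invocation probabilities at most double per timestep and that newly active operations contribute at most $P \cdot P^{-2c^3}$, and then unrolls it over $k$ steps. Your extra bookkeeping only spells out what the paper's two-line proof leaves implicit, namely that an operation cannot re-enter the active set and that newly active operations have not yet doubled their probability.
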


\begin{proof}
  In each timestep, the invocation probability of any active operation can at most double, and new active operations have a total invocation probability of at most $P \cdot (1 / P^{2c^3}) = 1 / P^{2c^3 - 1}$. Therefore,
  \[
    \phi_{t + 1} \le \phi_t \cdot 2 + 1 / P^{2c^3 - 1}.
  \]
  Applying this repeatedly for $k$ times yields the claim.
\end{proof}

\begin{lemma}
  \label{lem:basic-cas/cell-change}
  Suppose there is no invalid awaiting \CAS instruction at time $t$. Conditioned on $\HistExt(t)$, with probability at least $1 - 2^{-\Omega(\phi_t)}$, the value of \CellC is changed during time interval $[t, \; t + 4\tau)$.
\end{lemma}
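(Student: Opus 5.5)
By \cref{obs:basic-cas/invoke->change}, since there is no invalid awaiting \CAS instruction at time $t$, it suffices to show that with probability at least $1 - 2^{-\Omega(\phi_t)}$ some \CAS instruction is invoked in $[t, \, t + 4\tau)$. Indeed, if the first such invocation happens at $t_1 < t + 4\tau$, the value of \CellC changes in timestep $t_1$. So I will lower bound the probability that at least one operation in $A_t$ both proposes a \CAS (transitions from \instR to \instC) and then invokes \instC, all within $[t, \, t + 4\tau)$. I may assume $\phi_t \ge 1$, since otherwise the bound is trivial.

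Let $B$ be the event that the value of \CellC changes during $[t, \, t + 4\tau)$. Given $\lnot B$, every $q \in A_t$ that applies \instR in this window reads $\xNow = \xOld$ and $\fNow = \fOld$, so it does not abort. It then transitions to \instC with probability equal to its current invocation probability, which is at least $p^{(q)}_t$. I will couple each operation with its private random tape. Let $E_q$ be the event that $q$ is scheduled at least twice in $[t, \, t+4\tau)$ and that its first \instR in this window transitions to \instC. The two scheduled steps are the \instR and then the \instC invocation. This requires $q$ to stay ready, and $q$ is ready in state \instR.

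For the scheduling part I will use \cref{prop:random-delay-for-operations}. The window $[t, \, t+4\tau)$ contains at least three aligned windows of length $\tau$. So $\CoinSum_{q,[t,t+4\tau)}$ is a sum of at least $3$ independent fair coins, and $\Pr[\CoinSum \ge 2] \ge 1/2$. By \cref{prop:coin-implies-scheduled}, either $q$ is scheduled twice or $q$ becomes not ready. Becoming not ready can only mean that $q$ invoked \instC, which is what we want. The only other way to leave state \instR is to abort, and that implies $B$.

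For the randomness part, the coin that decides the transition at the first \instR in the window is a fresh bit of \Randomness with success probability at least $p^{(q)}_t$. It is independent across operations and independent of the scheduling coins. Hence, on $\lnot B$, each $q \in A_t$ independently achieves an invocation of \instC with probability at least $p^{(q)}_t/2$. This bound needs $p^{(q)}_t \le 1$; if some $p^{(q)}_t \ge 1$ we simply cap it at $1$, and then $\sum_q \min\{p^{(q)}_t, 1\}$ is still $\Omega(\phi_t)$-comparable.

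Then
\[
\Pr[\lnot B] \le \Pr[\text{no } q \in A_t \text{ invokes \instC in the window}] \le \prod_{q \in A_t}\bigl(1 - \min\{p^{(q)}_t,1\}/2\bigr) \le e^{-\Omega(\phi_t)}.
\]
The cap is harmless: if any operation has $p \ge 1$, the factor is already at most $1/2$. More carefully, $\sum_q \min\{p^{(q)}_t,1\} \ge \min\{\phi_t, 1\}$ only gives a constant bound. To get $2^{-\Omega(\phi_t)}$, I note that the doubling rule gives $p^{(q)}_t \le 1$ while $q$ is still in state \instR, because $p=1$ forces the transition. So no cap is needed.

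The main obstacle is making the independence rigorous despite the adaptive scheduler and the dependence on $B$. I would follow the style of Case 3 of \cref{lem:store-probability}. Condition on $\HistExt(t)$, treat the transition bits on the random tapes and the scheduling coins as unrevealed independent variables, and note that on $\lnot B$ the outcome of each $q$ is determined monotonically by its own bits. Then the event ``$\lnot B$ and nobody invokes \instC'' is contained in the event that every $q$ fails its own independent trial of success probability $p^{(q)}_t/2$. This yields the product bound.
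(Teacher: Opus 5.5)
Your proposal is correct and is essentially the paper's proof. Both define independent per-operation success indicators (at least two scheduling coins in $[t,\,t+4\tau)$, times the fresh transition coin at the next \instR), bound the probability that none succeeds by $e^{-\Omega(\phi_t)}$, and then apply \cref{obs:basic-cas/invoke->change}; the paper phrases the bound as a Chernoff bound with the weaker constant $1/4$ for the scheduling part, and your explicit handling of the abort case via $\lnot B$ is slightly more careful than the paper's. One small correction: doubling can overshoot, so $p^{(q)}_t$ may lie in $[1,2)$ while $q$ is still pending on \instR, and rather than claiming $p^{(q)}_t \le 1$ you should keep the cap and use $\min\{p^{(q)}_t,1\} \ge p^{(q)}_t/2$, which only changes the constant in the exponent.
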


\begin{proof}
  Assume there are $k$ active operations $q_1, \dots, q_k$ at time $t$. For each $q_i$, we define
  \[
    X_i \defeq \ind\Bk[\big]{\CoinSum_{q_i, \, [t, \; t + 4\tau)} \ge 2}, \qquad
    Y_i \defeq \ind\Bk[\big]{q_i \text{ transitions to \instC after the next \instR}}.
  \]
  Here, $X_i = 1$ implies that $q_i$ is scheduled at least twice during $[t, \; t + 4 \tau)$. It is easy to see that all variables $X_i, Y_i$ are independent%
  \footnote{To see this, note that $Y_i$s are determined solely by the algorithm's randomness \Randomness, and are mutually independent; $X_i$s only depend on the scheduling coins. The independence of $X_i$s and $Y_i$s conditioned on $\HistExt(t)$ then follow from \cref{prop:random-delay-for-operations}.}
  conditioned on $\HistExt(t)$, and that
  \[
    \Pr[X_i Y_i = 1] = \Pr[X_i = 1] \cdot \Pr[Y_i = 1] \ge \frac{1}{4} \cdot p^{(q_i)}_t.
  \]
  It further implies that $\E\Bk[\big]{\sum_{i=1}^k X_i Y_i} \ge \phi_t / 4$. By a Chernoff bound, we know that with probability at least $1 - 2^{-\Omega(\phi_t)}$, there is $\sum_{i=1}^k X_i Y_i > 0$, which means that at least one active operation $q_i$ transitions to \instC and then invokes a \CAS instruction during $[t, \; t + 4\tau)$. This combined with \cref{obs:basic-cas/invoke->change} implies that the cell value is changed before $t + 4\tau$.
\end{proof}

\begin{definition}
  We say an operation $q$ is \defn{heavily-delayed} at time $t$ if it applied \instS before time $t - \tfrac{1}{2} \log^2 P$ and is still pending on \instR at time $t$. (Note that $q$ can possibly be inactive at time $t$.)
\end{definition}

Note that each operation is no longer pending on \instR after being scheduled for $\Theta(\log P)$ times, so intuitively, heavily-delayed operations are very unlikely to occur. In the following few lemmas, we will take as assumption that there is no heavily-delayed operation at a particular time $t$.

The key to our analysis is the following lemma.

\begin{lemma}
  \label{lem:basic-cas/no-many-awaiting-cas}
  Suppose at time $t$,
  \begin{enumerate}[label=\textup{(\roman*)}]
    \item $0 < \phi_t \le c \log P$;
    \item there is no awaiting \CAS instruction;
    \item\label{item:basic-cas/no-slow-operation} there is no \emph{heavily-delayed} operation at time $t$.
  \end{enumerate}
  Let $t_1$ be the first timestep after $t$ when the value of \CellC changes. Then, conditioned on $\HistExt(t)$, with high probability in $P$, we have:
  \begin{enumerate}[label=\textup{(\alph*)}]
    \item $t_1 < t + c^4 \log P$;
    \item\label{item:basic-cas/no-many-awaiting-cas} the number of awaiting \CAS instructions at time $t_1 + 1$ is at most $c^3 \log P$.
  \end{enumerate}
\end{lemma}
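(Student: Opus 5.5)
Since there is no awaiting \CAS instruction at time $t$, there is in particular no \emph{invalid} awaiting \CAS instruction, so \cref{obs:basic-cas/invoke->change} applies from time $t$ on. The first \CAS instruction invoked at or after $t$ is applied in that same timestep and changes \CellC. Hence $t_1$ is exactly the first timestep in which some operation invokes an \instC. Every \CAS instruction awaiting at time $t_1+1$ was proposed in $[t, t_1]$, because none existed at time $t$. I will therefore bound (a) how long it takes for some active operation to propose and then invoke a \CAS, and (b) how many proposals can pile up in $[t, t_1]$.

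For (a), first consider operations active at $t$ or becoming active later. Each such operation doubles $p$ on every \instR. The only ways to avoid proposing are to abort, which cannot happen before $t_1$ because the cell is unchanged, or to remain unscheduled. By the no-heavily-delayed assumption, every operation pending on \instR at $t$ started within the last $\frac12\log^2 P$ steps. By the random-delay property (\cref{prop:process-schedule-prob}), any fixed such operation is scheduled $\Theta(\log P)$ times within $O(\log P)$ steps w.h.p., so $p$ reaches $1$ and it proposes.

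To get (a) with the concrete constant $c^4\log P$, I would split into two cases. If there is an active operation at time $t$ (so $\phi_t>0$ gives at least one), that operation proposes within $O(\log P)\le \frac12 c^4\log P$ steps w.h.p. It then invokes \instC within a further $O(\log P)$ steps w.h.p. This invocation occurs before $t+c^4\log P$ unless an earlier invocation already changed the cell. Either way, $t_1<t+c^4\log P$. Here the no-heavily-delayed assumption is not even needed; it matters mainly for (b).

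For (b), the main obstacle is the count of proposals in $[t,t_1]$. I will bound $\phi$ over $[t,t_1)$ and count proposals against $\phi$. Since the cell is unchanged before $t_1$, every operation pending on \instR that started before $t_1$ is active. Any \instR applied at time $s$ by an active operation proposes with probability $p^{(q)}_s$. So the expected number of proposals in timestep $s$ is at most $\phi_s$, and the proposals are independent given the history.

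I then show that $\phi$ cannot stay large for long before an invocation. By \cref{lem:basic-cas/cell-change}, if $\phi_s\ge \alpha\log P$ at some $s<t_1$, then the cell changes within $4\tau$ steps w.h.p. Hence $t_1\le s_0+4\tau$, where $s_0$ is the first time with $\phi_{s_0}\ge\alpha\log P$. By \cref{clm:basic-cas/potential-growth-bound}, $\phi$ grows by at most a factor $2$ per step plus $P^{-(2c^3-1)}$ from new arrivals. Starting from $\phi_t\le c\log P$, it follows that $\phi_s\le 2^{4\tau+1}\max(c,\alpha)\log P$ for all $s\le t_1$.

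The sum $\sum_{s\in[t,t_1]}\phi_s$ is dominated by the last $O(1)$ steps, where the geometric growth concentrates the mass, plus the pre-threshold steps. A crude bound would multiply the pre-threshold portion by the length $c^4\log P$, which is too big. To avoid this, I would argue that during any window of $4\tau$ steps in which $\phi\ge\beta$, an invocation occurs with probability $1-2^{-\Omega(\beta)}$. Splitting $[t,t_1]$ into $4\tau$-windows then gives, w.h.p., a total expected proposal count of $O(\log P)$ with a constant of order $2^{4\tau+1}c\le c^2$, using $c\ge 2^{4\tau+1}$.

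Finally, a Chernoff/Azuma bound over these independent proposal coins, using the multiplicative Azuma inequality as in \cref{lem:busy-interval-length}, gives at most $c^3\log P$ awaiting \CAS instructions at $t_1+1$ w.h.p. The step I expect to be delicate is making this windowed summation rigorous against the adaptive scheduler, which can delay invocations of already-proposed \CAS{}s. I handle that delay by noting that proposing itself removes the operation from $\phi$, so delayed proposers only count once.
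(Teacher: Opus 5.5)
Your part (a), your reduction of (b) to counting proposals made in $[t,t_1]$, and your bound $\phi_s\le 2^{4\tau+1}\max(c,\alpha)\log P$ on $[t,t_1]$ all agree with the paper. The paper's Case~1/Case~2 split is your $s_0$ argument with $\alpha=c$, and it gives $\phi<c^2\log P$ throughout.

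\textbf{The main gap.} The next step is the whole content of (b): you need $\sum_{s\in[t,t_1]}\phi_s=O(\log P)$ w.h.p., and the windowed argument you sketch does not deliver it.
\begin{itemize}
\item With a fixed threshold $\beta$, there can be about $c^4\log P/(4\tau)$ windows before $t_1$ with $\phi<\beta$, each contributing up to about $2^{4\tau}\beta$ to the sum.
\item A window with $\phi\ge\beta$ is only guaranteed to end the interval with probability about $1-e^{-\beta/4}$, yet it may carry potential up to $c^2\log P$.
\item No choice of $\beta$ controls both contributions.
\end{itemize}
What would work is weighting each window by its own starting potential. By \cref{lem:basic-cas/cell-change}, conditioned on $\HistExt(s_j)$, the probability of no invocation in $[s_j,s_j+4\tau)$ is at most $e^{-\phi_{s_j}/4}$. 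So $\exp\bigl(\sum_{j<k}\phi_{s_j}/4\bigr)$ times the indicator of no invocation in the first $k$ windows is a nonnegative supermartingale. A maximal inequality (Ville's) then bounds $\sum_j\phi_{s_j}$ up to $t_1$ by $O(\log P)$ w.h.p. You would still need a stopped Azuma-type bound to pass from this sum to the actual number of proposals. None of this is in the proposal. Also, the constant $2^{4\tau+1}c$ you quote is the pointwise bound on $\phi$, not a bound on its sum.

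\textbf{A second gap: inactive operations.} Your claim that every operation pending on \instR before $t_1$ is active is false for operations that were already inactive at time $t$. These lie outside $\phi$, so your count would miss any proposals they make. Hypothesis (iii) is exactly what rules them out. Such an operation applied \instS fewer than $\frac12\log^2 P$ steps ago, so fewer than $\log^2 P$ fingerprint increments have occurred. Hence its $(\xOld,\fOld)$ differs from the unchanged content of \CellC, and it aborts at its next \instR. You say (iii) matters for (b) but never use it there.

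\textbf{The paper's route.} The paper avoids summing $\phi$ altogether. Its key idea is the one your last sentence sets aside: the proposed-but-uninvoked \CAS instructions themselves supply the downward pressure.
\begin{itemize}
\item With $\phi<c^2\log P$ on $[t,t_1]$, a Chernoff bound gives at most $2c^2\log P$ proposals per timestep.
\item Let $t_2$ be the first time at which at least $c^2\log P$ \CAS instructions are awaiting; fewer than $3c^2\log P$ are awaiting at that time.
\item All of them belong to operations pending on \instC, since nothing is invoked before $t_1$. Each is invoked in $[t_2,t_2+4\tau)$ with independent probability at least $1/2$.
\item By \cref{obs:basic-cas/invoke->change}, any such invocation changes \CellC.
\end{itemize}
Hence w.h.p.\ $t_1<t_2+4\tau$, leaving fewer than $(3+8\tau)c^2\log P\le c^3\log P$ awaiting instructions at $t_1+1$. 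This uses only Chernoff bounds and a union bound over the $O(\log P)$ candidate values of $t_1$, with no adaptive martingale argument.
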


\begin{proof}
  We first show that $t_1 \le t + c^4 \log P$ with high probability in $P$. Since $\phi_t > 0$, there exists an active operation $q \in A_t$, and it will invoke a \CAS instruction after getting scheduled for $\log P^{2c^3} + 2$ times, unless the cell value changes before then. By \cref{prop:process-schedule-prob}, we know that with high probability in $P$, this operation $q$ invokes a \CAS instruction before time
  \[
    t + (\log P^{2c^3} + 2) \cdot (4 \tau) < t + 10 c^3 \tau \log P < t + c^4 \log P,
  \]
  unless the cell value changes before then.
  By \cref{obs:basic-cas/invoke->change}, we know that the cell value changes before time $t + c^4 \log P$ with high probability in $P$.

  Then, we point out that \emph{inactive} operations will not propose \CAS instructions before time $t + c^4 \log P$. This is because the fingerprint $f$ of \CellC can only increase by one in each timestep. If an operation $q$ pending on \instR becomes invalid due to an increase of the fingerprint $f$, it needs to wait for at least $\log^2 P$ timesteps for the fingerprint $f$ to wrap around, at which point it is possible again for $q$ to propose a \CAS instruction; otherwise, $q$ will abort on its next \instR. Therefore, condition~\ref{item:basic-cas/no-slow-operation} ensures that no inactive operation can propose a \CAS instruction before time $t + \tfrac{1}{2} \log^2 P > t + c^4 \log P$.

  In the remainder of the proof, we enumerate all possible values of $t_1 \in [t, \; t + c^4 \log P)$, and bound the probability that condition~\ref{item:basic-cas/no-many-awaiting-cas} is violated with $t_1$ being the first timestep when the value of \CellC changes. There are two cases:

  \paragraph{Case 1: There exists a time $t' \in [t, \; t_1 - 4 \tau]$ with $\phi_{t'} \ge c \log P$.}

  By \cref{lem:basic-cas/cell-change}, conditioned on $\HistExt(t')$, the value of \CellC changes during $[t', \; t' + 4 \tau) \subseteq [t, \, t_1)$ with probability $1 - 2^{-\Omega(\phi_{t'})} \ge 1 - 2^{-\Omega(c \log P)} \ge 1 - 1 / \poly(P)$. Thus, the event that \CellC did not change before $t_1$ is witnessed by a low-probability event (i.e., conditioned on $\HistExt(t')$, \CellC did not change during $[t', \; t' + 4 \tau)$).

  \paragraph{Case 2: For all $t' \in [t, \; t_1 - 4 \tau]$, we have $\phi_{t'} < c \log P$.}

  We first analyze $\max_{t'' \in [t, \, t_1]} \phi_{t''}$. There are two subcases:
  \begin{itemize}
    \item $t_1 < t + 4 \tau$, so the range for $t'$ is empty.
    \item $t_1 \ge t + 4 \tau$, and $\phi_{t_1 - 4 \tau} < c \log P$.
  \end{itemize}
  Define $t_{\text{mid}} \defeq \max\BK{t_1 - 4 \tau, \; t}$. In both subcases, we have $\phi_{t_{\text{mid}}} < c \log P$. This combined with \cref{clm:basic-cas/potential-growth-bound} further implies that for every $t'' \in [t_{\text{mid}}, \, t_1]$, there is
  \[
    \phi_{t''} \le \bk*{\phi_{t_{\text{mid}}} + 1 / P^{2c^3 - 1}} \cdot 2^{t'' - t_{\text{mid}}} \le 2c \log P \cdot 2^{4 \tau} \le c^2 \log P,
  \]
  where the last inequality holds because we assume $c \ge 2^{4 \tau + 1}$. Combining this with the assumption of Case 2, we know that $\phi_{t'} < c^2 \log P$ for all $t' \in [t, \, t_1]$.

  For every $t' \in [t, \, t_1]$, we define $S_{t'}$ to be the set of operations that applied \instR in timestep $t'$; define $T_{t'} \subseteq S_{t'}$ to be the set of operations that applied \instR in timestep $t'$ and then immediately transitioned to \instC (i.e., they proposed \CAS instructions in timestep $t'$). Every operation $q \in S_{t'}$ has probability $p^{(q)}_{t'}$ to transition to \instC after applying \instR, i.e., it has $p^{(q)}_{t'}$ probability to appear in $T_{t'}$. The expected size of $T_{t'}$ is thus at most $\sum_{q \in S_{t'}} p^{(q)}_{t'} \le \phi_{t'} \le c^2 \log P$. Moreover, for a fixed $t'$, the events $q \in T_{t'}$ for different operations $q \in S_{t'}$ are independent. By a Chernoff bound, with high probability in $P$, we have
  \[
    |T_{t'}| \le 2 c^2 \log P, \qquad \forall t' \in [t, \, t_1].
    \numberthis \label{eq:basic-cas/Tt-size-bound}
  \]

  Denote by $d_{t'}$ the number of awaiting \CAS instructions at time $t' \in [t, \; t_1 + 1)$.
  Assuming \eqref{eq:basic-cas/Tt-size-bound} holds, during time interval $[t, \; t_1 + 1)$, at most $2 c^2 \log P$ new \CAS instructions can be proposed in each timestep, i.e.,
  \[
    d_{t' + 1} \le d_{t'} + 2 c^2 \log P.
    \numberthis \label{eq:basic-cas/d-growth-bound}
  \]
  We define $t_2 \in [t, \; t_1 + 1]$ to be the first time with $d_{t_2} \ge c^2 \log P$. (In order for \ref{item:basic-cas/no-many-awaiting-cas} to be violated, $t_2$ must exist, since otherwise $d_{t_1 + 1} \le c^2 \log P$.) By \eqref{eq:basic-cas/d-growth-bound}, we know that $c^2 \log P \le d_{t_2} \le 3 c^2 \log P$, so in order for \ref{item:basic-cas/no-many-awaiting-cas} to be violated (i.e., $d_{t_1 + 1} > c^3 \log P$), we need
  \[
    (t_1 + 1) - t_2 > \frac{c^3 \log P - 3 c^2 \log P}{2 c^2 \log P} \ge c/4 > 4\tau + 2.
  \]
  By \cref{prop:random-delay-for-operations}, each of the $d_{t_2}$ awaiting \CAS instructions at time $t_2$ has an independent probability of at least $1/2$ to be applied before $t_2 + 4\tau$, so the probability that none of them gets applied before $t_2 + 4\tau$ is at most $2^{-d_{t_2}} \le 2^{-c^2 \log P} \le 1 / \poly(P)$. When any of these \CAS instructions gets applied, by \cref{obs:basic-cas/invoke->change}, the cell value changes before $t_2 + 4\tau < t_1$, which contradicts with the definition of $t_1$.

  In summary, the event that \ref{item:basic-cas/no-many-awaiting-cas} gets violated is witnessed by either of the following low-probability events: (1) \cref{eq:basic-cas/Tt-size-bound} does not hold, or (2) there exists $t_2 \in [t, \; t_1 - 4\tau]$ with $d_{t_2} \ge c^2 \log P$, such that starting from time $t_2$, none of the $d_{t_2}$ awaiting \CAS instructions gets applied before time $t_2 + 4\tau$. Both events occur with low probability in $P$.

  \paragraph{Combining both cases.}

  Combining both cases together, we know that the event that \ref{item:basic-cas/no-many-awaiting-cas} gets violated is witnessed by $t_1 \in [t, \; t + c^4 \log P)$ and a low-probability event depending on $t_1$. Taking a union bound over all possible values of $t_1$ yields the lemma.
\end{proof}

\subsubsection{Defining and Analyzing Healthy States}

In the next part of the proof, we will define the notion of \emph{healthy states}, and then show that starting from any time $t$ when the data structure is in a healthy state, it will enter a healthy state again within $\Theta(\log P)$ timesteps with high probability in $P$. This property will later be used to derive the latency guarantee in \cref{sec:basic-cas/latency-guarantee}.

\begin{definition}[Healthy State]
  \label{defn:basic-cas/healthy-state}
  We say the data structure is in a \defn{healthy state} at time $t$ if (1) there is no awaiting \CAS instruction, and (2) the potential $\phi_t$ is at most $1 / P^{c^2}$.
\end{definition}

\begin{claim}
  \label{clm:basic-cas/healthy-state-keep-silent}
  Suppose the data structure is in a healthy state at time $t$, and there is no heavily-delayed operation at time $t$.
  Then, conditioned on $\HistExt(t)$, with high probability in $P$, no \CAS instruction is invoked or applied during $[t, \, t + \tfrac{c^2}{2} \log P)$.
\end{claim}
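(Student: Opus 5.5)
The idea is that, starting from a healthy state, the only way a \CAS can be proposed in the window $[t, t + \tfrac{c^2}{2}\log P)$ is for some active operation to transition from \instR to \instC. The potential starts tiny and can grow only by a factor of $2$ per timestep. So throughout the window the total invocation probability stays polynomially small, and a union bound over timesteps will show that no proposal happens.

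\textbf{Step 1: no awaiting \CAS exists.} By condition (1) of \cref{defn:basic-cas/healthy-state}, there is no awaiting \CAS at time $t$. Hence any \CAS invoked or applied in the window must first be proposed in the window. It therefore suffices to show that, w.h.p., no operation proposes a \CAS in $[t, t + \tfrac{c^2}{2}\log P)$.

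\textbf{Step 2: inactive operations cannot propose.} Since no \CAS is applied in the window, \CellC never changes value during the window. Consider an operation pending on \instR that is inactive at some time in the window.
\begin{itemize}
\item It became inactive before $t$, so the cell changed after its \instS.
\item Because there is no heavily-delayed operation at $t$, its \instS was applied after $t - \tfrac12\log^2 P$.
\item In the proof of \cref{lem:basic-cas/no-many-awaiting-cas}, we showed that the fingerprint advances by at most one per timestep and only wraps around after $\log^2 P$ increments.
\item Therefore, on its next \instR this operation sees $(\xNow, \fNow) \ne (\xOld, \fOld)$ and aborts rather than proposing.
\end{itemize}
So, conditioned on the event that no \CAS is applied up to the current time, only active operations can propose. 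The operations that are active during the window are those in $A_t$ together with operations newly invoked in the window, which also stay active because the cell does not change.

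\textbf{Step 3: bound the potential and union bound.} By \cref{clm:basic-cas/potential-growth-bound}, for every $k \le \tfrac{c^2}{2}\log P$,
\[
\phi_{t+k} \le \bigl(P^{-c^2} + P^{-(2c^3-1)}\bigr)\cdot 2^{k} \le 2P^{-c^2}\cdot P^{c^2/2} = 2P^{-c^2/2}.
\]
In any timestep $t' = t + k$, each active operation that applies \instR proposes with probability $p^{(q)}_{t'}$, using fresh algorithm randomness that is independent of $\HistExt(t')$. So the probability that some proposal happens in timestep $t'$, given no earlier proposal, is at most $\phi_{t'} \le 2P^{-c^2/2}$. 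A union bound over the $O(\log P)$ timesteps bounds the probability of any proposal by $O(\log P)\cdot P^{-c^2/2}$, which is $1/\poly(P)$ because $c$ is large.

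\textbf{Main obstacle.} The main subtlety is the circularity between ``no \CAS is applied'' and ``inactive operations cannot propose.'' I will handle it by induction over timesteps: define the bad event as the first timestep in which a proposal occurs, and at that timestep use the fact that no \CAS has been applied so far. A secondary check is that newly invoked operations enter the potential with invocation probability $P^{-2c^3}$. This is already accounted for by the additive term in \cref{clm:basic-cas/potential-growth-bound}.
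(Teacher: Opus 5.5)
Your proof is correct and follows the paper's argument essentially step for step. With no awaiting \CAS instruction at $t$ it suffices to rule out proposals: inactive operations cannot propose because the fingerprint cannot wrap around within $\tfrac{1}{2}\log^2 P$ timesteps of their \instS, and a union bound over the window using \cref{clm:basic-cas/potential-growth-bound} handles the active operations. The only difference is cosmetic: you condition on no \CAS having been applied yet and resolve the resulting circularity with a first-proposal argument, whereas the paper notes that fewer than $\log^2 P$ fingerprint increments can occur after such an operation's \instS before time $t + \tfrac{1}{2}\log^2 P$ in any case, so the circularity never arises.
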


\begin{proof}
  Let $D \defeq \tfrac{c^2}{2} \log P$.
  As argued in the proof of \cref{lem:basic-cas/no-many-awaiting-cas}, since $D < \tfrac{1}{2} \log^2 P$ for sufficiently large $P$, no inactive operation can propose a \CAS instruction during $[t, \, t + D)$. It follows that every proposed \CAS instruction during $[t, \, t + D)$ must come from an active operation. For each timestep $t' \in [t, \, t + D)$, the probability that some active operation proposes a \CAS instruction in timestep $t'$ is at most $\phi_{t'}$. By \cref{clm:basic-cas/potential-growth-bound} and the healthy-state bound $\phi_t \le 1/P^{c^2}$,
  \[
    \sum_{t' = t}^{t + D - 1} \phi_{t'}
    \le
    \sum_{j = 0}^{D - 1}
    \bk*{\frac{1}{P^{c^2}} + \frac{1}{P^{2c^3 - 1}}} \cdot 2^j
    \le
    \frac{2^{D}}{P^{c^2}}
    +
    \frac{2^{D}}{P^{2c^3 - 1}}
    =
    P^{-\Omega(1)}.
  \]
  By a union bound, with high probability in $P$, no active operation proposes a \CAS instruction during $[t, \, t + D)$.

  Since the data structure is in a healthy state at time $t$, there is no awaiting \CAS instruction at time $t$. Hence, if no operation proposes a \CAS instruction during $[t, \, t + D)$, then no \CAS instruction is invoked or applied during this interval. This completes the proof.
\end{proof}

\begin{lemma}
  \label{lem:basic-cas/enter-healthy-state}
  Suppose at time $t$,
  \begin{enumerate}[label=\textup{(\arabic*)}]
    \item $\phi_t \le c \log P$;
    \item there is no awaiting \CAS instruction;
    \item\label{item:basic-cas/no-slow-operation-2} there is no \emph{heavily-delayed} operation at time $t$.
  \end{enumerate}
  Then, conditioned on $\HistExt(t)$, with high probability in $P$, the data structure enters a healthy state at some time $t_2 \in [t + 1, \; t + 2c^4 \log P]$.
\end{lemma}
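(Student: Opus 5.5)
Two cases on $\phi_t$. If $\phi_t = 0$, then the data structure is already healthy at $t$. But we need $t_2 \ge t+1$. With no awaiting \CAS at $t$ and no active operations, the potential at $t+1$ consists only of newly started operations. These contribute at most $2/P^{2c^3-1} \le 1/P^{c^2}$ by \cref{clm:basic-cas/potential-growth-bound}. No \CAS can be proposed in timestep $t$: inactive operations cannot propose, by the wrap-around argument in the proof of \cref{lem:basic-cas/no-many-awaiting-cas} together with assumption~\ref{item:basic-cas/no-slow-operation-2}. So $t_2 = t+1$ works deterministically. The main case is $0 < \phi_t \le c\log P$, where we invoke \cref{lem:basic-cas/no-many-awaiting-cas}. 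W.h.p.\ the cell value first changes at some $t_1 < t + c^4\log P$, and at time $t_1+1$ there are at most $c^3\log P$ awaiting \CAS instructions.

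Next we analyze the state right after $t_1$. At time $t_1+1$ every operation that was active before is now inactive, because the value of \CellC changed. Such an operation cannot propose a \CAS before time $t+\frac12\log^2 P$: it needs the fingerprint to wrap around, and no operation is heavily delayed. Hence every awaiting \CAS at $t_1+1$ is invalid, so the potential at $t_1+1$ consists only of operations that applied \instS in timestep $t_1$ or later. By \cref{clm:basic-cas/potential-growth-bound} this potential is at most $O(1/P^{2c^3-1})$.

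From $t_1+1$ on, the plan is to wait for the at most $c^3\log P$ awaiting \CAS instructions to drain, while showing that no new \CAS is proposed in the meantime. Set $D = c^3\log P \cdot O(\tau)$, a window of length $O(c^3\tau\log P) \le c^4\log P$ (using $c$ large). By \cref{prop:process-schedule-prob} and a union bound, each awaiting instruction's owner is scheduled within $O(\log P)$ time w.h.p., so all of them are invoked by $t_1+O(\log P)$. The queue of \CellC serves one instruction per timestep, so everything invoked by then is applied by $t_1 + O(\log P) + c^3\log P$.

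Meanwhile, new proposals are controlled exactly as in \cref{clm:basic-cas/healthy-state-keep-silent}. Inactive operations cannot propose within the window, again because there is no heavily delayed operation. Operations that are active after $t_1$ start with total potential $O(1/P^{2c^3-1})$, which grows by at most $2^{D}$. Since $2^{O(c^3\tau \log P)} \cdot P^{-2c^3+1} \le P^{-c^2}$ for large $c$, the sum $\sum_{t'}\phi_{t'}$ over the window is $1/\poly(P)$, and a union bound shows that w.h.p.\ no proposal is made. One caveat: the cell value may change again during the window, when a still-valid \CAS is applied. That is harmless, because such changes only further inactivate operations and do not increase $\phi$. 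So at $t_2 = t_1 + D' \le t + 2c^4\log P$ there are no awaiting \CAS instructions and $\phi_{t_2} \le 1/P^{c^2}$, which is a healthy state.

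The main obstacle is the exponent bookkeeping in the growth bound. We need the doubling over $\Theta(c^3\tau\log P)$ steps to stay below $P^{2c^3-1-c^2}$. This requires the drain time to be at most about $(2c^3 - c^2)\log P$, i.e.\ roughly $2c^3\log P$ steps, while the queue holds up to $c^3\log P$ instructions that must also be scheduled. The $\tau$ factor is dangerous here. I would first try to bound the drain time by $c^3\log P + O(\tau\log P)$, taking the $O(\tau \log P)$ term for scheduling the owners in parallel, rather than multiplying the two. Then $c \ge 2^{4\tau+1}$ makes $O(\tau\log P) \ll c^3\log P$, and the exponent fits.
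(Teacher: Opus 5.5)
Your argument follows the paper's proof. It uses \cref{lem:basic-cas/no-many-awaiting-cas} to get $t_1$ and the $c^3\log P$ bound on awaiting \CAS instructions, then $\phi_{t_1+1}\le P^{-(2c^3-1)}$, then no proposals (via the growth bound and the absence of heavily-delayed operations), and finally the drain of the queue. Your deterministic handling of $\phi_t=0$ is a fine substitute for the paper's appeal to \cref{clm:basic-cas/healthy-state-keep-silent}.

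One thing must be fixed. The window $D=c^3\log P\cdot O(\tau)$ in your third paragraph does not work. The claimed inequality $2^{O(c^3\tau\log P)}P^{-2c^3+1}\le P^{-c^2}$ is false for every $c$ once the hidden constant times $\tau$ exceeds $2$, because both exponents scale like $c^3$; taking $c$ large does not help. You must commit to the additive drain bound that you yourself derive just before that paragraph and return to at the end.

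This is exactly the paper's choice: $t_2=t_1+(c^3+c)\log P$. The $c\log P$ term suffices for all owners to invoke their \CAS w.h.p.\ because $c\ge 2^{4\tau+1}$, and the $c^3\log P$ term drains the queue. Throughout the window this gives $\phi_{t'}\le 2P^{-(2c^3-1)}\cdot 2^{(c^3+c)\log P-1}=P^{-(c^3-c-1)}<P^{-c^2}$.
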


\begin{proof}
  We first prove the lemma assuming $\phi_t > 0$, i.e., there is at least one active operation at time $t$. We will handle the case of $\phi_t = 0$ at the end of the proof.

  Let $t_1$ be the first timestep after $t$ where the value of \CellC changes. $t_1$ is guaranteed to exist because there is at least one active operation at time $t$. By \cref{lem:basic-cas/no-many-awaiting-cas}, conditioned on $\HistExt(t)$, with high probability in $P$, we have $t_1 < t + c^4 \log P$, and the number of awaiting \CAS instructions at time $t_1 + 1$ is at most $c^3 \log P$.

  We argue that at time $t_2 \defeq t_1 + (c^3 + c) \log P \le t + 2c^4 \log P$, the data structure is in a healthy state with high probability in $P$, by verifying the required conditions:
  \begin{itemize}
    \item \textbf{Potential bound.}
    Since the value of \CellC changed in timestep $t_1$, we know that $\phi_{t_1 + 1} \le 1 / P^{2c^3 - 1}$ (because only operations invoked in timestep $t_1$ can be active at time $t_1 + 1$). By \cref{clm:basic-cas/potential-growth-bound}, we have that for any $t' \in [t_1 + 1, \; t_2]$,
    \[
      \phi_{t'} \le \bk*{\phi_{t_1 + 1} + \frac{1}{P^{2c^3 - 1}}} \cdot 2^{t' - t_1 - 1} \le 2 \cdot \frac{1}{P^{2c^3 - 1}} \cdot 2^{(c^3 + c) \log P - 1} = \frac{1}{P^{c^3 - c - 1}} < \frac{1}{P^{c^2}}.
      \numberthis \label{eq:basic-cas/phi-t2-bound}
    \]
    Thus, $\phi_{t_2} < 1 / P^{c^2}$, as required.

    \item \textbf{No awaiting \CAS instruction.}
    Similar to the proof of \cref{lem:basic-cas/no-many-awaiting-cas}, condition~\ref{item:basic-cas/no-slow-operation-2} ensures that no inactive operation can propose a \CAS instruction before time $t_2$. For active operations, the probability that any active operation proposes a \CAS instruction in timestep $t' \in [t_1 + 1, \; t_2)$ is bounded by the potential $\phi_{t'} \le 1 / P^{c^2}$ according to \eqref{eq:basic-cas/phi-t2-bound}. A union bound over all $t' \in [t_1 + 1, \; t_2)$ yields that, with high probability in $P$, no active operation proposes a \CAS instruction during $[t_1 + 1, \; t_2)$. Combining both cases yields that, with high probability in $P$, no operation proposes \CAS instructions during $[t_1 + 1, \; t_2)$.

    Let $S$ be the set of awaiting \CAS instructions at time $t_1 + 1$. We know from \cref{lem:basic-cas/no-many-awaiting-cas} that $|S| \le c^3 \log P$ with high probability in $P$. By a Chernoff bound and \cref{prop:random-delay-for-operations}, we know that every instruction in $S$ that has not been invoked yet (i.e., its corresponding operation is pending on \instC) has probability at least $1 - 1 / \poly(P)$ to be invoked before time $t_1 + c \log P$. A union bound over all instructions in $S$ yields that, with high probability in $P$, no \CAS instruction is waiting to be invoked at time $t_1 + c \log P$, and the number of \CAS instructions waiting in the queue is at most $c^3 \log P$ at time $t_1 + c \log P$. This combined with the property that no operation proposes \CAS instructions during $[t_1 + 1, \; t_2)$ yields that, with high probability in $P$, all awaiting \CAS instructions at time $t_1 + 1$ are already applied before time $t_2$, so there is no awaiting \CAS instruction at time $t_2$, as required.
  \end{itemize}
  Therefore, the data structure enters a healthy state at time $t_2$ with high probability in $P$, which concludes the proof when $\phi_t > 0$.

  \paragraph{Case of $\phi_t = 0$.} When $\phi_t = 0$, the data structure is already in a healthy state at time $t$. By \cref{clm:basic-cas/healthy-state-keep-silent}, with high probability in $P$, no \CAS instruction is invoked or applied during $[t, \; t + 1)$. Therefore, there is still no awaiting \CAS instruction at time $t + 1$, and $\phi_{t + 1} \le 1 / P^{2c^3 - 1} < 1 / P^{c^2}$. Hence, the data structure is in a healthy state at time $t + 1$.
\end{proof}

In \cref{lem:basic-cas/no-many-awaiting-cas,lem:basic-cas/enter-healthy-state}, in order to prevent inactive operations from proposing \CAS instructions, we set an additional requirement that there is no heavily-delayed operation at time $t$. The following claim guarantees that this requirement is met with high probability in $P$ for any given time $t$.

\begin{claim}
  \label{clm:basic-cas/no-slow-operation}
  For every time $t \ge 0$, conditioned on $\HistExt(t - \frac{1}{2} \log^2 P)$, with high probability in $P$, there is no heavily-delayed operation at time $t$.
\end{claim}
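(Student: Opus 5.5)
Every operation that is still pending on \instR at time $t$ and is heavily-delayed applied its \instS before time $t - \tfrac12\log^2 P$. So it must have been ongoing at time $t_0 \defeq t - \tfrac12\log^2 P$, and it must have remained pending on \instS or \instR throughout $[t_0, t)$. The plan is to show that, conditioned on $\HistExt(t_0)$, each such operation fails to do this except with probability $2^{-\Omega(\log^2 P)}$. We then take a union bound over the at most $P$ operations that are ongoing at time $t_0$. Operations invoked after $t_0$ cannot be heavily-delayed at time $t$, by definition.

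The key deterministic fact is this: an operation pending on \instR performs at most $\log_2 P^{2c^3} + 2 = O(\log P)$ consecutive \instR steps. Each iteration doubles $p$, starting from $1/P^{2c^3}$, so once $p \ge 1$ the operation transitions to \instC with certainty. An operation still pending on \instR at time $t$ therefore has applied at most $K \defeq 2c^3\log P + 3$ shared instructions since its \instS. Also, while an operation is pending on \instS or \instR it is never waiting, because \Load is applied immediately; so it is \emph{ready} at every time in $[t_0, t)$.

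Fix an operation $q$ ongoing at time $t_0$. By \cref{prop:random-delay-for-operations}, the coins $\BK{\coin_{q,j}}$ with $j\tau \ge t_0$ are independent fair coins conditioned on $\HistExt(t_0)$. Apply \cref{prop:process-schedule-prob} with $k = \floor{\log^2 P /(8\tau)}$, so that $4k\tau \le \tfrac12 \log^2 P$. With probability $1 - 2^{-\Omega(\log^2 P/\tau)}$, the operation $q$ either is scheduled at least $k$ times in $[t_0, t)$ or is not ready at some time in that interval. For large $P$ we have $k > K + 1$. In the first case, $q$ has left the \instS/\instR states (it has moved to \instC or returned) before time $t$. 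In the second case, $q$ has invoked \instC (the only waiting state) or has completed. Either way, $q$ is not pending on \instR at time $t$, and once it leaves \instR it never returns to it.

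A union bound over at most $P$ operations gives a failure probability of $P \cdot 2^{-\Omega(\log^2 P)} = 1/\poly(P)$, which proves the claim.

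The only subtle step is the conditioning: we condition on $\HistExt(t_0)$, whereas \cref{prop:process-schedule-prob} is stated conditioned on $\HistExt(t_0)$ and \Randomness. Conditioning on less only weakens the statement (\cref{sec:conventions}), so this is fine. We also need $t_0$ to be rounded to a multiple of $\tau$, which costs only $O(1)$ in the window length.
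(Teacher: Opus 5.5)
Your proposal is correct and follows essentially the same route as the paper. Both arguments use a deterministic $O(\log P)$ bound on the number of \instR steps before an operation must leave \instR, then apply \cref{prop:process-schedule-prob} over the window $[t - \tfrac12\log^2 P,\, t)$ to get failure probability $2^{-\Omega(\log^2 P)}$ per operation, and finish with a union bound over the at most $P$ operations ongoing at the start of that window; you also make explicit the readiness point (a \Load is never queued) and the ``not ready'' branch of the proposition, which the paper leaves implicit.
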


\begin{proof}
  After applying \instS, every operation either proposes a \CAS instruction or aborts after being scheduled for $2c^3 \log P + 3 = \Theta(\log P)$ times. For any operation $q$ that applies \instS before time $t - \tfrac{1}{2} \log^2 P$, the probability that $q$ is scheduled for less than $\Theta(\log P)$ times during $\bigl[t - \tfrac{1}{2} \log^2 P, \; t\bigr)$---which is necessary for $q$ to remain pending on \instR at time $t$---is at most $2^{-\Omega\bk{\frac{1}{2} \log^2 P}} = P^{-\omega(1)}$. Taking a union bound over at most $P$ ongoing operations at time $t - \tfrac{1}{2} \log^2 P$ yields that, the probability of any such operation remaining pending on \instR at time $t$ is at most $P^{-\omega(1)}$.
\end{proof}

\subsubsection{Proving Latency Guarantee}
\label{sec:basic-cas/latency-guarantee}

Finally, we proceed to prove \cref{lem:basic-cas/latency-guarantee}.

\BasicCasLatencyGuarantee*

\begin{proofof}{\cref{lem:basic-cas/latency-guarantee}}
  For every time $t \le P^{O(1)}$, \cref{clm:basic-cas/no-slow-operation} states that, with high probability in $P$, there is no ongoing operation at time $t$ that applied its \instS more than $\tfrac{1}{2} \log^2 P$ timesteps ago. By a union bound, we know that this property holds for all time $t$ in the execution with high probability in $P$. In the rest of the proof, we assume this property holds.

  According to \cref{alg:basic-cas-register}, every operation $q$ will be scheduled for at most $O(\log P)$ times before it either quits or invokes a \CAS instruction, which takes at most $O(\log P)$ time with high probability in $P$ according to \cref{prop:process-schedule-prob}. Therefore, if $q$ quits without invoking a \CAS instruction, its latency is at most $O(\log P)$ with high probability in $P$. If $q$ invokes a \CAS instruction, it will wait in the queue until the \CAS gets applied. Therefore, with high probability in $P$,
  \[
    \Latency(q) \le O(\log P) + (\text{time that the } \CAS \text{ instruction waits in the queue}).
    \numberthis \label{eq:basic-cas/latency-single-op}
  \]

  At time $t^*_0 \defeq 0$, there is no ongoing operation, so the data structure is in a healthy state. Applying \cref{lem:basic-cas/enter-healthy-state} on time $t^*_0$ yields a time $t^*_1 \in [t^*_0 + 1, \; t^*_0 + 2c^4 \log P]$ with high probability in $P$ when the data structure is again in a healthy state. We repeat this process to get a sequence of times $t^*_0 < t^*_1 < \cdots < t^*_k$ such that
  \begin{itemize}
    \item the data structure is in a healthy state at every $t^*_i$;
    \item $t^*_{i + 1} - t^*_i \le 2c^4 \log P = \Theta(\log P)$.
    \item $t^*_k$ is the end of the execution.
  \end{itemize}
  For any \CAS instruction invoked in timestep $t \in [t^*_i, \; t^*_{i + 1})$, we know that the \CAS instruction is completed before $t^*_{i + 1}$, as there is no awaiting \CAS instruction at time $t^*_{i + 1}$. This implies that the \CAS instruction waits in the queue for at most $t^*_{i + 1} - t \le \Theta(\log P)$ time, and this bound holds for all \CAS instructions invoked in the execution simultaneously. Combining this with \eqref{eq:basic-cas/latency-single-op} yields that the latency of each operation is at most $O(\log P)$ with high probability in $P$, which further implies the lemma via a union bound over all operations in the execution.
\end{proofof}

\subsubsection{Failure on Long Execution}
\label{sec:basic-cas/failure-on-long-execution}

In the proof of \cref{lem:basic-cas/latency-guarantee}, we showed that when the data structure is in a healthy state at time $t^*_i$, it will enter the next healthy state within $O(\log P)$ time with high probability in $P$. However, there is a small probability that the data structure enters an ``ill state'' and never recovers to healthy states again. We illustrate this by constructing a superpolynomial-length adaptive input for which the expected average latency over all operations is $\Theta(P)$.

\begin{proposition}
  There exists an adaptive input such that, with high probability in $P$, \cref{alg:basic-cas-register} incurs an expected average latency of $\Theta(P)$, where the average is taken over all operations.
\end{proposition}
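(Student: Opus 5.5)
The plan is to formalize the failure mode sketched in the technical overview for the long-lived CAS register. The adaptive input repeats a \emph{round}. A round first invokes a \Read to learn the current logical value $x$. It then waits until all processes are free, with a timeout of, say, $P^2$ steps. Finally, at a single timestep, it invokes $\opCAS(x, x+1)$ on every free process. I will use a greedy scheduler, which trivially satisfies the random-delay property, and an adversarial enqueue order. Call a round \emph{triggering} if the following happens. First, the potential climbs to $\Theta(\log^2 P)$ before any \CAS is proposed. Then $\Theta(\log^2 P)$ \CAS instructions are proposed in the same timestep. Since all but one of them fail, the queue stays non-empty while \CellC keeps a fixed value for $\Theta(\log^2 P)$ steps. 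I will lower-bound the probability of a triggering round by $2^{-O(\log^2 P)}$, conditioned on the history. The computation is direct. Under the greedy scheduler the invocation probability is the same for all operations at time $t$, namely $2^{t}/P^{2c^3}$. We need no proposal among $P$ operations while $p$ grows to $\Theta(\log^2 P / P)$; this costs $\exp(-\sum_t P p_t) = 2^{-O(\log^2 P)}$. A Chernoff bound then gives the simultaneous burst with constant probability.

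Next comes the ill state. Once a triggering round has occurred, the adversary, during the quiet window in which the value of \CellC is fixed at $x+1$, invokes $\opCAS(x+1, x+2)$ on every process that has become free. This is $\Theta(P)$ operations, since about $P - \Theta(\log^2 P)$ operations have aborted or returned by then. None of them sees a change within the $\Theta(\log P)$ steps of its back-on. The reason is that the failed \CAS instructions at the head of the queue do not change \CellC, and the one successful \CAS of this batch comes only after them. So essentially all of these operations reach $p = 1$ and invoke \CAS, and the queue length becomes $\Theta(P)$. I then show the bad state is self-sustaining. While the queue has length $\Theta(P)$, \CellC changes at most once per batch, and the queue stays long for $\Theta(P)$ steps. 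The adversary keeps re-invoking $\opCAS$ with the current value as soon as processes free up. Each new batch finishes its $O(\log P)$-step back-on long before the queue drains, so it again invokes $\Theta(P)$ \CAS instructions. Escaping this would require a constant fraction of $\Theta(P)$ independent events, such as all operations of a batch detecting a change early. This happens with probability $2^{-\Omega(P)}$ per batch.

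To finish, run the input for $N = 2^{C\log^2 P}$ rounds with $C$ large. With high probability some early round, say within the first $\sqrt N$, is triggering. After it, the ill state persists for all remaining rounds except with probability $N \cdot 2^{-\Omega(P)}$. In the ill state, each of the $\Theta(P)$ operations in a batch waits $\Theta(P)$ steps on average in the queue. Hence the average latency over all operations is $\Theta(P)$, and the trivial upper bound of $O(P)$ per queue wait matches it. The main obstacle is making the self-sustaining step rigorous. I need to track the queue contents across batches and confirm that each fresh batch really sees no value change during its back-on. The fingerprint increment is visible only on successes, so I must bound how often successes occur per $O(\log P)$ window while the queue is long. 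I expect this to follow from the fact that each batch contributes exactly one valid \CAS. Consequently successes are spaced $\Theta(P)$ apart, because the adversarial enqueue order puts the valid \CAS last.
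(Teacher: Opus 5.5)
Your architecture is the paper's. It uses the same Read-then-flood adversary and the same $2^{-O(\log^2 P)}$ trigger, in which a $\Theta(\log^2 P)$-instruction burst leaves a tail of failing \CAS{}s that freezes \CellC. It relies on the same self-perpetuating $\Theta(P)$-length queue, and runs long enough that the trigger happens early. Your synchronized-round computation of the trigger probability is actually more explicit than the paper's. The gap is in the step you flag as the crux, persistence of the ill state, and it has two parts.

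First, your adversary as written destroys the ill state. A process is free only once its operation has returned, so ``wait until all processes are free'' means waiting until every queued \CAS has been applied. The round after the burst therefore starts from an empty queue, and the quiet window is wasted. Your ill-state paragraph silently switches to re-invoking each process as soon as it frees up. That switch must be built into the adversary, triggered only by observable events, since the user sees neither the queue nor the potential. For example, under the greedy scheduler every non-queued operation of the round has returned within two steps of the \texttt{true} response. So seeing more than $3c^3\log P$ operations of the round still ongoing at that point certifies a quiet window longer than a full back-on. Alternatively, use the paper's continuous policy throughout, but then your synchronized-start trigger estimate must be redone.

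Second, the mechanism in your last paragraph is wrong. A batch does not contribute a single valid \CAS, and the adversarial within-timestep enqueue order plays no role. Every \CAS proposed during a quiet window is valid when proposed, and whichever reaches the head first succeeds, exactly as in your own burst. The correct argument is FIFO plus staleness, and it is deterministic under the greedy scheduler:
\begin{enumerate}
  \item Right after a success, every instruction still in the queue expects the pre-success pair. By induction successes stay far apart, so no fingerprint wrap-around revalidates it while it waits.
  \item Hence the next success must come from an instruction enqueued after the previous success. It is applied only after the $Q$ stale instructions ahead of it drain.
  \item During those $Q$ steps, exactly one process leaves the queue per step and is re-invoked at once. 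Its back-on ends within $2c^3\log P+O(1)$ steps because $p$ doubles each step, so it re-enqueues before the next change. Only the $O(\log P)$ processes freed just before a change abort, and they simply restart.
\end{enumerate}
So from the first post-switch success on, at most $O(\log P)$ processes are outside the queue. The queue length stays $P-O(\log P)$ with certainty, and each \CAS waits $\Theta(P)$ steps. No $2^{-\Omega(P)}$ escape estimate is needed, and you never identified the ``independent events'' it would rest on. Once rounds cease to exist, count post-trigger work by batches between consecutive successes, each of which has $\Theta(P)$ operations.
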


\begin{proof}
  We start by constructing the input. We let the user repeatedly:
  \begin{itemize}
    \item Invoke a \Read operation to get the current value $x$ of the register.
    \item Invoke as many \opCAS operations as possible, each trying to change the value of the register from $x$ to $x + 1$, subject to the constraint that the number of ongoing operations cannot exceed $P$.
  \end{itemize}
  We let the user repeat this process for $2^P$ timesteps.

  To show that this input pattern causes $\Theta(P)$ latency for each operation in expectation, we keep track of the size of the instruction queue after each time the value of the register changes. In common cases, the instruction queue size stays at most $c^3 \log P$ according to \cref{lem:basic-cas/no-many-awaiting-cas}, and the queue empties itself before the next operations invoke \CAS instructions. However, the following sequence of events happens with probability $2^{-O(\log^2 P)}$:
  \begin{enumerate}
    \item The potential grows to $\Theta(\log^2 P)$ before any operation invokes a \CAS instruction.
    \item $\Theta(\log^2 P)$ operations invoke \CAS instructions simultaneously in timestep $t$. This causes the size of the instruction queue to increase to $\Theta(\log^2 P)$.
    \item The head instruction in the queue gets applied and succeeds, changing the value of \CellC.
    \item\label{item:basic-cas/lb/bad-interval} In the next $\Theta(\log^2 P)$ timesteps, all applied \CAS instruction fail, so the value of \CellC does not change. We call this a ``bad interval''.
    \item\label{item:basic-cas/lb/queue-size} Each operation invoked in the bad interval gets scheduled for $\Theta(\log P)$ times in $\Theta(\log P)$ timesteps, but they do not detect a change of \CellC, so they all invoke \CAS instructions in $\Theta(\log P)$ timesteps after being invoked. This causes the queue size to increase to $\Theta(P)$.
  \end{enumerate}
  Then, steps \ref{item:basic-cas/lb/bad-interval} and \ref{item:basic-cas/lb/queue-size} repeat: the value of \CellC only changes every $\Theta(P)$ timesteps, which causes the queue size to stay at $\Theta(P)$ forever.

  Based on the above events, for every $\Theta(P)$ timesteps, there is a $2^{-O(\log^2 P)}$ probability that the queue size increases to $\Theta(P)$. This event happens in the first $1.5^P \gg 2^{O(\log^2 P)} \cdot \log P$ timesteps with high probability. Once this happens, the queue size will stay at $\Theta(P)$ until the end of the execution, causing all remaining operations to incur $\Theta(P)$ latency. Therefore, the average latency of all operations is at least $\Theta(P)$.
\end{proof}

\subsection{Improved CAS Register}
\label{sec:improved-cas-register}

In this section, we propose an improved CAS register algorithm that has $O(\log P)$ high-probability latency guarantee even for unbounded-length executions.

\paragraph{Algorithm description.}

In the improved algorithm, we use the basic CAS algorithm (\cref{alg:basic-cas-register}) as a subroutine to maintain the register value (and a fingerprint) in \CellC. In addition, we maintain an auxiliary cell called \CellW, which is used to detect that the instruction queue for \CellC is non-empty. After an operation $q$ applies a \CAS instruction in \cref{alg:basic-cas-register}, we let it write a random $2 \log P$-bit string in \CellW regardless of whether its \CAS instruction succeeds or fails. See \cref{alg:improved-cas} for details. Here, $c$ on line~\ref{line:improved-cas/wait} denotes the same constant as in \cref{alg:basic-cas-register}.

\begin{algorithm2e}[ht]
  \caption{Improved CAS Register}\label{alg:improved-cas}
  \DontPrintSemicolon
  \Fn{\ImprovedCAS{$\xExpect$, $\xNew$}} {
    \tcpNoSpace{Waiting Phase}
    $\yOld \gets \Load(\CellW)$ \tcp{\instSS}
    \Loop{
      wait for $2c \log P$ steps \; \label{line:improved-cas/wait}
      $\yNow \gets \Load(\CellW)$ \tcp{\instRR}
      \If{$\yNow = \yOld$}{
        \Break \;
        \label{line:improved-cas/break}
      }
      $\yOld \gets \yNow$ \;
    }
    \tcpNoSpace{Calling Phase}
    result $\gets \BasicCAS(\CellC, \, \xExpect, \, \xNew)$ \;
    \tcpNoSpace{Writing Phase}
    \If{\BasicCAS applied a \CAS instruction}{
      $\StoreRandom(\CellW, \; \Unif(\BK{0,1}^{2 \log P}))$ \tcp{\instWW}
    }
    \Return result \;
  }
\end{algorithm2e}

In \cref{alg:improved-cas}, in addition to the shared instructions in \cref{alg:basic-cas-register}, there are three additional shared instructions: \instSS, \instRR, and \instWW. Moreover, we treat line~\ref{line:improved-cas/wait} as a special \Nop instruction that gets repeated for $2c \log P$ times before the operation moves on to the next state. We again adopt the view of the \emph{reduced state machine} similarly to \cref{sec:register,sec:basic-cas-register}.

The execution of each \opCAS operation $q$ is divided into three phases: the Waiting Phase, the Calling Phase, and the Writing Phase. The Waiting Phase loops until in one iteration of line~\ref{line:improved-cas/wait} the value of \CellW does not change. Then, $q$ transitions to the Calling Phase. In the Calling Phase, $q$ calls \BasicCAS (\cref{alg:basic-cas-register}) to perform the \opCAS operation on \CellC. The \opCAS operation in \BasicCAS either quits directly or applies a \CAS instruction to \CellC. In the latter case, $q$ transitions to the Writing Phase, and writes a random $2 \log P$-bit string in \CellW (\instWW).

\begin{restatable}{lemma}{ImprovedCASLinearizable}
  \label{lem:improved-cas/linearizable}
  \cref{alg:improved-cas} is linearizable and lock-free.
\end{restatable}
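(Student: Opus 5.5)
Linearizability will be inherited directly from \cref{lem:basic-cas/linearizable}. The cell \CellW is never read by the logic that decides return values, and \Read operations still consist of a single \Load of \CellC. The Waiting Phase only issues \Load instructions on \CellW and local \Nop{}s. The Writing Phase stores to \CellW only after the result of \BasicCAS is fixed. Hence every \ImprovedCAS operation makes exactly the same accesses to \CellC as some \BasicCAS operation invoked at the start of its Calling Phase, and returns that operation's result. I will therefore linearize each \ImprovedCAS operation at the linearization point assigned in the proof of \cref{lem:basic-cas/linearizable} to its embedded \BasicCAS call. Operations that have not entered the Calling Phase are left unlinearized. That proof only uses the sequence of \Load/\CAS instructions applied to \CellC and the fact that each linearization point lies in the interval during which the \BasicCAS call is executing, which is contained in the \ImprovedCAS operation's interval. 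So the resulting order respects real time, and the inductive argument over successful hardware \CAS{}s goes through verbatim.

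For lock-freedom, the Calling Phase is wait-free by \cref{lem:basic-cas/linearizable}, and the Writing Phase is a single instruction. So the only unbounded loop is the Waiting Phase. Consider an infinite interleaved execution and suppose, for contradiction, that from some point on no operation completes. Each iteration of the waiting loop in which an operation $q$ fails to break means $q$ observed $\yNow \ne \yOld$, so some \StoreRandom to \CellW was applied during that iteration. I need to make sure this is the case, and not just that the value returned to the same string by coincidence. Since the observed value changed, at least one \instWW must have been applied in between.

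Each \instWW is applied by an operation in its Writing Phase, which returns immediately afterwards. Hence infinitely many failed iterations of the waiting loop, across the finitely many processes, would force infinitely many \instWW applications and thus infinitely many completed operations, a contradiction. If instead only finitely many \instWW instructions are applied, then after the last one \CellW is constant. Every process in its Waiting Phase then breaks within one more iteration and proceeds through the wait-free Calling and Writing Phases to completion, again a contradiction. Finally, an operation blocked because the other processes stop taking steps does not violate lock-freedom in the interleaved model.

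The step needing the most care is the linearizability transfer. I must check that linearization points chosen ``immediately before the last successful \instC'' for fingerprint-only mismatches still fall within the \ImprovedCAS operation's interval. This follows because that successful \instC occurs after the operation's \instS, which lies in its Calling Phase. I also note explicitly that lock-freedom, rather than wait-freedom, is the best available here. A single operation can be starved forever in the Waiting Phase by a perpetual stream of other operations writing to \CellW.
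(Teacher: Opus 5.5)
Your proposal is correct and follows essentially the same route as the paper. Linearizability is inherited from the linearization of \cref{alg:basic-cas-register}; your check that fingerprint-mismatch linearization points still fall after the operation's \instS, and hence inside its interval, is a useful detail the paper leaves implicit. Lock-freedom comes from the wait-freedom of the Calling Phase, plus the fact that operations stuck forever in the Waiting Phase would leave an infinite suffix with no writes to \CellW; your case split on finitely versus infinitely many \instWW applications (each \instWW witnessing at most one failed iteration per process and being followed immediately by a return) is a more explicit version of the paper's one-line argument.
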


\begin{proofof}{\cref{lem:improved-cas/linearizable}}
  Linearizability follows immediately from \cref{lem:basic-cas/linearizable}, as every operation ends up invoking \BasicCAS (no operation can return without doing so); the linearization given in the proof of \cref{lem:basic-cas/linearizable} applies.

  Lock-freedom follows from the wait-freedom of the basic CAS algorithm, together with the fact that not all processes can become stuck in the loop: otherwise, there would be an infinite suffix where no process writes to $\CellW$, allowing all processes to break out of the loop the next time they execute line~\ref{line:improved-cas/break}.
  Thus, some process eventually breaks out of the loop, invokes \BasicCAS, and returns.
\end{proofof}

In the remainder of this subsection, we prove \cref{lem:improved-cas/latency-guarantee}.

\begin{restatable}{lemma}{ImprovedCASLatencyGuarantee}
  \label{lem:improved-cas/latency-guarantee}
  Let $t$ be a fixed timestep and let $q$ be an arbitrary \opCAS operation invoked in timestep $t$. Then, operation $q$ completes within $O(\log P)$ time with high probability in $P$.
\end{restatable}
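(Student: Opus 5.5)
The plan is to reduce the long-lived guarantee to a \emph{local} version of the healthy-state argument for \cref{alg:basic-cas-register}, using \CellW to show that the register returns to a healthy state within $O(\log P)$ time of any fixed moment, no matter how long the execution has been running. I will split the latency of $q$ into its three phases and bound each by $O(\log P)$ with high probability, conditioning only on $\HistExt(t - \Theta(P))$ (or on some polynomially-long prefix) rather than on the whole history.

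\textbf{Step 1: the Writing Phase and \CellW.} \CellW is written only by \Store-type instructions (\instWW), exactly as in \cref{sec:register}. There are two differences. First, writers come from operations that applied a hardware \CAS, not from a back-on loop. Second, the string is $2\log P$ bits, so collisions among polynomially many writes are negligible. I will show that busy intervals of \CellW are no longer than the preceding bursts of applied \CAS instructions on \CellC plus $O(\log P)$ scheduling delay, using the coin argument from \cref{lem:busy-interval-length}. This reduces everything to bounding the queue on \CellC.

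\textbf{Step 2: self-regulation via the Waiting Phase.} This is the key lemma. Fix a time $t'$. I will argue that w.h.p.\ any operation entering the Calling Phase at time $t'$ observed \CellW unchanged over a window of $2c\log P$ consecutive \Nop steps, which covers $\Omega(\log P)$ real time. Whenever the \CellC queue is nonempty, \CAS applications occur every timestep, and by Step~1 their \instWW writes follow within $O(\log P)$ time. So a quiet window certifies that, at its start, there were no awaiting \CAS instructions older than $O(\log P)$, up to the usual scheduling slack. Consequently, operations entering \BasicCAS see a nearly empty \CellC queue and no old invalid awaiting \CAS. In effect, the hypotheses of \cref{lem:basic-cas/enter-healthy-state} are regenerated locally: no awaiting \CAS, potential at most $c\log P$ (since $\phi$ only counts Calling-Phase operations and grows from $1/P^{2c^3}$), and no heavily-delayed operation (by \cref{clm:basic-cas/no-slow-operation}, which needs only a $\tfrac12\log^2 P$ prefix). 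Applying \cref{lem:basic-cas/no-many-awaiting-cas} and \cref{lem:basic-cas/enter-healthy-state} from such a time then gives a healthy state within $O(\log P)$, so the \CAS of $q$ waits $O(\log P)$. Unlike \cref{lem:basic-cas/latency-guarantee}, this union bound ranges only over a polynomial window before $t$, so no bad state inherited from the distant past matters.

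\textbf{Step 3: the Waiting Phase terminates fast.} I will show that w.h.p.\ \CellW stays unchanged for some $2c\log P$-step window within $O(\log P)$ time after $t$. Each loop iteration of $q$ sees a change only if some \instWW is applied in that window. By Step~2, \CAS traffic is bounded: only operations that passed a quiet window can issue \CAS instructions, and those drain within $O(\log P)$. New operations arriving during a burst keep waiting and do not pile on. So after one burst ends, the next quiet window appears within $O(\log P)$ time, and $q$ exits after $O(1)$ iterations w.h.p.

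\textbf{Main obstacle.} I expect Step~2 to be the hardest. The circular dependency must be broken: quiet \CellW implies empty \CellC queue, which in turn relies on bounded \CellW contention. Moreover, the window length $2c\log P$ is in \emph{steps}, while the scheduler is only roughly synchronous. I would first try a potential-style induction over consecutive $\Theta(\log P)$-length epochs: each epoch starting ``clean'' ends clean w.h.p. The base case would be re-established from an arbitrary state within $O(P)$ time, mirroring \cref{lem:busy-interval-length-large}, since stale operations are flushed after $\Theta(P)$ time. A union bound over $\mathrm{poly}(P)$ epochs then covers the window before $t$.
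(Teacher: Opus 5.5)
Your three-phase decomposition and the overall shape of the argument match the paper's: a local healthy-to-healthy recurrence for \CellC, a polynomial-time recovery from an arbitrary state, and a union bound over a polynomial window before $t$. The gap is in the recovery step. It is the only place where the Waiting Phase is indispensable, and you dispose of it as ``mirroring \cref{lem:busy-interval-length-large}, since stale operations are flushed.''

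Flushing stale operations is not the mechanism. The failure of \cref{alg:basic-cas-register} in \cref{sec:basic-cas/failure-on-long-execution} is sustained entirely by \emph{fresh} operations refilling the queue with invisible failing \CAS{}s. Moreover, the per-newcomer $O(1/P^4)$ bound that drives \cref{lem:busy-interval-length-large} has no analogue on \CellC, because failed \CAS{}s leave \CellC unchanged.

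The paper's \cref{lem:improved-cas/bad-interval-end} supplies the missing argument. In any bad state (an awaiting \CAS, or $\phi_t \ge c\log P$), some \CAS is applied within $c\log P$, so \CellW is written within $2c\log P$. Hence an operation can leave the Waiting Phase at a time $t'$ inside a bad interval only if a string written during $[t'-2c\log P, t')$ collides with its $\yOld$, which has probability at most $2c\log P/P^2$. One collision can release up to $P$ operations at once, so you must bound the expected inflow (at most $2c\log P/P$ per step) and use Markov rather than Chernoff. Over $P^2$ steps this gives $O(P^{1.6})$ new callers with probability $1-O(P^{-1/2})$, plus at most $2P$ from operations already present. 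That is far fewer than the $P^2/(c\log P)$ applied \CAS{}s a bad interval of that length needs. Repetition then gives \cref{cor:improved-cas/worst-case-recovery}.

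This also removes the circularity you worry about. ``Bad state $\Rightarrow$ \CellW changes soon'' needs no bound on \CellW's contention: the mere \emph{invocation} of \instWW forces some \StoreRandom to be applied to \CellW in that timestep. That is the only property of \CellW the \CellC analysis uses, since \cref{lem:basic-cas/enter-healthy-state} never looks at \CellW. Everything else about \CellW is derived afterwards from healthy states of \CellC.

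Your Step~2 is then unnecessary, and as written it is not sound, for two reasons.
\begin{itemize}
\item $\phi \le c\log P$ does not follow from the $1/P^{2c^3}$ starting probability: the potential keeps doubling for as long as \CellC is unchanged. At a quiet window's start, the bound would have to come from \cref{lem:basic-cas/cell-change} contradicting quietness.
\item A healthy state reached from the window's start may precede $q$'s own \CAS, so you need the recurrence anyway. The paper just applies \cref{lem:improved-cas/enter-healthy-state} at the time $q$ invokes its \CAS.
\end{itemize}

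Step~3 likewise needs the quantitative reason a long enough quiet window exists. From a healthy state, no \CAS is invoked for $\tfrac{c^2}{2}\log P$ time (\cref{clm:basic-cas/healthy-state-keep-silent}). With $\psi \le c\log P$ (\cref{cor:writing-phase/psi-bound}), \CellW is then quiet for $(\tfrac{c^2}{2}-c)\log P$ time. That is more than two iterations of $q$'s loop, each taking at most $\tfrac{c^2}{8}\log P$ time w.h.p. This is how the paper resolves the steps-versus-time mismatch.
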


\subsubsection{Analyzing Calling Phase}

\begin{definition}
  \label{defn:improved-cas/bad-interval}
  We say the data structure is in a \defn{bad state} at time $t$ if the data structure satisfies one of the following conditions:
  \begin{itemize}
    \item There are awaiting CAS instructions.
    \item The potential $\phi_t$ is at least $c \log P$.
  \end{itemize}
  We use \defn{bad interval} to denote a maximal interval $[t_1, t_2)$ where the data structure is in a bad state at every time $t \in [t_1, t_2)$.
\end{definition}

\begin{lemma}
  \label{lem:improved-cas/bad-interval-end}
  Conditioned on $\HistExt(t)$, with probability $1 - P^{-\Omega(1)}$, the bad interval containing $t$ (if exists) does not extend beyond $t + P^2$.
\end{lemma}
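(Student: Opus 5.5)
We will show that, starting from an arbitrary history $\HistExt(t)$, the improved algorithm's self-regulation via \CellW forces the set of operations in the Calling Phase to be exhausted and not replenished within $\mathrm{poly}(P)$ time with probability $1 - P^{-\Omega(1)}$. Once this happens, \CellC becomes quiet: there are no awaiting \CAS instructions and the potential is zero. That is a good state, so the bad interval has ended.

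\textbf{Step 1: Old Calling-Phase operations drain.} At time $t$ at most $P$ operations are ongoing. Consider those already in the Calling Phase or the Writing Phase. By \cref{prop:process-schedule-prob}, each of them is scheduled $\Theta(\log P)$ times within $O(\log P)$ time, unless it becomes waiting. Hence, within $O(\log P)$ time w.h.p., each one has either quit or invoked its \CAS. The hardware queue of \CellC contains at most $P$ instructions at any time, so it is drained within $P$ further steps. It remains to bound how many \emph{new} operations enter the Calling Phase while this draining happens.

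\textbf{Step 2: New entrants are blocked while \CellW keeps changing.} Every applied \CAS is followed by a \StoreRandom to \CellW by the same operation within $O(\log P)$ time w.h.p. This write lands no later than the \CellW busy interval, and \CellW is subject to at most $P$ queued stores. Therefore, while \CAS instructions keep being applied, \CellW changes at least once per window of roughly $2c\log P$ steps. This holds except for $2\log P$-bit collisions, each of probability $P^{-2}$.

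An operation in the Waiting Phase breaks out only if it observes no change across its wait window. However, a window of $2c\log P$ \Nop{}s spans wall-clock time at least $2c\log P$. So to break out while \CAS activity is ongoing, the operation needs either a collision or a gap of at least $\approx 2c\log P$ steps with no \CellW store.

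I expect this step to be the \textbf{main obstacle}. A clean gap in \CellW stores is exactly what allows new entrants, so the argument must instead show that any such gap is itself a recovery point. Suppose there is a window of about $2c\log P$ steps with no \CellW store. Then no \CAS was applied early in that window, up to the $O(\log P)$ write lag. By the argument of \cref{lem:basic-cas/enter-healthy-state}, the old Calling-Phase operations have then either quit or are stuck only on a queue that is already emptying. So the newly admitted entrants start from a state with few awaiting \CAS{}s. They behave like a fresh start of \cref{alg:basic-cas-register}, which by \cref{lem:basic-cas/enter-healthy-state} and \cref{clm:basic-cas/no-slow-operation} reaches a healthy (non-bad) state within $O(\log P)$ time w.h.p.

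\textbf{Step 3: Conclude.} We need a union bound over $\mathrm{poly}(P)$ steps. Split $[t,t+P^2)$ into $\Theta(P/\log P)$ epochs of length $\Theta(P\log P)$. In each epoch, either \CellW is written throughout, which drains the backlog of old operations and admits no new ones, or a gap occurs, which yields a healthy state w.h.p. by Step 2. Since the backlog is at most $P$ and is strictly decreasing in epochs of the first kind, a healthy state occurs before $t+P^2$. The total failure probability is dominated by the fingerprint collisions for \CellW (of order $P^2\cdot P^{-2}$ summed carefully over relevant loads) together with the $1/\mathrm{poly}(P)$ terms. This gives $1-P^{-\Omega(1)}$, matching the weaker bound claimed in the statement.
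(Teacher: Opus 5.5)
\textbf{Gap: collisions on \CellW cannot be union-bounded away.} Your Step~3 needs epochs of the first kind to ``admit no new ones,'' with collisions on \CellW charged to the failure probability. On this time scale, however, collisions are not rare. While the state stays bad, up to $P$ operations can sit in the Waiting Phase. Each issues an \instRR about every $2c\log P$ steps, and each such \Load breaks out with probability roughly $P^{-2}$. An epoch of length $\Theta(P\log P)$ can therefore contain $\Theta(P^2)$ such loads, i.e.\ $\Theta(1)$ expected breakouts. Over $[t,t+P^2)$ the expected number of collision entrants can be polynomially large (the paper's upper bound is $2cP\log P$). Your own estimate ``$P^2\cdot P^{-2}$'' is already $\Theta(1)$, not $P^{-\Omega(1)}$. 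So ``admits no new ones'' fails with constant probability per epoch, and the ``backlog strictly decreases'' induction is unsupported as written.

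\textbf{Missing idea: count the entrants instead of excluding them.} This needs the observation that badness forces progress. If the state is bad at time $s$, then w.h.p.\ a \CAS is applied to \CellC in $[s,s+c\log P)$. Either an awaiting \CAS gets invoked, or $\phi_s\ge c\log P$; in the latter case each active operation independently has probability core $p^{(q)}_s/4$ of proposing and invoking one, and Chernoff applies. The applying operation then lands a \StoreRandom on \CellW within another $c\log P$ steps. Consequences:
\begin{itemize}
\item A bad interval surviving to $t+P^2$ contains at least $P^2/(c\log P)$ \CAS applications, each by a distinct operation, since each operation applies at most one hardware \CAS.
\item Every operation that entered the Calling Phase before $t+2c\log P$ was already ongoing at time $t$, because the waiting loop alone takes $2c\log P$ steps. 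There are at most $P$ of these.
\item Every later entrant broke out of the Waiting Phase although \CellW was written within its last $2c\log P$ steps, which requires a collision. Summing $P\cdot 2c\log P/P^2$ over timesteps bounds their expected number by $2cP\log P$.
\end{itemize}
Markov then gives at most $P^{1.6}\ll P^2/(c\log P)$ entrants with probability $1-O(P^{-1/2})$. This Markov step is precisely why the statement only claims $1-P^{-\Omega(1)}$ rather than high probability.

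\textbf{Your gap case.} The same forcing observation disposes of it directly. A stretch of $2c\log P$ steps with no \CellW store certifies, by contrapositive, that the state at its start was not bad, and a non-bad state is all the lemma asks for, not a healthy one. Routing through \cref{lem:basic-cas/enter-healthy-state} is unnecessary, and as written it is unjustified. That lemma requires no awaiting \CAS, $\phi\le c\log P$, and no heavily-delayed operation, whereas you only argued ``few awaiting'' and never controlled the potential.
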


\begin{proof}
  We start by pointing out two observations about bad intervals.

  \begin{observation}
    \label{obs:improved-cas/cas-after-bad}
    Conditioned on $\HistExt(t)$ where the data structure is in a bad state at time $t$, with high probability in $P$, a \CAS instruction is applied on \CellC during $[t, \; t + c \log P)$.
  \end{observation}

  \begin{proof}
    By \cref{defn:improved-cas/bad-interval}, either of the following conditions hold:
    \begin{itemize}
      \item \BoldHeading{There exists an awaiting \CAS instruction at time $t$.}
      If this instruction is waiting in the queue, then one instruction is applied immediately in timestep $t$. Otherwise, this instruction is waiting to be invoked (its corresponding operation is pending on \instC). By \cref{prop:process-schedule-prob}, we know that this instruction will be invoked before time $t + c \log P$ with high probability in $P$, which further implies that at least one \CAS instruction is applied before time $t + c \log P$.
      \item \BoldHeading{$\phi_t \ge c \log P$.}
      According to \cref{prop:random-delay-for-operations}, every active operation $q \in A_t$ has an independent probability core%
      \footnote{We say events $\BK{X_i}$ have independent probability cores $\BK{p_i}$ to occur if there are random variables $\BK{Y_i}$ such that $Y_i \le X_i$ and $\Pr[Y_i] \ge p_i$ for all $i$, and that $Y_i$ are mutually independent. We will use this language for convenience.}
      of $p^{(q)}_t / 4$ to apply \instR before time $t + \frac{c}{2} \log P$, transition to \instC, then apply \instC to invoke a \CAS instruction before time $t + c \log P$, unless \CellC changes its value before then. By a Chernoff bound, we know that with high probability in $P$, at least one \CAS instruction is invoked before time $t + c \log P$, and thus at least one \CAS instruction is applied before time $t + c \log P$.
    \end{itemize}
    The observation holds in both cases.
  \end{proof}

  \begin{observation}
    \label{obs:improved-cas/cell-w-change-after-bad}
    Conditioned on $\HistExt(t)$ where the data structure is in a bad state at time $t$, with high probability in $P$, a \StoreRandom on \CellW is applied during $[t, \; t + 2c \log P)$.
  \end{observation}

  \begin{proof}
    By \cref{obs:improved-cas/cas-after-bad}, with high probability in $P$, a \CAS instruction is applied on \CellC during $[t, \; t + c \log P)$. Then, the operation that applies this \CAS instruction becomes pending on \instWW, and it invokes \instWW before time $t + 2c \log P$ with high probability in $P$. This further implies that at least one \StoreRandom on \CellW is applied before time $t + 2c \log P$.
  \end{proof}

  Now, we set $T = P^2$ and analyze the probability that the bad interval extends beyond $t + T$. We assume the conditions in \cref{obs:improved-cas/cas-after-bad,obs:improved-cas/cell-w-change-after-bad} hold for all $t' \in [t, \; t + T)$ before the bad interval ends, which happens with high probability in $P$ by a union bound. Since a \CAS instruction is applied every $c \log P$ timesteps, there must be at least $T / (c \log P)$ \CAS instructions getting applied in the bad interval. These \CAS instructions consist of the following categories:
  \begin{itemize}
    \item \BoldHeading{Awaiting \CAS instructions at time $t$.}
    There are at most $P$ such instructions.
    \item \BoldHeading{\CAS instructions invoked by operations that enter the Calling Phase during $[t, \; t + 2c \log P)$.}
    Since each operation waits for at least $2c \log P$ timesteps before it enters the calling phase, these operations are already ongoing at time $t$. So there are at most $P$ such operations.
    \item \BoldHeading{\CAS instructions invoked by operations that enter the Calling Phase during $[t + 2c \log P, \; t + T)$.}
    We bound the expected number of operations in this category as follows. For any timestep $t' \in [t + 2c \log P, \; t + T)$, we bound the expected number of operations that apply \instRR in timestep $t'$ and transition to the Calling Phase. In order for any operation $q$ to apply \instRR in timestep $t'$, it must have applied its previous \instSS or \instRR instruction before time $t' - 2c \log P$, which records the old value $\yOld$ of \CellW. By \cref{obs:improved-cas/cell-w-change-after-bad}, at least one \StoreRandom is applied during $[t' - 2c \log P, \; t')$, each of which writes a uniformly random $2 \log P$-bit string to \CellW that is independent of $\yOld$.

    In order for $q$ to apply \instRR in timestep $t'$ and transition to the Calling Phase, there must be a timestep $t'' \in [t' - 2c \log P, \; t')$ such that the \StoreRandom applied in timestep $t''$ writes the same string as $\yOld$. The probability of this event is bounded by $\frac{2c \log P}{P^2}$. When this event happens, at most $P$ operations apply \instRR in timestep $t'$ and transition to the Calling Phase; when this event does not happen, no operation can apply \instRR in timestep $t'$ and transition to the Calling Phase. Therefore, the expected number of operations that apply \instRR in timestep $t'$ and transition to the Calling Phase is at most $P \cdot \frac{2c \log P}{P^2} = \frac{2c \log P}{P}$. Taking a summation over all $t' \in [t + 2c \log P, \; t + T)$, we know the expected number of operations in this category is at most $T \cdot \frac{2c \log P}{P} = 2c P \log P$. By Markov's inequality, the number of operations in this category is at most $P^{1.6}$ with probability at least $1 - 1 / \sqrt{P}$.
  \end{itemize}
  Combining all categories together, we know that the number of \CAS instructions applied in the bad interval is at most $P + P + P^{1.6} = O(P^{1.6}) = o(T)$ with probability at least $1 - O(1 / \sqrt{P})$. This further implies that the bad interval does not extend beyond $t + T$ with probability at least $1 - O(1 / \sqrt{P}) = 1 - P^{-\Omega(1)}$.
\end{proof}

\begin{lemma}
  \label{lem:improved-cas/worst-case-recovery}
  Conditioned on $\HistExt(t)$, with probability $1 - P^{-\Omega(1)}$, the data structure enters a healthy state in $[t, \; t + 2P^2]$.
\end{lemma}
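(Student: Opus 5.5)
Starting from time $t$, we first escape the bad interval and then hand the good state to \cref{lem:basic-cas/enter-healthy-state}. That lemma needs three preconditions at the starting time: $\phi \le c\log P$, no awaiting \CAS instruction, and no heavily-delayed operation. The first two conditions together say exactly that the data structure is \emph{not} in a bad state (\cref{defn:improved-cas/bad-interval}). So the plan is to locate a time $t' \in [t, t + P^2]$ that is outside any bad interval and has no heavily-delayed operation. We then apply \cref{lem:basic-cas/enter-healthy-state} at $t'$, which yields a healthy state by time $t' + 2c^4\log P \le t + 2P^2$.

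\textbf{Step 1: leaving the bad interval.} If the data structure is not in a bad state at time $t$, set $t' = t$. Otherwise, \cref{lem:improved-cas/bad-interval-end} (conditioned on $\HistExt(t)$) shows that the bad interval containing $t$ ends at some time $t' \le t + P^2$, with probability $1 - P^{-\Omega(1)}$. At time $t'$ the state is by definition not bad, so conditions (1) and (2) of \cref{lem:basic-cas/enter-healthy-state} hold.

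\textbf{Step 2: no heavily-delayed operations at $t'$.} \cref{clm:basic-cas/no-slow-operation} gives this w.h.p.\ at any fixed time, conditioned on the history $\tfrac12\log^2 P$ steps earlier. The catch is that the relevant history extends before $t$, while we only condition on $\HistExt(t)$; the same issue arises for times $t'$ close to $t$. I would handle this by restricting the choice of $t'$ to $[t + \tfrac12\log^2 P,\; t + P^2 + \tfrac12\log^2 P]$. If $t$ is not in a bad state, run the same argument from $t + \tfrac12\log^2 P$ instead, i.e., take the first non-bad time after that point. The claim then applies at every candidate time with conditioning on history after $t$. Since $t'$ is random but confined to a polynomial-length window, we can take a union bound over all candidate times, as described in \cref{sec:conventions}. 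The overall window length is still at most $2P^2$ after adding the $2c^4\log P$ recovery time.

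\textbf{Step 3: recovery.} Apply \cref{lem:basic-cas/enter-healthy-state} at the random time $t'$, again justified by a union bound over the polynomially many possible values of $t'$. This gives a healthy state at some time in $[t'+1, t' + 2c^4\log P] \subseteq [t, t+2P^2]$ w.h.p. Summing the failure probabilities gives $P^{-\Omega(1)}$, which is dominated by Step 1.

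\textbf{Main obstacle.} The main subtlety is the bookkeeping in Step 2: making sure the "no heavily-delayed operation" guarantee and the lemma are applied at a data-dependent time with only $\HistExt(t)$ conditioned on. One further point to check is that \cref{lem:basic-cas/enter-healthy-state} is stated for \BasicCAS in isolation, whereas here operations also enter the Calling Phase from the Waiting Phase. I expect its proof to go through unchanged, because its arguments only concern operations already inside \BasicCAS and new arrivals starting with invocation probability $1/P^{2c^3}$, and the Waiting Phase only delays arrivals.
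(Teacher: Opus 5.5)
Your proposal is correct and follows the paper's own proof. The paper also shifts to time $t+\log^2 P$, so that \cref{clm:basic-cas/no-slow-operation} plus a union bound rules out heavily-delayed operations throughout the window; it then applies \cref{lem:improved-cas/bad-interval-end} there to obtain a non-bad time $t_1 \le t + P^2 + \log^2 P$, and invokes \cref{lem:basic-cas/enter-healthy-state} at $t_1$. Your offset of $\tfrac12\log^2 P$ instead of $\log^2 P$ is immaterial, and the paper makes the same tacit assumption you flag, namely that \cref{lem:basic-cas/enter-healthy-state} carries over unchanged to \BasicCAS running inside the Calling Phase.
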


\begin{proof}
  By \cref{clm:basic-cas/no-slow-operation} and a union bound, we know that conditioned on $\HistExt(t)$, with high probability in $P$, there is no \emph{heavily-delayed} operation at any time $t' \in [t + \log^2 P, \; t + 2P^2]$. We assume this condition holds in the rest of the proof.

  Applying \cref{lem:improved-cas/bad-interval-end} at time $t + \log^2 P$, we obtain that with probability $1 - P^{-\Omega(1)}$, there is a time $t_1 \in [t + \log^2 P, \; t + P^2 + \log^2 P]$ where the data structure is \emph{not} in a bad state. This combined with the fact that there is no heavily-delayed operation at time $t_1$ satisfies the required conditions for \cref{lem:basic-cas/enter-healthy-state}. \cref{lem:basic-cas/enter-healthy-state} further states that, conditioned on $\HistExt(t_1)$, the data structure enters a healthy state in $[t_1, \; t_1 + 2c^4 \log P] \subseteq [t, \; t + 2P^2]$ with high probability in $P$.
\end{proof}

\begin{corollary}
  \label{cor:improved-cas/worst-case-recovery}
  Conditioned on $\HistExt(t)$, the data structure enters a healthy state in $[t, \; t + O(P^2)]$ with high probability in $P$.
\end{corollary}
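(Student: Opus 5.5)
The plan is to amplify \cref{lem:improved-cas/worst-case-recovery} by repetition. That lemma gives, conditioned on any history $\HistExt(t)$, a failure probability of at most $P^{-\delta}$ for some constant $\delta > 0$: the probability that no healthy state is reached in $[t, \, t + 2P^2]$. The key feature is that the conditioning is on an \emph{arbitrary} history, so the lemma can be reapplied at later times no matter what happened before.

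Fix a constant $m$ with $m\delta \ge \kappa$, where $P^{-\kappa}$ is the target failure probability. Define the times $s_i \defeq t + 2iP^2$ for $i = 0, 1, \dots, m$. Let $E_i$ be the event that the data structure is in a healthy state at no time in $[s_i, \, s_i + 2P^2]$. Event $E_i$ is determined by the execution up to time $s_{i+1}$, so it is measurable with respect to $\HistExt(s_{i+1} + 1)$, and hence with respect to $\HistExt(s_{i+1}+1)$ for the next window as well; if needed, I would shift the windows by one timestep so that each window starts after the previous event is determined. Applying \cref{lem:improved-cas/worst-case-recovery} at time $s_i$, conditioned on $\HistExt(s_i)$, gives $\Pr[E_i \mid \HistExt(s_i)] \le P^{-\delta}$.

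Chaining these conditional bounds through the tower rule yields
\[
  \Pr\Bk[\big]{E_0 \land \cdots \land E_{m-1} \;\big|\; \HistExt(t)} \le P^{-m\delta} \le P^{-\kappa}.
\]
Outside this event, some window contains a healthy time. Every such window lies inside $[t, \, t + 2mP^2] = [t, \, t + O(P^2)]$, which proves the corollary.

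The only delicate point is the measurability and chaining of the conditional bounds, and it is routine because the lemma holds for arbitrary histories. No structural argument is needed beyond this.
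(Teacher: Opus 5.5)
Your proposal is correct and follows the paper's own route: the paper proves this corollary in one line by repeatedly applying \cref{lem:improved-cas/worst-case-recovery} starting from time $t$. You make that repetition explicit. You use a constant number $m$ of consecutive windows of length $2P^2$, chain the per-window failure bounds $P^{-\delta}$ with the tower rule, and get failure probability $P^{-m\delta}$. This is exactly the intended argument, and the measurability caveat is harmless.
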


\begin{proof}
  Starting from time $t$ and repeatedly applying \cref{lem:improved-cas/worst-case-recovery} yields the corollary.
\end{proof}

\begin{lemma}
  \label{lem:improved-cas/enter-healthy-state}
  For any given time $t \ge 0$, conditioned on $\HistExt(t - \Theta(P^2))$, the data structure enters a healthy state in time interval $[t, \; t + O(\log P)]$ with high probability in $P$.
\end{lemma}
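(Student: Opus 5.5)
The plan is to bootstrap from the worst-case recovery bound to an $O(\log P)$ recovery bound. Let $t_0 \defeq t - \Theta(P^2)$. By \cref{cor:improved-cas/worst-case-recovery}, conditioned on $\HistExt(t_0)$, the data structure is in a healthy state at some time in $[t_0, \, t_0 + O(P^2)]$ w.h.p. Choosing the hidden constant in $\Theta(P^2)$ large enough, this time lies before $t$. Once we know that healthy states recur, it suffices to prove a \emph{one-step renewal} statement. Conditioned on $\HistExt(s)$ with the structure healthy at time $s$, and with no heavily-delayed operations at the relevant times (which holds w.h.p. by \cref{clm:basic-cas/no-slow-operation} and a union bound over the polynomial window), the structure is healthy again at some time in $[s+1, \, s + O(\log P)]$ w.h.p. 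Iterating this from the first healthy time gives a chain of healthy times with gaps $O(\log P)$. A union bound over the $O(P^2)$ steps of the chain, using the convention of \cref{sec:conventions} for random times, shows that some element of the chain lands in $[t, \, t + O(\log P)]$.

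To prove the renewal step from a healthy time $s$, I would verify the hypotheses of \cref{lem:basic-cas/enter-healthy-state} at $s$. The queue has no awaiting \CAS and $\phi_s \le 1/P^{c^2} \le c\log P$. Assuming no heavily-delayed operation, the lemma directly yields a healthy time in $[s+1, \, s + 2c^4\log P]$ w.h.p. The only extra ingredient compared to the short-lived analysis is that the Calling Phase of \cref{alg:improved-cas} merely feeds operations into \BasicCAS. \cref{lem:basic-cas/enter-healthy-state} was proven for arbitrary adaptive arrivals into \BasicCAS, so the Waiting Phase acts as part of the adversarial user from the perspective of \CellC. I would check that nothing in that proof used the assumption that operations enter \instS at their invocation time, only that entries to \instS are adaptive and start with $p = 1/P^{2c^3}$. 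The \instWW instructions touch only \CellW, so they do not interfere with \CellC's queue.

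The main obstacle is that the renewal argument must hold for all times of an unbounded execution, while \cref{lem:basic-cas/enter-healthy-state} gives only a $1 - 1/\poly(P)$ guarantee per step. A rare failure could therefore throw the system into a long bad state, which is exactly the failure mode of the basic register. Here the conditioning on $\HistExt(t - \Theta(P^2))$ is what saves us. We only need the chain to survive for $O(P^2/\log P)$ renewal steps after the last ``restart'' provided by \cref{cor:improved-cas/worst-case-recovery}, which is polynomially many, so a union bound applies. Any earlier catastrophic failure is irrelevant because the corollary restarts the process from $t_0$ regardless of history. I would be careful that the high-probability exponent in \cref{lem:basic-cas/enter-healthy-state} can be taken larger than the $P^2$ factor. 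This holds since its failure probabilities come from Chernoff bounds with constants depending on $c$, which can be made arbitrarily large by enlarging the constants in the $O(\log P)$ windows.
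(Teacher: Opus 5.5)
Your proposal is correct and is essentially the paper's argument run forwards. The paper takes $t_1$ to be the last healthy time before $t + 3c^4\log P$ and rules out $t_1 < t$ by a union bound over fixed values of $t_1$, using \cref{clm:basic-cas/no-slow-operation} together with \cref{lem:basic-cas/enter-healthy-state} when $t_1 \ge t - \TThreshold$, and \cref{cor:improved-cas/worst-case-recovery} when $t_1 < t - \TThreshold$; this is exactly the union bound your chain-break event reduces to.

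One small fix is needed. \cref{clm:basic-cas/no-slow-operation} at a time $s$ requires the conditioning history to end at least $\frac{1}{2}\log^2 P$ before $s$. So start your chain at a healthy time after $t_0 + \frac{1}{2}\log^2 P$ (for example, by applying the corollary at that time), rather than at the first healthy time after $t_0$. That first healthy time might be $t_0$ itself, with a heavily-delayed operation already present.
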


\begin{proof}
  Let $t_1$ be the last time when the data structure is in a healthy state before $t + 3c^4 \log P$. If $t_1 \ge t$, then the desired property holds. Otherwise, the event $t_1 < t$ is witnessed by low-probability events depending on $t_1$. Let $\TThreshold = \Theta(P^2)$ be a sufficiently large threshold.

  \paragraph{For $t_1 \ge t - \TThreshold$.}
  Conditioned on $\HistExt(t_1 - \TThreshold) = \HistExt(t - \Theta(P^2))$, with high probability in $P$:
  \begin{itemize}
    \item Applying \cref{clm:basic-cas/no-slow-operation} at time $t_1$, we know that there is no heavily-delayed operation at time $t_1$.
    \item By \cref{lem:basic-cas/enter-healthy-state}, the data structure enters a healthy state before $t_1 + 2c^4 \log P < t + 3c^4 \log P$, contradicting the definition of $t_1$.
  \end{itemize}

  \paragraph{For $t_1 < t - \TThreshold$.}
  We apply \cref{cor:improved-cas/worst-case-recovery} at time $t - \TThreshold$, which states that conditioned on $\HistExt(t - \TThreshold)$, the data structure enters a healthy state in $[t - \TThreshold, \; t - \TThreshold + O(P^2)] \subseteq [t - \TThreshold, \; t - 1]$ with high probability in $P$. This contradicts the definition of $t_1$ with high probability in $P$.

  In summary, we know that $t_1$ takes each value in $[t - \TThreshold, \; t - 1]$ with low probability in $P$, and the probability that $t_1 < t - \TThreshold$ is also at most $1 / \poly(P)$. A union bound yields the lemma.
\end{proof}

\begin{lemma}[Calling Phase Latency]
  \label{lem:improved-cas/calling-phase}
  For a fixed time $t \ge 0$, with high probability in $P$, any operation $q$ that enters the Calling Phase at time $t$ completes the Calling Phase within $O(\log P)$ time.
\end{lemma}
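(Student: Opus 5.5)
The plan is to reduce the Calling Phase of $q$ to one run of \BasicCAS started at time $t$, and to bound its latency the same way as in \cref{lem:basic-cas/latency-guarantee}. The difference is that the global chain of healthy states $t^*_0<t^*_1<\cdots$, which only works for polynomially long executions, is replaced by the local recovery statement \cref{lem:improved-cas/enter-healthy-state}. The Calling Phase of $q$ consists of at most $2c^3\log P+O(1)=O(\log P)$ shared instructions of \BasicCAS before $q$ either returns or invokes its \instC. By \cref{prop:process-schedule-prob}, w.h.p.\ $q$ is scheduled that many times within $\TThreshold=O(\log P)$ timesteps. So w.h.p.\ $q$ either finishes the Calling Phase before $t+\TThreshold$, or invokes its \CAS instruction at some time $t_c<t+\TThreshold$. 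It remains to bound how long that \CAS waits in the queue of \CellC.

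For the waiting time I will use healthy states. By \cref{defn:basic-cas/healthy-state}, a healthy state at time $t_2$ has no awaiting \CAS instruction. So if the data structure is healthy at some $t_2>t_c$, then $q$'s \CAS has been applied before $t_2$. I apply \cref{lem:improved-cas/enter-healthy-state} at the fixed time $t+\TThreshold$. Conditioned on $\HistExt(t+\TThreshold-\Theta(P^2))$, and hence, by the convention of \cref{sec:conventions}, on any shorter history, the data structure w.h.p.\ enters a healthy state at some $t_2\in[t+\TThreshold,\;t+\TThreshold+O(\log P)]$. Since $t_2\ge t+\TThreshold>t_c$, the \CAS of $q$ is applied before $t_2$. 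Then $q$ finishes the Calling Phase (the step returning from \BasicCAS) within $O(\log P)$ time of $t$. Combining the two cases with a union bound gives the lemma.

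The main obstacle is conditioning. The lemma is stated for operations entering the Calling Phase at a fixed time $t$, and whether $q$ enters at $t$ is itself a random event. I will not condition on that event. Instead I will argue as the paper does with \cref{defn:high-probability-latency}: the bad events are (a) some operation in the Calling Phase at time $t$ is not scheduled enough times in $[t,t+\TThreshold)$, and (b) no healthy state occurs in $[t+\TThreshold,t+\TThreshold+O(\log P)]$. Each of these is a low-probability event defined at fixed times, so I can union bound over the at most $P$ operations for (a). I also need to check that a healthy state really certifies that a \CAS invoked earlier has been applied. An invoked \CAS is awaiting until applied, so there is no subtlety there. Finally, \cref{lem:improved-cas/enter-healthy-state} already accounts for heavily-delayed operations, so nothing extra is needed.
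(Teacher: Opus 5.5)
Your proposal is correct and follows the paper's proof. You bound the $O(\log P)$ scheduled steps until $q$ quits or invokes its \CAS via \cref{prop:process-schedule-prob}, then use \cref{lem:improved-cas/enter-healthy-state} to obtain a healthy state, and hence no awaiting \CAS, within $O(\log P)$ further time. The only difference is cosmetic: you anchor the healthy-state lemma at the fixed time $t+\TThreshold$ rather than at the random time the \CAS is invoked, which avoids the implicit union bound over random times that the paper handles via its conventions in \cref{sec:conventions}.
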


\begin{proof}
  The following holds with high probability in $P$:
  After entering the Calling Phase, the operation $q$ is scheduled for at most $\Theta(\log P)$ times before it either quits or invokes a \CAS instruction, which takes at most $\Theta(\log P)$ time by \cref{prop:process-schedule-prob}. Then, by \cref{lem:improved-cas/enter-healthy-state}, the next healthy state of the data structure happens within $O(\log P)$ time after the \CAS instruction is invoked (if $q$ does not quit before then), at which point the Calling Phase of $q$ is guaranteed to finish.
\end{proof}

\subsubsection{Analyzing Writing Phase}

Next, we analyze the latency of the Writing Phase. The proof is based on analyzing \emph{busy intervals} of \CellW.

\begin{definition}[Busy Interval]
  A \defn{busy interval} of \CellW is a \emph{maximal} interval $[t_0, t_1)$ where
  (1) one \StoreRandom is applied on \CellW in every timestep $t \in [t_0, t_1)$;
  (2) the instruction queue of \CellW is non-empty at each \emph{internal} time $t \in (t_0, t_1)$.
\end{definition}

\begin{claim}
  \label{clm:writing-phase/pending-w'}
  For any time $t \ge 0$, conditioned on an arbitrary $\HistExt(t - P)$, the number of operations pending on \instWW at time $t$ is at most $c \log P$ with high probability in $P$.
\end{claim}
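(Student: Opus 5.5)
An operation is pending on \instWW at time $t$ only if it applied a \CAS instruction on \CellC at some earlier timestep $t-j$ with $j \ge 1$ and has not been scheduled since. Because \CellC applies at most one hardware \CAS per timestep, at most one operation becomes pending on \instWW in each timestep. The plan is to charge every operation pending on \instWW at time $t$ to the timestep $t-j$ at which its \CAS was applied, and to show that only $O(\log P)$ of these operations survive unscheduled until time $t$.

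We split into two ranges of $j$. For $j > P$, we condition on $\HistExt(t-P)$. There are at most $P$ operations pending on \instWW at time $t-P$. By \cref{prop:process-schedule-prob}, each one is scheduled (and thus leaves \instWW) during $[t-P,\,t)$ except with probability $2^{-\Omega(P)}$. A union bound shows that none of them remains pending at time $t$.

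For $1 \le j \le P$, let $q_j$ be the (unique, if any) operation whose \CAS is applied in timestep $t-j$. Once it is applied, $q_j$ is ready and pending on \instWW. It stays pending at time $t$ only if it is not scheduled during $[t-j+1,\,t)$. By \cref{prop:coin-implies-scheduled}, this requires $\CoinSum_{q_j,[t-j+1,\,t)} = 0$. Define $Y_j \defeq \ind\bigl[\CoinSum_{q_j,[t-j+1,\,t)} = 0\bigr]$. Then the number of pending operations is at most $\sum_{j=1}^{P} Y_j$, and the window of $Y_j$ contains $\Omega(j/\tau)$ full coin windows.

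The main obstacle is dependence: the identity of $q_j$ is determined adaptively by the history. I would handle it exactly as in the proof of \cref{lem:busy-interval-length}. Order the indices by decreasing time of the \CAS application, i.e., $j = P, P-1, \dots, 1$. Reveal $\HistExt(t-j+1)$, which fixes $q_j$. Each operation appears as $q_j$ for at most one $j$, so the coins determining $Y_j$ are unrevealed when $q_j$ is fixed. By \cref{prop:random-delay-for-operations} these coins are independent fair coins given $\HistExt(t-j+1)$, $\Randomness$, and all previously revealed coins. Hence $\E[Y_j \mid \text{past}] \le 2^{-\Omega(j)}$.

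Summing gives a conditional mean of at most $\sum_{j} 2^{-\Omega(j)} = O(1)$. The multiplicative Azuma inequality~\cite{kuszmaul2025multiplicative} then shows $\sum_{j} Y_j \le c \log P$ with probability $1 - P^{-\Omega(c)}$, which is high probability in $P$ since $c$ is a large constant. Combining this with the $j > P$ case completes the argument.
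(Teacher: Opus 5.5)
Your proposal is correct and follows the paper's proof. At most one operation enters \instWW per timestep because \CellC applies at most one \CAS per timestep; the operation that entered at timestep $t-j$ stays unscheduled until $t$ only with probability $2^{-\Omega(j)}$; operations already pending at $t-P$ are cleared by \cref{prop:process-schedule-prob}; and a concentration bound finishes the argument. The only difference is that the paper compresses the last step into a Chernoff bound over ``independent probability cores,'' whereas you spell out the chronological revealing and multiplicative Azuma argument from \cref{lem:busy-interval-length}, which the paper cites as its template and which handles the adaptively chosen identities $q_j$ more explicitly.
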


\begin{proof}
  The proof is similar to \cref{lem:busy-interval-length}: in each timestep, at most one \CAS instruction gets applied, and its corresponding operation immediately transitions to the Writing Phase and becomes pending on \instWW. However, according to \cref{prop:process-schedule-prob}, the operation that transitions to \instWW in timestep $t - j$ for $j \ge 1$ has an independent probability core of $1 - 2^{-\Omega(j)}$ to invoke \instWW before time $t$, which means it is no longer pending on \instWW by time $t$. Combining with the fact that there are at most $P$ operations pending on \instWW at time $t - P$, a Chernoff bound on the aforementioned probability cores yields the claim.
\end{proof}

In the next part of the proof, we define $\psi_t$ to be the number of \defn{awaiting} \StoreRandom instructions on \CellW at time $t$, which includes both the \StoreRandom instructions waiting in the queue of \CellW and the operations pending on \instWW. One nice property of $\psi_t$ is that it is monotonically non-increasing in busy intervals: in every timestep in the busy interval, at most one operation transitions to \instWW, which increases $\psi$ by at most 1, while one \StoreRandom instruction gets applied, which decreases $\psi$ by 1. The busy interval ends when $\psi$ decreases to 0.

\begin{lemma}
  \label{lem:writing-phase/busy-interval-length}
  For any time $t \ge 0$, conditioned on $\HistExt(t - \Theta(P^2))$, the length of the busy interval starting at $t$ (if exists) is at most $O(\log P)$ with high probability in $P$.
\end{lemma}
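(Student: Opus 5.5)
We bound the busy interval length through $\psi$, which is the natural quantity here. Let the busy interval start at $t$ and have length $T$. Each timestep of the interval applies exactly one \StoreRandom on \CellW, and at most one operation transitions to \instWW per timestep (at most one hardware \CAS is applied on \CellC per timestep). So $\psi$ drops by at least $1$ minus the number of new transitions, and the interval ends once $\psi$ hits zero. This gives
\[
  T \le \psi_t + \#\{\text{operations that transition to \instWW during } [t, t+T)\}.
\]
The difficulty is that the second term can be as large as $T$ itself, one per step. So $\psi_t$ alone does not control $T$. We must also bound how many \CAS instructions on \CellC are applied during the busy interval.

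\textbf{Step 1: bound $\psi_t$.} By \cref{clm:writing-phase/pending-w'}, conditioned on $\HistExt(t-P)$, at most $c\log P$ operations are pending on \instWW at time $t$ w.h.p. Since the queue of \CellW is empty at time $t$ (the interval starts there), $\psi_t \le c\log P$ w.h.p.

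\textbf{Step 2: bound the \CAS applications during the interval.} By \cref{lem:improved-cas/enter-healthy-state}, conditioned on $\HistExt(t-\Theta(P^2))$, the data structure is in a healthy state at some $t_2\in[t, t+O(\log P)]$ w.h.p. At $t_2$ there are no awaiting \CAS instructions and $\phi_{t_2}\le P^{-c^2}$. Hence the \CAS instructions applied in $[t,t_2)$ number at most $t_2-t=O(\log P)$. For the period after $t_2$, apply \cref{clm:basic-cas/healthy-state-keep-silent} together with \cref{clm:basic-cas/no-slow-operation} at $t_2$ (the random time $t_2$ is handled by the union-bound convention of \cref{sec:conventions}). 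These give that w.h.p.\ no \CAS is invoked or applied during $[t_2, t_2+\tfrac{c^2}{2}\log P)$. Therefore only $O(\log P)$ operations transition to \instWW during $[t, t_2+\tfrac{c^2}{2}\log P)$.

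\textbf{Step 3: combine.} If $T > (t_2-t)+\tfrac{c^2}{2}\log P$, then $\psi$ decreases by exactly one per step over the quiet window. Its value at $t_2$ is at most $\psi_t + O(\log P)=O(\log P)$. Choosing the constants so that $\tfrac{c^2}{2}\log P$ exceeds this bound forces $\psi$ to reach zero inside the window, which contradicts $T$ being that large. So $T=O(\log P)$ w.h.p.

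\textbf{Main obstacle.} The main obstacle is making the constants in Step 3 compatible. The $O(\log P)$ bound on $t_2-t$ from \cref{lem:improved-cas/enter-healthy-state} has a hidden constant, roughly $3c^4$, and $\tfrac{c^2}{2}\log P$ may be smaller than it. If it is, I would iterate instead. From the healthy state at $t_2$, \cref{lem:basic-cas/enter-healthy-state} says the next non-quiet stretch reaches another healthy state within $2c^4\log P$ steps and adds at most $c^3\log P$ \CAS applications. In each quiet window of length $\tfrac{c^2}{2}\log P$, $\psi$ then decreases by $\Theta(c^2\log P)$, while each active stretch adds only $O(c^3\log P)$. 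This argument needs $\psi$ to have a net downward drift per cycle, which I expect is the step needing the most care. A cleaner alternative is an argument paralleling \cref{lem:busy-interval-length}: use \cref{prop:process-schedule-prob} to show that \instWW transitions occurring $j$ steps before $t$ survive to time $t$ with probability only $2^{-\Omega(j)}$, and use the healthy-state bound to show few \CAS applications happen during the interval.
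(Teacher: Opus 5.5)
There is a genuine gap in Step 3, and it is exactly the obstacle you flagged. Your bound $\psi_{t_2} \le \psi_t + (t_2 - t)$ carries a hidden constant of order $c^4$ (from \cref{lem:improved-cas/enter-healthy-state}). The quiet window from \cref{clm:basic-cas/healthy-state-keep-silent} has length only $\tfrac{c^2}{2}\log P$. You cannot fix this by ``choosing constants'', because both quantities are governed by the same $c$, and $3c^4 > c^2/2$. Your iterative fallback does not repair it either. By your own accounting, each active stretch adds $O(c^3\log P)$ to $\psi$, while each quiet window removes only $\Theta(c^2\log P)$, so the drift is upward rather than downward.

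The missing idea is that $\psi$ is non-increasing throughout a busy interval of \CellW. In every timestep of the interval, exactly one \StoreRandom on \CellW is applied, which removes one awaiting instruction. In the same timestep, at most one hardware \CAS is applied on \CellC, so at most one operation newly becomes pending on \instWW. You wrote down both facts in your first paragraph but used them only to bound $T$. Used instead to compare $\psi$ across time, they give $\psi_{t_2} \le \psi_t \le c\log P$, unless the interval has already ended before $t_2$. You then never need to count the \CAS applications in $[t, t_2)$. Since $c\log P < \tfrac{c^2}{2}\log P$, the quiet window beginning at $t_2$ drives $\psi$ to zero within $c\log P$ steps, so $T \le (t_2 - t) + c\log P = O(\log P)$. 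With this one observation, your Steps 1 and 2 go through unchanged, and the argument coincides with the paper's proof.
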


\begin{proof}
  Conditioned on $\HistExt(t - \Theta(P^2))$, the following holds with high probability in $P$:
  By \cref{clm:writing-phase/pending-w'}, we know that $\psi_{t} \le c \log P$ at time $t$ (when the busy interval starts).
  Let $t_1 \ge t$ be the next time when the data structure enters a healthy state, where $t_1 \le t + O(\log P)$ by \cref{lem:improved-cas/enter-healthy-state}. By the monotonicity of $\psi$, we know that $\psi_{t_1} \le \psi_t \le c \log P$ unless the busy interval ends before $t_1$.
  By \cref{clm:basic-cas/no-slow-operation,clm:basic-cas/healthy-state-keep-silent}, with high probability in $P$, there is no heavily-delayed operation at time $t_1$, and no \CAS instruction is invoked or applied during $[t_1, \; t_1 + \tfrac{c^2}{2} \log P)$. Therefore, $\psi$ decreases by 1 in each timestep after $t_1$ and before the busy interval ends. This further implies that the busy interval ends before $t_1 + c \log P = t + O(\log P)$.
\end{proof}

\begin{lemma}
  \label{lem:writing-phase/busy-interval-length-P^3}
  For any time $t \ge 0$, conditioned on $\HistExt(t)$, the busy interval containing $t$ (if exists) ends before $t + O(P^3)$ with high probability in $P$.
\end{lemma}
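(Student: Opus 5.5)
We mirror the proof of \cref{lem:busy-interval-length-large}: bound how many \StoreRandom instructions on \CellW can be applied during $[t, \, t + O(P^3))$ before the busy interval ends. Every such instruction comes from an operation in the Writing Phase, and each \CAS instruction applied on \CellC produces at most one \StoreRandom on \CellW. So it suffices to show that, w.h.p., fewer than $T \defeq \Theta(P^3)$ \CAS instructions are applied on \CellC during $[t, \, t+T)$ while the \CellW busy interval is still ongoing. The awaiting \StoreRandom{}s at time $t$ contribute at most $P$, so we only need to count \CAS instructions applied later.

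\textbf{Step 1.} First, while the busy interval is ongoing, \CellW is overwritten in every timestep with a fresh uniformly random $2\log P$-bit string. An operation in the Waiting Phase that applies \instRR at time $t'$ compares against a value $\yOld$ read at least $2c\log P$ steps earlier. It can enter the Calling Phase only if some \StoreRandom applied in $[t' - 2c\log P, \, t')$ wrote exactly $\yOld$; a union bound gives probability at most $2c\log P / P^2$ per timestep. Summing over at most $P$ operations per timestep and $T$ timesteps, the expected number of operations entering the Calling Phase during $[t + 2c\log P, \, t+T)$ before the busy interval ends is $O(T \log P / P) = O(P^2 \log P)$. Markov's inequality then bounds this count by $P^{2.5}$ with probability $1 - P^{-\Omega(1)}$. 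Operations that enter the Calling Phase in $[t, \, t + 2c\log P)$ or are already past it at time $t$ number at most $2P$. Hence at most $O(P^{2.5}) = o(T)$ \CAS instructions (and thus \StoreRandom{}s on \CellW) can be generated during the busy interval within the window. This forces the busy interval to end before $t + T$, with probability $1 - P^{-\Omega(1)}$.

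\textbf{Step 2.} Boost to high probability exactly as in \cref{lem:busy-interval-length-large} and \cref{cor:improved-cas/worst-case-recovery}. If the interval survives one window of length $T$, we condition on $\HistExt(t+T)$ and reapply Step 1 at time $t + T$; we repeat this a large constant number of times. The failure probabilities multiply to $P^{-\Omega(k)} = 1/\poly(P)$, and the total length stays $O(P^3)$.

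\textbf{Main obstacle.} The main subtlety is the union bound in Step 1: the timestep at which an operation applies \instRR is random and adaptively chosen. As in the footnote of \cref{lem:potential-drop-after-store}, I would handle this by union-bounding over the fixed timesteps $t''$ at which the colliding \StoreRandom might be applied, and using the fact that each fresh string is independent of $\yOld$ and of everything up to its application. The per-$t'$ bound must be stated as an event about whether any write in the window collides with any old value held by some waiting operation. Since there are at most $P$ distinct old values, the probability of a collision at a given write is at most $P \cdot P^{-2} = 1/P$. I would then count collisions rather than operations: each collision releases at most $P$ operations, giving $O(T \log P / P) \cdot P$, which is too weak by a factor of $P$. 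So I would instead bound the expectation per operation: an operation performs $O(T / \log P)$ checks in the window, each failing with probability $O(\log P / P^2)$. Summing over the at most $P$ concurrent operations per check time gives the claimed $O(P^2 \log P)$.
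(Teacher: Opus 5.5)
Your proof is correct, but it takes a different route from the paper's.

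\textbf{The paper's route.} It never counts Waiting-Phase escapes. It applies \cref{cor:improved-cas/worst-case-recovery} at time $t+\frac12\log^2 P$ to obtain a healthy time $t_1 \le t+O(P^2)$. It then uses \cref{clm:basic-cas/no-slow-operation,clm:basic-cas/healthy-state-keep-silent} to get a window $[t_1, t_1+2c\log P)$ in which no \CAS is applied on \CellC. Hence no operation newly becomes pending on \instWW, and $\psi$ drops by $2c\log P$ unless the busy interval ends. Since $\psi_t\le P$, iterating this $O(P)$ times gives the $O(P^3)$ horizon.

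\textbf{Your route.} You transplant the collision-counting argument of \cref{lem:busy-interval-length-large} and \cref{lem:improved-cas/bad-interval-end} directly to \CellW. Inside a \CellW busy interval the cell receives a fresh random string every step, so new operations essentially cannot leave the Waiting Phase. The supply of \StoreRandom{}s is then essentially the at most $P$ operations ongoing at $t$.

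\textbf{What each buys.} Your version is self-contained: it needs no healthy-state recovery and no heavily-delayed bookkeeping, and \cref{obs:improved-cas/cell-w-change-after-bad} holds automatically. It also gives an $O(P^2)$ horizon with only a constant number of rounds. The paper's version is shorter given the machinery already in place, but it pays for an $O(P)$-fold iteration and the weaker $P^3$ horizon.

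\textbf{On your main obstacle.} The per-timestep accounting you first wrote down is already rigorous, and it is simpler than the per-check version. Fix a process and a timestep $t'$. Consider the event that the process is pending on \instRR at time $t'$ and the string written in timestep $t'-1$ equals its current $\yOld$. This event has probability at most $P^{-2}$, because the process's state and $\yOld$ at time $t'$ are determined before that string is sampled (a \Load in timestep $t'-1$ returns the contents at time $t'-1$). Every escape at a time $t'>t$ during the busy interval lies in such an event. So the expected number of escapes in the window is at most $PT/P^2=T/P$, however the scheduler times the checks.

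The per-check bound, obtained as a union over the $2c\log P$ writes preceding a check, would need an extra step. The scheduler sees \CellW and could postpone the \instRR beyond that window, so you would have to invoke the random-delay property to rule this out. The per-timestep sum avoids the issue entirely.
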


\begin{proof}
  We apply \cref{cor:improved-cas/worst-case-recovery} on time $t + \frac{1}{2} \log^2 P$, which states that conditioned on $\HistExt\bk[\big]{t + \frac{1}{2} \log^2 P} \supseteq \HistExt(t)$, with high probability in $P$, the data structure enters a healthy state at time $t_1 \in \Bk[\big]{t + \frac{1}{2} \log^2 P, \; t + O(P^2)}$. By \cref{clm:basic-cas/no-slow-operation,clm:basic-cas/healthy-state-keep-silent}, conditioned on $\HistExt(t)$, with high probability in $P$, there is no heavily-delayed operation at time $t_1$, and no \CAS instruction is invoked or applied during $[t_1, \; t_1 + \tfrac{c^2}{2} \log P)$. In particular, no \CAS instruction is applied during $[t_1, \; t_1 + 2 c \log P)$, so $\psi$ decreases by at least $2 c \log P$ during this interval unless the busy interval ends before $t_1 + 2 c \log P$. This gives us a time $t' \defeq t_1 + 2 c \log P \le t + O(P^2)$ such that either $\psi_{t'} \le \psi_{t} - 2 c \log P$ or the busy interval ends before $t'$. Noting that $\psi_{t} \le P$, starting from time $t$ and repeatedly applying this argument for at most $O(P)$ times yields the desired bound.
\end{proof}

\begin{corollary}
  \label{cor:writing-phase/busy-interval-length}
  For a fixed time $t \ge 0$, conditioned on an arbitrary $\HistExt(t - \Theta(P^3))$, the busy interval containing $t$ (if exists) ends before time $t + O(\log P)$ with high probability in $P$.
\end{corollary}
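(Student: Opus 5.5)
This is the same two-scale argument as \cref{cor:busy-interval-endpoint} for the read/write register. We first use the coarse bound \cref{lem:writing-phase/busy-interval-length-P^3} to locate the start of the busy interval containing $t$ within a polynomial window. We then apply the fine bound \cref{lem:writing-phase/busy-interval-length} at that start time.

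\textbf{Step 1 (localize the start).} Assume $t$ lies in a busy interval of \CellW, and let $t_0 \le t$ be its starting point. Set $t^\star \defeq t - C P^3$, where $C$ is a large enough constant that the $O(P^3)$ in \cref{lem:writing-phase/busy-interval-length-P^3} is below $C P^3$. Conditioned on $\HistExt(t^\star) = \HistExt(t - \Theta(P^3))$, that lemma says that w.h.p.\ the busy interval containing $t^\star$ (if any) ends before $t$. If $t^\star$ lies in no busy interval, the conclusion is immediate. Either way, the interval containing $t$ begins strictly after $t^\star$, so $t_0 \in (t - CP^3, \; t]$ w.h.p.

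\textbf{Step 2 (apply the fine bound at the random start).} For each fixed $s \in (t - CP^3, \, t]$, \cref{lem:writing-phase/busy-interval-length} gives the following: conditioned on $\HistExt(s - \Theta(P^2))$, the busy interval starting at $s$ has length $O(\log P)$ w.h.p. Since $s - \Theta(P^2) \ge t - \Theta(P^3)$, the conditioning on $\HistExt(t - \Theta(P^3))$ is coarser. By the total-probability convention of \cref{sec:conventions}, the bound therefore still holds under $\HistExt(t - \Theta(P^3))$. There are only $O(P^3)$ candidate values of $s$, so a union bound shows that w.h.p.\ every busy interval starting in this window has length $O(\log P)$. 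This is exactly the implicit union bound over a polynomial range described in \cref{sec:conventions}.

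\textbf{Step 3 (conclude).} Combining the two steps, the busy interval containing $t$ starts at some $t_0 \le t$ and has length $O(\log P)$. Hence it ends before $t_0 + O(\log P) \le t + O(\log P)$, with high probability in $P$.

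The only delicate point is the conditioning bookkeeping. We must check that the history required by \cref{lem:writing-phase/busy-interval-length} at each candidate $s$, namely $\HistExt(s - \Theta(P^2))$, extends no earlier than the history we condition on. We must also check that the $\Theta(P^3)$ offset chosen in Step 1 dominates the hidden constant in \cref{lem:writing-phase/busy-interval-length-P^3}. Both checks are routine: choose the offset to be a sufficiently large multiple of $P^3$.
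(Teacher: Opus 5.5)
Your proposal is correct and follows exactly the paper's route: the paper obtains this corollary by the same argument as \cref{cor:busy-interval-endpoint}. It uses \cref{lem:writing-phase/busy-interval-length-P^3} to place the start of the busy interval containing $t$ within $\Theta(P^3)$ of $t$, then applies \cref{lem:writing-phase/busy-interval-length} at that random start via the implicit polynomial-range union bound of \cref{sec:conventions}, which your Step~2 spells out explicitly along with the conditioning offsets.
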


This corollary can be obtained from \cref{lem:writing-phase/busy-interval-length,lem:writing-phase/busy-interval-length-P^3} via the same argument as in \cref{cor:busy-interval-endpoint}.

\begin{corollary}
  \label{cor:writing-phase/psi-bound}
  For any time $t \ge 0$, conditioned on an arbitrary $\HistExt(t - \Theta(P^3))$, $\psi_t \le c \log P$ with high probability in $P$.
\end{corollary}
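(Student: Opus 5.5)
We derive \cref{cor:writing-phase/psi-bound} from \cref{cor:writing-phase/busy-interval-length} and \cref{clm:writing-phase/pending-w'}, by splitting on whether $t$ lies in a busy interval of \CellW. Recall that $\psi_t$ is the number of \StoreRandom instructions on \CellW waiting in the queue of \CellW, plus the number of operations pending on \instWW.

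\textbf{Case 1: $t$ is not inside a busy interval.} Then the instruction queue of \CellW is empty at time $t$, so $\psi_t$ equals the number of operations pending on \instWW. By \cref{clm:writing-phase/pending-w'}, conditioned on $\HistExt(t - P) \supseteq \HistExt(t - \Theta(P^3))$, this number is at most $c \log P$ w.h.p.

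\textbf{Case 2: $t$ lies in a busy interval $[t_0, t_1)$.}
\begin{enumerate}
  \item First localize $t_0$. Apply \cref{lem:writing-phase/busy-interval-length-P^3} at time $t - \Theta(P^3)$: w.h.p.\ the busy interval containing that time, if any, ends before $t$. Hence $t_0 \in [t - \Theta(P^3), \, t]$, a polynomial-size range.
  \item Use the monotonicity of $\psi$ inside busy intervals noted before \cref{lem:writing-phase/busy-interval-length}. In each timestep of the interval, at most one operation transitions to \instWW, since at most one \CAS is applied on \CellC per timestep, and exactly one \StoreRandom on \CellW is applied. Therefore $\psi_t \le \psi_{t_0}$.
  \item Bound $\psi_{t_0}$. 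At time $t_0$ the queue of \CellW is empty, because $t_0$ is the start of a maximal busy interval and the previous timestep either had an empty queue or ended a previous interval. So $\psi_{t_0}$ is again just the number of operations pending on \instWW.
  \item Apply \cref{clm:writing-phase/pending-w'} at the random time $t_0$, conditioned on $\HistExt(t_0 - P) \supseteq \HistExt(t - \Theta(P^3))$. This gives $\psi_{t_0} \le c \log P$ w.h.p.
\end{enumerate}
The application at a random time is justified by the implicit union bound over the polynomial window from step 1, as described in \cref{sec:conventions}.

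\textbf{Main obstacle.} The delicate point is the boundary bookkeeping in Case 2, steps 2 and 3. One must check that $\psi$ cannot increase by exactly one in the first timestep $t_0$ before the first \StoreRandom is applied. If it can, we instead obtain $\psi_t \le \psi_{t_0} + 1$. In that case we would absorb the additive $1$ by slightly adjusting constants, or note that the claim's proof in fact yields $(c-1)\log P$.
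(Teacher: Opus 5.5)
Your proof is correct and is essentially the paper's argument. If the queue of \CellW is empty, you apply \cref{clm:writing-phase/pending-w'} directly; otherwise you apply it at the start $t_0$ of the busy interval containing $t$ and use the monotonicity of $\psi$. The one difference is how you localize $t_0$. You bound it to a window of length $\Theta(P^3)$ via \cref{lem:writing-phase/busy-interval-length-P^3}, whereas the paper places it within $O(\log P)$ of $t$ via \cref{cor:writing-phase/busy-interval-length}; a polynomial window is all the implicit union bound needs.

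The boundary issue you flag does not arise. By definition of a busy interval, a \StoreRandom is applied on \CellW in timestep $t_0$ itself. In that timestep at most one \CAS is applied, so at most one operation newly becomes pending on \instWW. Hence $\psi_{t_0+1} \le \psi_{t_0}$, and no additive slack or adjustment of constants is needed.
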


\begin{proof}
  If the instruction queue of \CellW is empty at time $t$, then \cref{clm:writing-phase/pending-w'} directly yields the corollary. Otherwise, conditioned on $\HistExt(t - \Theta(P^3))$, with high probability in $P$, the following holds:
  \begin{itemize}
    \item We define $t_1 \le t$ as the starting point of the busy interval containing $t$, which means the instruction queue of \CellW is empty at time $t_1$. \cref{cor:writing-phase/busy-interval-length} implies $t_1 \ge t - O(\log P)$. Next, applying \cref{clm:writing-phase/pending-w'} at time $t_1$ yields that $\psi_{t_1} \le c \log P$, then the corollary follows by the monotonicity of $\psi$ in a busy interval. \qedhere
  \end{itemize}
\end{proof}

\begin{lemma}[Writing Phase Latency]
  \label{lem:improved-cas/writing-phase}
  For a fixed time $t \ge 0$, with high probability in $P$, any operation $q$ that enters the Writing Phase at time $t$ completes the Writing Phase within $O(\log P)$ time.
\end{lemma}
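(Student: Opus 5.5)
The plan mirrors the proof of \cref{lem:register/latency}, with \cref{cor:writing-phase/busy-interval-length} playing the role of \cref{cor:busy-interval-endpoint}. An operation $q$ that enters the Writing Phase at time $t$ is pending on \instWW. The phase finishes once $q$ has been scheduled once to invoke its \StoreRandom on \CellW and that instruction has been applied. I will bound these two pieces separately and combine them with a union bound.

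\textbf{Step 1: invocation.} By \cref{prop:random-delay-for-operations} and \cref{prop:process-schedule-prob} with $k = \Theta(\log P)$, conditioned on $\HistExt(t)$, with high probability in $P$ the operation $q$ is scheduled at least once in $[t, \; t + \TThreshold)$, where $\TThreshold = \Theta(\log P)$ is a sufficiently large multiple of $\log P$. So $q$ invokes its \StoreRandom at some time $t_1 < t + \TThreshold$. If the queue of \CellW happens to be empty when the instruction is enqueued, it is applied in the same timestep and we are done.

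\textbf{Step 2: waiting in the queue.} Otherwise the instruction joins a busy interval of \CellW. Let $t_2 > t_1$ be the first time after $t_1$ at which the queue of \CellW is empty; by then $q$'s instruction has been applied. I apply \cref{cor:writing-phase/busy-interval-length} at the fixed time $t + \TThreshold \ge t_1 + 1$, conditioned on $\HistExt(t + \TThreshold - \Theta(P^3))$. It gives that the busy interval containing $t + \TThreshold$, if one exists, ends before $t + \TThreshold + O(\log P)$. Hence $t_2 \le t + 2\TThreshold$, since either the queue is already empty somewhere in $(t_1, t + \TThreshold]$, or the busy interval containing $t + \TThreshold$ ends within $O(\log P)$ more steps.

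\textbf{Main obstacle.} The delicate point is the conditioning. \cref{cor:writing-phase/busy-interval-length} is stated for a fixed time conditioned on an older history, whereas Step 1 conditions on $\HistExt(t)$. I will handle this as in \cref{lem:register/latency}: apply both statements at fixed times ($t$ and $t + \TThreshold$), each conditioned on its own (older) history, and combine them by a union bound, using the law of total probability as in \cref{sec:conventions}. The high-probability statement only needs the unconditional probability, so this is enough to get latency at most $2\TThreshold = O(\log P)$.
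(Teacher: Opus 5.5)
Your proposal is correct and follows essentially the same route as the paper: you bound the time until $q$ invokes its \StoreRandom on \CellW via \cref{prop:process-schedule-prob}, and then bound its time in the queue by \cref{cor:writing-phase/busy-interval-length}. The only difference is cosmetic: the paper applies the corollary at the random time $t_1 + 1$, relying on the implicit union bound of \cref{sec:conventions}, whereas you apply it at the fixed time $t + \TThreshold$ as in \cref{lem:register/latency}, which avoids that step.
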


\begin{proof}
  Each operation $q$ that enters the Writing Phase first becomes pending on \instWW, which takes at most $O(\log P)$ time before it invokes \instWW with high probability in $P$. Let $t_1 \le t + O(\log P)$ be the timestep when $q$ invokes \instWW. Then, by \cref{cor:writing-phase/busy-interval-length} at time $t_1 + 1$, the busy interval containing $t_1 + 1$ (if exists) ends before $t_1 + O(\log P)$ with high probability in $P$, at which point the \instWW instruction is guaranteed to be applied, so $q$ completes the Writing Phase within $O(\log P)$ time.
\end{proof}

\subsubsection{Analyzing Waiting Phase and Total Latency}

\begin{lemma}[Waiting Phase Latency]
  \label{lem:improved-cas/waiting-phase}
  For a fixed time $t \ge 0$, with high probability in $P$, any operation $q$ that enters the Waiting Phase at time $t$ completes the Waiting Phase within $O(\log P)$ time.
\end{lemma}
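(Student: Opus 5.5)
The plan is to show that, w.h.p., within $O(\log P)$ time after $t$ there is a window of length $2c\log P$ in which no \StoreRandom is applied to \CellW. The operation $q$ must also execute an \instSS or \instRR instruction at its left end and the matching \instRR at its right end. Once this happens, $q$ reads the same value $\yNow=\yOld$ and breaks out of the loop. Each loop iteration consists of $2c\log P$ \Nop steps plus one \Load. By \cref{prop:process-schedule-prob}, a single iteration takes $O(\log P)$ wall-clock time w.h.p. Since $q$ is the only process whose schedule matters here, it therefore suffices to show that $q$ needs only $O(1)$ iterations w.h.p.

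The key step is to find a long quiet period for \CellW shortly after $t$. I would first apply \cref{lem:improved-cas/enter-healthy-state} to obtain a time $t_1\in[t,t+O(\log P)]$ at which the data structure is healthy. Next I would apply \cref{clm:basic-cas/no-slow-operation} to rule out heavily-delayed operations at $t_1$. Then \cref{clm:basic-cas/healthy-state-keep-silent} shows that no \CAS instruction is invoked or applied to \CellC during $[t_1,t_1+\tfrac{c^2}{2}\log P)$. In this interval no new operation enters the Writing Phase, so the only \StoreRandom{}s on \CellW come from the $\psi_{t_1}$ awaiting ones. \cref{cor:writing-phase/psi-bound} gives $\psi_{t_1}\le c\log P$, and by \cref{cor:writing-phase/busy-interval-length} (applied at times after $t_1$) together with \cref{prop:process-schedule-prob} on the pending \instWW operations, all of them are applied by some time $t_2\le t_1+O(\log P)$. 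I would choose the constants so that $t_2+4c\log P\le t_1+\tfrac{c^2}{2}\log P$; this holds because $c$ is large. Then $[t_2,t_2+4c\log P)$ contains no \StoreRandom to \CellW. All the conditionings are on histories $\Theta(P^3)$ earlier, which is fine because the lemma conditions on nothing; the random times are handled by the union-bound convention of \cref{sec:conventions}.

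Finally, $q$'s loop iterations are back-to-back, and each iteration spans at least $2c\log P$ timesteps. Some iteration therefore begins (with its reference \Load of $\yOld$) at a time in $[t_2,t_2+\text{one iteration length}]$. I would then need that iteration to finish inside the quiet window, i.e.\ within $4c\log P$ wall-clock time. This is the main obstacle: the scheduler may run $q$ at up to one step per timestep, but it may also slow $q$ down. The lower bound on speed is free, since an iteration can't take fewer than $2c\log P$ timesteps. For the upper bound, \cref{prop:process-schedule-prob} only gives $4\tau\cdot 2c\log P$ time w.h.p., which exceeds $4c\log P$. I would fix this by making the quiet window longer: I would take it of length $C'\log P$ with $C'=\Theta(c\tau)$, which forces $\tfrac{c^2}{2}$ to dominate $c\tau$. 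That is fine because $c\ge 2^{4\tau+1}$. With this window, some full iteration of $q$ lies inside it w.h.p., and in that iteration the two \Load{}s of \CellW read equal values, so $q$ breaks. The total time is $O(\log P)$ to reach $t_2$, plus $O(\log P)$ for the window, plus one iteration. A final union bound over the $O(1)$ high-probability events completes the argument.
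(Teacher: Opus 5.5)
Your proposal is correct and follows the paper's proof essentially step for step. It uses the same chain: a healthy time $t_1\le t+O(\log P)$; no \CAS invoked or applied during $[t_1,\,t_1+\tfrac{c^2}{2}\log P)$, so no new operation becomes pending on \instWW; the bound $\psi_{t_1}\le c\log P$ yielding a long window in which \CellW does not change; and the constant comparison $8c\tau\log P\ll\tfrac{c^2}{2}\log P$ (the paper's $(2c\log P)(4\tau)\le\tfrac{c^2}{8}\log P$) forcing a full waiting-loop iteration of $q$ into that window.

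Your explicit treatment of operations still pending on \instWW at $t_1$ is in fact more careful than the paper, which asserts $\psi$ stays zero from $t_1+c\log P$ on without addressing them. Do make sure the drain bound comes directly from that argument: the pending operations invoke within $O(c\tau\log P)$ by \cref{prop:process-schedule-prob}, after which at most $\psi_{t_1}\le c\log P$ queued \StoreRandom{}s clear one per step. Do not take it from \cref{cor:writing-phase/busy-interval-length}: its hidden constant, as proved, is of order $c^4$ (inherited from \cref{lem:improved-cas/enter-healthy-state}) and would not fit inside the $\tfrac{c^2}{2}\log P$ silent window.
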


\begin{proof}
  With high probability in $P$, we have the following.
  \begin{itemize}
    \item Let $t_1 \ge t$ be the first time at or after $t$ when the data structure is in a healthy state. By \cref{lem:improved-cas/enter-healthy-state}, there is $t_1 \le t + O(\log P)$.
    \item By \cref{clm:basic-cas/no-slow-operation,clm:basic-cas/healthy-state-keep-silent}, there is no heavily-delayed operation at time $t_1$, and no \CAS instruction is invoked or applied during $[t_1, \; t_1 + (c^2 / 2) \log P)$. This implies that no operation can start pending on \instWW before time $t_1 + (c^2 / 2) \log P$. Hence, in time interval $[t_1, \; t_1 + (c^2 / 2) \log P)$, $\psi$ will decrease by one for each timestep until the busy interval ends.
    \item \cref{cor:writing-phase/psi-bound} implies that $\psi_{t_1} \le c \log P$. This combined with the previous statement implies that the busy interval ends before $t_1 + c \log P$, and that $\psi_{t'}$ remains zero for all $t' \in [t_1 + c \log P, \; t_1 + (c^2 / 2) \log P]$, i.e., there is an interval of length $(c^2 / 2 - c) \log P$ where the value of \CellW does not change.
    \item By \cref{prop:process-schedule-prob}, during $[t, \; t + P]$, every iteration of the waiting loop of operation $q$ takes at most $(2 c \log P) \cdot (4 \tau) \le (c^2 / 8) \log P$ steps. Therefore, one of the iterations will reside in the interval $[t_1 + c \log P, \; t_1 + (c^2 / 2) \log P]$ and detects no change in \CellW. The Waiting Phase completes immediately after that iteration (if not earlier). This implies that $q$ completes the Waiting Phase within $O(\log P)$ time.
    \qedhere
  \end{itemize}
\end{proof}

Combining \cref{lem:improved-cas/waiting-phase,lem:improved-cas/calling-phase,lem:improved-cas/writing-phase} yields the desired lemma \ref{lem:improved-cas/latency-guarantee}.

\ImprovedCASLatencyGuarantee*

\subsection{Putting Pieces Together}
\label{sec:cas/conclusion}

We conclude the section by combining the pieces together to prove
\cref{thm:cas-register-main}.

\CasRegisterMain*

\begin{proofof}{\cref{thm:cas-register-main}}
  Our improved CAS register algorithm in \cref{sec:improved-cas-register} maintains two words in the shared memory: \CellC contains a pair $(x, f)$, where $x$ is a $\l$-bit integer that represents the current value of the CAS register, and $f$ is a $2 \log \log P$-bit fingerprint; \CellW contains a random $2 \log P$-bit string. Thus, the word size of the algorithm needs to be $w \ge \max\BK{\l + 2 \log \log P, \; 2 \log P}$. In addition to the two words in the shared memory, each operation only uses $O(1)$ words of local memory.

  In our algorithm, every \Read operation is completed by a \Load instruction to \CellC, and thus completes within $O(\log P)$ time with high probability in $P$, and $O(1)$ time in expectation; \opCAS operations are implemented using \cref{alg:improved-cas}. According to \cref{lem:improved-cas/latency-guarantee}, for any fixed time $t$, any \opCAS operation invoked at time $t$ completes within $O(\log P)$ time with high probability in $P$. Moreover, \cref{lem:improved-cas/linearizable} states that the improved CAS register algorithm is linearizable and lock-free.
  This completes the proof.
\end{proofof}

\section{Lower Bound}
\label{sec:lower-bound}

In this section, we prove a lower bound on the expected latency of \emph{1-bit max registers}. Each 1-bit max register maintains a single bit $x$ that is initially 0, and supports two types of operations: a \Read operation returns the current value of $x$, while a \Set operation sets $x$ to 1. This problem is arguably the simplest data structure problem that allows updates to the state of the data structure, and it can be reduced to a variety of fundamental and well-studied data structure problems in the concurrent setting, including but not limited to read/write registers, CAS registers, test-and-set objects, sticky bits, and more (see the textbook~\cite{HS12}).
Therefore, the lower bound we prove for 1-bit max registers directly applies to these problems as well.

\begin{theorem}[Lower Bound for 1-bit Max Registers]
  \label{thm:lower-bound-1-bit-max-reg}
  Suppose $\mathcal{A}$ is a \emph{sequentially correct} implementation of a 1-bit max register in the \emph{asynchronous CRQW} model that uses $M$ machine words in the shared memory, and for any execution of at most $P$ operations under \emph{one-shot} inputs, every \Set operation completes within $L \in \Omega(\log P)$ timesteps with high probability in $P$. Then, there exists a \emph{one-shot} input in which some \Set operation has expected latency $\Omega\bk[\big]{\log_{ML} P}$. In particular, if $M, L \le \polylog P$, then the expected latency becomes $\Omega\bk[\big]{\log P / \log \log P}$.
\end{theorem}

Before we prove the theorem, we make some comments on the setting.
\begin{itemize}
  \item The lower bound does not require the implementation to be \emph{linearizable}. It only needs an extremely weak correctness assumption, which we call \emph{sequential correctness}: if at any time in the execution, there is at most one ongoing operation, then the data structure must be correct. This forces each \Set operation to invoke a state-changing instruction to the shared memory, unless it detects that other operations have made any change in the shared memory.
  \item The lower bound holds in the \emph{asynchronous CRQW} model~\cite{GibbonsMR98b},
  which is the same as our \emph{stochastic CRQW} model except that in every timestep, all ready operations are guaranteed to be scheduled. Since the \emph{asynchronous CRQW} model is weaker than the stochastic CRQW model, the lower bound in \cref{thm:lower-bound-1-bit-max-reg} directly applies to the stochastic CRQW model as well.
  \item Instead of assuming the user gives adaptive inputs as in \cref{sec:model}, we only need a much weaker user model for this lower bound on latency: \defn{one-shot input}, in which the user can only invoke operations in the first timestep in the execution. This definition is weaker than many natural user models that one might consider, including the adaptive user model in \cref{sec:model}, so the lower bound applies to all these user models. In addition, recall that it is challenging to define expected latency without conditioning on the invocation of an operation when the user is adaptive. The one-shot input model does not suffer from this issue, as the operations invoked by the user are fixed in advance and not adaptive to the execution.
  \item The lower bound of $\Omega\bk[\big]{\log P / \log \log P}$ assumes a polylogarithmic space blow-up ($M \le \polylog P$) and a polylogarithmic high-probability latency ($L \le \polylog P$). Whether we can remove these assumptions remains an open question.
  \item As mentioned in \cref{sec:model}, in this lower bound, we assume each operation sees the index of the process that is running the operation.
\end{itemize}

\begin{proof}
  For each set $S \subseteq [P]$, let $\alpha_S$ denote the one-shot execution in which exactly the processes in $S$ begin a \Set operation at time $0$.

  For each process $i \in [P]$, let $T_i$ be the timestep in which process $i$ invokes its first state-changing instruction in the singleton execution $\alpha_{\{i\}}$. Since $\alpha_{\{i\}}$ contains only process $i$, the random variable $T_i$ depends only on $i$'s private randomness.

  It is convenient to analyze the following \emph{decoupled game}. For a fixed set $S$, each process $i \in S$ runs independently according to the singleton execution $\alpha_{\{i\}}$. Each process halts immediately after it invokes its first state-changing instruction, but the simulation does \emph{not} stop when the first such instruction is invoked: the other processes continue running independently until they reach their own first state-changing instructions. We define
  \[
    T^*(S) \defeq \min_{i \in S} T_i
  \]
  as the first timestep when any process in the decoupled game invokes a state-changing instruction, and define the \defn{loss} of the game by
  \[
    \mathcal{L}(S) \defeq \#\{i \in S : T_i = T^*(S)\},
  \]
  i.e., the number of processes that invoke the earliest state-changing instructions simultaneously.

  This decoupled game has the same first-invocation time and the same loss as the real execution $\alpha_S$. Indeed, before time $T^*(S)$, no process invokes a state-changing instruction in its singleton execution, so the shared memory in $\alpha_S$ remains in its initial state up to time $T^*(S)$. Therefore, every process $i \in S$ sees exactly the same execution prefix in $\alpha_S$ as in $\alpha_{\{i\}}$ up to time $T^*(S)$, and so process $i$ invokes its first state-changing instruction in $\alpha_S$ exactly in timestep $T_i$. After timestep $T^*(S)$, the loss is already determined, so the later interaction among the processes is irrelevant. A key property of the decoupled game is that the random variables $\BK{T_i}_{i \in S}$ are independent, which will facilitate the analysis.

  Sequential correctness implies that in $\alpha_S$, no process can complete its \Set before time $T^*(S)$: if some process returned earlier, then the same process would also return earlier in its singleton execution, without any state-changing instruction having been invoked, and a subsequent \Read would incorrectly return $0$ in a sequential execution.

  The high-probability latency bound implies that
  \[
    \Pr[\mathcal{L}(S) \le ML] \ge 1 - 1/\poly(P)
  \]
  for every set $S$. Indeed, if $\mathcal{L}(S) > ML$, then in the first timestep $T^*(S)$ when any state-changing instruction is invoked, some memory cell receives at least $L + 1$ such instructions by the pigeonhole principle. Since the instructions are queued and applied one at a time, one of them is applied only after time $L$, contradicting the high-probability latency bound.

  For each process $i$, define
  \[
    p_t(i) \defeq \Pr[T_i < t],
  \]
  and for each set $S \subseteq [P]$, define
  \[
    p_t(S) \defeq \sum_{i \in S} p_t(i).
  \]

  \begin{claim}
    \label{clm:lower-bound/no-large-jump}
    For every set $S \subseteq [P]$ and every time $t$, it is impossible to have
    \[
      p_t(S) \le 1/2
      \qquad\text{and}\qquad
      p_{t+1}(S) \ge 3ML.
    \]
  \end{claim}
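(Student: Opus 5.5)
We argue by contradiction. Suppose $p_t(S) \le 1/2$ and $p_{t+1}(S) \ge 3ML$. We will show that in the decoupled game for $S$, the loss $\mathcal{L}(S)$ exceeds $ML$ with constant probability. This contradicts $\Pr[\mathcal{L}(S) \le ML] \ge 1 - 1/\poly(P)$.

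\textbf{Step 1: no process starts early.} Since the $T_i$ are independent,
\[
\Pr[T^*(S) \ge t] = \prod_{i\in S}(1-p_t(i)) \ge 1 - p_t(S) \ge 1/2,
\]
by the union bound. So with probability at least $1/2$, no process in $S$ invokes a state-changing instruction before timestep $t$.

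\textbf{Step 2: many processes fire exactly at timestep $t$.} Let $X \defeq \#\{i \in S : T_i = t\}$. This is a sum of independent indicators with
\[
\E[X] = \sum_{i\in S}\bigl(p_{t+1}(i)-p_t(i)\bigr) = p_{t+1}(S)-p_t(S) \ge 3ML - 1/2.
\]
Condition on the event $E \defeq \{T^*(S) \ge t\}$, i.e., $T_i \ge t$ for all $i$. Because the $T_i$ are independent, this conditioning keeps the indicators independent. It also only increases each probability:
\[
\Pr[T_i = t \mid T_i \ge t] = \frac{\Pr[T_i = t]}{1-p_t(i)} \ge \Pr[T_i = t].
\]
Hence $\E[X \mid E] \ge 3ML - 1/2 \ge 2.5ML$. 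A Chernoff lower-tail bound then gives $\Pr[X \le ML \mid E] \le e^{-\Omega(ML)}$. Since $L = \Omega(\log P)$ and $M \ge 1$, this is at most a small constant, in fact $1/\poly(P)$.

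\textbf{Step 3: conclude.} On the event $E \cap \{X > ML\}$, some process fires at $t$, so $T^*(S) = t$ and $\mathcal{L}(S) = X > ML$. This event has probability at least $\tfrac12(1 - e^{-\Omega(ML)}) \ge 1/4$, which contradicts the high-probability bound on the loss established just before the claim.

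The main subtlety is Step 2: we must keep independence under the conditioning and get a lower bound on the conditional mean. As computed above, conditioning each $T_i$ on $T_i \ge t$ only increases $\Pr[T_i = t]$, so this works. Note that $t$ is a fixed timestep, so the constants $3$ and $1/2$ leave enough slack.
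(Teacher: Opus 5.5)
Your proof is correct and follows the paper's argument: with constant probability no process in $S$ fires before timestep $t$ while more than $ML$ fire exactly at $t$, which forces $\mathcal{L}(S) > ML$ and contradicts the high-probability bound on the loss. The only difference is cosmetic: the paper combines $\Pr[\text{no early firing}] \ge 1/2$ with an unconditional Chernoff bound on $\#\{i \in S : T_i = t\}$ via a union bound, so your conditioning on $E$, which you handle correctly, is not actually needed.
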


  \begin{proof}
    For each $i \in S$, let $X_i \defeq \ind[T_i < t]$ and $Y_i \defeq \ind[T_i = t]$.
    Because the decoupled game runs the processes independently, the random variables $\BK{X_i, Y_i}_{i \in S}$ are independent.

    Let $X \defeq \sum_{i \in S} X_i$ and $Y \defeq \sum_{i \in S} Y_i$. Then $\E[X] = p_t(S) \le 1/2$, so Markov's inequality gives
    \[
      \Pr[X = 0] \ge 1/2. \numberthis \label{eq:lower-bound/prob-X-zero}
    \]
    Also, $\E[Y] = p_{t+1}(S) - p_t(S) \ge 3ML - 1/2 > 2ML$.
    Since $Y$ is a sum of independent Bernoulli random variables, a Chernoff bound and the assumption $L \in \Omega(\log P)$ imply
    \[
      \Pr[Y > ML] \ge 1 - 1/P^{\Omega(1)}. \numberthis \label{eq:lower-bound/prob-Y-large}
    \]
    Combining \eqref{eq:lower-bound/prob-X-zero} and \eqref{eq:lower-bound/prob-Y-large}, we know that $X = 0$ and $Y > ML$ holds simultaneously with constant probability. On this event, no process invokes a state-changing instruction before time $t$, while more than $ML$ processes do so in timestep $t$. Hence $T^*(S) = t$ and $\mathcal{L}(S) = Y > ML$, contradicting the high-probability latency guarantee.
  \end{proof}

  \begin{claim}
    \label{clm:growth-bound}
    For any set $S$ and time $t$, we have
    \[
      p_{t+1}(S) \le 12ML \cdot (p_t(S) + 1).
    \]
  \end{claim}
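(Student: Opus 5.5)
We argue by contradiction using \cref{clm:lower-bound/no-large-jump}, after splitting $S$ into two halves so that the first hypothesis of that claim, $p_t(\cdot) \le 1/2$, can be made to hold. Assume $p_{t+1}(S) > 12ML\,(p_t(S)+1)$. Then in particular $p_{t+1}(S) > 12ML$, and also $p_{t+1}(S) > 12ML \cdot p_t(S)$.

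\textbf{Case $p_t(S) \le 1/2$.} Apply \cref{clm:lower-bound/no-large-jump} to $S$ itself. We have $p_{t+1}(S) > 12ML \ge 3ML$, which the claim forbids.

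\textbf{Case $p_t(S) > 1/2$.} Here we carve out a subset $S' \subseteq S$ with $p_t(S') \le 1/2$ and $p_{t+1}(S') \ge 3ML$, and apply the claim to $S'$.

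1. Write $a_i \defeq p_t(i)$ and $b_i \defeq p_{t+1}(i)$, so $a_i \le b_i \le 1$ for each $i$. Call $i$ \emph{cheap} if $a_i \le b_i/(6ML)$.

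2. The non-cheap indices contribute little to $p_{t+1}$. Indeed, $\sum_{\text{non-cheap}} b_i < 6ML \cdot p_t(S) < p_{t+1}(S)/2$. Hence the cheap set $C$ satisfies $p_{t+1}(C) > p_{t+1}(S)/2 > 6ML \cdot p_t(S)$ and also $p_{t+1}(C) > 6ML$.

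3. Build $S'$ greedily by adding cheap indices in arbitrary order until $p_{t+1}(S') \ge 3ML$. Each added index raises $p_{t+1}$ by at most $1$. So at termination $p_{t+1}(S') < 3ML + 1$, and the process does terminate because $p_{t+1}(C) > 6ML$.

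4. Since every index in $S'$ is cheap, $p_t(S') \le p_{t+1}(S')/(6ML) < (3ML+1)/(6ML) \le 1/2 + 1/(6ML)$. This is at most $1/2$ only up to an additive slack of $1/(6ML)$.

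The main obstacle is this additive slack: it breaks the strict hypothesis $p_t(S') \le 1/2$ of \cref{clm:lower-bound/no-large-jump}. The fix is to stop the greedy at a slightly lower target. Stopping once $p_{t+1}(S') \ge 3ML - 1$ gives $p_{t+1}(S') < 3ML$ and hence $p_t(S') < 1/2$. The cost is that the claim then needs $p_{t+1}(S') \ge 3ML - 1$ rather than $\ge 3ML$. That is acceptable, because the proof of the claim only used $\E[Y] > 2ML$, which survives this loss when $ML \ge 2$.

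The cleaner route is to tighten the cheapness threshold instead. Define cheap as $a_i \le b_i/(8ML)$. Then $p_{t+1}(C) > p_{t+1}(S) - 8ML \cdot p_t(S)$, which remains large given the assumed factor $12ML$. Running the greedy to $3ML$ now yields $p_t(S') \le (3ML+1)/(8ML) \le 1/2$ once $ML \ge 1$. The constants work because $12ML\,(p_t+1)$ leaves room: $p_{t+1}(S) - 8ML\,p_t(S) > 4ML\,p_t(S) + 12ML \ge 3ML + 1$.

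Either way we obtain $S'$ with $p_t(S') \le 1/2$ and $p_{t+1}(S') \ge 3ML$. This contradicts \cref{clm:lower-bound/no-large-jump}, which proves the bound $p_{t+1}(S) \le 12ML\,(p_t(S)+1)$.
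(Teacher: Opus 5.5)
Your proof is correct, but it takes a different route from the paper's. The paper argues directly. It splits $S$ according to the size of $p_t(i)$ into $\SLarge$, where $p_t(i) > 1/4$ and so trivially $p_{t+1}(i) \le 1 < 4p_t(i)$, and $\SSmall$. It then greedily chops $\SSmall$ into $k \le 4p_t(\SSmall)+1$ blocks, each of $p_t$-mass at most $1/2$. Finally it applies \cref{clm:lower-bound/no-large-jump} to every block to get $p_{t+1} < 3ML$ per block, and sums.

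You instead argue by contradiction, and you filter indices by their growth ratio $p_{t+1}(i)/p_t(i)$ rather than by $p_t(i)$. The ``cheap'' indices, those with ratio at least $8ML$, must carry $p_{t+1}$-mass exceeding $3ML$. Accumulating them greedily by $p_{t+1}$-mass yields a single witness $S'$ with $p_{t+1}(S') \ge 3ML$ and $p_t(S') < (3ML+1)/(8ML) \le 1/2$. So one invocation of the claim suffices.

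Your version also gives a slightly sharper bound, $p_{t+1}(S) < 8ML\,p_t(S) + 3ML$, because the contradiction is reached as soon as $p_{t+1}(S) \ge 8ML\,p_t(S) + 3ML$. The paper's covering argument instead gives an explicit non-contradiction bound, in which the additive $3ML$ (the last partial block) and the multiplicative $12ML$ (at most $4p_t+1$ blocks of size $<3ML$) are directly visible.

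One cleanup is worth making. Drop your first fix, which stops the greedy at $3ML-1$. It weakens the hypothesis of \cref{clm:lower-bound/no-large-jump}, so it requires reopening that claim's proof and assuming $ML \ge 2$. It also makes your closing ``either way'' sentence inaccurate, since that route only gives $p_{t+1}(S') \ge 3ML-1$. The threshold-$8ML$ route is self-contained and needs only $ML \ge 1$. The case split on $p_t(S) \le 1/2$ is likewise unnecessary, since the cheap-set argument already covers that case.
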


  \begin{proof}
    We partition $S$ into two disjoint subsets:
    \[
      \SLarge = \{i \in S : p_t(i) > 1/4\},
      \qquad
      \SSmall = \{i \in S : p_t(i) \le 1/4\}.
    \]

    For $\SLarge$, we simply use $p_{t+1}(i) \le 1 < 4p_t(i)$ for every $i \in \SLarge$, which gives
    \[
      p_{t+1}(\SLarge) < 4p_t(\SLarge).
      \numberthis \label{eq:merged1-large}
    \]

    For $\SSmall$, we greedily partition it into disjoint subsets $\tilde{S}_1, \dots, \tilde{S}_k$ such that
    \begin{itemize}
      \item $p_t(\tilde{S}_j) \in [1/4, 1/2]$ for all $j < k$, and
      \item $p_t(\tilde{S}_k) \in [0, 1/2]$.
    \end{itemize}
    Since each set $\tilde{S}_j$ for $j < k$ contributes at least $1/4$ to $p_t(\SSmall)$, we have
    \[
      k \le 4p_t(\SSmall) + 1.
    \]
    By \cref{clm:lower-bound/no-large-jump}, each $\tilde{S}_j$ satisfies $p_{t+1}(\tilde{S}_j) < 3ML$, and hence
    \[
      p_{t+1}(\SSmall)
      =
      \sum_{j=1}^k p_{t+1}(\tilde{S}_j)
      <
      3ML \cdot k
      \le
      12ML \cdot p_t(\SSmall) + 3ML
      <
      12ML \cdot (p_t(\SSmall) + 1).
      \numberthis \label{eq:merged1-small}
    \]
    Adding \eqref{eq:merged1-large} and \eqref{eq:merged1-small} yields the claim.
  \end{proof}

  Finally, we let $E$ be an upper bound on the expected latency of \Set operations over all one-shot executions, and show that $E = \Omega(\log_{ML} P)$.

  Consider the singleton execution $\alpha_{\{i\}}$. Since no \Set can complete before time $T_i$, if we had $p_{2E}(i) < 1/2$, then the \Set in $\alpha_{\{i\}}$ would have latency greater than $2E$ with probability more than $1/2$, contradicting the definition of $E$. Therefore,
  \[
    p_{2E}(i) \ge 1/2
    \qquad \text{for every } i \in [P].
  \]
  Summing over all processes gives
  \[
    p_{2E}([P]) \ge P/2.
    \numberthis \label{eq:merged1-p2e}
  \]

  Since $p_0([P]) = 0$, let $t_0$ be the first time such that $p_{t_0}([P]) \ge 1$. By \cref{clm:growth-bound}, we have $p_{t_0}([P]) < 24ML$. For every $t_0 \le t < 2E$ with $p_t([P]) \ge 1$, the same claim gives
  \[
    p_{t+1}([P]) \le 12ML \cdot (p_t([P]) + 1) \le 24ML \cdot p_t([P]).
  \]
  Therefore it takes $\Omega(\log_{ML} P)$ timesteps for $p_t([P])$ to grow from at most $24ML$ to at least $P/2$ as in \eqref{eq:merged1-p2e}. Hence $E = \Omega(\log_{ML} P)$, which proves the theorem.
\end{proof}

\section{Applications}
\label{sec:applications}

In this section, we show a general composition theorem that allows us to compose deterministic algorithms in the \emph{asynchronous shared-memory model} (without contention)
with primitives for the stochastic CRQW model,
to obtain corresponding algorithms in the stochastic CRQW model.
This is essentially a transformation from the asynchronous model without stalls into the stochastic CRQW model, which preserves the runtime up to a logarithmic factor in $P$.

\begin{theorem}
  \label{thm:transformation}
  Let $\mathcal{A}$ be an abstract data type (ADT),
  and suppose $\mathcal{D}$ is a deterministic linearizable concurrent implementation of $\mathcal{A}$ satisfying:
  \begin{itemize}
    \item In the asynchronous shared-memory model (without stalls),
    every operation of $\mathcal{D}$ completes within $T \le P^{O(1)}$ steps of the invoking process in the worst case; and
    \item $\mathcal{D}$ uses a total of $M \le P^{O(1)}$ shared read/write or CAS registers,
    each of $\l$ bits,
    and also $M_{\textup{local}}$ words of local memory for each operation.
  \end{itemize}
  Then there is a concurrent implementation $\mathcal{D}'$ such that
  \begin{itemize}
    \item $\mathcal{D}'$ is a {linearizable} and {lock-free} implementation of $\mathcal{A}$;
    \item In the stochastic CRQW model,
    for every fixed time $t \ge 0$, each operation of $\mathcal{D}'$
    invoked at time $t$ completes within $O(T \log P)$ time with high probability in $P$;
    \item The data structure uses $O(M)$ words of shared memory with word size $w \ge \max\BK{\l + \eps \log P, \; 2 \log P}$ bits, where $\eps > 0$ is a predetermined constant (that does not depend on $\mathcal{A}$ or $\mathcal{D}$), and $O(M_{\textup{local}})$ words of local memory for each operation.
  \end{itemize}
\end{theorem}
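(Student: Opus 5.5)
The construction is a black-box substitution. For each of the $M$ shared registers of $\mathcal{D}$, $\mathcal{D}'$ allocates one independent instance of our construction. A read/write register of $\mathcal{D}$ becomes an instance of \cref{alg:shared-register-write}: one word of $\l + \eps \log P$ bits. A CAS register becomes an instance of \cref{alg:improved-cas}: two words, \CellC and \CellW, of width $\max\{\l + 2\log\log P, 2\log P\}$. Every hardware instruction $\Load$, $\Store$ or $\CAS$ that $\mathcal{D}$ would issue on a register is replaced by the corresponding high-level \Read, \Write or \opCAS operation on that register's instance. A hardware \Read of a CAS register becomes a \Load of \CellC. Local memory grows only by the $O(1)$ words used by the currently running sub-operation, which gives the space bounds.

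\textbf{Correctness.} Each instance is linearizable by \cref{lem:register-linearizable} and \cref{lem:improved-cas/linearizable}. By the locality (compositionality) of linearizability, the collection of instances behaves like a set of atomic registers. Hence every execution of $\mathcal{D}'$ corresponds to an execution of $\mathcal{D}$ in the asynchronous model, and $\mathcal{D}'$ inherits linearizability from $\mathcal{D}$.

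\textbf{Lock-freedom.} Consider any execution of $\mathcal{D}'$ in which infinitely many steps are taken. The register sub-operations are wait-free. By \cref{lem:improved-cas/linearizable}, the CAS instances are lock-free. So either infinitely many sub-operations complete, or some CAS instance has infinitely many steps but only finitely many completions, which lock-freedom of that instance forbids. In the first case, the induced $\mathcal{D}$-execution has infinitely many steps. Since every $\mathcal{D}$-operation finishes within $T$ of its own steps, infinitely many high-level operations complete.

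\textbf{Latency.} Fix a time $t$ and an operation $q$ invoked at $t$. Then $q$ performs at most $T$ sub-operations in sequence, at (random) times $t = s_1 < s_2 < \dots$. \cref{thm:register-main} and \cref{lem:improved-cas/latency-guarantee} bound each sub-operation's latency by $O(\log P)$ w.h.p., quantified over a fixed invocation time and against an adaptive user. The key step is to view the outer algorithm $\mathcal{D}'$, together with the original adaptive user and all other instances, as an adaptive user of the particular instance being analyzed. This is legitimate because that "user" only ever invokes operations and observes their responses and timings. The other instances' internal randomness and state are independent of this instance's internal randomness, except through the responses it returns.

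Then the union-bound convention of \cref{sec:conventions} applies. W.h.p. each $s_{k+1} - s_k = O(\log P)$, so all $s_k$ lie in the fixed polynomial range $[t, t + O(T\log P)]$. A union bound over at most $T \le P^{O(1)}$ sub-operations, at most $M \le P^{O(1)}$ instances, and polynomially many candidate invocation times then keeps the failure probability at $1/\poly(P)$, after boosting the polynomial in the high-probability bounds. Local steps between sub-operations are $O(1)$ each and are absorbed via \cref{thm:ind-scheduler-random-delay}. Summing gives latency $O(T\log P)$.

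\textbf{Main obstacle.} The most delicate step is justifying that the adaptive-user guarantees hold for each instance while other instances run concurrently on the same scheduler. Two issues arise. First, the scheduler's coins are shared across instances. This is fine, since \cref{def:random-delay} constrains per-process coins regardless of which instance the process is working on. Second, the other instances' randomness is correlated with timing. The plan is to check that every lemma above conditioned only on $\HistExt$ and that instance's own \Randomness, and that the other instances' private randomness can be folded into the user's strategy. This folding is valid because the user model allows an arbitrary randomized adaptive strategy that cannot see this instance's internal state.
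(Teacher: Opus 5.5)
Your proposal is correct and follows essentially the same route as the paper: black-box substitution of the register and CAS constructions, linearizability by composition, lock-freedom from the worst-case $T$-step bound of $\mathcal{D}$ together with the wait-/lock-freedom of the components, and latency via a union bound over all sub-operations invoked in a polynomial window after $t$. The only differences are minor: the paper uses the window $[t, t+TP]$ and charges each local step $O(\log P)$ directly via \cref{prop:process-schedule-prob} rather than through \cref{thm:ind-scheduler-random-delay}, and it does not discuss the cross-instance independence issue you flag, simply invoking \cref{thm:register-main,thm:cas-register-main} in the composed system.
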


\begin{proof}
  The new implementation $\mathcal{D}'$ simulates $\mathcal{D}$,
  using the algorithms from \cref{sec:register} and \cref{sec:cas-register}
  in place of atomic read/write and CAS registers (respectively).
  Linearizability follows by composition~\cite{HW90}.%
  \footnote{Although our CAS implementation is not \emph{strongly linearizable}, this does not affect the correctness of our composition: we rely on the \emph{worst-case} properties of the deterministic outer algorithm $\mathcal{D}$, not its probabilistic behavior.}

  \paragraph{High-Probability Latency.}

  From now on, we use \defn{high-level operations} to refer to operations in $\mathcal{D}$ and $\mathcal{D}'$, and use \defn{low-level steps} to refer to invocations of register and CAS register operations---including \Read, \Write, \opCAS---and local instructions. The \defn{latency} of a low-level step is the time duration from the completion of the previous low-level step (or the invocation of the high-level operation) to the completion of the current low-level step. The latency of a high-level operation equals the sum of the latencies of the low-level steps it applies.

  Let $t$ be a fixed time, and let $q$ be an arbitrary operation of $\mathcal{D}'$ invoked at time $t$.
  By \cref{thm:register-main,thm:cas-register-main} (and a union bound), with high probability in $P$, all operations to all registers/CAS-registers invoked during $[t, \; t + T P]$ complete within $O(\log P)$ time after invocation, which implies that the latency of any low-level step of types \Read, \Write, and \opCAS is at most $O(\log P)$. Moreover, \cref{prop:process-schedule-prob} (and a union bound) states that, with high probability in $P$, any low-level step that is a local instruction and is invoked during $[t, \; t + T P]$ has latency at most $O(\log P)$. When both high-probability events occur, the latency of $q$ is at most $O(T \log P)$ deterministically. Therefore, $q$ has latency at most $O(T \log P)$ with high probability in $P$.

  \paragraph{Lock-Freedom.}
  Recall that our registers and CAS registers are lock-free,
  and that $\mathcal{D}$ is wait-free (as every high-level operation completes within at most $T$ low-level steps in the worst case).
  Suppose for the sake of contradiction that $\mathcal{D}'$ is \emph{not} a lock-free
  implementation of $\mathcal{A}$.
  Then there is an execution $\alpha$ where
  starting from some time $t \geq 0$,
  all processes take infinitely many steps in total,
  but no high-level operation of $\mathcal{D}'$ completes.
  Because every high-level operation
  returns within at most $T$ low-level steps of $\mathcal{D}$,
  each high-level operation must eventually become stuck inside a low-level step;
  in other words, starting from some time $t' \geq t$,
  there are pending invocations of \Read, \Write, \opCAS, and local instructions,
  and each process takes infinitely many steps,
  and yet no \Read, \Write, \opCAS, or local instruction returns,
  contradicting the lock-freedom of these operations.

  \paragraph{Memory Usage.}
  By \cref{thm:register-main,thm:cas-register-main}, to simulate the registers and CAS registers, the data structure $\mathcal{D}'$ uses $O(M)$ words of shared memory with word size $w \ge \max\BK{\l + \eps \log P, \; 2 \log P}$ bits, where $\eps > 0$ is a fixed constant. Each operation to the registers/CAS-registers uses at most $O(1)$ words of local memory, which will be released after the register/CAS-register operation completes (so we do not need to multiply by $M$).
  The data structure additionally uses $O(M_{\textup{local}})$ words of local memory for each operation, which is the same as in $\mathcal{D}$. Combining all this memory usage gives the desired bound.
\end{proof}

\paragraph{Applications.}
\cref{thm:transformation} is a general transformation that allows us to transform known results from the asynchronous shared-memory model (without contention) to results in the stochastic CRQW model. We can apply this transformation to known constructions to obtain, for example,
max registers and counters with high-probability latency $O( \min(\log v, \, P) \cdot \log P)$ and $O( \min(\log v \log P, \, P) \cdot \log P)$ respectively, using the construction of~\cite{AAC09};
load-linked/store-conditional (LL/SC) objects with high-probability latency $O(\log P)$,
using the construction of~\cite{BW20};
and fetch-and-increment objects with high-probability latency $O(\log^3 P)$,
using the construction of~\cite{ERW12}.%
\footnote{The fetch-and-increment construction of~\cite{ERW12} uses LL/SC objects, which can be further implemented from CAS registers in worst-case constant time~\cite{BW20}.}

\section{Experiments}
\label{sec:experiments}

To better understand how well our model predicts the performance of modern multicores, we implemented a set of benchmarks to measure the performance of loads, stores, and CAS operations under high contention.
We ran our experiments on a Dell server equipped with four Intel Xeon E7-8867 v4 processors, totaling 72 cores (144 SMT threads), each with four-channel DDR4-2400 memory (1TB total) and 45MB of L3 cache per socket. In the experiments, each thread runs a loop in which it accesses one of four locations, each on a different cache line. Each iteration picks one of them at random. We use four locations because on this machine, writes to a single location are combined within a thread. Each loop performs a fixed number ($10^8$) of operations, and we measure the running time. We consider four operation types: (1) only loads, (2) only stores, (3) load/modify/store, and (4) load/modify/CAS. All operations use C++ \texttt{std::atomic} operations with sequential consistency. We then vary the number of threads from 1 to 144. The results are shown in \cref{fig:fig-scaling}. Each experiment is run 10 times, and we report the mean. The bars on each point represent the standard deviation, which is too small to be visible on most points.

The experiments demonstrate that loads scale almost perfectly, indicating that read contention does not significantly affect performance. However, all the operation mixes that include an update (store or CAS) degrade linearly with the number of threads (the dashed line has slope 1). This indicates that updates suffer from contention and that the contention cost grows linearly with the number of threads. This is consistent with the CRQW model.

\section{Related Work}
\label{sec:related-work}

\subsection{Restricted Schedulers}
\label{sec:non_adversarial}

Many researchers have acknowledged the gap between fully synchronous models and fully asynchronous, adversarial models, and have developed models that try to bridge this gap. Some of these are deterministic, and some are randomized.

The semi-synchronous model places constant upper and lower bounds on
the time taken by a process between any two of its steps~\cite{AAT94,EHN12,DLS88,ADLS94,AM94,HK06,ELMS05,T07}. The bounds might be known~\cite{AAT94,EHN12} or unknown~\cite{AAT94,DLS88}. In the model, one can show, for example, that obstruction-free algorithms can be made wait-free~\cite{ELMS05}.
Hendler and Kutten~\cite{HK06} suggest a similar model placing a restriction on asynchronous models that prevents processes from stalling forever. In particular, they define the notion of $k$ synchrony, where a process cannot take $k+1$ steps without all other processes taking at least one step. Although these models allow some processes to run faster than others, they all assume a strict constant upper bound on the time between steps. In the known-delay setting, this allows one to wait some fixed time and then know that an operation that has started has now completed. Only Hendler and Kutten's work considers memory contention.

Alistarh, Censor-Hillel, and Shavit~\cite{ACS16} suggest a stochastic scheduling model in which on
each step every process has at least some $\theta > 0$ probability of being scheduled. This means that over time, every process is scheduled. A natural setting is with $\theta = 1/P$, which means that all processes are equally likely to be scheduled. They refer to this as the uniform random scheduler, and analyze algorithms under this scheduler. They show that with such a scheduler lock-free algorithms
are wait-free with probability 1 (every operation has probability 1 of eventually completing). They also
show tight bounds for certain types of lock-free algorithms. Refinements of the model and further
results were shown in later papers~\cite{ARP15,ART16}. A problem with the uniform random scheduler (or even deviations on it with different probabilities) is that it is inherently not a good parallel scheduler.
In particular in any history on $P$ processes, in expectation there will be $\Theta(\sqrt{P})$ steps between appearances of the same process---i.e., for some constant $c$ it is unlikely that a region of history larger than $c \sqrt{P}$ will have no repeats. Since there must be a timestep between the same processor running two instructions, this implies that the scheduler is only running $O(\sqrt{P})$ instructions on each timestep. This is not a very effective scheduler, since in practice (and also in our stochastic scheduler model) we can run up to $\Theta(P)$ instructions on each timestep.

\subsection{Accounting for Contention}

Several parallel shared-memory machine models have been proposed that account for memory contention. All of them separate local instructions from shared-memory instructions and charge work for contended memory instructions that is proportional to the number of instructions that contend on a memory location. They, however, differ along four dimensions.
First, some of them only consider updating memory instructions as contending, while other models also count reads. Second, they make different assumptions about the scheduler. Third, some require responses from a contended memory location to return in the same order as the requests (i.e., in FIFO order), others do not. Fourth, some allow pipelining of memory instructions (multiple instructions can be outstanding and waiting for a reply on a single process). The models also allow different sets of updating instructions (e.g., CAS, test-and-set, or possibly just write), but any of these instructions could be used
with any of the models so we do not consider this a notable distinction.

Gibbons, Matias, and Ramachandran suggested the Queue-Read Queue-Write PRAM machine model \cite{GibbonsMR96,GibbonsMR98a}. In the model, there are $P$ processors, and the processors proceed in bulk synchronous steps. In each step, each processor $i$ does the following:
(1) reads $r_i$ locations from shared memory, (2) performs $c_i$ local operations, and (3) writes $w_i$ locations to shared memory.
The \emph{maximum contention} $k$ of a step is the maximum over all shared memory
locations of the number of reads and writes from that location on that step. The reads only see values from the previous step. The cost of a step is $\max(k, \max_i (r_i+c_i+w_i))$. The \emph{time} of a QRQW PRAM algorithm is the sum of the cost of its steps, and
the work is the processor-time product. They show a variety of results in the model, and in particular upper bounds on several basic algorithms, and some lower bound simulation results. Along the four dimensions, the model assumes reads as well as updates contend, it uses a bulk synchronous scheduler, the order of responses is irrelevant since they all respond at the end of the round, and the model allows
all memory instructions within a round to be pipelined. They also mention the possibility of a CRQW variant that does not account for contention on reads, but their results are on the QRQW variant. Later, others consider some algorithms based on the CRQW model~\cite{SGBFG15}.

In later work, Gibbons, Matias, and Ramachandran describe an asynchronous version of the QRQW model~\cite{GibbonsMR98b}. In this version, there are no
bulk synchronous steps, and instead the processes proceed asynchronously.
For analyzing the correctness of algorithms,
they assume an adversarial scheduler can arbitrarily delay any processor.
However, to analyze cost, they assume discrete timesteps and that all processors are greedy, i.e., if they have an instruction to run, they will run it. Local instructions take a single timestep. The model assumes that there is a queue associated with each memory location corresponding to outstanding memory requests on that location. In each step, all memory requests made in that step are added to the back of the queues for the corresponding locations, ties broken adversarially. Then, for each nonempty queue, the model removes the request from the front of the queue and executes it. They show that many of their results from the synchronous QRQW carry over to this model. Along the four dimensions, they again assume reads as well as updates contend, the scheduler is as described (adversarial for correctness, greedy for time), the order of responses is FIFO due to the queues, and memory operations can be pipelined.

Dwork, Herlihy, and Waarts (DHW) independently describe a similar asynchronous model that accounts for memory
contention \cite{DHW97}, but with very different assumptions on the scheduler. In the model, instructions on shared memory are split into an invocation and a later response. As in the standard asynchronous model, they assume that all steps appear in a linear order in the history and that an adversarial scheduler picks the next instruction (or response). They define \emph{contention cost} of a memory operation (consisting of its invocation and response) as the number of responses at the same location that appear between the invocation and the response in the history. The memory operation is said to be \emph{stalled} (or \emph{pending}) during this time.
The \emph{overall cost} of a memory operation is one more than its contention cost, and the cost of all other operations is one. The total cost of a process is the sum of the overall costs of its instructions, and the execution cost (history) is the sum of the overall costs of the processors.
The \emph{contention of an algorithm} (protocol) on the $P$ processors is the worst-case contention cost for
any execution of that algorithm, divided by $P$.
In the model, they consider both randomized and deterministic algorithms (protocols) for some basic primitives,
such as consensus and mutual exclusion. Along the four dimensions, they assume reads as well as updates contend, the scheduler is fully adversarial, the order of responses is arbitrary (an invocation might never respond), and it does not permit pipelining. They use the model to show various lower bounds, but the adversary is so powerful in the model that it is not clear if these bounds are representative.

Ellen, Hendler, and Shavit~\cite{EHN12} use a variant of DHW's model that does not count contention on reads and is otherwise the same.
This model is similar to ours in its CRQW and no-pipelining aspects, while the main difference is that it assumes a fully-adversarial scheduler.

Hendler and Kutten~\cite{HK06} describe a model similar to DHW, but with a significantly more
constrained scheduler. In particular, they define the notion of $k$-synchrony, where a process can take $k+1$ steps without all other processes taking a step.
They show a lower bound on the number of stalls incurred
by a single instruction in any implementation of a large class of
objects.
Atalar et al.~\cite{ARP15} also address conflicts in shared memory and their
effect on algorithm performance. They consider a class of lock-free
algorithms whose operations do some local work and then repeat
a ``Read-CAS loop''. Their proposed model divides conflicts into hardware and logical conflicts.

Ben-David and Blelloch~\cite{BDB17} study randomized back-off and back-on protocols for read-modify-write operations in the context of a model with contention. In their model, the invocation of instructions is fully adversarial, but once a memory instruction is invoked, it is enqueued and responds in FIFO order (one response per timestep). Along the four dimensions, they assume that only updates contend, the scheduler is fully adversarial (oblivious), the order of responses is FIFO (and one response per timestep for each location), and that no pipelining is permitted. Their analysis is for a fixed batch of operations invoked simultaneously, with no new operations invoked afterward. In contrast, most work on concurrent data structures, including this paper, assumes that operations arrive asynchronously over time.

Contention has also been studied in many settings other than shared
memory. Bar-Yehuda, Goldreich, and Itai~\cite{BGI87} applied exponential back-off to the problem of broadcasting a message over
multi-hop radio networks with contention. Bender et al.~\cite{BFHKL05} consider a simple contended channel with
adversarial disruptions. They show that using a back-off/back-on
approach, in which probability of transmission is allowed to rise
in some cases, achieves better throughput. This is similar to the
approach taken by our adaptive probability protocol.

For some problems, most notably mutual exclusion, there is a specific notion of contention based on the concept of a RMR (remote-memory access register)~\cite{Cypher,YA95,AK01,HW11,GW12,GiakkoupisWo14}. In these models the contention is measured as the maximum number of memory operations on a (remote) shared memory location over all time. Such models are overly pessimistic since most of the time the contention could be low.

Several papers have shown experimentally that real parallel machines suffer considerably from memory contention and have developed approaches to alleviate such contention~\cite{acar2017contention,ShunBFG13,BlellochGMZ97}.

\subsection{Analysis of Concurrent Data Structures}

Several works bound the step complexity of various concurrent data structures.
Ellen et al.~\cite{EFHR14} describe a lock-free binary search tree in which the amortized step complexity per operation
is $O(h(\text{op}) + c(\text{op}))$ per operation, where $h(\text{op})$ is the height of the tree at the beginning of $\text{op}$ and $c(\text{op})$ is the maximum number of operations accessing the tree at any one time during $\text{op}$. Unfortunately, this bound is quite weak, since if $P$ processes are accessing the tree at the same time, the bound is $O(P)$ for each operation. Ellen, Ramachandran, and Woelfel~\cite{ERW12} present a wait-free implementation of a fetch-and-increment object that requires just $O(\log^2 P)$ step complexity per operation. This seems to be the first bound for a non-trivial operation on an object that is sublinear in $P$. The bound was later improved to $O(\log P)$~\cite{EW13}, which is optimal for the problem.
Jayanti and Tarjan~\cite{JT21} describe a wait-free union-find object with step complexity bounds that are logarithmic in size and processors. Asbell and Rupert~\cite{AR24} describe a wait-free deque with amortized $O(\log^2 P + \log n)$ step complexity. Since a stack is a special case of a deque, the result also applies to stacks.
We note that these are step-complexity bounds that do not account for delays caused by memory contention.

\section{Conclusion and Future Directions}
\label{sec:conclusion}

The performance of concurrent data structures is governed by how \emph{scalable} they are---to get any benefit out of using more computing units in parallel, each operation's runtime must scale sublinearly with the number of processes.
Achieving scalability has turned out to be difficult, especially in the presence of write contention, where~\cite{EHN12} showed it to be impossible under a fully-adversarial scheduler.
Fully-adversarial schedulers are widely acknowledged to be too pessimistic for many scenarios.
In this paper, we construct scalable shared-memory primitives to address write contention under the \emph{stochastic} scheduler which captures most sources of delays in real-world multicore systems.
Designing and analyzing data structures under non-fully-adversarial schedulers (e.g., stochastic schedulers) offers a path towards rigorous scalability guarantees.

We conclude with a few open directions:
\begin{itemize}
  \item Designing scalable algorithms for richer primitives, starting with
    fetch-and-add, and for more complex objects such as search trees and
    dictionaries. This remains wide open: fetch-and-add has no known
    scalable solution even ignoring contention.
  \item Extending the stochastic scheduler to capture effects it currently
    ignores, such as context switches and cache coherence protocols that could delay the readers when lots of writes contend.
  \item Improving the expected latency of the read/write and CAS registers.
    In particular, we conjecture that there exist solutions with $O(\log \log P)$ expected latency at the cost of $P^{\epsilon}$ high-probability latency.
\end{itemize}

\section*{Acknowledgements}
Michael A.\ Bender was supported in part by NSF grant CCF-2247577, Sandia
National Laboratories, and Broadcom.
Guy E.\ Blelloch was supported in part by NSF grant CCF-2119352.
Rotem Oshman was supported in part by the Binational Science Foundation (BSF)
under Grant No.~2018356.
Renfei Zhou was supported in part by a Jane Street Graduate Research Fellowship
and a MongoDB PhD Fellowship.
This material is based upon work performed while attending the AlgoPARC Workshop
on Parallel Algorithms and Data Structures at the University of Hawaii at Manoa,
supported in part by the National Science Foundation under Grant No.~2452276.
This work has also benefited from discussions at Dagstuhl Seminar 25191,
Adaptive and Scalable Data Structures (\url{https://www.dagstuhl.de/25191}).

\bibliographystyle{alpha}
\bibliography{ref}

\end{document}